	\def\appendixautorefname{Appendix}%
\pgfplotsset{compat=1.16}
\newtheorem{theorem}{Theorem}
\newtheorem{corollary}{Corollary}
\newtheorem{assumption}{Assumption}
\newtheorem{lemma}{Lemma}
\newtheorem{proposition}{Proposition}
\newtheorem{remark}{Remark}
\newtheorem{definition}{Definition}
\newcommand{\beq}{\begin{equation}}
	\newcommand{\eeq}{\end{equation}}
\newcommand{\Stat}{\mathrm{Stat}}
\newcommand{\Eval}{\mathrm{Eval}}
\newcommand{\Exp}{\mathrm{Exp}}
\theoremstyle{definition}
\newcommand{\vertiii}[1]{{\left\vert\kern-0.25ex\left\vert\kern-0.25ex\left\vert #1 \right\vert\kern-0.25ex\right\vert\kern-0.25ex\right\vert}}
\newcommand{\Lword}[1]{\text{Lindbladian}}
\DeclarePairedDelimiterX{\braket}[1]{\langle}{\rangle}{#1}
\DeclarePairedDelimiterX\ketbra[2]{| }{|}{#1 \delimsize\rangle\!\delimsize\langle #2}	
\DeclarePairedDelimiterX\dotp[2]{\langle}{\rangle}{#1, #2}
\DeclareMathAlphabet{\dutchcal}{U}{dutchcal}{m}{n}
\SetMathAlphabet{\dutchcal}{bold}{U}{dutchcal}{b}{n}
\DeclareMathAlphabet{\dutchbcal} {U}{dutchcal}{b}{n}
\DeclareRobustCommand*{\pmzerodot}{%
	\nfss@text{%
		\sbox0{$\vcenter{}$}
		\sbox2{0}%
		\sbox4{0\/}%
		\ooalign{%
			0\cr
			\hidewidth
			\kern\dimexpr\wd4-\wd2\relax 
			\raise\dimexpr(\ht2-\dp2)/2-\ht0\relax\hbox{%
				\if b\expandafter\@car\f@series\@nil\relax
				\mathversion{bold}%
				\fi
				$\cdot\m@th$%
			}%
			\hidewidth
			\cr
			\vphantom{0}
		}%
	}%
}
\def\l@subsubsection#1#2{}
\begin{document}
\renewcommand{\appendixautorefname}{Appendix}
\renewcommand{\chapterautorefname}{Chapter}
\renewcommand{\sectionautorefname}{Section}
\renewcommand{\subsubsectionautorefname}{Section}

\title{Physically natural metric-measure Lindbladian ensembles and their learning hardness}
\author{Caisheng Cheng }
\email{ccs112202@mail.ustc.edu.cn}
\affiliation{Hefei National Laboratory for Physical Sciences at Microscale and Department of Modern
	Physics, University of Science and Technology of China, Hefei, Anhui, China}
\affiliation{Shanghai Branch, CAS Centre for Excellence and Synergetic Innovation Centre in Quantum Information and Quantum Physics, University of Science and Technology of China, Shanghai 201315, China}
\affiliation{Shanghai Research Center for Quantum Sciences, Shanghai 201315, China}

\author{Ruicheng Bao}
\thanks{C.~Cheng and R.~Bao contributed equally to this work.}
\affiliation{Department of Physics, Graduate School of Science,
	The University of Tokyo, Hongo, Bunkyo-ku, Tokyo 113-0033, Japan}

\begin{abstract}
	In open quantum systems, a basic question at the interface of quantum information, statistical physics, and many-body dynamics is how well can one infer the structure of noise and dissipation generators from finite-time measurement statistics alone. Motivated by this question, we study the learnability and cryptographic applications of random open-system dynamics generated by Lindblad-Gorini-Kossakowski-Sudarshan (GKSL) master equations.
Working in the affine hull of the GKSL cone, we introduce physically motivated ensembles of random local Lindbladians via a linear parametrisation around a reference generator.
On top of this geometric structure, we extend statistical query (SQ) and quantum-process statistical query (QPStat) frameworks to the open-system setting and prove exponential (in the parameter dimension $M$) lower bounds on the number of queries required to learn random Lindbladian dynamics.
In particular, we establish average-case SQ-hardness for learning output distributions in total variation distance and average-case QPStat-hardness for learning Lindbladian channels in diamond norm.
To support these results physically, we derive a linear-response expression for the ensemble-averaged total variation distance and verify the required nonvanishing scaling in a random local amplitude-damping chain.
Finally, we design two Lindbladian physically unclonable function (Lindbladian-PUF) protocols based on random Lindbladian ensembles with distribution-level and tomography-based verification, thereby providing open-system examples where learning hardness can be translated into cryptographic security guarantees.
\end{abstract}

\maketitle
\vspace{-1cm}

\begingroup
\small  
\tableofcontents
\endgroup

\section{Introduction}

Understanding the universal emergence of chaotic and ergodic behavior in many-body quantum systems is one of the central goals of modern quantum many-body physics.
Early work on random matrices and single-particle quantum chaos, together with the formulation of the eigenstate thermalization hypothesis, has shown that introducing small uncertainties into the parameters of a chaotic Hamiltonian already suffices to produce universal fluctuation behavior in spectral statistics and physical observables~\cite{wigner1967random,bohigas1984characterization,muller2004semiclassical,deutsch1991quantum,rigol2008thermalization}.
Over the past decade, local random quantum circuits, unitary $t$-designs, and pseudorandom unitaries (PRUs) have developed into a mature technical framework~\cite{emerson2003pseudo,gross2007evenly,dankert2005efficient,dankert2009exact,brandao2016local,nakata2016efficient,haferkamp2022random,ji2018pseudorandom,brakerski2019pseudo,metger2024simple,chen2024efficient,ananth2025pseudorandom,ma2025construct,laracuente2024approximate,schuster2024random,lami2025anticoncentration,grevink2025will,cui2025unitary,schuster2025strong,schuster2025hardness}:
on the one hand, they quantify how quickly physically implementable circuits become indistinguishable from Haar-random unitaries; on the other hand, such pseudorandom dynamics have found widespread applications in randomized benchmarking and noise characterization, random-circuit sampling and demonstrations of quantum advantage, as well as in foundational problems in quantum gravity and black-hole information~\cite{emerson2005scalable,knill2008randomized,elben2023randomized,guta2020fast,huang2020predicting,zhao2021fermionic,aaronson2016complexity,boixo2018characterizing,bouland2019complexity,arute2019quantum,morvan2023phase,abanin2025constructive,hayden2007black,kim2020ghost,akers2022black,akers2024holographic,yang2025complexity}.
From an information-theoretic and complexity-theoretic perspective, these works collectively suggest that ``typical dynamics'' in closed systems are often both highly chaotic in physics and highly complex computationally.

In parallel, quantum learning theory has systematically investigated, from the viewpoint of access models, the advantages and limitations of quantum machine learning for both classical and quantum tasks~\cite{valiant1984theory,SQLearning,arunachalam2017guest}.
In particular, in the setting of random circuits for closed systems, learning complexity is closely tied to pseudorandomness.
In this framework, the learner may be a classical, quantum, or hybrid algorithm that constructs a hypothesis about an unknown object (a Boolean function, a probability distribution, a quantum state, or a quantum process) through limited access to it~\cite{HaahHarrowJiWuYu2017SampleOptimal,Wang_2017,Aaronson2018ShadowTomography,ArunachalamDeWolf2018QuantumSampleComplexity,Aaronson_2019,classical_shadow_tomography,cotler2020quantum,chen2021exponentialseparationslearningquantum,Huang_2021,Aharonov_Cotler_Qi_2022,nisq,Huang_2023, chen2021exponentialseparationslearningquantum,Huang_2022, Aharonov_Cotler_Qi_2022}.
Classical models of random examples and statistical queries (SQ) have been extended to the quantum regime, leading to quantum sample models and quantum statistical queries (QSQs)~\cite{QPAC,arunachalam2018optimal,arunachalam2020quantum,arunachalam2023role,hinsche2023one,nietner2023average,nietner2023unifying}.
Within these access models, a range of positive and negative results have been obtained for learning quantum states, shadow tomography, learning probability distributions, and learning quantum processes~\cite{bisio2009optimal,aaronson2018shadow,childs2022quantum,montanaro2017learning,mohseni2008quantum,chung2018sample,haah2023query}.
Together with constructions of unitary designs and PRUs, these results paint the following picture for closed systems:
on the one hand, there exist efficient algorithms that, under strong access models, can predict local properties of highly complex quantum processes; on the other hand, for ensembles of random circuits, typical output distributions or channels exhibit pronounced average-case learning hardness under restricted access models.

In sharp contrast to this closed-system picture, realistic quantum devices are inevitably coupled to their environment, and their dynamics are more naturally described by the Lindblad-Gorini-Kossakowski-Sudarshan (GKSL) master equation. On the one hand, dissipation and decoherence limit the applicability of closed-system random-circuit models; on the other hand, open systems have motivated a wide range of schemes that exploit dissipation to prepare steady states and engineer quantum dynamics~\cite{verstraete2009quantum,diehl2008quantum,kraus2008preparation,barreiro2011open}.
Recent work on ensembles of random Lindbladians and their typical relaxation properties~\cite{denisov2019universal,tarnowski2023random,bao2025initial}, as well as proposals for learning Lindbladians from stationary states, time series, or trajectory data~\cite{bairey2020learning,wang2024simulation,wallace2025learning}, has revealed a number of intriguing phenomena, including universal spectral statistics, spectral-gap structures, and insensitivity to the initial state.
However, from the viewpoint of learning complexity and cryptography, there is still a striking gap on the open-system side:
so far there is no family inside the GKSL cone that simultaneously enjoys physical naturalness and a well-behaved metric-measure structure in the sense of a ``random Lindbladian design'', and there is also a lack of quantitative characterizations of the learnability of typical open-system dynamics in the SQ/QPStat access models, let alone constructions of open-system analogues of pseudorandom processes or cryptographic protocol primitives based on such ensembles.

The goal of this work is to develop a unified framework that fills this gap, in a way that runs in parallel with unitary designs/PRUs and SQ/QPStat learning theory, while at the same time highlighting structures and physical meanings that are specific to open systems.
Our main contributions can be summarized as follows:

\begin{itemize}
	\item[1.] \textbf{Random Lindbladian ensembles and metric-measure structure.}
	Working inside the affine hull of the GKSL cone, we start from local Lindbladians and construct two physically natural ensembles of random Lindbladians via a linear parametrization
	\(
	\mathcal{L}(\theta)
	=
	\mathcal{L}_{\mathrm{ref}}
	+\sum_{j=1}^M\theta_j G_j.
	\)
	One ensemble is a ``Lindblad Haar-type'' spherical ensemble, where the parameter lies on the unit sphere $S^{M-1}$ with a $1/\sqrt{M}$ normalization; the other is a product-measure ensemble where each local coupling strength fluctuates independently within a small interval.
	We treat both in a unified way as metric probability spaces $(\Theta,d,\mu)$ and show that, for any bounded SQ(analogy with QPStat) test function $\varphi:X\to[-1,1]$, the induced function
	$F_\varphi(\theta):=P_{\mathcal{L}(\theta)}[\varphi]$
	has a dimension-independent Lipschitz constant on parameter space, and satisfies, respectively, a Lévy-type concentration inequality on the sphere and a McDiarmid-type bounded-differences concentration inequality under the product measure.
	
	\item[2.] \textbf{SQ and QPStat hardness results in open systems.}
	On the distribution level, we fix a single-shot measurement interface (evolution time $t$, input state $\rho_{\mathrm{in}}$, and a POVM $\{M_x\}$ on a finite output set $X$), and only access the channel output distribution $P_{\mathcal{L}}$ via SQ queries.
	We show that, for the Lindbladian ensembles above, both the binary hypothesis-testing problem of deciding whether $\mathcal{L}$ is drawn from a sub-ensemble or from a reference distribution, and the task of learning the output distribution $P_{\mathcal{L}}$ in the average case, require any SQ algorithm (including SQ algorithms implemented by quantum procedures) to use at least exponentially many queries in the parameter dimension $M$, even if success is only required on a subset of parameters.
	On the channel level, we introduce a QPStat access model suitable for Lindbladians, allowing the learner to make process-level statistical queries about expectation values of the channels $\Lambda_t^{(\mathcal{L}(\theta))}$.
	We prove that, for the same Lindbladian ensembles, any QPStat algorithm that attempts to learn typical random Lindbladian channels up to constant accuracy in diamond distance or in average channel distance must use exponentially many process queries, thereby establishing an average-case QPStat hardness result for channel learning in the open-system setting.
	Our results thus show that ``open-system versions'' of random processes exhibit intrinsic average-case learning hardness under weak access models.
	
	\item[3.] \textbf{Open-system Porter-Thomas-type average hypothesis and linear-response analysis.}
	To support the above average-case lower bounds, we formulate and analyze an open-system analogue of a Porter-Thomas-type ``average TV-distance hypothesis'': after suitable normalization and scaling, the ensemble-averaged total variation distance
	$\mathbb{E}_{\mathcal{L}} d_{\mathrm{TV}}(P_{\mathcal{L}},Q)$
	converges, in the high-dimensional limit, to a strictly positive constant $m_0>0$.
	Using a Kubo-type perturbative formula, we perform a linear-response expansion for a random local Lindbladian model, explicitly compute the norm of the first-order response vector and its scaling behavior in terms of the parameter dimension $M$, evolution time $t$, and the structure of local jump operators, and analytically and numerically confirm the constant scaling of $m_0$ in a concrete model of a local random amplitude-damping chain.
	This unifies our analysis of Lindbladian ensembles into a physically meaningful picture that echoes Porter-Thomas-based analyses in closed systems.
	
	\item[4.] \textbf{PUF protocols based on random Lindbladians.}
	Finally, we connect the above learning-hardness results with cryptographic primitives by constructing two types of physically unclonable function (PUF) protocols based on random Lindbladians.
	The first is a ``distribution-level verification'' scheme that is entirely based on output distributions; its security follows directly from the SQ hardness of learning the output distributions, and the verification interface is deliberately constrained so that the scheme remains secure under a given resource bound for an adversary.
	The second is a ``tomography-based verification'' scheme that allows a finite number of QPStat queries or experimental tomography; by defining a tomographic fingerprint in an appropriate operator basis, it translates unclonability of channels in diamond distance into the QPStat hardness of learning Lindbladian channels.
	Within our framework we also discuss how, in the open-system setting, one can explicitly balance verification cost against security, thereby providing an example of how ``SQ/QPStat hardness implies cryptographic security'' in the open-system regime.
\end{itemize}

The remainder of the paper is organized as follows.
We begin by summarizing and discussing our main theorems and results.
In Sec.~\ref{zhang12}, we review GKSL generators, the SQ and QPStat models, and the relevant tools from measure concentration.
Secs.~\ref{zhang1}-\ref{zhang3} establish, respectively, the geometric structure of Lindbladian ensembles, their Lipschitz-Levy-type concentration properties, and lower bounds on the SQ/QPStat learning complexity of output distributions and channels.
Secs.~\ref{zhang4}-\ref{zhang5} present the linear-response analysis and a concrete model of a random amplitude-damping chain that support the Porter-Thomas-type average hypothesis.
Sec.~\ref{sec:crqpuf_applications} then discusses in detail the Lindbladians-PUF protocols based on random Lindbladians and their security analysis.

\section{Summary of results}
\subsection{From random unitary circuits to random Lindbladians: an ensemble-construction blueprint}

In the literature on random quantum circuits and typicality, the construction of random unitaries usually follows a common blueprint~\cite{mele2024introduction,brandao2016local,haferkamp2022random,nietner2025average}:
one first fixes a circuit space $\mathcal C$ (e.g., a brickwork local circuit of fixed depth $d$ whose layers consist of two-qubit gates), and then specifies a physically natural random ensemble $\mu_{\mathcal C}$ on $\mathcal C$. In an idealized setting this is Haar measure on $U(d)$; in more hardware-like brickwork circuits, each local two-qubit gate is drawn independently from Haar, leading to a product-measure-type distribution on $\mathcal C$. For sufficiently large depth, such local random circuits approximate Haar $t$-designs for fixed $t$~\cite{brandao2016local,haferkamp2022random}. Given a fixed measurement architecture, each circuit $C\in\mathcal C$ induces a classical output distribution $P_C$, and the measure $\mu_{\mathcal C}$ then defines an ensemble of random circuit output distributions. On this basis one studies concentration properties on $(\mathcal C,d_{\mathrm{circ}},\mu_{\mathcal C})$ (e.g., concentration of Lipschitz functions $C\mapsto P_C[\varphi]$) as well as average-case SQ hardness with respect to $\mu_{\mathcal C}$~\cite{nietner2025average}.

Our first contribution is to construct a random Lindbladian ensemble and a central conceptual starting point of this work is to transplant this random-circuit blueprint systematically to the Lindbladian setting.
Instead of postulating a measure directly on the abstract GKSL cone, we first build a linear coordinate system in a finite-dimensional superoperator space, and then choose a metric and a measure in the resulting parameter space; this yields ensembles of random Lindbladians in a controlled way. As summarized schematically in Fig.~\ref{fig:metric_prob_spaces}, our random
Lindbladian ensembles are always defined on a metric probability space
$(\Theta,d,\mu)$, with the Haar-sphere and product-measure models
corresponding to two concrete choices of parameter space, metric, and sampling
measure.
Concretely, our construction proceeds as follows:

\begin{enumerate}
	\item \textbf{Complete linear parametrization.}
	For fixed Hilbert-space dimension $d$, the affine hull of the GKSL cone can be identified with an affine subspace of a finite-dimensional real vector space.
	Lemma~\ref{lem:affine_span_L} shows that there exist a basis $\{G_j\}_{j=1}^M$ and a reference generator $\mathcal L_{\mathrm{ref}}\in\mathsf{GKSL}$ such that
	\[
	\mathcal L
	= \mathcal L_{\mathrm{ref}} + \sum_{j=1}^M \theta_j(\mathcal L)\,G_j
	\]
	for every Lindbladian $\mathcal L\in\mathsf{GKSL}$, where the feasible parameters $\theta(\mathcal L)\in\mathbb R^M$ lie in a convex cone $\mathcal C_{\mathrm{GKSL}}\subset\mathbb R^M$.
	In other words, in suitable linear coordinates the form
	$
	\mathcal L=\mathcal L_{\mathrm{ref}}+\sum_j\theta_j(\mathcal L)G_j
	$
	covers all GKSL generators; in this sense the parametrization is even more flexible than those constrained by group structure in random-unitary models.
	
	\item \textbf{From $(\theta,d,\mu)$ to random Lindbladian ensembles.}
	Once a linear coordinate system is fixed, we do not choose a measure directly on $\mathsf{GKSL}$. Instead, we first choose on the parameter space $\Theta\subset\mathbb R^M$
	\[
	\text{(i) a parameter domain $\Theta$, \quad (ii) a metric $d(\cdot,\cdot)$, \quad (iii) a probability measure $\mu$,}
	\]
	thereby forming a metric probability space $(\Theta,d,\mu)$.
	We then push forward $\mu$ through
	\[
	\theta\longmapsto \mathcal L(\theta)
	:= \mathcal L_{\mathrm{ref}}+\sum_{j=1}^M\theta_j G_j
	\]
	to obtain a prior $\mu_{\mathcal L}:=\mathcal L_\#\mu$ on Lindbladians.
	This mirrors the random-circuit step ``choose a circuit space $\mathcal C$ and a circuit measure $\mu_{\mathcal C}$'', except that we now work in a linear superoperator space rather than a unitary group.
	
	\item \textbf{Two concrete ensembles designed for concentration and SQ hardness.}
	To make the subsequent learning-complexity analysis work, we require $(\Theta,d,\mu)$ to satisfy:
	\begin{itemize}
		\item[(a)] For every SQ(analogy with QPStat) test function $\varphi:X\to[-1,1]$, the map
		\(
		\theta\mapsto F_\varphi(\theta):=P_{\mathcal L(\theta)}[\varphi]
		\)
		has a uniform Lipschitz constant with respect to $d$, as provided by Corollary~\ref{cor:PL_Lipschitz_L} together with the linear parametrization;
		\item[(b)] The metric probability space $(\Theta,d,\mu)$ itself satisfies a strong concentration inequality (of Lévy or McDiarmid type), so that $F_\varphi$ exhibits high-dimensional concentration under $\mu$, yielding exponentially small upper bounds on
		\(
		\operatorname{frac}(\mu_{\mathcal L},Q,\tau)
		\).
	\end{itemize}
	Guided by these requirements, we introduce two physically natural and geometrically controlled ensembles:
	\begin{itemize}
		\item[i.] \emph{Spherical random-perturbation ensemble} (Definition~\ref{def:param_sphere_ensemble} and Remark~\ref{ball lin}):
		in $\mathbb R^M$ take the unit sphere $S^{M-1}$ with its rotation-invariant measure, and define
		\[
		\mathcal L_{\mathrm{sph}}(\theta)
		= \mathcal L_{\mathrm{ref}}
		+ \frac{\delta}{\sqrt M}\sum_{j=1}^M \theta_j G_j,\qquad \theta\in S^{M-1},
		\]
		where $\delta>0$ is a small parameter.
		By choosing $\delta$ sufficiently small (or normalizing the intersection $S^{M-1}\cap\mathsf{GKSL}$), we ensure $\mathcal L_{\mathrm{sph}}(\theta)\in\mathsf{GKSL}$ for all $\theta$.
		The Euclidean/spherical geometry lets us apply Lévy concentration (Lemma~\ref{lem:levy_sphere}) and obtain the $\exp(-\Omega(M))$-type bound in Theorem~\ref{thm:frac_bound_open_system}.
		\item[ii.] \emph{Independent local-coupling (product-measure) ensemble} (Definition~\ref{def:prod_measure_ensemble}):
		here $\Theta_{\mathrm{prod}}=[-\delta/M,\delta/M]^M$, each coordinate $\theta_j$ is drawn independently, and the metric is $\ell_1$, $d_1(\theta,\theta')=\|\theta-\theta'\|_1$.
		The parametrization
		\[
		\mathcal L_{\mathrm{prod}}(\theta)
		= \mathcal L_{\mathrm{ref}} + \sum_{j=1}^M \theta_j G_j
		\]
		models the assumption ``each local coupling constant is an independent small random number.''
		Then $F_\varphi(\theta)$ has a dimension-independent Lipschitz constant in $d_1$ (Lemma~\ref{lem:prod_theta_Lipschitz}), and McDiarmid’s inequality (Lemma~\ref{lem:mcdiarmid}) yields Theorem~\ref{thm:frac_bound_open_system_prod}.
	\end{itemize}
	In both constructions the small parameter $\delta$ plays a dual role, because it controls the deviation of $\mathcal L(\theta)$ from the reference $\mathcal L_{\mathrm{ref}}$ (staying in a local GKSL neighborhood), and it enables the linear-response analysis in Sec.~\ref{zhang4}.
\end{enumerate}

\begin{figure}[t]
	\centering
	\includegraphics[width=0.85\textwidth]{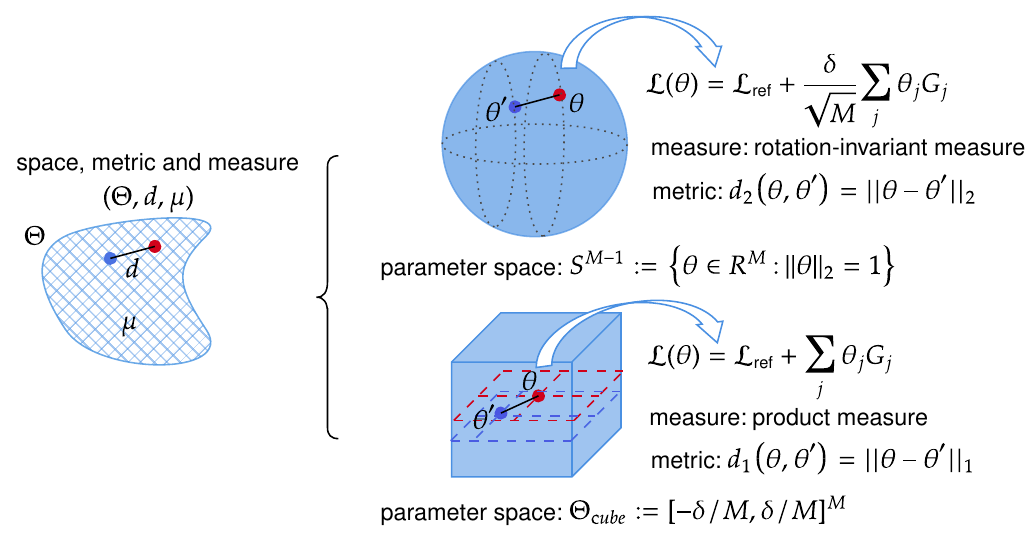}
	\caption{Schematic view of the metric probability spaces used to define
		random Lindbladian ensembles. 
		Left: an abstract parameter space $(\Theta,d,\mu)$, where $\Theta$ is the
		parameter domain, $d$ is the chosen metric on $\Theta$, and $\mu$ is the
		ensemble (sampling) measure.
		Top right: the Haar-sphere ensemble, where parameters are sampled
		uniformly from the high-dimensional sphere
		$S^{M-1}:=\{\theta\in\mathbb R^M:\|\theta\|_2=1\}$ with respect to the
		rotation-invariant (Haar-induced) surface measure.  The metric is the
		Euclidean distance $d_2(\theta,\theta')=\|\theta-\theta'\|_2$, and the
		Lindbladian is parameterized as
		$\mathcal L(\theta)=\mathcal L_{\mathrm{ref}}+\frac{\delta}{\sqrt{M}}
		\sum_j \theta_j G_j$.
		Bottom right: the product-measure ensemble, where each coupling
		$\theta_j$ is an independent random variable supported on
		$[-\delta/M,\delta/M]$, so that
		$\Theta_{\mathrm{cube}}:=[-\delta/M,\delta/M]^M$.
		Here the metric is the $\ell_1$ distance
		$d_1(\theta,\theta')=\|\theta-\theta'\|_1$, the measure is the product
		measure $\mu_{\mathrm{prod}}$, and the Lindbladian is parameterized as
		$\mathcal L(\theta)=\mathcal L_{\mathrm{ref}}+\sum_j \theta_j G_j$.}
	\label{fig:metric_prob_spaces}
\end{figure}

From this perspective, a structural contribution of our work is the systematic route
\[
\text{``linear parametrization + (parameter space, metric, measure)''}
\]
for generating random Lindbladian ensembles: in finite dimension we cover all GKSL generators (Lemma~\ref{lem:affine_span_L}); with appropriate choices of $\Theta,d,\mu$ we can directly plug Lévy or McDiarmid concentration tools into the SQ/QPStat learning-complexity framework.
Conceptually this mirrors traditional random-circuit constructions (Haar or local gate sets), but in the open-system setting it yields a linear, explicitly parametrized, and physically natural realization.

\subsection{Setup}

Once the geometric structure of $(\Theta,d,\mu)$ and the linear parametrization
$\theta\mapsto\mathcal L(\theta)$ are in place, we will study learning and decision
problems for random Lindbladian dynamics under two complementary information-access
models,The overall architecture of our model is summarized in
Fig.~\ref{fig:setup_SQ_interface}.

\begin{itemize}
	\item[(i)] a distribution-level SQ interface, where one fixes a time $t>0$,
	an input state $\rho_{\mathrm{in}}$ and a POVM $\{M_x\}_{x\in X}$, and only sees the
	induced classical output distribution $P_{\mathcal L(\theta)}\in\mathcal D_X$ through
	statistical queries $P_{\mathcal L(\theta)}[\varphi]$ with tolerance $\tau$;
	\item[(ii)] a process-level QPStat interface, where for each parameter
	$\theta$ one regards the CPTP map $E_\theta = e^{t\mathcal L(\theta)}$ as the
	object to be learned, and a QPStat oracle returns approximate expectation values
	$\operatorname{Tr}[O\,E_\theta(\rho)]$ for any state $\rho\in\mathsf S(\mathcal H)$ and operator $O\in\mathcal B(\mathcal H)$ with $\|\rho\|_1\le1$ and $\|O\|_\infty\le1$ with prescribed tolerance.
\end{itemize}

On the distribution side, using the exponentially small upper bounds from
Theorems~\ref{thm:frac_bound_open_system} and \ref{thm:frac_bound_open_system_prod},
together with the general SQ learning and decision-theoretic lemmas of
Feldman and of Nietner \emph{et al.}~\cite{feldman2017general,nietner2025average},
we obtain in Sec.~\ref{result 2} exponential SQ lower bounds
for average-case learning tasks on
random-Lindbladian output distributions. For ease of comparison with the random-circuit
literature on Haar designs, we formulate our SQ hardness results primarily for the
spherical random-perturbation ensemble (Definition~\ref{def:param_sphere_ensemble})
as a baseline, while the independent local-coupling ensemble illustrates that the
framework extends to other physically natural random GKSL models as well.

On the process side, we introduce in Sec.~\ref{zhang8} a QPStat oracle model adapted
to open-system dynamics, together with a general diamond-norm lower-bound framework.
We show that for the same parameter-sphere ensemble $(\Theta,d,\mu)$, the map
$\theta\mapsto E_\theta=e^{t\mathcal L(\theta)}$ inherits a uniform Lipschitz control
with respect to suitable operator norms, and the Lévy- or McDiarmid-type concentration
on $(\Theta,d,\mu)$ again yields exponentially small distinguishable fractions.
Specialized to our linear random-Lindbladian embeddings, this leads to an exponential
QPStat query lower bound for learning the channels $\{E_\theta\}$ in diamond distance,
as stated in Theorem~\ref{thm:lindblad_qpstat_lower_bound}. In Sec.~\ref{sec:crqpuf_applications},
these two access models will be combined with our SQ and QPStat hardness results to
give distribution-level and tomography-based Lindbladian-PUF constructions.

\begin{figure}[t]
	\centering
	\includegraphics[width=0.8\textwidth]{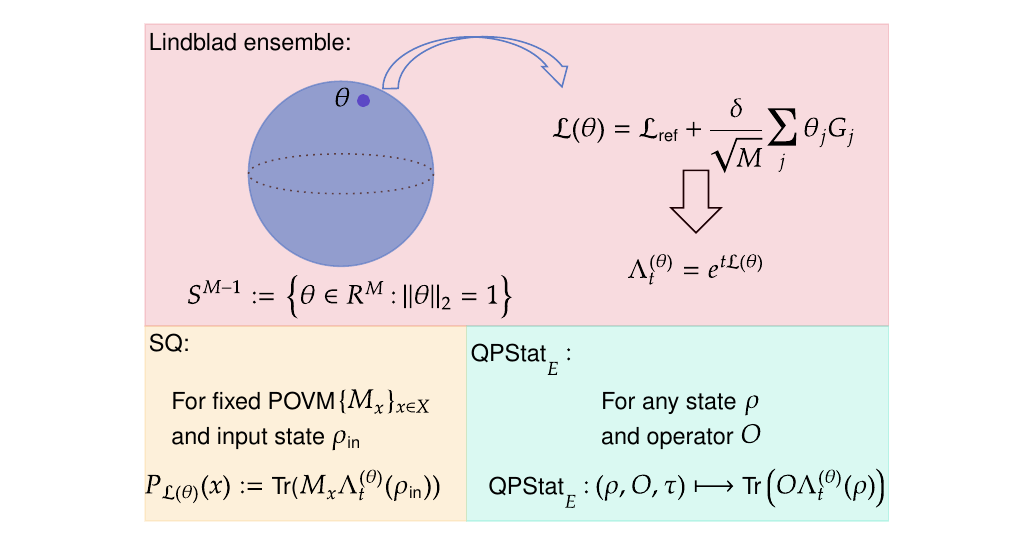}
	\caption{Schematic setup of our random Lindbladian ensemble and the two
		information-access models considered in this work.
		Top: a Lindbladian family is parametrized by points $\theta$ on the
		high-dimensional sphere $S^{M-1}$ via the linear embedding
		$\mathcal L(\theta)=\mathcal L_{\mathrm{ref}}+ \frac{\delta}{\sqrt{M}}\sum_j
		\theta_j G_j$, and generates a quantum Markov semigroup
		$\Lambda_t^{(\theta)} = e^{t\mathcal L(\theta)}$.
		Bottom left: the distribution-level SQ interface, where a fixed input state
		$\rho_{\mathrm{in}}$ and POVM $\{M_x\}_{x\in X}$ define the output distribution
		$P_{\mathcal L(\theta)}(x)=\operatorname{Tr}\bigl(M_x\Lambda_t^{(\theta)}(\rho_{\mathrm{in}})\bigr)$,
		and the learner only accesses expectations $P_{\mathcal L(\theta)}[\varphi]$.
		Bottom right: the process-level QPStat interface for the channel
		$E_\theta:=\Lambda_t^{(\theta)}$, where a QPStat oracle, given a state
		$\rho$, an observable $O$ and a tolerance $\tau$, returns an approximation to
		$\operatorname{Tr}\bigl(O\,\Lambda_t^{(\theta)}(\rho)\bigr)$.}
	\label{fig:setup_SQ_interface}
\end{figure}

We now spell out the concrete random local Lindbladian ensemble that will serve as
a common backbone for both the SQ and the QPStat analyses.

\begin{definition}[Random local Lindbladian ensemble]\label{def:random_local_L_recall}
	Consider a system of $N$ qubits with Hilbert space
	$\mathcal H=(\mathbb C^2)^{\otimes N}$ of dimension $d=2^N$.
	Let $\mathcal B(\mathcal H)$ denote the algebra of bounded operators on $\mathcal H$.
	\begin{enumerate}
		\item Fix a reference Lindbladian generator
		$\mathcal L_{\rm ref}:\mathcal B(\mathcal H)\to\mathcal B(\mathcal H)$
		that satisfies the GKSL conditions.
		Denote the corresponding quantum Markov semigroup (QMS) by
		\[
		\Lambda_t^{({\rm ref})}:=e^{t\mathcal L_{\rm ref}}.
		\]
		\item Construct a family of local Lindblad dissipators $\{G_j\}_{j=1}^M$, where each
		$G_j$ is a linear combination of dissipators
		\[
		\mathcal D_\alpha(\rho)
		=L_\alpha\rho L_\alpha^\dagger - \frac12\{L_\alpha^\dagger L_\alpha,\rho\},
		\]
		associated with local jump operators $L_\alpha$.
		Assume there exists a constant $C_G>0$ (independent of $M$) such that
		\begin{equation}\label{eq:Gj_norm_bound_recall}
			\|G_j\|_{1\to1}\le C_G,\qquad j=1,\dots,M.
		\end{equation}
		\item Take the parameter space to be the real sphere
		\[
		S^{M-1}:=\{\theta\in\mathbb R^M:\ \|\theta\|_2=1\},
		\]
		and let $\mu_{\rm sph}$ denote the uniform (Haar) measure on this sphere.
		For a given perturbation amplitude $\delta>0$, define the random Lindbladian
		\begin{equation}\label{eq:L_theta_delta}
			\mathcal L(\theta)
			:=
			\mathcal L_{\rm ref}
			+\frac{\delta}{\sqrt M}\sum_{j=1}^M\theta_j G_j,
			\qquad \theta\in S^{M-1}.
		\end{equation}
		For $\delta$ sufficiently small, Lemma~\ref{lem:convexity_GKSL_slice} guarantees
		that all $\mathcal L(\theta)$ are still valid GKSL generators. Denote the corresponding
		QMS by
		\[
		\Lambda_t^{(\theta)}:=e^{t\mathcal L(\theta)}.
		\]
		\item Fix an evolution time $t>0$, an initial state $\rho_{\rm in}\in\mathcal S_d^+$,
		and a POVM $\{M_x\}_{x\in X}$ on a finite outcome set $X$.
		For each $\theta$, define the output measurement distribution
		\begin{equation}\label{eq:P_theta_Q_def}
			P_{\mathcal L(\theta)}(x)
			:= \operatorname{Tr}\!\bigl(M_x\,\Lambda_t^{(\theta)}(\rho_{\rm in})\bigr),
			\qquad x\in X.
		\end{equation}
		In the SQ model, the learner only has access to $P_{\mathcal L(\theta)}$ through
		expectations $P_{\mathcal L(\theta)}[\varphi]$.
		\item At the same time, for each $\theta$ we regard the CPTP map
		\[
		E_\theta := \Lambda_t^{(\theta)}
		\]
		as the quantum channel to be learned in the QPStat model. The QPStat oracle
		studied in Sec.~\ref{zhang8} provides approximate values
		$\operatorname{Tr}[O\,E_\theta(\rho)]$ for any state $\rho\in\mathsf S(\mathcal H)$ and operator $O\in\mathcal B(\mathcal H)$ with $\|\rho\|_1\le1$ and $\|O\|_\infty\le1$ with prescribed tolerance. 
	\end{enumerate}
\end{definition}

\subsection{Learning hardness and cryptographic applications}
\label{subsec:summary_learning}

Our second group of results concerns the learning complexity of random Lindbladian
ensembles and their cryptographic applications.  On the distribution side, we show
that the classical output statistics induced by a physically natural random-Lindbladian
ensemble are exponentially hard to learn in the SQ model in
average-case.  On the process side, we introduce a QPStat oracle model for
expectation-value queries to quantum channels and prove an analogous exponential
hardness result for learning random Lindbladian channels in diamond norm.  Finally,
we combine these two query models to design two families of Lindbladians
physically unclonable functions (Lindbladians-PUFs) whose security is based on the above
learning-hardness results.  The technical backbone of our average-case analysis is a
linear-response formula for the mean TV distance together with an explicit
amplitude-damping model where the relevant constants can be computed exactly.

\paragraph*{SQ hardness for output distributions.}

We first fix a physically natural random-Lindbladian ensemble
$\mu_{\mathcal L}$ constructed via the linear parametrization
$\theta\mapsto\mathcal L(\theta)$ on a parameter sphere $\Theta=S^{M-1}$ of
dimension $M$, together with a one-shot measurement interface
$(t,\rho_{\mathrm{in}},\{M_x\}_{x\in X})$, which induces a classical output
distribution $P_{\mathcal L}\in\mathcal D_X$ for each Lindbladian
$\mathcal L$.  In the SQ model, the learner only accesses this distribution
through statistical queries $P_{\mathcal L}[\varphi]$ with tolerance $\tau$,
rather than individual samples.

A key geometric input is that, for any SQ test function
$\varphi:X\to[-1,1]$, the associated map
\[
F_\varphi(\theta):=P_{\mathcal L(\theta)}[\varphi]
\]
has a dimension-independent Lipschitz constant on the parameter sphere
(Lemma~\ref{lem:theta_Lipschitz}), and that $\Theta$ satisfies a
Lévy-type concentration inequality (Theorem~\ref{thm:frac_bound_open_system}).
Equivalently, if $Q$ denotes the ensemble-average distribution
$Q[\varphi]=\mathbb E_{\mathcal L\sim\mu_{\mathcal L}}P_{\mathcal L}[\varphi]$,
then the maximal distinguishable fraction
\[
\max_{\varphi:X\to[-1,1]}
\Pr_{\mathcal L\sim\mu_{\mathcal L}}\bigl[|P_{\mathcal L}[\varphi]-Q[\varphi]|>\tau\bigr]
\]
decays as $\exp(-\Omega(M))$.

To capture typical-instance complexity, we introduce the open-system
Porter-Thomas mean assumption (Assumption~\ref{ass:mean_TV}), which asserts
that for some reference distribution $Q$ the mean TV distance
$\mathbb E_{\mathcal L\sim\mu_{\mathcal L}}d_{\mathrm{TV}}(P_{\mathcal L},Q)$
converges to a constant $m_0>0$ as $M\to\infty$.  Combined with a
Lipschitz bound for $d_{\mathrm{TV}}(P_{\mathcal L},Q)$ in the parameter
$\theta$ (Lemma~\ref{lem:TV_Lipschitz_theta}), this yields a ``far from
$Q$'' volume estimate (Theorem~\ref{thm:far_from_Q_open}) and hence
average-case SQ lower bounds.

\begin{theorem}[Deterministic average-case SQ hardness for decision problems (informal)]
	\label{thm:intro_det_decision_zh}
	Assume the conclusions of Theorem~\ref{thm:frac_bound_open_system} and
	Assumption~\ref{ass:mean_TV}, so that Theorem~\ref{thm:far_from_Q_open}
	applies.  Fix accuracy parameters $\epsilon,\tau\in(0,1)$ and a coverage
	parameter $\beta\in(0,1)$ independent of $M$.  Consider the decision
	problem: given SQ oracle access to an unknown distribution
	$P\in\{Q\}\cup\{P_{\mathcal L}:\mathcal L\sim\mu_{\mathcal L}\}$, decide
	whether $P=Q$ or $P$ comes from the Lindbladian ensemble $\mu_{\mathcal L}$.
	
	Then for sufficiently large $M$, any deterministic SQ algorithm that, using
	at most $q$ queries with tolerance $\tau$, solves this task correctly on a
	subset of Lindbladians of $\mu_{\mathcal L}$-measure at least $\beta$ must
	satisfy
	\[
	q_{\mathrm{det}}^{\mathrm{avg\text{-}dec}}(\epsilon,\tau;\beta)
	\ge \beta\,\exp(c M),
	\]
	where $c>0$ depends only on the geometric constants of the ensemble and is
	independent of $M$.
\end{theorem}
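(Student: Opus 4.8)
The plan is to run the standard deterministic-transcript argument for statistical-query decision problems, in the form used by Feldman and by Nietner \emph{et al.}~\cite{feldman2017general,nietner2025average}, feeding into it the two ensemble-specific inputs already established: the exponentially small distinguishable fraction $\operatorname{frac}(\mu_{\mathcal L},Q,\tau)\le e^{-cM}$ from Theorem~\ref{thm:frac_bound_open_system}, and the ``far from $Q$'' volume estimate from Theorem~\ref{thm:far_from_Q_open}. First I would fix an arbitrary deterministic SQ algorithm $\mathcal A$ using at most $q$ queries of tolerance $\tau$, and recall that against the (worst-case) STAT oracle, $\mathcal A$ is a decision tree: when it is run with an oracle that consistently answers every query $\varphi$ by the value $Q[\varphi]$, it follows one fixed root-to-leaf path, issuing a \emph{fixed} sequence of test functions $\varphi_1,\dots,\varphi_q:X\to[-1,1]$ and terminating at a leaf whose output must be ``$P=Q$'' (since $\mathcal A$ is correct on the instance $P=Q$, for which answering $Q[\varphi_i]$ is a legal response).

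Next I would isolate the set of Lindbladians on which $\mathcal A$ is \emph{forced} to err. Set
\[
B \;:=\; \bigl\{\mathcal L : |P_{\mathcal L}[\varphi_i]-Q[\varphi_i]|\le\tau \text{ for all } i=1,\dots,q\bigr\}\ \cap\ \bigl\{\mathcal L : d_{\mathrm{TV}}(P_{\mathcal L},Q)\ge\epsilon\bigr\}.
\]
For $\mathcal L\in B$, a $\tau$-tolerant oracle for $P_{\mathcal L}$ may legitimately return exactly $Q[\varphi_i]$ on the $i$-th query (this is precisely the worst-case SQ convention), so $\mathcal A$ reproduces the same transcript and again outputs ``$P=Q$'', which is wrong because $P_{\mathcal L}$ is $\epsilon$-far from $Q$. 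Hence $\mathcal A$ fails on all of $B$. Then I would lower-bound $\mu_{\mathcal L}(B)$ by a union bound: by Theorem~\ref{thm:frac_bound_open_system}, for each fixed $\varphi_i$ the slab $\{\mathcal L:|P_{\mathcal L}[\varphi_i]-Q[\varphi_i]|>\tau\}$ has $\mu_{\mathcal L}$-measure at most $e^{-cM}$, so the union over $i\le q$ has measure at most $q\,e^{-cM}$; by Theorem~\ref{thm:far_from_Q_open} (applicable for $\epsilon$ below the threshold fixed by $m_0$ and the $d_{\mathrm{TV}}$-Lipschitz constant of Lemma~\ref{lem:TV_Lipschitz_theta}), $\{\mathcal L:d_{\mathrm{TV}}(P_{\mathcal L},Q)<\epsilon\}$ has measure $e^{-\Omega(M)}$. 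Therefore $\mu_{\mathcal L}(B)\ge 1-q\,e^{-cM}-e^{-\Omega(M)}$.

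To finish, note the success set of $\mathcal A$ is contained in the complement of $B$, so a coverage guarantee of $\beta$ forces $\beta\le q\,e^{-cM}+e^{-\Omega(M)}$, i.e. $q\ge (\beta-e^{-\Omega(M)})\,e^{cM}$. For $M$ large the subtracted term is negligible, and after slightly shrinking the exponent constant if needed this reads $q\ge\beta\,e^{cM}$, with $c$ depending only on the geometric constants of the ensemble; since $\mathcal A$ was arbitrary and adaptivity is harmless (determinism pins down the transcript along the all-$Q$ branch), this is the claimed bound. Note also that the same $B$-based argument handles the alternative reading in which ``solves correctly'' is only asked of the sub-ensemble $\{d_{\mathrm{TV}}(P_{\mathcal L},Q)\ge\epsilon\}$, since that set already carries $\mu_{\mathcal L}$-mass $1-e^{-\Omega(M)}$.

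The main obstacle is not this counting step — which is essentially the Feldman/Nietner decision lemma — but making the two quantitative inputs interlock: one must pick the gap parameter $\epsilon$ small enough that Theorem~\ref{thm:far_from_Q_open} still gives overwhelming mass to $\{d_{\mathrm{TV}}(P_{\mathcal L},Q)\ge\epsilon\}$, while the Porter–Thomas mean assumption (Assumption~\ref{ass:mean_TV}) combined with the $d_{\mathrm{TV}}$-Lipschitz bound only guarantees this for $\epsilon$ bounded away from $m_0$; tracking how $c$, the tolerance $\tau$ entering Theorem~\ref{thm:frac_bound_open_system}, and $m_0$ jointly constrain the admissible triples $(\epsilon,\tau,\beta)$ — and hence for which fixed parameters the ``sufficiently large $M$'' clause kicks in — is the delicate bookkeeping, whereas everything else is routine.
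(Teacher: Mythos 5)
Your proof is correct and follows essentially the same route as the paper: the paper quotes the Feldman/Nietner deterministic average-case decision lower bound (Lemma~\ref{lem:hardness_deciding_det} and Theorem~\ref{thm:avg_case_SQ_lower}) as a black box and plugs in the two ensemble estimates, whereas you inline the standard transcript argument that proves that lemma before plugging in the same two inputs --- $\operatorname{frac}(\mu_{\mathcal L},Q,\tau)\le e^{-\Omega(M)}$ from Theorem~\ref{thm:frac_bound_open_system} and the far-from-$Q$ volume bound from Theorem~\ref{thm:far_from_Q_open}/Corollary~\ref{cor:far_from_Q_typical}. Your closing caveat about needing $\epsilon$ below the scale fixed by $m_0$ is exactly the hidden parameter constraint behind the paper's informal phrasing, so there is no gap.
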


\begin{theorem}[Randomized average-case SQ hardness for decision problems (informal)]
	\label{thm:intro_rand_decision_zh}
	Under the same ensemble assumptions as in
	Theorem~\ref{thm:intro_det_decision_zh}, take constants
	$\epsilon,\tau\in(0,1)$ and $\alpha>1/2$, $\beta\in(0,1)$.
	Any randomized SQ algorithm that, with success probability at least
	$\alpha$ over its internal randomness, correctly decides between
	$P=Q$ and $P\sim\mu_{\mathcal L}$ on a subset of Lindbladians of
	$\mu_{\mathcal L}$-measure at least $\beta$ must satisfy
	\[
	q_{\mathrm{rand}}^{\mathrm{avg\text{-}dec}}(\epsilon,\tau;\alpha,\beta)
	\ge (\alpha-\tfrac12)\,\beta\,\exp(c M),
	\]
	where $c>0$ is independent of $M$.
\end{theorem}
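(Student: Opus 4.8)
The plan is to obtain the randomized bound from the deterministic mechanism behind Theorem~\ref{thm:intro_det_decision_zh} by a Yao-type averaging over the algorithm's internal coins, with careful tracking of the adversarial SQ oracle. Recall first how the deterministic case works: feed the $q$-query algorithm the ``honest'' oracle for the null instance that answers every query $\varphi$ with the exact value $Q[\varphi]$ (a legitimate response, since the tolerance $\tau\ge 0$). Because the algorithm is required to be correct on $Q$, it must output ``$P=Q$'', and in doing so it asks a fixed sequence $\varphi_1,\dots,\varphi_q$. Define the fooled set $B:=\{\mathcal L:\ |P_{\mathcal L(\theta)}[\varphi_i]-Q[\varphi_i]|\le\tau\ \text{for all }i\le q\}$. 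For every $\mathcal L\in B$ an adversarial $P_{\mathcal L}$-oracle that always tries to return $Q[\varphi]$ stays legitimate throughout, so (by induction on the query index) the entire transcript, hence the output, agrees with the $Q$-run and the algorithm wrongly answers ``$P=Q$''. A union bound over the $q$ queries together with the distinguishable-fraction estimate of Theorem~\ref{thm:frac_bound_open_system} gives $\mu_{\mathcal L}(B^{c})\le q\,e^{-cM}$, so the algorithm can be correct only on $B^{c}$. Assumption~\ref{ass:mean_TV} and Theorem~\ref{thm:far_from_Q_open} guarantee that a $1-o(1)$ fraction of ensemble instances is genuinely $\epsilon$-far from $Q$, so the ``ensemble'' answer is correct there and the task is well-posed; requiring correctness on $Q$ and on a $\mu_{\mathcal L}$-measure $\ge\beta$ set of ensemble instances then forces $q\ge\beta\,e^{cM}$, which is Theorem~\ref{thm:intro_det_decision_zh}.

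For the randomized statement, write the algorithm as a family $\{A_r\}$ of deterministic $q$-query SQ algorithms indexed by its internal randomness $r$. Running $A_r$ against the exact-value $Q$-oracle yields a query sequence $\varphi^{(r)}_1,\dots,\varphi^{(r)}_q$ and an output; let $G$ be the set of coins producing output ``$P=Q$'', so that ``correct on $Q$ with probability $\ge\alpha$'' reads $\Pr_r[r\in G]\ge\alpha$. For each $r$ set $B_r:=\{\mathcal L:\ |P_{\mathcal L(\theta)}[\varphi^{(r)}_i]-Q[\varphi^{(r)}_i]|\le\tau\ \forall i\}$, with $\mu_{\mathcal L}(B_r^{c})\le q\,e^{-cM}$ by the same union bound as above. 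Now fix an instance $\mathcal L$ in the coverage set $S$ ($\mu_{\mathcal L}(S)\ge\beta$) and consider the single adversarial $P_{\mathcal L}$-oracle strategy ``on query $\varphi$, return $Q[\varphi]$ whenever this is $\tau$-valid, else return $P_{\mathcal L(\theta)}[\varphi]$''. Against this oracle, every run with $r\in G$ and $\mathcal L\in B_r$ reproduces the $Q$-transcript and outputs the wrong answer ``$P=Q$''; since $A$ is correct on $\mathcal L$ with probability $\ge\alpha$ for this oracle as well, $\Pr_r[r\in G,\ \mathcal L\in B_r]\le 1-\alpha$, and combining with $\Pr_r[r\in G]\ge\alpha$ gives $\Pr_r[\mathcal L\notin B_r]\ge 2\alpha-1$ for every $\mathcal L\in S$.

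It remains to integrate this pointwise estimate over $\mathcal L$ drawn from the conditional measure $\mu_{\mathcal L}(\cdot\mid S)$ and swap the two expectations (Fubini):
\[
2\alpha-1\ \le\ \mathbb{E}_{\mathcal L\sim\mu_{\mathcal L}(\cdot\mid S)}\,\Pr_r[\mathcal L\notin B_r]
\ =\ \mathbb{E}_r\!\left[\frac{\mu_{\mathcal L}(S\cap B_r^{c})}{\mu_{\mathcal L}(S)}\right]
\ \le\ \frac{q\,e^{-cM}}{\beta},
\]
which rearranges to $q\ge(2\alpha-1)\beta\,e^{cM}\ge(\alpha-\tfrac12)\beta\,e^{cM}$, using $\alpha>\tfrac12$; this is the claimed bound, with the same geometric constant $c$ as in Theorem~\ref{thm:intro_det_decision_zh}, and the whole scheme fits the general SQ decision framework of Feldman and of Nietner \emph{et al.}~\cite{feldman2017general,nietner2025average} specialized to the prior $\mu_{\mathcal L}$. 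The only genuinely delicate points are the oracle-mimicking argument under adaptivity --- one must check by induction on $i$ that the mimicking $P_{\mathcal L}$-run and the $Q$-run ask the same $i$-th query as long as $\mathcal L$ lies in the relevant fooled set --- and the tolerance bookkeeping, where answering with the \emph{exact} value $Q[\varphi]$ (rather than an arbitrary $\tau$-valid one) keeps all thresholds at $\tau$ instead of $2\tau$; neither is a real obstacle, since all the measure-concentration effort has already been spent in proving Theorem~\ref{thm:frac_bound_open_system}.
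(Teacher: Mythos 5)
Your proof is correct and follows the same conceptual route as the paper: obtain the distinguishable-fraction bound from Theorem~\ref{thm:frac_bound_open_system}, then feed it into the Feldman--Nietner SQ decision framework. The difference is one of presentation. The paper establishes the theorem by citing the general randomized average-case lemma from Nietner \emph{et al.}\ (Theorem~\ref{thm:random_avg_SQ_complexity} together with Lemma~\ref{lem:random_decision_hard}) and substituting the numerator/denominator estimates; you instead re-derive that lemma in-line via the oracle-mimicking argument (honest $Q$-oracle, fooled sets $B_r$, pointwise bound $\Pr_r[\mathcal L\notin B_r]\ge 2\alpha-1$ on the coverage set, then Fubini). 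This is the same underlying Yao/simulation argument as in~\cite{feldman2017general,nietner2025average}, so I would not call it a different route; it just fills in the mechanics the paper outsources to the citations. One genuine clarification your version buys: for the pure decision problem one does not need the ``far-from-$Q$'' subtraction $\Pr[d_{\mathrm{TV}}(P_{\mathcal L},Q)\le\epsilon+\tau]$ that appears in the paper's displayed numerator --- that term is an artifact of reducing \emph{learning} to decision, and your direct argument dispenses with it (and hence with the appeal to Assumption~\ref{ass:mean_TV}), arriving at the stronger constant $2(\alpha-\tfrac12)\beta$ before weakening. Your tolerance bookkeeping (answer with the exact value $Q[\varphi]$, not an arbitrary $\tau$-valid one) and the adaptive-transcript induction are both handled correctly.
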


\begin{theorem}[Average-case SQ hardness of learning open-system dynamics (informal)]
	\label{thm:intro_avg_learning_zh}
	Under the ensemble assumptions of Theorem~\ref{thm:intro_det_decision_zh},
	take constants $\epsilon,\tau\in(0,1)$ and parameters
	$\alpha>1/2$, $\beta\in(0,1)$.  Consider the SQ learning task:
	given SQ oracle access to an unknown $\mathcal L\sim\mu_{\mathcal L}$
	via a fixed time $t$, input state $\rho_{\mathrm{in}}$, and finite POVM
	interface, output a hypothesis distribution $\widehat P$ satisfying
	$d_{\mathrm{TV}}(\widehat P,P_{\mathcal L})<\epsilon$.
	We only require this goal to hold on a subset of Lindbladians of
	$\mu_{\mathcal L}$-measure at least $\beta$, and for randomized algorithms
	the success probability is required to be at least $\alpha$.
	
	Then for sufficiently large $M$ one has
	\[
	q_{\mathrm{det}}^{\mathrm{avg\text{-}learn}}(\epsilon,\tau;\beta)
	\gtrsim \beta\,\exp(c M),\qquad
	q_{\mathrm{rand}}^{\mathrm{avg\text{-}learn}}(\epsilon,\tau;\alpha,\beta)
	\gtrsim (\alpha-\tfrac12)\,\beta\,\exp(c M),
	\]
	so that both deterministic and randomized average-case SQ learning
	complexities are exponential in $M$.
\end{theorem}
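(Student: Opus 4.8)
The plan is to derive Theorem~\ref{thm:intro_avg_learning_zh} as a black-box consequence of the average-case \emph{decision} hardness results, Theorems~\ref{thm:intro_det_decision_zh} and~\ref{thm:intro_rand_decision_zh}, via the standard learning-to-testing reduction in the statistical-query model (in the spirit of Feldman and of Nietner \emph{et al.}), adapted to our one-shot open-system interface. Concretely, a hypothetical average-case SQ learner for the output-distribution family $\{P_{\mathcal L}\}$ will be converted, at the cost of one additional statistical query and a negligible loss in the coverage parameter, into an average-case SQ decider for the problem of distinguishing $P=Q$ from $P\sim\mu_{\mathcal L}$; the exponential query lower bound of the decision theorems then transfers directly.

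The only structural input beyond the decision theorems is the far-from-$Q$ volume estimate of Theorem~\ref{thm:far_from_Q_open}: under Assumption~\ref{ass:mean_TV} together with the Lévy-type concentration of Theorem~\ref{thm:frac_bound_open_system}, there is a dimension-independent constant $c_0>0$ (of order $m_0$) and a set $\mathcal F\subseteq\Theta$ with $\mu_{\mathcal L}(\mathcal F)\ge 1-\exp(-\Omega(M))$ such that $d_{\mathrm{TV}}(P_{\mathcal L},Q)\ge c_0$ for every $\mathcal L\in\mathcal F$. Suppose now that $\mathcal A$ is an SQ learner using at most $q$ queries of tolerance $\tau$ which, on a set $\mathcal G$ with $\mu_{\mathcal L}(\mathcal G)\ge\beta$, outputs a hypothesis $\widehat P\in\mathcal D_X$ with $d_{\mathrm{TV}}(\widehat P,P_{\mathcal L})<\epsilon$ (for randomized $\mathcal A$, with internal success probability at least $\alpha$ on each $\mathcal L\in\mathcal G$). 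We build a decider $\mathcal B$ as follows: feed $\mathcal A$ the SQ oracle of the unknown $P$ and collect its output $\widehat P$; compute classically the optimal test function $\varphi^\star:=\arg\max_{\varphi:X\to[-1,1]}\bigl(\widehat P[\varphi]-Q[\varphi]\bigr)=\sgn(\widehat P-Q)$, for which $\widehat P[\varphi^\star]-Q[\varphi^\star]=2\,d_{\mathrm{TV}}(\widehat P,Q)$; issue one more statistical query with $\varphi^\star$ and tolerance $\tau$, obtaining $v$ with $|v-P[\varphi^\star]|\le\tau$; and output $1$ (the verdict $P\sim\mu_{\mathcal L}$) iff $v-Q[\varphi^\star]>\tau$, and $0$ otherwise. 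Then $\mathcal B$ makes $q+1$ queries of tolerance $\tau$, and since its post-processing is deterministic it inherits the success mode of $\mathcal A$.

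Correctness of $\mathcal B$ on $\{Q\}\cup\{P_{\mathcal L}:\mathcal L\in\mathcal G\cap\mathcal F\}$ follows from two triangle inequalities. If $P=Q$, then $v-Q[\varphi^\star]=v-P[\varphi^\star]\le\tau$, so $\mathcal B$ outputs $0$; this holds no matter what $\widehat P$ is or how the oracle resolves its tolerance, so $\mathcal B$ never errs on the $Q$ instance. If $P=P_{\mathcal L}$ with $\mathcal L\in\mathcal G\cap\mathcal F$, then $\widehat P[\varphi^\star]-Q[\varphi^\star]=2\,d_{\mathrm{TV}}(\widehat P,Q)\ge 2\bigl(d_{\mathrm{TV}}(P_{\mathcal L},Q)-d_{\mathrm{TV}}(P_{\mathcal L},\widehat P)\bigr)>2(c_0-\epsilon)$, while $\bigl|P_{\mathcal L}[\varphi^\star]-\widehat P[\varphi^\star]\bigr|\le\|P_{\mathcal L}-\widehat P\|_1<2\epsilon$, so $P_{\mathcal L}[\varphi^\star]-Q[\varphi^\star]>2c_0-4\epsilon$ and hence $v-Q[\varphi^\star]>2c_0-4\epsilon-\tau$; provided $\epsilon$ and $\tau$ lie below a fixed threshold of order $m_0$ (say $\epsilon<c_0/4$ and $\tau<c_0/2$), this exceeds $\tau$, so $\mathcal B$ outputs $1$. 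Hence $\mathcal B$ is a valid decider for the problem with coverage parameter $\beta':=\mu_{\mathcal L}(\mathcal G\cap\mathcal F)\ge\beta-\exp(-\Omega(M))$; it is deterministic when $\mathcal A$ is, and in the randomized case it succeeds with probability at least $\alpha$ over its internal coins on each $\mathcal L\in\mathcal G\cap\mathcal F$. Invoking Theorems~\ref{thm:intro_det_decision_zh} and~\ref{thm:intro_rand_decision_zh} with coverage $\beta'\ge\beta/2$ (valid once $M$ is large) gives $q+1\ge\beta'\exp(cM)$ and $q+1\ge(\alpha-\tfrac12)\beta'\exp(cM)$, and absorbing the $+1$ and the constants yields
\[
q_{\mathrm{det}}^{\mathrm{avg\text{-}learn}}(\epsilon,\tau;\beta)\gtrsim\beta\exp(cM),
\qquad
q_{\mathrm{rand}}^{\mathrm{avg\text{-}learn}}(\epsilon,\tau;\alpha,\beta)\gtrsim(\alpha-\tfrac12)\,\beta\exp(cM),
\]
as claimed.

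The main obstacle, and the point at which the informal bound $\epsilon,\tau\in(0,1)$ must really be replaced by the requirement that $\epsilon$ and $\tau$ lie below a constant determined by $m_0$ and the ensemble geometry, is precisely this constant matching: the reduction produces a usable decision gap only when $\epsilon$ and $\tau$ are a fixed fraction of the open-system Porter-Thomas constant $c_0=\Theta(m_0)$. Some restriction of this kind is genuinely unavoidable, since if $\epsilon\gtrsim m_0$ the trivial hypothesis $\widehat P\equiv Q$ already satisfies $d_{\mathrm{TV}}(\widehat P,P_{\mathcal L})<\epsilon$ on a $1-o(1)$ fraction of the ensemble (by concentration of $\mathcal L\mapsto d_{\mathrm{TV}}(P_{\mathcal L},Q)$ about $m_0$), which makes the learning task vacuous; this is exactly where Assumption~\ref{ass:mean_TV} and the linear-response analysis establishing $m_0>0$ are genuinely needed. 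Everything else in the argument --- intersecting $\mathcal G$ with $\mathcal F$, the single extra query, and the passage from $\beta'$ back to $\beta$ --- is routine bookkeeping, harmless once $M$ is large because Theorem~\ref{thm:far_from_Q_open} controls the discarded measure at the level $\exp(-\Omega(M))$.
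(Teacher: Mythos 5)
Your proof is correct and follows essentially the same route the paper takes in Sec.~\ref{result 2}: reduce learning to the decision problem $\{P_{\mathcal L}\}$ vs.\ $Q$ at the cost of one extra statistical query, intersect the learner's coverage set with the far-from-$Q$ set controlled by Theorem~\ref{thm:far_from_Q_open} and Corollary~\ref{cor:far_from_Q_typical}, and invoke the concentration-driven decision lower bound; the paper packages all this in the cited Theorems~\ref{thm:avg_case_SQ_lower} and~\ref{thm:random_avg_SQ_complexity}, whereas you unpack the reduction explicitly (your $\mathcal B$ is precisely the construction underlying Lemma~\ref{lem:learning_hard_as_deciding}) and then apply the decision theorems. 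Your remark that $\epsilon$ and $\tau$ must in fact be a fixed fraction of $m_0$ for the numerator to stay bounded away from zero is a correct and useful clarification of what the informal range $\epsilon,\tau\in(0,1)$ really means.
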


Our average-case learning-complexity results are proved using
``far-from-$Q$'' techniques, but we also establish ``far-from-$D$''
results which show that typical outputs of random Lindbladians cannot be
approximated by any fixed low-complexity classical model.

Specifically, Theorem~\ref{thm:far_from_all_D_open} shows that for a
parameter-sphere ensemble satisfying Lipschitz-Lévy concentration and
the open-system Porter-Thomas mean assumption, with probability
exponentially close to $1$ the output distribution $P_{\mathcal L}$ is
at least a constant distance $\varepsilon_*>0$ in total variation from
any predetermined classical distribution $D$, and the measure of the
``bad'' set decays exponentially in $M$.  From a complexity-theoretic
standpoint this is closely related to the conjecture of
Aaronson and Chen~\cite{aaronson2016complexity} on random circuit
outputs and underlies the intuition that there is no fixed low-complexity
family of classical distributions that can approximate the outputs of
most random Lindbladians.

\paragraph*{QPStat hardness for learning random Lindbladian channels.}

The SQ results above are phrased at the level of classical output
distributions.  In Sec.~\ref{zhang8}, we also consider a process
learning model where the learner has QPStat oracle access to a channel
$E_\theta:=\Lambda_t^{(\theta)}=e^{t\mathcal L(\theta)}$:
for any input state $\rho$ and observable $O$, an oracle query returns
an approximation to $\operatorname{Tr}(O\,E_\theta(\rho))$ up
to additive tolerance $\tau$.

For the same parameter-sphere ensemble $(\Theta,\mu_\Theta)$ we show
that learning the underlying channel up to small diamond distance is
also exponentially hard on average.

\begin{theorem}[Exponential QPStat hardness of learning random Lindbladian channels (informal)]
	\label{thm:intro_QPStat}
	Let $E_\theta=\Lambda_t^{(\theta)}$ be the channels generated
	by the spherical random-perturbation Lindbladian ensemble of
	Definition~\ref{def:random_local_L_recall}, and let $\mu_\Theta$ denote
	the parameter-sphere measure.  Consider QPStat oracle access with
	tolerance $\tau$ as in Definition~\ref{def:QPStat_oracle}, and fix
	accuracy $\epsilon>0$ and coverage $\beta\in(0,1)$.
	
	Any (possibly quantum) QPStat learning algorithm that, using at most
	$q$ QPStat queries, outputs with success probability at least
	$\alpha>1/2$ a hypothesis channel $\widehat{E}$ satisfying
	$\|\widehat{E}-E_\theta\|_\diamond\le\epsilon$ on a
	subset of parameters of $\mu_\Theta$-measure at least $\beta$ must
	satisfy
	\[
	q \ge
	(\alpha-\tfrac12)\,\beta\,
	\exp\!\Bigl(
	c\,\frac{M\tau^2}{L_*^2}
	\Bigr)
	= (\alpha-\tfrac12)\,\beta\,\exp\!\bigl(\Omega(M)\bigr),
	\]
	where $L_*$ is a dimension-independent Lipschitz constant for the map
	$\theta\mapsto\operatorname{Tr}\bigl(O\Lambda_t^{(\theta)}(\rho)\bigr)$
	(Lemma~\ref{lem:qpstat_concentration}), and $c>0$ is independent of $M$.
	A precise formulation is given in Theorem~\ref{thm:lindblad_qpstat_lower_bound}.
\end{theorem}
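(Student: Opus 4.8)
The plan is to transplant the statistical‑query oracle‑simulation argument of \cite{feldman2017general,nietner2025average} — already used on the distribution side to prove Theorems~\ref{thm:intro_det_decision_zh}--\ref{thm:intro_avg_learning_zh} — to the process level, and then to reduce diamond‑norm channel learning to a two‑sided decision problem. Concretely, the precise version (Theorem~\ref{thm:lindblad_qpstat_lower_bound}) is proved in three stages: (i) a dimension‑free concentration estimate for the single‑query functionals $\theta\mapsto\operatorname{Tr}(O\,E_\theta(\rho))$ on the parameter sphere; (ii) a QPStat decision lower bound obtained by answering every query with a single $\theta$‑independent ``mean‑channel'' value; (iii) a reduction turning any diamond‑norm channel learner into such a decider, powered by a channel‑level ``far from any fixed channel'' estimate fed by the linear‑response analysis of Sec.~\ref{zhang4}.

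For stage (i), fix a query $(\rho,O)$ with $\|\rho\|_1\le1$, $\|O\|_\infty\le1$ and set $f_{\rho,O}(\theta):=\operatorname{Tr}(O\,E_\theta(\rho))$. Since each $\mathcal L(\theta)$ is GKSL for $\delta$ small (Lemma~\ref{lem:convexity_GKSL_slice}), $e^{s\mathcal L(\theta)}$ is CPTP for $s\ge0$, so the identity $e^{t\mathcal L(\theta)}-e^{t\mathcal L(\theta')}=\int_0^1 e^{st\mathcal L(\theta)}\,t\bigl(\mathcal L(\theta)-\mathcal L(\theta')\bigr)\,e^{(1-s)t\mathcal L(\theta')}\,\mathrm ds$ combined with $\|G_j\|_{1\to1}\le C_G$ and Cauchy--Schwarz gives $|f_{\rho,O}(\theta)-f_{\rho,O}(\theta')|\le t\delta C_G\,\|\theta-\theta'\|_2$, i.e.\ a Lipschitz constant $L_*=t\delta C_G$ independent of $M$ (Lemma~\ref{lem:qpstat_concentration}). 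Lévy's inequality on $S^{M-1}$ (Lemma~\ref{lem:levy_sphere}) then yields, for every query and every tolerance $\tau>0$,
\[
\Pr_{\theta\sim\mu_\Theta}\!\bigl[\,|f_{\rho,O}(\theta)-\operatorname{Tr}(O\bar E(\rho))|>\tau\,\bigr]\;\le\;2\exp\!\Bigl(-\tfrac{(M-1)\tau^2}{2L_*^2}\Bigr),
\]
where $\bar E:=\mathbb E_{\theta\sim\mu_\Theta}E_\theta$ is the (CPTP) mean channel and the mean/median gap $O(L_*/\sqrt M)$ is absorbed into constants. For stage (ii), because the QPStat oracle returns classical numbers, any — possibly quantum — learner making $q$ adaptive queries is, once its internal coins are fixed, a deterministic function of the transcript of answers; feeding it the canonical transcript produced by the mean‑channel oracle pins down a fixed query sequence $(\rho_1,O_1),\dots,(\rho_q,O_q)$ and a fixed output channel $\widehat E_{\mathrm{ref}}$. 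The first index at which the true oracle is forced to disagree with the mean‑channel oracle requires $\theta$ to lie in the exceptional set of that query, so a union bound over the $q$ queries shows that the learner's transcript, hence its output $\widehat E=\widehat E_{\mathrm{ref}}$, coincides with the mean‑channel run for all $\theta$ outside a set of $\mu_\Theta$‑measure $\le 2q\exp(-(M-1)\tau^2/2L_*^2)$. This is exactly the mechanism of the QPStat decision lower bound: no algorithm can distinguish ``channel drawn from $\mu_\Theta$'' from ``channel $=\bar E$'' better than chance plus this quantity on the ensemble side, which after the standard randomized‑to‑deterministic (Yao) reduction — source of the $(\alpha-\tfrac12)$ factor — and a Markov argument over the coverage set — source of the $\beta$ factor — gives $q\ge(\alpha-\tfrac12)\beta\exp\!\bigl(\Omega(M\tau^2/L_*^2)\bigr)$ for the decision problem.

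Stage (iii) reduces learning to deciding. Given a learner outputting $\widehat E$ with $\|\widehat E-E_\theta\|_\diamond\le\epsilon$ for $\theta$ in a set $S$ with $\mu_\Theta(S)\ge\beta$, conditioning its coins forces (up to the exceptional set above) $\widehat E=\widehat E_{\mathrm{ref}}$, so all those $E_\theta$ lie in a single diamond‑ball of radius $\epsilon$ about $\widehat E_{\mathrm{ref}}$. A process analogue of Theorem~\ref{thm:far_from_all_D_open} — namely that for the spherical ensemble of Definition~\ref{def:random_local_L_recall}, $\Pr_\theta[\|E_\theta-F\|_\diamond\le\epsilon]\le\exp(-\Omega(M))$ for every fixed CPTP $F$, provided $\epsilon$ is a sufficiently small constant relative to $\delta$ — then forces $\beta\le 2q\exp(-\Omega(M\tau^2/L_*^2))+\exp(-\Omega(M))$, i.e.\ $q\ge(\alpha-\tfrac12)\beta\exp(\Omega(M\tau^2/L_*^2))$. (Equivalently one keeps the decision framing: threshold $\|\widehat E-\bar E\|_\diamond$ near $\tfrac32\epsilon$ and use $\|E_\theta-\bar E\|_\diamond>\tfrac52\epsilon$, valid off an exponentially small set, to make the learner a correct decider.) The separation estimate is in turn obtained by combining the Kubo‑type first‑order expansion $E_\theta-\bar E=\tfrac{\delta}{\sqrt M}\sum_j\theta_j\,\partial_j E+O(\delta^2)$ of Sec.~\ref{zhang4} (whose response norm is $\Omega(\delta)$ in the amplitude‑damping chain of Sec.~\ref{zhang5}, the nonvanishing $m_0$) with Lipschitz--Lévy concentration of $\theta\mapsto\|E_\theta-F\|_\diamond$.

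The main obstacle is precisely this channel‑level ``far from any fixed channel'' estimate in stage (iii). Unlike the Lipschitz bound, it is a \emph{lower} bound on how spread out the family $\{E_\theta\}$ is, and nothing in the abstract construction rules out the nonlinear exponential map $\theta\mapsto e^{t\mathcal L(\theta)}$ compressing the sphere, or the linear‑response map $\theta\mapsto\partial_\theta E|_0$ being low rank, in which case many $E_\theta$ would nearly coincide and no exponential learning bound could survive; controlling this nondegeneracy is exactly the purpose of the linear‑response computation of Sec.~\ref{zhang4} and the explicit amplitude‑damping model of Sec.~\ref{zhang5}. A secondary subtlety is that a single QPStat query with an unentangled input $\rho$ only probes a $1\!\to\!1$‑type distance, whereas the target accuracy is in diamond norm, so the separation must be phrased directly in the stabilized (diamond) metric on the ensemble side — again a structural statement about the $G_j$ — rather than read off from the oracle functionals. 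By contrast, making the adaptive union bound rigorous for a quantum learner (the canonical‑transcript bookkeeping) and tracking the $(\alpha-\tfrac12)\beta$ bookkeeping is routine.
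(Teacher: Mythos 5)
Stages (i) and (ii) of your plan do match the paper: your Lipschitz estimate for $\theta\mapsto\operatorname{Tr}(O\,E_\theta(\rho))$ is Lemma~\ref{lem:qpstat_Lipschitz} verbatim (perturbation formula over the straight-line homotopy, CPTP contractivity, Cauchy--Schwarz), and your Lévy concentration of $f_{\rho,O}$ around the mean-channel value is Lemma~\ref{lem:qpstat_concentration}. Where you depart from the paper is in \emph{re-deriving} the decision/many-vs-one mechanism rather than importing it: the paper's proof of Theorem~\ref{thm:lindblad_qpstat_lower_bound} is a mechanical substitution of the concentration bound into Lemma~\ref{lem:qpstat_many_vs_one}, which is taken as a black box from~\cite{wadhwa2024noise}, followed by absorbing the additive $+1$ and multiplicative constants into $c$. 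Your stage~(ii) is, in effect, a self-contained proof of that imported lemma (canonical transcript, union bound over $q$ queries, randomized-to-deterministic reduction for the $(\alpha-\tfrac12)$ factor, Markov over the coverage set for $\beta$); this is more work than the paper does, but it is the right mechanism.

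Your stage~(iii), however, has no counterpart in the paper's proof, and here your instinct is sharper than the written argument. Lemma~\ref{lem:qpstat_many_vs_one} as literally stated carries no separation hypothesis between $\bar E=\mathbb E_\theta E_\theta$ and the ensemble, yet without one the conclusion is false: if $\|E_\theta-\bar E\|_\diamond\le\varepsilon$ for all $\theta$, the zero-query learner ``output $\bar E$'' succeeds with any $(\alpha,\beta)$, while the right-hand side of~\eqref{eq:qpstat_many_vs_one} diverges (since then $\Gamma=0$ by Hölder, because $\|\cdot\|_{1\to1}\le\|\cdot\|_\diamond$). The distribution-level analogue of this separation is supplied by Assumption~\ref{ass:mean_TV}, Lemma~\ref{lem:TV_Lipschitz_theta} and Theorem~\ref{thm:far_from_Q_open}, and is justified by the linear-response computation in Secs.~\ref{zhang4}--\ref{zhang5}; but the paper never states or proves the diamond-norm version $\Pr_\theta[\|E_\theta-\bar E\|_\diamond\le\epsilon+\tau]\le e^{-\Omega(M)}$ that a rigorous use of the many-vs-one template requires. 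You have correctly identified this as the genuine technical obstacle and correctly pointed out that it is a structural non-degeneracy statement about the $G_j$'s in the \emph{stabilized} metric (not recoverable from single-query $1\!\to\!1$-type functionals), not merely a bookkeeping step. In short: your proposal does not introduce an error relative to the paper; it exposes an implicit assumption that the paper's proof of Theorem~\ref{thm:lindblad_qpstat_lower_bound} inherits unexamined from the imported lemma. To complete your proof (and to tighten the paper's) one would need a channel-level Porter--Thomas/far-from-$\bar E$ estimate, e.g.\ a Lipschitz-Lévy argument for $\theta\mapsto\|E_\theta-F\|_\diamond$ combined with a lower bound on the mean coming from the linear-response vectors, promoted from the TV level of Theorem~\ref{thm:mean_TV_linear_response} to the diamond norm.
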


Thus, for our random-Lindbladian ensembles, not only the induced output
distributions but also the underlying channels themselves are
exponentially hard to learn in natural expectation-value oracle models.

\paragraph*{PUFs from random Lindbladian ensembles.}

The learning-hardness results obtained above have direct implications
for cryptography.  In Sec.~\ref{sec:crqpuf_applications} we turn them
into two families of PUFs based on random Lindbladian ensembles.

\begin{itemize}
	\item[(A)] Distribution-level Lindbladian-PUF with SQ interface (Scheme A): the
	challenge is a parameter $\theta$, the device implements the
	corresponding random Lindbladian channel $E_\theta=e^{t\mathcal
		L(\theta)}$, and we only consider the resulting classical output
	distribution $P_\theta$ after a fixed measurement
	$(t,\rho_{\mathrm{in}},\{M_x\})$.  In the trusted factory phase, the
	verifier (manufacturer) has sample-level access and enrolls a
	probability fingerprint of $P_\theta$; after deployment, the physical
	device exposed to any external user (including a modelling adversary
	$A$) provides only an SQ interface
	$\operatorname{Stat}_\tau(P_\theta)$.  Scheme~A uses an exponentially
	large family of Hadamard-type test functions
	$\mathcal F_{\mathrm{Had}}=\{\varphi_{\mathrm{BIT}}:\mathrm{BIT}\in\{0,1\}^L\}$
	(Definition~\ref{def:hadamard_tests}) as the challenge space: in each
	authentication round, the verifier samples fresh
	$\varphi_{\mathrm{BIT}}$ from $\mathcal F_{\mathrm{Had}}$ and checks
	whether the SQ replies agree with the enrolled fingerprint values
	(Definition~\ref{def:crqpuf_sq_hadamard}).  Because
	$\mathcal F_{\mathrm{Had}}$ forms a tomographically complete
	Hadamard-type basis for $\mathcal D_X$, passing verification implies
	that the adversary has effectively learned the underlying output
	distribution in total variation distance; at the same time, any
	table-lookup strategy that tries to pre-compute the SQ answers on all
	challenges in $\mathcal F_{\mathrm{Had}}$ would require exponentially
	many SQ queries.  The SQ security theorem
	(Theorem~\ref{thm:crqpuf_sq_hadamard_security}) formalizes this: with
	no structural restrictions on the internal strategy of $A$, any
	adversary that, using only SQ access $\operatorname{Stat}_\tau(P_\theta)$,
	impersonates the honest device with non-negligible success probability
	must issue exponentially many SQ queries in the parameter dimension
	~$M$, whereas the honest verifier’s enrollment and verification costs
	remain polynomial in $|X|$ and $1/\tau$.
	
	\item[(B)] Tomography-based Lindbladian-PUF (Scheme B): in an extended QPStat
	model with an informationally complete family of input operators and
	observables, we define a tomographic fingerprint map $\mathcal T$ that
	sends each channel $E_\theta$ to a finite list of expectation values
	(Definition~\ref{def:tomographic_fingerprint}).  Lemma~\ref{lem:fingerprint_norm_equivalence}
	shows that $\mathcal T$ is real-linear and injective and that the
	induced tomographic norm $\|\cdot\|_{\mathrm{tom}}$ is equivalent to
	the diamond norm $\|\cdot\|_\diamond$.  In Scheme~B the verifier
	stores $\mathcal T(E_\theta)$ at enrolment and later authenticates by
	asking the prover to reproduce (up to tolerance) the entire
	tomographic fingerprint matrix
	(Definition~\ref{def:tomographic_crqpuf}).  Passing verification is
	therefore equivalent, at the level of information theory, to having
	learned $E_\theta$ in diamond distance.
\end{itemize}

In both schemes, any modelling adversary that manages to produce a
successful classical emulator of the PUF necessarily solves a
corresponding SQ or QPStat learning problem and therefore faces
exponential query complexity.

\begin{theorem}[Lindbladian-PUF security from learning hardness (informal)]
	\label{thm:intro_CRQPUF}
	For the random-Lindbladian ensembles considered above:
	\begin{itemize}
		\item[(i)] Scheme~A (probability-fingerprint Lindbladian-PUF with SQ
		interface).  Under the SQ hardness assumptions of
		Theorems~\ref{thm:intro_avg_learning_zh}
		, together with the Hadamard
		test-function construction in
		Definition~\ref{def:hadamard_tests}, any adversary~$A$ which, after
		the device leaves the factory, interacts with it \emph{only} via the
		SQ oracle $\operatorname{Stat}_\tau(P_\theta)$, and which attempts
		to impersonate the honest device in Scheme~A with success
		probability at least $\alpha_{learn}>1/2$ on a
		$\mu_{\mathcal L}$-fraction at least $\beta_{learn}$ of challenges, must
		make
		\[
		q_{\mathrm{adv}}^{\mathrm{SQ}}
		\;\gtrsim\;
		(\alpha_{learn}-\tfrac12)\,\beta_{learn}\,\exp(c M)
		\]
		statistical queries
		(Theorem~\ref{thm:crqpuf_sq_hadamard_security}).  Here we do not
		impose any structural restrictions on the internal strategy of~$A$:
		it may be an arbitrary (possibly randomized) algorithm, as long as
		all its interactions with the physical device go through the SQ
		interface.  In contrast, the honest verifier’s enrollment and
		verification costs scale only polynomially in the system size and
		$1/\tau$.
		
		\item[(ii)] Scheme~B (tomography-based Lindbladian-PUF).  In the extended
		QPStat model of Definition~\ref{def:extended_qpstat}, impersonating
		the device in Scheme~B up to a small tomographic-norm
		(equivalently, diamond-norm) error would yield a QPStat learner that
		$\varepsilon$-learns the underlying channel on a non-negligible
		fraction of the ensemble.  By Theorem~\ref{thm:intro_QPStat} and its
		formal version Theorem~\ref{thm:lindblad_qpstat_lower_bound}, any
		such adversary must use exponentially many QPStat queries in the
		parameter dimension $M$ (Theorem~\ref{thm:crqpuf_tomography_security}).
		At the same time, the tomographic fingerprint $\mathcal T(E_\theta)$
		has $D^2\sim d^4$ entries, so the honest verifier’s enrollment and
		verification costs also scale on the order of $d^4$; in typical
		local Lindbladian models with parameter dimension $M$ polynomial in
		$d$, this cost is comparable to the information content of the
		channel itself.  Scheme~B should therefore be viewed as an
		information-theoretic benchmark rather than an efficiently
		verifiable cryptographic protocol.
	\end{itemize}
\end{theorem}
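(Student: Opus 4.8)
The plan is to prove both parts by a black-box reduction: a successful impersonator against the Lindbladian-PUF is converted into an SQ (resp.\ QPStat) learner for the underlying output distribution (resp.\ channel), and then Theorem~\ref{thm:intro_avg_learning_zh} (resp.\ Theorem~\ref{thm:intro_QPStat} / Theorem~\ref{thm:lindblad_qpstat_lower_bound}) supplies the exponential query lower bound. The only substantive ingredient beyond ``reduction plus known hardness'' is the statement that \emph{passing the verification protocol is essentially equivalent to having learned the hidden object in the relevant metric}: this is exactly the role played by the tomographic completeness of $\mathcal F_{\mathrm{Had}}$ in Scheme~A and by the norm equivalence $\|\cdot\|_{\mathrm{tom}}\simeq\|\cdot\|_\diamond$ (Lemma~\ref{lem:fingerprint_norm_equivalence}) in Scheme~B. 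Importantly, the hypothesis that $A$'s interaction with the deployed device is confined to the SQ (resp.\ QPStat) oracle is what guarantees that the extracted algorithm is a bona fide SQ (resp.\ QPStat) algorithm in the sense of Feldman and of Nietner \emph{et al.}, so the lower bounds apply verbatim.

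For part~(i), I would fix an adversary $A$ that, after the device leaves the factory, issues $q_{\mathrm{adv}}^{\mathrm{SQ}}$ queries to $\operatorname{Stat}_\tau(P_\theta)$ and thereby produces an internal emulator answering verification challenges; by hypothesis, on a $\mu_{\mathcal L}$-fraction at least $\beta_{learn}$ of parameters $\theta$, $A$ passes the Scheme~A verification of Definition~\ref{def:crqpuf_sq_hadamard} with probability at least $\alpha_{learn}>1/2$. Since a verification round checks agreement of $A$'s reply with the enrolled fingerprint value $P_\theta[\varphi_{\mathrm{BIT}}]$ for a fresh $\varphi_{\mathrm{BIT}}\in\mathcal F_{\mathrm{Had}}$ up to the interface tolerance, a reverse-Markov argument over $A$'s internal randomness (together with the protocol's round structure, which boosts single-round success $\alpha_{learn}>1/2$ to near-complete agreement) shows that with probability at least of order $\alpha_{learn}-\tfrac12$ the emulator's replies $\widehat a_{\mathrm{BIT}}$ are within tolerance of $P_\theta[\varphi_{\mathrm{BIT}}]$ on almost all of $\mathcal F_{\mathrm{Had}}$. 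Because $\mathcal F_{\mathrm{Had}}$ is a Hadamard-type tomographically complete family for $\mathcal D_X$ (Definition~\ref{def:hadamard_tests}), the linear map sending a signed measure $\nu$ on $X$ to $(\nu[\varphi_{\mathrm{BIT}}])_{\mathrm{BIT}}$ is invertible with conditioning controlled by $\mathrm{poly}(|X|)$; inverting it on the observed (mostly accurate) coefficient vector produces a hypothesis distribution $\widehat P$ with $d_{\mathrm{TV}}(\widehat P,P_{\mathcal L})<\epsilon$, once the verification tolerance $\tau$ is calibrated to order $\epsilon/\mathrm{poly}(|X|)$ as built into the interface. Thus ``run $A$, then read off $\widehat P$'' is a (possibly randomized) SQ learning algorithm for $P_{\mathcal L}$ using $q_{\mathrm{adv}}^{\mathrm{SQ}}$ queries of tolerance $\tau$ and succeeding with probability $\ge\alpha_{learn}$ on a $\beta_{learn}$-fraction, so Theorem~\ref{thm:intro_avg_learning_zh} forces $q_{\mathrm{adv}}^{\mathrm{SQ}}\gtrsim(\alpha_{learn}-\tfrac12)\,\beta_{learn}\,\exp(cM)$, which is Theorem~\ref{thm:crqpuf_sq_hadamard_security}. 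The honest-verifier side is routine bookkeeping: enrolling the probability fingerprint is empirical estimation of the $|X|$ probabilities to additive precision $O(\tau)$ at cost $\mathrm{poly}(|X|,1/\tau)$ samples by Hoeffding, and each verification round costs one test-function evaluation.

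For part~(ii) the argument is structurally identical but at the channel level. An impersonator in Scheme~B must, on challenge, reproduce the entire tomographic fingerprint matrix $\mathcal T(E_\theta)=\bigl(\operatorname{Tr}[O_b\,E_\theta(\rho_a)]\bigr)_{a,b}$ up to the stated tolerance (Definition~\ref{def:tomographic_crqpuf}); hence it implicitly outputs a channel $\widehat E$ with $\|\mathcal T(\widehat E)-\mathcal T(E_\theta)\|$ small, i.e.\ $\|\widehat E-E_\theta\|_{\mathrm{tom}}$ small, and by the real-linearity, injectivity and norm equivalence of $\mathcal T$ (Lemma~\ref{lem:fingerprint_norm_equivalence}) this upgrades to $\|\widehat E-E_\theta\|_\diamond\le\epsilon$ once the per-entry tolerance is calibrated to $\epsilon$ divided by the ($d$-dependent but $M$-independent) equivalence constant. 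Since the adversary builds $\widehat E$ using only QPStat calls to $E_\theta$ in the extended model of Definition~\ref{def:extended_qpstat}, it is a QPStat learner of the type covered by Theorem~\ref{thm:intro_QPStat} / Theorem~\ref{thm:lindblad_qpstat_lower_bound} --- whose lower-bound framework is insensitive to which informationally complete family of states and observables the oracle is allowed, as it depends only on the dimension-independent Lipschitz constant $L_*$ of $\theta\mapsto\operatorname{Tr}(O\,\Lambda_t^{(\theta)}(\rho))$ and on the Lévy concentration on $(\Theta,\mu_\Theta)$ --- so we obtain $q\ge(\alpha-\tfrac12)\,\beta\,\exp(\Omega(M))$, which is Theorem~\ref{thm:crqpuf_tomography_security}. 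The honest verifier now stores and re-checks $D^2\sim d^4$ expectation values, each to precision $\tau$ at cost $\mathrm{poly}(1/\tau)$ measurements, giving enrollment/verification cost $\widetilde{O}(d^4)$; with $M=\mathrm{poly}(d)$ this matches the channel's own information content, so Scheme~B is an information-theoretic benchmark rather than an efficient protocol.

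The step I expect to be the main obstacle is the \emph{robust tomographic inversion} in part~(i): converting ``$A$'s replies match the fingerprint on most, but not necessarily all, Hadamard challenges, and only within additive tolerance $\tau$'' into a genuine $d_{\mathrm{TV}}$-closeness guarantee at the target accuracy $\epsilon$. This requires tracking the conditioning of the Hadamard transform on $\mathcal D_X$ explicitly (so that the $\tau$-to-$\epsilon$ calibration is honest and the PUF interface can be advertised with concrete parameters), and it requires either a stability analysis tolerating a small fraction of corrupted coefficients or the round structure of the protocol to lift single-round success $\alpha_{learn}>1/2$ to near-complete agreement; making these constants line up cleanly with the $(\alpha_{learn}-\tfrac12)\,\beta_{learn}$ prefactor inherited from Theorem~\ref{thm:intro_avg_learning_zh} is the delicate part. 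Part~(ii) has no analogous difficulty, since there the fingerprint is by construction the full informationally complete list and Lemma~\ref{lem:fingerprint_norm_equivalence} already packages the needed stability into the $\|\cdot\|_{\mathrm{tom}}\simeq\|\cdot\|_\diamond$ equivalence.
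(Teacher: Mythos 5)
Your proposal matches the paper's approach in both structure and technical ingredients. For Scheme~A you correctly identify the reduction ``run $A$, extract an emulator, reconstruct a hypothesis distribution'' and the two key technical tools: (a) a Markov-type argument over $A$'s internal randomness plus the round structure of the protocol to control the fraction of Hadamard challenges on which the emulator is accurate (the paper's Lemma~\ref{lem:hadamard_single_round} and the threshold $\alpha_0$ in Step~1 of Theorem~\ref{thm:crqpuf_sq_hadamard_security}), and (b) a Parseval-type $L_2$ stability of the Hadamard transform to convert ``accurate on most challenges up to tolerance'' into a $d_{\mathrm{TV}}$ bound of order $\sqrt{|X|}\,\kappa$ (the paper's Lemmas~\ref{lem:hadamard_parseval} and~\ref{lem:hadamard_reconstruct_dist}). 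The ``robust tomographic inversion'' obstacle you flag is indeed exactly what Lemma~\ref{lem:hadamard_reconstruct_dist} is designed to resolve: the paper does not require agreement on every challenge, but instead bounds $\mathbb E_{\mathrm{BIT}}[\Delta_\theta(\mathrm{BIT})^2]\le(C_0+1)^2\tau^2 + 4\delta_{\mathrm{Had}}$ and then applies Parseval plus Cauchy--Schwarz, which gives the $\sqrt{|X|}$ conditioning you anticipated. For Scheme~B your argument is essentially verbatim the paper's Theorem~\ref{thm:crqpuf_tomography_security}: reconstruct the unique linear map $\widehat E$ with the reported fingerprint, use injectivity and norm equivalence of $\mathcal T$ (Lemma~\ref{lem:fingerprint_norm_equivalence}) to upgrade $\|\cdot\|_{\mathrm{tom}}$ closeness to $\|\cdot\|_\diamond$ closeness, and feed the resulting learner into Theorem~\ref{thm:lindblad_qpstat_lower_bound}. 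The one place your sketch is slightly looser than the paper is in the constant tracking for Scheme~A: the paper's threshold choice $\alpha_0=(2\alpha_{\mathrm{auth}}-1)/2$ produces a learner success probability $p_0=1/(3-2\alpha_{\mathrm{auth}})>1/2$ rather than ``of order $\alpha-\tfrac12$,'' but since the informal statement only claims a $\gtrsim$ with the $(\alpha-\tfrac12)\beta$ prefactor inherited from the SQ lower bound, this is a cosmetic rather than substantive difference.
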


Conceptually, these Lindbladian-PUF constructions illustrate how the learning
hardness of random Lindbladian ensembles can be turned into positive
cryptographic security guarantees, both at the level of classical output
distributions and at the level of full quantum channels.  Scheme~A
realizes a genuinely efficiently verifiable Lindbladian-PUF whose
verification cost is polynomial while any SQ-modelling attack is
exponentially hard; Scheme~B, in contrast, achieves diamond-norm control
but pays the natural tomographic cost of order $d^4$, highlighting the
trade-off between security and verification cost in tomography-based
settings.

\paragraph*{Technical tools: linear response and an explicit model.}

To support the average-case results above, we develop two technical
tools.  First, we derive a linear-response analytic expression for the
mean TV distance
\[
m(\delta):=\mathbb E_{\theta\sim\mu_\Theta}
d_{\mathrm{TV}}\bigl(P_{\mathcal L(\theta)},Q\bigr),
\]
which explicitly tracks the dependence on the parameter dimension $M$
and on the first-order response-coefficient vectors $\{a_{j,x}\}$, and
reduces the physical requirement of a nonzero mean constant to an
extensivity condition on $\sum_x\|a^{(x)}\|_2$.

\begin{theorem}[Linear-response analytic scaling of the mean TV distance (informal)]
	\label{thm:intro_linear_response_informal}
	Under the setting of Definition~\ref{def:random_local_L_recall} and
	Lemma~\ref{lem:perturbation_linear_response}, let
	\[
	F_{\mathrm{TV}}(\theta)
	:= d_{\mathrm{TV}}\bigl(P_{\mathcal L(\theta)},Q\bigr)
	= \frac12\sum_{x\in X}\bigl|P_{\mathcal L(\theta)}(x)-Q(x)\bigr|,
	\quad
	m(\delta):=\mathbb E_{\theta\sim\mu_\Theta}F_{\mathrm{TV}}(\theta),
	\]
	and define
	\[
	a^{(x)}:=\bigl(a_{1,x},\dots,a_{M,x}\bigr)\in\mathbb R^M,
	\qquad
	\|a^{(x)}\|_2
	:=\Bigl(\sum_{j=1}^M a_{j,x}^2\Bigr)^{1/2},
	\]
	where $a_{j,x}$ are given by the first-order response formula
	Eq.~\eqref{eq:ajx_def}.
	
	Then in the small-noise limit $\delta\to 0$ (linear-response regime),
	one has
	\[
	m(\delta)
	= \delta\, m_0(M)+O(\delta^2),
	\qquad
	m_0(M)
	:=\frac{\kappa_M}{2M}\sum_{x\in X}\|a^{(x)}\|_2,
	\]
	where $\kappa_M$ is a coefficient determined by spherical geometry
	(Eq.~\eqref{eq:kappa_M_def}) and satisfies
	$\kappa_M\to\sqrt{2/\pi}$ as $M\to\infty$.
	
	In particular, in the high-dimensional limit,
	\[
	m(\delta)
	= \delta\cdot\frac{\sqrt{2/\pi}}{2M}
	\sum_{x\in X}\|a^{(x)}\|_2
	+O(\delta^2)+o(\delta).
	\]
	This provides a precise linear-response analytic version of the
	open-system Porter-Thomas mean assumption,
	$\mathbb E_\theta d_{\mathrm{TV}}(P_{\mathcal L(\theta)},Q)\approx m_0>0$.
	
	\begin{itemize}
		\item[(i)] If the first-order response coefficients satisfy an
		extensivity scaling in the limit $M\to\infty$,
		\[
		\frac1M\sum_{x\in X}\|a^{(x)}\|_2 \longrightarrow c_a>0,
		\]
		then there exists a constant $c_a'>0$ independent of $M$ such that
		$m(\delta)\approx c_a'\,\delta$, so the mean TV distance remains of
		order $O(\delta)$ with a nonzero constant prefactor.
		\item[(ii)] Conversely, if
		$\sum_x\|a^{(x)}\|_2=o(M)$, then $m_0(M)\to 0$ and the mean TV
		distance decreases with system size; in this regime the typical
		$P_{\mathcal L(\theta)}$ approaches $Q$ in TV distance and SQ-hardness
		criteria based on $d_{\mathrm{TV}}(P_{\mathcal L},Q)$ cannot yield
		nontrivial lower bounds.
	\end{itemize}
	The rigorous version is given in Theorem~\ref{thm:mean_TV_linear_response}
	and Remark~\ref{rmk:mean_TV_PT}.
\end{theorem}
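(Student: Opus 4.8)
The plan is to transport the first-order (Kubo) perturbative expansion of the output distribution through the total-variation functional, and then evaluate the resulting spherical average in closed form. \emph{Step 1: reduce $F_{\mathrm{TV}}$ to a sum of absolute linear forms.} I would start from Lemma~\ref{lem:perturbation_linear_response} and the linear parametrization \eqref{eq:L_theta_delta}: expanding $e^{t\mathcal L(\theta)}=e^{t(\mathcal L_{\rm ref}+V(\theta))}$ with $V(\theta)=\tfrac{\delta}{\sqrt M}\sum_j\theta_j G_j$ in a Duhamel/Dyson series and pairing with $\rho_{\rm in}$ and $M_x$ gives, for every $x\in X$, $P_{\mathcal L(\theta)}(x)=Q(x)+\tfrac{\delta}{\sqrt M}\langle a^{(x)},\theta\rangle+R_x(\theta,\delta)$, where $Q(x)=P_{\mathcal L_{\rm ref}}(x)$ is the zeroth-order term (equivalently, to this order, the ensemble mean, since $\mathbb E_{\theta\sim\mu_\Theta}[\theta_j]=0$), the coefficients $a_{j,x}$ are exactly those of \eqref{eq:ajx_def}, and $R_x$ is the Dyson tail. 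Using the reverse triangle inequality $\bigl||u+v|-|u|\bigr|\le|v|$ term-by-term over the finite set $X$, I obtain $F_{\mathrm{TV}}(\theta)=\tfrac{\delta}{2\sqrt M}\sum_{x\in X}|\langle a^{(x)},\theta\rangle|+O(\delta^2)$ with the remainder uniform in $\theta$ (justified in the last step), so after taking expectations $m(\delta)=\tfrac{\delta}{2\sqrt M}\sum_{x}\mathbb E_{\theta\sim\mu_\Theta}|\langle a^{(x)},\theta\rangle|+O(\delta^2)$.

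\emph{Step 2: the exact spherical average and the constant $\kappa_M$.} For fixed $x$, rotation invariance of $\mu_\Theta$ makes $\langle a^{(x)},\theta\rangle$ equal in distribution to $\|a^{(x)}\|_2\,\theta_1$, hence $\mathbb E_{\theta\sim\mu_\Theta}|\langle a^{(x)},\theta\rangle|=\|a^{(x)}\|_2\,\mathbb E_{\theta\sim\mu_\Theta}|\theta_1|$. The marginal law of $\theta_1$ on $S^{M-1}$ has density $\propto(1-u^2)^{(M-3)/2}$ on $[-1,1]$ (so $\theta_1^2\sim\mathrm{Beta}(\tfrac12,\tfrac{M-1}{2})$), and a Beta/Gamma integral gives $\mathbb E_{\theta\sim\mu_\Theta}|\theta_1|$ in closed form; I set $\kappa_M:=\sqrt M\,\mathbb E_{\theta\sim\mu_\Theta}|\theta_1|$, which is \eqref{eq:kappa_M_def}. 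Either by Stirling applied to the resulting Gamma ratio, or by noting $\sqrt M\,\theta_1\Rightarrow\mathcal N(0,1)$ together with uniform integrability (the family is bounded in $L^2$, since $\mathbb E[M\theta_1^2]=1$) and $\mathbb E|\mathcal N(0,1)|=\sqrt{2/\pi}$, I conclude $\kappa_M\to\sqrt{2/\pi}$. Substituting $\mathbb E_{\theta\sim\mu_\Theta}|\langle a^{(x)},\theta\rangle|=\tfrac{\kappa_M}{\sqrt M}\|a^{(x)}\|_2$ into Step 1 yields $m(\delta)=\tfrac{\delta\kappa_M}{2M}\sum_{x\in X}\|a^{(x)}\|_2+O(\delta^2)=\delta\,m_0(M)+O(\delta^2)$, which is the claimed formula.

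\emph{Step 3: high-dimensional limit and the dichotomy.} Replacing $\kappa_M$ by $\sqrt{2/\pi}$ costs $\delta\,\tfrac{|\kappa_M-\sqrt{2/\pi}|}{2M}\sum_x\|a^{(x)}\|_2$, which is $o(\delta)$ as $M\to\infty$, giving the stated high-dimensional expression. Items (i)--(ii) are then immediate from the closed form of $m_0(M)$: if $\tfrac1M\sum_x\|a^{(x)}\|_2\to c_a>0$ then $m_0(M)\to\tfrac12\sqrt{2/\pi}\,c_a$, so $m(\delta)\approx c_a'\,\delta$ with $c_a':=\tfrac12\sqrt{2/\pi}\,c_a>0$; if instead $\sum_x\|a^{(x)}\|_2=o(M)$ then $m_0(M)\to0$ and the mean TV distance is $o(\delta)$ uniformly, so $P_{\mathcal L(\theta)}$ approaches $Q$ in total variation and any $d_{\mathrm{TV}}(P_{\mathcal L},Q)$-based SQ criterion degenerates.

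\emph{Main obstacle: an $M$-independent remainder bound.} The only genuinely delicate point is that the $O(\delta^2)$ remainder in Steps 1--2 must carry a constant \emph{independent of $M$}; otherwise it could swamp $\delta\,m_0(M)$, which may itself scale like $\tfrac1M\sum_x\|a^{(x)}\|_2$. The resolution is the uniform estimate $\|V(\theta)\|_{1\to1}\le\tfrac{\delta}{\sqrt M}\sum_j|\theta_j|\,\|G_j\|_{1\to1}\le\delta\,C_G$, valid for all $\theta\in S^{M-1}$ and all $M$ by Cauchy--Schwarz ($\sum_j|\theta_j|\le\sqrt M\,\|\theta\|_2=\sqrt M$) and \eqref{eq:Gj_norm_bound_recall}. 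Feeding $\|V(\theta)\|_{1\to1}\le\delta C_G$ into the Dyson tail, and using that $e^{s\mathcal L_{\rm ref}}$ and $e^{s\mathcal L(\theta)}$ are $1\to1$ contractions (both are valid GKSL semigroups for small $\delta$, by Lemma~\ref{lem:convexity_GKSL_slice}), bounds the channel-level remainder in $1\to1$ norm by $\tfrac12(\delta C_G t)^2 e^{\delta C_G t}$, hence $|R_x(\theta,\delta)|\le C(t,C_G)\,\delta^2$ after contracting against the unit-norm $\rho_{\rm in}$ and $M_x$; summing over the fixed finite set $X$ keeps the total error $O(\delta^2)$ with no $M$-dependence. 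With this estimate in hand, Steps 1--3 go through routinely; the rigorous packaging is Theorem~\ref{thm:mean_TV_linear_response} and Remark~\ref{rmk:mean_TV_PT}.
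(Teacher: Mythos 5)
Your proposal is correct and follows essentially the same route as the paper's proof (Lemma~\ref{lem:perturbation_linear_response}, Lemma~\ref{lem:sphere_inner_product}, and Theorem~\ref{thm:mean_TV_linear_response}): first-order Duhamel expansion with an $M$-uniform remainder via $\|K(\theta)\|_{1\to1}\le C_G$, the reverse triangle inequality to reduce $F_{\mathrm{TV}}$ to absolute linear forms, the Beta/Gamma computation of $\mathbb E|\theta_1|$ to get $\kappa_M$, and then the $M\to\infty$ asymptotics. The only cosmetic deviations are your looser remainder bound $\tfrac12(\delta C_G t)^2 e^{\delta C_G t}$ (the paper uses contractivity of the interpolated semigroup to drop the exponential, getting $\tfrac12 C_G^2 t^2\,\delta^2$) and your optional CLT-plus-uniform-integrability route to $\kappa_M\to\sqrt{2/\pi}$ in place of Stirling, neither of which changes the argument.
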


Second, we verify the existence of a positive mean constant $m_0>0$ in a
simple yet fully explicit one-dimensional random local amplitude-damping
model, by computing exactly the mean norm of the first-order
response coefficients in that model.

\begin{theorem}[Explicit computation of the mean scaling constant in a random local amplitude-damping model (informal)]
	\label{thm:intro_amplitude_damping_informal}
	In the random local amplitude-damping Lindbladian model of
	Definition~\ref{def:local_amp_damp_model}, consider a one-dimensional
	chain of $N$ qubits,
	\[
	\mathcal H = (\mathbb C^2)^{\otimes N},\quad M:=N,
	\]
	take the reference evolution to be trivial,
	$\mathcal L_{\mathrm{ref}}=0$, and the initial state to be
	$\rho_{\mathrm{in}}=\ket{1\cdots 1}\!\bra{1\cdots 1}$.
	At each site $j$, introduce a local amplitude-damping jump operator
	\[
	J_j^{(-)}:=\sqrt\gamma\,\sigma_j^{-},\qquad
	G_j:=\mathcal D_j^{(-)},
	\]
	and measure the output in the computational-basis POVM
	$\{M_x=\ket{x}\!\bra{x}\}_{x\in\{0,1\}^N}$.
	
	In this specific model, one can compute the first-order response
	coefficients $\{a_{j,x}\}$ explicitly and show that
	\[
	\frac1M\sum_{x\in X}\|a^{(x)}\|_2
	= \gamma t\Bigl(1+\frac1{\sqrt M}\Bigr),
	\]
	which, when substituted into the general linear-response expression,
	yields
	\[
	m(\delta)
	= \delta\cdot\frac{\kappa_M}{2}\,\gamma t
	\Bigl(1+\frac1{\sqrt M}\Bigr)
	+O(\delta^2),
	\]
	where $\kappa_M$ is as above.
	Letting $M\to\infty$ and using $\kappa_M\to\sqrt{2/\pi}$, we obtain
	\[
	m(\delta)
	\xrightarrow[M\to\infty]{}
	m_\ell\,\delta +O(\delta^2),
	\qquad
	m_\ell:=\gamma t\,\sqrt{\frac1{2\pi}}>0.
	\]
	That is,
	\[
	\mathbb E_{\theta}\,d_{\mathrm{TV}}\bigl(P_{\mathcal L(\theta)},Q\bigr)
	= m_\ell\,\delta +O(\delta^2),\qquad m_\ell>0,
	\]
	with a linear prefactor $m_\ell$ independent of $M$.
	This shows that, after appropriate normalization, the requirement
	$m_0>0$ in the open-system Porter-Thomas mean assumption can be
	rigorously verified in a simple physically realizable toy model.
\end{theorem}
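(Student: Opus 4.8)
The plan is to derive the statement directly from the general linear-response formula (Theorem~\ref{thm:mean_TV_linear_response}, summarized informally as Theorem~\ref{thm:intro_linear_response_informal}) applied to the model of Definition~\ref{def:local_amp_damp_model}; the only genuine work is to compute the first-order response coefficients $\{a_{j,x}\}$ of Eq.~\eqref{eq:ajx_def} in closed form. The key simplification is that $\mathcal{L}_{\mathrm{ref}}=0$ trivializes the unperturbed semigroup, $\Lambda_t^{(\mathrm{ref})}=\mathrm{id}$, so the reference distribution is the point mass at the all-ones string $\mathbf 1 = 1\cdots1$, i.e.\ $Q(x)=\delta_{x,\mathbf 1}$, and the Kubo/Duhamel expansion of Lemma~\ref{lem:perturbation_linear_response} collapses to
\[
\Lambda_t^{(\theta)}(\rho_{\mathrm{in}})
=\rho_{\mathrm{in}}+\frac{\delta t}{\sqrt M}\sum_{j=1}^M\theta_j\,G_j(\rho_{\mathrm{in}})+O(\delta^2),
\]
with an $O(\delta^2)$ remainder uniform over $\theta\in S^{M-1}$ by $\|G_j\|_{1\to1}\le C_G$ (Eq.~\eqref{eq:Gj_norm_bound_recall}) and $\|\theta\|_1\le\sqrt M$. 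Pairing with $M_x=|x\rangle\!\langle x|$ then gives $P_{\mathcal{L}(\theta)}(x)-Q(x)=\tfrac{\delta}{\sqrt M}\sum_j\theta_j a_{j,x}+O(\delta^2)$ with $a_{j,x}=t\,\operatorname{Tr}\!\bigl(M_x\,G_j(\rho_{\mathrm{in}})\bigr)$.

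Next I would compute $G_j(\rho_{\mathrm{in}})$ by hand. Since $\rho_{\mathrm{in}}=|1\rangle\!\langle1|^{\otimes N}$ is the fully excited state, the amplitude-damping dissipator $G_j=\mathcal{D}_j^{(-)}$, $\mathcal{D}_j^{(-)}(\rho)=\gamma\bigl(\sigma_j^-\rho\,\sigma_j^+-\tfrac12\{\sigma_j^+\sigma_j^-,\rho\}\bigr)$, acts very simply: $\sigma_j^-\rho_{\mathrm{in}}\sigma_j^+=|e_j\rangle\!\langle e_j|$ where $e_j:=1\cdots0_j\cdots1$ is the single-defect string with site $j$ flipped, while $\sigma_j^+\sigma_j^-$ is the local excited-state projector, of which $\rho_{\mathrm{in}}$ is an eigenvector with eigenvalue $1$, so the anticommutator term equals exactly $\rho_{\mathrm{in}}$. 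Hence $G_j(\rho_{\mathrm{in}})=\gamma\bigl(|e_j\rangle\!\langle e_j|-|\mathbf 1\rangle\!\langle\mathbf 1|\bigr)$, and therefore
\[
a_{j,x}=\gamma t\,\bigl(\delta_{x,e_j}-\delta_{x,\mathbf 1}\bigr),
\]
i.e.\ $a_{j,x}=\gamma t$ if $x=e_j$, $a_{j,x}=-\gamma t$ if $x=\mathbf 1$ (independently of $j$), and $a_{j,x}=0$ otherwise. This closed form is the technical heart of the argument; what makes it exact is precisely that $\rho_{\mathrm{in}}$ diagonalizes every $\sigma_j^+\sigma_j^-$.

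It then remains to assemble the vectors $a^{(x)}=(a_{1,x},\dots,a_{M,x})$ and sum their norms. Only $M+1$ strings contribute: $\|a^{(\mathbf 1)}\|_2=\gamma t\sqrt M$ (all $M$ components equal $-\gamma t$), $\|a^{(e_j)}\|_2=\gamma t$ for each of the $M$ single-defect strings $e_j$, and $\|a^{(x)}\|_2=0$ otherwise, so that
\[
\frac1M\sum_{x\in X}\|a^{(x)}\|_2=\frac{\gamma t\sqrt M+M\gamma t}{M}=\gamma t\Bigl(1+\frac1{\sqrt M}\Bigr).
\]
Substituting into $m(\delta)=\delta\,\dfrac{\kappa_M}{2M}\sum_x\|a^{(x)}\|_2+O(\delta^2)$ (Theorem~\ref{thm:mean_TV_linear_response}, with $\kappa_M$ from Eq.~\eqref{eq:kappa_M_def}) gives $m(\delta)=\delta\,\tfrac{\kappa_M}{2}\gamma t(1+1/\sqrt M)+O(\delta^2)$; letting $M\to\infty$ with $\kappa_M\to\sqrt{2/\pi}$ and $1/\sqrt M\to0$, and using $\tfrac12\sqrt{2/\pi}=\sqrt{1/(2\pi)}$, yields $m(\delta)\to m_\ell\,\delta+O(\delta^2)$ with $m_\ell=\gamma t\sqrt{1/(2\pi)}>0$, as claimed.

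I expect the only genuinely subtle point to be the validity, for the highly non-generic response direction $a^{(\mathbf 1)}\propto(1,\dots,1)$, of the single-direction concentration identity $\mathbb E_{\theta\sim\mu_{\mathrm{sph}}}|\langle\theta,a^{(x)}\rangle|=(\kappa_M/\sqrt M)\|a^{(x)}\|_2$ that underlies Theorem~\ref{thm:mean_TV_linear_response}: a naive ``average $\approx\|a^{(x)}\|_2/\sqrt M$ up to constants'' heuristic is most easily doubted exactly for such a vector. The resolution is that the identity holds for an \emph{arbitrary} fixed vector by rotation-invariance of $\mu_{\mathrm{sph}}$ (reduce to $a^{(x)}=\|a^{(x)}\|_2 e_1$, where $\langle\theta,e_1\rangle=\theta_1$ concentrates like $\mathcal N(0,1/M)$), so nothing beyond citing the general theorem is needed. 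A minor bookkeeping item is that $\mathbb E_\theta$ and the $\delta\to0$ limit commute, which follows from the uniform-in-$\theta$ Duhamel remainder; with these in place the theorem follows by substitution.
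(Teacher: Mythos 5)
Your proposal is correct and follows essentially the same route as the paper: compute $G_j(\rho_{\mathrm{in}})=\gamma(|e_j\rangle\!\langle e_j|-|\mathbf 1\rangle\!\langle\mathbf 1|)$ (Lemma~\ref{lem:local_amp_damp_action}), read off $a_{j,x}=\gamma t(\delta_{x,e_j}-\delta_{x,\mathbf 1})$ (Lemma~\ref{lem:a_jx_explicit}), sum norms to get $\frac1M\sum_x\|a^{(x)}\|_2=\gamma t(1+1/\sqrt M)$ (Lemma~\ref{lem:a_x_norms}), and substitute into the general linear-response formula (Theorem~\ref{thm:mean_TV_linear_response}). Your side remark about the rank-one direction $a^{(\mathbf 1)}\propto(1,\dots,1)$ is correctly resolved by rotation invariance, exactly as in Lemma~\ref{lem:sphere_inner_product}.
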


These two technical results explain why the mean-distance assumption used
in our average-case SQ hardness theorems is not ad hoc but follows from
explicit geometric and model-dependent considerations.

\section{Discussion and outlook}

In this work, we start from a seemingly simple yet fundamental question:
under finite evolution time and finite measurement resources, is it possible to reliably reconstruct the noise and dissipation structure acting on a high-dimensional open quantum system, using only statistical information from a single-time evolution?
To this end, we construct a class of physically natural random Lindbladian ensembles on the affine hull of the GKSL generator cone, and systematically apply the frameworks of classical statistical query (SQ) learning and quantum process statistical query (QPStat) learning to the open-system setting.
From this unified perspective, we obtain a collection of exponential lower bounds on average-case learning complexity, and then turn these complexity results into cryptographic security conditions for Lindbladian-PUF protocols driven by random Lindbladians, thereby establishing a clear link between open-system dynamics, learning theory, and quantum cryptography.

Conceptually, the core structural ingredients of this work are the combination of linear parametrization and metric-measure spaces.
We first show that the affine hull of the GKSL cone admits a complete linear parametrization in terms of a finite-dimensional basis of superoperators together with a reference Lindbladian, giving rise to a family of local Lindbladians of the form
\(
\mathcal L(\theta)
\)
in the superoperator space.
On this basis, instead of imposing a measure directly on the GKSL cone, we first choose a geometrically well-behaved metric-measure structure \((\Theta,d,\mu)\) on parameter space (for instance, a high-dimensional spherical ensemble with rotational symmetry or a product ensemble with independent local couplings), and then push it forward to the Lindbladian space.
The idea is inspired by constructions of Haar or design measures for random quantum circuits, but technically we must handle additional constraints coming from Liouvillian semigroups and local Lindbladians, in order to obtain a genuine metric-measure Lindbladian ensemble.

Within this geometric and measure-theoretic framework, we analyze the learning complexity of random Lindbladians in a systematic way.
On the distribution level, once a single-shot measurement interface is fixed, each Lindbladian induces a classical output distribution, and the learner is only allowed to access expectations of this distribution through SQ queries.
We prove that, for any bounded test function, the map from parameters to expectations has a dimension-independent Lipschitz constant on parameter space, and in the spherical and product-measure models it satisfies Lévy and McDiarmid concentration inequalities, respectively.
Combined with existing general SQ lower-bound techniques, this directly yields exponentially scaling query complexities in the parameter dimension for both testing and learning problems:
even if one only requires success on a subset of parameters of nonzero measure, average-case learning of typical random Lindbladian output distributions remains extremely hard in the SQ sense.
On the channel level, we formulate a QPStat access model tailored to Lindbladians, which allows the learner to query expectation values of the evolved channel with a finite tolerance.
We show that, under the same parameter-space structure, the map from parameters to channels also satisfies a uniform Lipschitz condition with respect to appropriate operator norms, and thereby obtain an exponential QPStat query lower bound for average-case learning in diamond distance.
This result demonstrates that, even in the open-system setting and even when we only care about finite-time CPTP evolution, typical Lindbladian channels exhibit strong average-case non-learnability at the process level, in structural analogy with SQ or QPStat hardness results for random unitary circuits in closed systems.

To endow the above average-case lower bounds with a clear physical content, we introduce and analyze an open-system version of a Porter-Thomas-type ``average total variation distance hypothesis'': under suitable normalization and parameter scaling, the average total variation distance between the random Lindbladian output distribution and a fixed reference distribution converges in the high-dimensional limit to a strictly positive constant.
Using a Kubo-like first-order response expansion, we reduce this hypothesis to geometric properties of the vector of first-order response coefficients.
In a concrete model of a local random amplitude-damping chain, we explicitly compute the scaling behavior of these coefficients and the dependence of the average TV distance on the parameter dimension and evolution time, and show that in physically natural parameter regimes of small noise and local jump operators the average distance stays bounded away from zero.
Together with numerical simulations, this establishes a unified physical picture linking microscopic Lipschitz-Lévy control to macroscopic scaling of the average TV distance.

On the cryptographic side, we construct two types of Lindbladian-PUF protocols based on random Lindbladians.
The first family of schemes operates entirely at the distribution level. The verification interface is deliberately restricted to statistical-query access to output distributions, and its security follows directly from the SQ hardness of distribution learning.
The second family allows a finite number of QPStat queries or experimental tomography; in an appropriate operator basis we introduce a tomographic fingerprint, and use the equivalence between the associated tomographic fingerprint norm and the diamond norm to translate channel unclonability into the QPStat hardness of learning Lindbladian channels.
Although the verification cost of the latter is not ideal from a cryptographic standpoint, taken together the two schemes demonstrate a basic message in the open-system setting:
as long as one carefully constrains the information structure available to the verifier and the adversary in the protocol design, average-case exponential lower bounds on learning complexity can be naturally converted into security constraints for Lindbladian-PUFs.

Looking ahead, we expect the framework developed here to be extendable and deepenable in several directions.
A direct extension is to consider more general Lindbladian ensembles and non-Markovian effects.
The present work focuses mainly on linear-perturbation models of local GKSL generators, which are naturally restricted.
It is therefore of interest to move on to Davies generators satisfying detailed balance, thermal Lindbladians with energy-selective structure, or even effective Lindbladian approximations with weak non-Markovian memory kernels.
Understanding how to maintain good metric-measure structure and concentration properties in these more complex noise models is a key step toward characterizing the ``typical complexity'' of more realistic open systems.

Second, our results indicate that random Lindbladians already exhibit a certain form of open-system pseudorandomness under SQ and QPStat models.
A natural long-term goal is to build on our construction to define and characterize ``pseudorandom Lindbladians'' and to relate them to quantum pseudorandom circuits, noise engineering, and open-system cryptography.
This may eventually lead to open-system analogues of unitary $t$-designs and pseudorandom unitaries.

Third, it is natural to look for quantum learning advantages under stronger access models.
The present work concentrates on relatively weak access via SQ and QPStat.
An important question is whether, in models that allow quantum examples, quantum statistical queries, or trajectory-level data, there exist open-system tasks that are classically SQ-hard but QSQ-easy, thereby demonstrating genuine quantum learning advantages rooted in dissipation and the environment.
In this regard, embedding the Lindbladian ensembles introduced here into the framework of generative quantum advantage may lead to new separation results.

Moreover, it is worthwhile to explore connections to experimental noise and NISQ hardware in more depth.
Our local random Lindbladian and amplitude-damping chain models share certain structural features with noise encountered on superconducting qubits, trapped ions, and other platforms, but they remain idealized.
It is therefore natural and important to fit random Lindbladian models satisfying our assumptions to specific hardware architectures, to use experimental data to verify Lipschitz-Lévy concentration and the scaling of the average TV distance, and to realize simplified prototypes of Lindbladian-PUFs directly on hardware.

Finally, the potential links between learning complexity, statistical mechanics, and thermodynamics are particularly intriguing.
Lindbladian dynamics inherently encodes nonequilibrium thermodynamic quantities such as entropy production rates and heat currents.
Combined with our average-case hard-learning results, this raises the question of whether hard-to-learn Lindbladian dynamics are necessarily accompanied by some extreme thermodynamic behavior, and whether one can thereby establish closer connections between learning complexity, mixing times, and nonequilibrium statistical mechanics.

Overall, this work shows that in the setting of open systems and Lindbladian dynamics, ``typicality'' and ``complexity'' remain two sides of the same coin:
on the one hand, random Lindbladians exhibit strong measure concentration and typical behavior in parameter space; on the other hand, under weak access models such as SQ and QPStat, their output distributions and channels display inherent average-case non-learnability.
We hope that this work can serve as a starting point for further investigations of randomness, complexity, and learnability in open systems, and foster more interplay with quantum advantage, noise engineering, and quantum cryptography in the future.

\emph{Acknowledgments:} We are grateful to Ming-Cheng Chen, Chao-Yang Lu, Yi-En Liang and Zhong-Xia Shang for valuable discussions. C. C. is supported by the National Natural Science Foundation of China (No.124B100020). R. B. is supported by JSPS KAKENHI Grant No. 25KJ0766.

\section{Preliminaries}\label{zhang12}

\subsection{Norm conventions and basic inequalities}\label{subsec:norms} 

In this subsection we unify several types of norm notation used in this paper, for later use in Lipschitz constants, concentration estimates, and stability analysis of Lindbladians. Our main reference are books~\cite{bhatia2013matrix} and~\cite{watrous2018theory}.

\begin{definition}[Euclidean norms on parameter space]\label{def:euclidean_norm}
	Let the parameter vector be $\theta=(\theta_1,\dots,\theta_M)\in\mathbb{R}^M$. Define
	\begin{equation}
		\|\theta\|_2 := \Bigl(\sum_{j=1}^M \theta_j^2\Bigr)^{1/2},\qquad
		\|\theta\|_1 := \sum_{j=1}^M |\theta_j|,\qquad
		\|\theta\|_\infty := \max_{1\le j\le M}|\theta_j|.
	\end{equation}
	In particular, the parameters of random Lindbladians in this paper are typically taken from the unit sphere
	$S^{M-1}:=\{\theta\in\mathbb{R}^M:\|\theta\|_2=1\}$.
\end{definition}

\begin{definition}[Schatten norms of operators]\label{def:schatten_norms}
	Let $\mathcal{H}\cong\mathbb{C}^d$ and $\mathcal{B}(\mathcal{H})$ the algebra of bounded operators on it.
	For any $A\in\mathcal{B}(\mathcal{H})$, denote its singular values by $s_1,\dots,s_d\ge0$.
	For $p\in[1,\infty]$, define the Schatten $p$-norms by
	\begin{align}
		\|A\|_1 &:= \sum_{i=1}^d s_i = \operatorname{Tr}\sqrt{A^\dagger A} \quad\text{(trace norm)},\\
		\|A\|_2 &:= \Bigl(\sum_{i=1}^d s_i^2\Bigr)^{1/2}
		= \bigl(\operatorname{Tr}(A^\dagger A)\bigr)^{1/2} \quad\text{(Hilbert-Schmidt norm)},\\
		\|A\|_\infty &:= \max_i s_i
		= \sup_{\|\psi\|=1}\|A\psi\| \quad\text{(operator norm)}.
	\end{align}
	In particular, when $A$ is an observable, $\|A\|_\infty$ is its spectral radius (maximum absolute eigenvalue).
\end{definition}

\begin{remark}[Relations between Schatten norms]\label{rem:schatten_relations}
	In $d$ dimensions the Schatten norms satisfy the standard sandwich inequalities:
	\begin{equation}
		\|A\|_\infty
		\le \|A\|_2
		\le \|A\|_1
		\le \sqrt{d}\,\|A\|_2
		\le d\,\|A\|_\infty.
	\end{equation}
	In this paper the trace distance $\|\rho-\sigma\|_1$ is used to quantify the geometric distance between quantum states, while $\|O\|_\infty$ appears in Hölder's inequality,
	\(
	|\operatorname{Tr}(OX)| \le \|O\|_\infty\|X\|_1,
	\)
	and serves as a basic tool in many Lipschitz bounds.
\end{remark}

\begin{definition}[superoperator norms induced by Schatten norms, cf.~\cite{watrous2018theory,watrous2004notes}]\label{def:induced_norms}
	Let $\mathcal T:\mathcal{B}(\mathcal{H})\to\mathcal{B}(\mathcal{H})$ be a linear superoperator.
	For given Schatten norms $\|\cdot\|_p,\|\cdot\|_q$, define the induced $p\to q$ norm by
	\begin{equation}
		\|\mathcal T\|_{p\to q}
		:= \sup_{X\neq 0} \frac{\|\mathcal T(X)\|_q}{\|X\|_p}.
	\end{equation}
	The case most frequently used in this paper is $p=q=1$:
	\begin{equation}
		\|\mathcal T\|_{1\to 1}
		:= \sup_{X\neq0} \frac{\|\mathcal T(X)\|_1}{\|X\|_1}.
	\end{equation}
	When we write
	\(
	\|\Lambda_t^{(\mathcal L')}-\Lambda_t^{(\mathcal L)}\|_{1\to1},
	\)
	we mean the $1\to1$ norm induced by the trace norm in the above sense.
\end{definition}

\begin{remark}[Induced $1\to1$ norm and contraction of trace distance]\label{rem:1to1_contraction}
	If $\Lambda_t$ is a CPTP map (for instance, an element of a QMS generated by a GKSL generator $\mathcal L$), then for any operator $X$,
	\begin{equation}
		\|\Lambda_t(X)\|_1 \le \|X\|_1,
	\end{equation}
	and hence $\|\Lambda_t\|_{1\to1}\le 1$.
	For density matrices $\rho,\sigma$, this is equivalent to contraction of trace distance:
	\(
	\|\Lambda_t(\rho)-\Lambda_t(\sigma)\|_1\le\|\rho-\sigma\|_1.
	\)
	In this paper, when discussing the difference between two Lindbladian evolutions, we often control the TV distance of the corresponding measurement distributions via
	$\|\Lambda_t^{(\mathcal L')}-\Lambda_t^{(\mathcal L)}\|_{1\to1}$.
\end{remark}

\begin{definition}[Diamond norm]\label{def:diamond_norm}
	Let $\mathcal T:\mathcal{B}(\mathcal{H})\to\mathcal{B}(\mathcal{H})$ be a linear superoperator.
	Its diamond norm is defined as
	\begin{equation}
		\|\mathcal T\|_\diamond
		:= \sup_{k\ge1} \|\mathcal T\otimes\operatorname{id}_{\mathcal{H}_k}\|_{1\to1},
	\end{equation}
	where $\mathcal{H}_k\cong\mathbb{C}^k$ and $\operatorname{id}_{\mathcal{H}_k}$ is the identity channel.
	In finite dimensions, the supremum can be restricted to $k=d=\dim\mathcal H$.
	In particular, for two quantum channels $\Phi,\Psi$ one has
	\begin{equation}
		\frac12\|\Phi-\Psi\|_\diamond
		= \sup_{\rho_{SA}} \frac12\bigl\|
		(\Phi\otimes\operatorname{id}_A)(\rho_{SA})
		- (\Psi\otimes\operatorname{id}_A)(\rho_{SA})
		\bigr\|_1,
	\end{equation}
	i.e., the diamond distance gives the worst-case output trace distance between channels when an ancilla is allowed.
\end{definition}

\begin{remark}[Relation between diamond norm and induced $1\to1$ norm]\label{rem:diamond_vs_1to1}
	In general one has
	\begin{equation}
		\|\mathcal T\|_{1\to1} \le \|\mathcal T\|_\diamond \le d\,\|\mathcal T\|_{1\to1},
	\end{equation}
	where $d=\dim\mathcal H$.
	Thus, in the absence of ancillary systems,
	$\|\mathcal T\|_{1\to1}$ is a lower bound for $\|\mathcal T\|_\diamond$;
	when the dimension is fixed, they differ only by a polynomial factor.
	In this paper we mainly use $\|\cdot\|_{1\to1}$ to establish Lipschitz bounds; when we need to compare with the standard channel distance (diamond norm), one can convert between them via the above inequalities.
\end{remark}

\begin{definition}[Supremum norm of classical functions and measurement operators]\label{def:sup_norm_classical}
	On a classical domain $X$, for a function $\varphi:X\to\mathbb{R}$ define
	\(
	\|\varphi\|_\infty := \sup_{x\in X}|\varphi(x)|.
	\)
	In the statistical query framework we usually require $\varphi:X\to[-1,1]$,
	so that $\|\varphi\|_\infty\le 1$.
	
	In the quantum setting, given a POVM or projective measurement,
	we often associate a bounded observable $O_\varphi$ to a test function $\varphi$ such that
	\[
	P[\varphi] = \mathbb{E}_{x\sim P}[\varphi(x)]
	= \operatorname{Tr}(O_\varphi \rho),
	\]
	and $\|O_\varphi\|_\infty\le 1$(because $\left\|O_{\varphi}\right\|_{\infty} \leq\|\varphi\|_{\infty}$ in this case, see Lemma~\ref{lem:Ophi_norm} for proof).
	The matrix Hölder inequality
	\(
	|\operatorname{Tr}(O_\varphi X)|\le\|O_\varphi\|_\infty\|X\|_1
	\)
	ensures a Lipschitz control of statistical query outputs in terms of the trace distance between states.
\end{definition}

\begin{remark}[Unified conventions in this paper]\label{rem:norm_conventions_summary}
	Summarizing, the default meanings of the norm symbols in this paper are:
	\begin{itemize}
		\item $\|\cdot\|_2$: for parameter vectors it is the Euclidean norm, and for operators the Hilbert-Schmidt norm (depending on context);
		\item $\|\cdot\|_1$: for operators it is the trace norm, and for differences of distributions it is twice the total variation distance;
		\item $\|\cdot\|_\infty$: for operators it is the spectral radius, and for functions the supremum norm;
		\item $\|\cdot\|_{1\to1}$: the superoperator norm induced by the trace norm;
		\item $\|\cdot\|_\diamond$: the standard diamond norm in quantum information, appearing only when we compare channel distances.
	\end{itemize}
	All Lipschitz constants and concentration inequalities in what follows are based on the above norm conventions.
\end{remark}

\subsection{GKSL generators and quantum Markov semigroups}\label{subsec:GKSL}

In this subsection, we briefly review the Gorini-Kossakowski-Sudarshan-Lindblad (GKSL) theorem and related structures, and fix the notation and basic properties of Lindbladians used in this work. Our main reference is Chapter~5 of~\cite{rivas2012open}.

\begin{definition}[Quantum Markov semigroups and generators]\label{def:QMS}
	Let $\mathcal H$ be a finite-dimensional Hilbert space and $\mathcal B(\mathcal H)$ the algebra of bounded operators on $\mathcal H$.
	A family of linear maps
	\[
	\{\Lambda_t\}_{t\ge0},\qquad \Lambda_t:\mathcal B(\mathcal H)\to\mathcal B(\mathcal H),
	\]
	is called a quantum Markov semigroup (QMS) if:
	\begin{enumerate}
		\item Semigroup property: $\Lambda_0=\mathrm{id}$ and, for all $s,t\ge0$,
		\(
		\Lambda_{t+s}=\Lambda_t\circ\Lambda_s;
		\)
		\item Complete positivity and trace preservation: For each $t\ge0$, $\Lambda_t$ is completely positive and trace-preserving (CPTP);
		\item Strong continuity: For every $\rho\in\mathcal B(\mathcal H)$, the map $t\mapsto \Lambda_t(\rho)$ is continuous in the trace norm.
	\end{enumerate}
	Under these conditions there exists a unique bounded linear map
	\(
	\mathcal L:\mathcal B(\mathcal H)\to\mathcal B(\mathcal H)
	\)
	such that
	\[
	\Lambda_t = e^{t\mathcal L},\qquad
	\frac{\mathrm d}{\mathrm dt}\rho(t)\big|_{t=0} = \mathcal L(\rho(0)),
	\]
	where $\rho(t):=\Lambda_t(\rho(0))$.
	The map $\mathcal L$ is called the (Lindblad) generator of the QMS.
\end{definition}

\begin{definition}[GKSL form and Lindblad generators]\label{def:GKSL_form}
	In the finite-dimensional setting, a generator $\mathcal L$ generates a QMS as in Definition~\ref{def:QMS} if and only if it can be written in GKSL (Lindblad) form~\cite{rivas2012open}:
	\begin{equation}\label{eq:GKSL_form}
		\mathcal L(\rho)
		= -\mathrm i[H,\rho]
		+ \sum_k \gamma_k\Bigl(
		L_k\rho L_k^\dagger
		- \frac12\{L_k^\dagger L_k,\rho\}
		\Bigr),
	\end{equation}
	where
	\begin{enumerate}
		\item $H=H^\dagger$ is the effective system Hamiltonian;
		\item $\{L_k\}$ is a family of Lindblad (jump) operators;
		\item the rates satisfy $\gamma_k\ge0$;
		\item $\{\cdot,\cdot\}$ denotes the anticommutator.
	\end{enumerate}
	A map $\mathcal L$ satisfying these conditions is called a GKSL generator (or Lindbladian).
	We denote the set of all such maps by
	\(
	\mathrm{GKSL}\subset \operatorname{End}(\mathcal B(\mathcal H)).
	\)
\end{definition}

\begin{remark}[The GKSL cone and affine structure]\label{rem:GKSL_cone}
	From the form \eqref{eq:GKSL_form} it follows that:
	\begin{itemize}
		\item[i.] If $\mathcal L_1,\mathcal L_2\in\mathrm{GKSL}$ and $a,b\ge0$, then $a\mathcal L_1+b\mathcal L_2$ is again a GKSL generator (one can merge jump operators and rates accordingly).
		Thus $\mathrm{GKSL}$ forms a closed convex cone in $\operatorname{End}(\mathcal B(\mathcal H))$.
		\item[ii.] Fix a reference Lindbladian $\mathcal L_{\mathrm{ref}}\in\mathrm{GKSL}$.
		Its affine hull
		\[
		\mathrm{Aff}(\mathrm{GKSL})
		:= \bigl\{\mathcal L_{\mathrm{ref}} + V:\ V\in \operatorname{span}(\mathrm{GKSL}-\mathcal L_{\mathrm{ref}})\bigr\}
		\]
		is a finite-dimensional real affine space.
		In this work we take
		\(
		\mathrm V := \mathrm{Aff}(\mathrm{GKSL})-\mathcal L_{\mathrm{ref}}
		\)
		as the underlying linear parameter space for Lindbladian perturbations.
	\end{itemize}
	Physically, $\mathrm{GKSL}$ describes all effective generators of open-system dynamics that are completely positive, trace-preserving, and Markovian; its convexity reflects the fact that one can model the superposition of different dissipative channels.
\end{remark}

\begin{remark}[Kossakowski conditions and comparison with classical generators]\label{rem:Kossakowski}
	For a classical finite-state Markov process, the generator matrix $Q$ must satisfy
	\[
	Q_{ii}\le0,\quad Q_{ij}\ge0~(i\neq j),\quad \sum_j Q_{ij}=0,
	\]
	which guarantees that $e^{Qt}$ is a stochastic matrix.
	The GKSL theorem can be viewed as a quantum analogue of these conditions:
	by the Lumer-Phillips theorem~\cite{lumer1961dissipative} one can show that if $e^{t\mathcal L}$ forms a contraction semigroup on the space of self-adjoint operators in the trace norm, and if $\mathcal L$ is Hermitian-preserving and trace-preserving, then it must have the structure~\eqref{eq:GKSL_form}. Conversely, any $\mathcal L$ of the form~\eqref{eq:GKSL_form} generates a CPTP contraction semigroup~\cite{rivas2012open}.
\end{remark}

\begin{proposition}[Trace-norm contractivity and complete positivity]\label{prop:trace_contractive}
	Let $\mathcal L$ be a GKSL generator and $\Lambda_t=e^{t\mathcal L}$.
	Then for every $t\ge0$ and all density operators $\rho,\sigma$,
	\begin{equation}\label{eq:trace_contractive}
		\|\Lambda_t(\rho)-\Lambda_t(\sigma)\|_1
		\le \|\rho-\sigma\|_1.
	\end{equation}
	More generally, for any ancillary system $\mathcal H_{\mathrm{anc}}$, the map
	$\Lambda_t\otimes\mathrm{id}_{\mathcal H_{\mathrm{anc}}}$ is also a contraction in trace norm.
	This is equivalent to the complete positivity of $\Lambda_t$.
\end{proposition}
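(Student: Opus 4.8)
The plan is to reduce the proposition to two standard facts: first, that a generator $\mathcal L$ written in the GKSL form~\eqref{eq:GKSL_form} exponentiates to a family $\Lambda_t=e^{t\mathcal L}$ of completely positive, trace-preserving maps for all $t\ge0$ (the nontrivial ``converse'' direction of the GKSL theorem, recorded in Remark~\ref{rem:Kossakowski} and cited to Chapter~5 of~\cite{rivas2012open}); and second, that every positive trace-preserving linear map contracts the trace norm on self-adjoint operators. Granting the first fact, the argument is essentially the second, which I would isolate as a short lemma and then apply to $\Phi=\Lambda_t$.

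For that lemma, let $\Phi:\mathcal B(\mathcal H)\to\mathcal B(\mathcal H)$ be positive and trace-preserving and let $X=X^\dagger$. Decompose $X=X_+-X_-$ into positive and negative parts, with $X_\pm\ge0$ of mutually orthogonal support, so $\|X\|_1=\operatorname{Tr}X_++\operatorname{Tr}X_-$. Positivity of $\Phi$ gives $\Phi(X_\pm)\ge0$, hence $\|\Phi(X_\pm)\|_1=\operatorname{Tr}\Phi(X_\pm)=\operatorname{Tr}X_\pm$ by trace preservation, and the triangle inequality yields
\[
\|\Phi(X)\|_1\le\|\Phi(X_+)\|_1+\|\Phi(X_-)\|_1=\operatorname{Tr}X_++\operatorname{Tr}X_-=\|X\|_1.
\]
Since any CPTP map is in particular positive and trace-preserving, taking $\Phi=\Lambda_t$ and $X=\rho-\sigma$ (self-adjoint, in fact traceless) gives exactly~\eqref{eq:trace_contractive}; this is also what Remark~\ref{rem:1to1_contraction} anticipates, so one may alternatively just invoke that remark.

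For the statement with an ancilla, I would observe that $\Lambda_t\otimes\operatorname{id}_{\mathcal H_{\mathrm{anc}}}$ is again CPTP: complete positivity of $\Lambda_t$ is precisely what guarantees that $\Lambda_t\otimes\operatorname{id}$ is (completely) positive, while trace preservation of the tensor product is immediate. Applying the lemma on $\mathcal B(\mathcal H\otimes\mathcal H_{\mathrm{anc}})$ gives the trace-norm contractivity of $\Lambda_t\otimes\operatorname{id}$, and specializing to $X=\rho_{SA}-\sigma_{SA}$ recovers the diamond-norm-type statement. The asserted \emph{equivalence} with complete positivity is then just the remark that this is reversible: if $\Lambda_t\otimes\operatorname{id}_{\mathbb C^k}$ fails to be positive for some $k$, there is a $X\ge0$ with $Y:=(\Lambda_t\otimes\operatorname{id})(X)\not\ge0$, and then $\|Y\|_1=\operatorname{Tr}|Y|>\operatorname{Tr}Y=\operatorname{Tr}X=\|X\|_1$, contradicting contractivity; so requiring the contraction for every ancilla is equivalent to complete positivity of $\Lambda_t$.

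The only genuinely substantive input is the first fact, that the GKSL form exponentiates to a CPTP semigroup; everything else is bookkeeping. I would simply cite it, but if a self-contained proof were wanted the standard route is the Lie--Trotter decomposition $e^{t\mathcal L}=\lim_{n\to\infty}\bigl(e^{(t/n)\mathcal L}\bigr)^n$ together with the observation that for small $s>0$ the map $e^{s\mathcal L}$ agrees up to $O(s^2)$ with the manifestly completely positive map $\rho\mapsto K_0\rho K_0^\dagger+s\sum_k\gamma_k L_k\rho L_k^\dagger$, where $K_0=\mathbbm 1-s\bigl(\mathrm i H+\tfrac12\sum_k\gamma_k L_k^\dagger L_k\bigr)$; since complete positivity is closed under composition and pointwise limits, and trace preservation follows from $\operatorname{Tr}\mathcal L(\rho)=0$, one recovers that $\Lambda_t$ is CPTP. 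Thus the main obstacle, such as it is, is confined to this classical structural statement about the exponential of a Lindbladian; the contraction estimate is then a two-line consequence. An alternative, purely semigroup-theoretic derivation of~\eqref{eq:trace_contractive} via a Lumer--Phillips dissipativity bound on the signature operator of $\Lambda_t(\rho-\sigma)$, as alluded to in Remark~\ref{rem:Kossakowski}, is also possible but is technically more delicate owing to eigenvalue crossings, so I would not pursue it.
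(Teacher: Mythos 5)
Your proof is correct. Note that the paper states Proposition~\ref{prop:trace_contractive} without a proof, treating it as a standard fact from the theory of quantum Markov semigroups (with the GKSL theorem and contractivity cited to~\cite{rivas2012open}); your argument — the GKSL form exponentiates to a CPTP semigroup, positivity plus trace preservation gives trace-norm contraction on Hermitian operators via the Jordan decomposition, and the converse direction of the equivalence via a positive input whose image has a nontrivial negative part — is exactly the standard textbook derivation the paper is implicitly invoking.
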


\begin{remark}[Heisenberg picture and dual generator]\label{rem:Heisenberg_dual}
	Given a Schr\"odinger-picture Lindbladian $\mathcal L$,
	its dual generator in the Heisenberg picture, $\mathcal L^\dagger$, satisfies
	\[
	\frac{\mathrm d}{\mathrm dt}X(t)
	= \mathcal L^\dagger(X(t)),\qquad
	\operatorname{Tr}\bigl[\rho\,\mathcal L^\dagger(X)\bigr]
	= \operatorname{Tr}\bigl[\mathcal L(\rho)\,X\bigr],
	\]
	and
	\[
	\mathcal L^\dagger(X)
	= \mathrm i[H,X]
	+ \sum_k \gamma_k\Bigl(
	L_k^\dagger X L_k
	- \frac12\{L_k^\dagger L_k,X\}
	\Bigr).
	\]
	The Heisenberg picture is often more convenient for discussing conserved quantities and energy dissipation.
	In this work we mainly operate in the Schr\"odinger picture, but the above duality is implicitly used when we introduce energy densities and observables.
\end{remark}

\begin{definition}[Stationary states, spectral gap, and primitivity]\label{def:primitive_gap}
	Let $\mathcal L$ be a GKSL generator.
	\begin{enumerate}
		\item A stationary state (fixed point) is a density operator $\sigma$ satisfying
		\(
		\mathcal L(\sigma)=0.
		\)
		If the stationary state is unique and full-rank, the semigroup is called primitive.
		\item Viewing $\mathcal L$ as a linear operator on $\mathcal B(\mathcal H)$, its spectrum typically satisfies
		\(
		\lambda_1=0,\ \Re\lambda_k\le0.
		\)
		The spectral gap is defined as
		\[
		\lambda := \min_{k\ge2} |\Re\lambda_k| > 0.
		\]
		In the primitive case, $\lambda>0$ implies exponential mixing:
		\(
		\|\Lambda_t(\rho)-\sigma\|_1\le C e^{-\lambda t}
		\)
		for some constant $C$ and all initial states $\rho$.
	\end{enumerate}
\end{definition}

\subsection{Statistical query learning model}\label{subsec:SQ_prelim}

In this section, we briefly review the basic concepts and notation used in the statistical query (SQ) learning framework~\cite{nietner2025average,feldman2017general}.

\begin{definition}[Distributions, TV distance, and expectation notation]\label{def:TV_and_expectation}
	Let $X$ be a finite domain and $\mathcal{D}_X$ the set of all probability distributions on $X$. For $P,Q\in\mathcal{D}_X$, their total variation (TV) distance is defined as
	\begin{equation}
		d_{\mathrm{TV}}(P,Q)
		:= \frac{1}{2}\sum_{x\in X} |P(x)-Q(x)|.
	\end{equation}
	Given $\epsilon>0$, the open $\epsilon$-ball centered at $P$ is defined as
	\begin{equation}
		B_\epsilon(P)
		:=\bigl\{Q\in\mathcal{D}_X:\ d_{\mathrm{TV}}(P,Q)<\epsilon\bigr\}.
	\end{equation}
	For any distribution $P$ and function $\varphi:X\to[-1,1]$, we write
	\begin{equation}
		P[\varphi]
		:= \mathbb{E}_{x\sim P}[\varphi(x)]
	\end{equation}
	for the expectation of $\varphi$ with respect to $P$. In particular, we use $\mathcal{D}_n$ to denote the class of distributions defined on $\{0,1\}^n$.
\end{definition}

\begin{definition}[Statistical query oracle]\label{def:SQ_oracle}
	Let $P\in\mathcal{D}_X$ and $\tau>0$. A statistical query oracle $\Stat_\tau(P)$ is an ideal interface that, for any query function $\varphi:X\to[-1,1]$, returns a real number $v$ satisfying
	\begin{equation}
		|v - P[\varphi]| \le \tau.
	\end{equation}
\end{definition}

\begin{remark}[Monotonicity and sample-based meaning of the tolerance parameter]\label{rem:tolerance}
	(1) If $0<\zeta<\tau$, then any $\Stat_\zeta(P)$ is automatically a valid $\Stat_\tau(P)$, since
	\(
	|v-P[\varphi]|\le\zeta<\tau
	\)
	also satisfies the condition in Definition~\ref{def:SQ_oracle}. Hence, lower bounds proved for a smaller tolerance $\zeta$ automatically apply to any larger tolerance $\tau$.\\
	(2) When $\tau$ is at least inverse-polynomial (for example, $\tau\ge 1/\mathrm{poly}(n)$), one can approximate $P[\varphi]$ from a polynomial number of samples and thereby implement $\Stat_\tau(P)$. In this regime, the complexity of the oracle is dominated by the cost of evaluating the query function $\varphi$.
\end{remark}

\begin{definition}[Representations and $\epsilon$-representations of distributions]\label{def:representation}
	In the SQ framework, one must fix a representation of distributions. Two common types of representations are:
	\begin{enumerate}
		\item \emph{Generator}: a randomized algorithm that, given random bits, outputs samples $x\sim P$.
		\item \emph{Evaluator}: when $P\in\mathcal{D}_n$, an algorithm
		\(
		\Eval_P:\{0,1\}^n\to[0,1]
		\)
		satisfying $\Eval_P(x)=P(x)$.
	\end{enumerate}
	Given a distribution $P$, if the distribution $Q$ corresponding to a given representation (generator or evaluator) satisfies
	\(
	d_{\mathrm{TV}}(P,Q)<\epsilon,
	\)
	we call this representation an \emph{$\epsilon$-representation} (or $\epsilon$-approximate representation) of $P$. In our complexity lower bounds, we impose no assumptions on the computational efficiency of these representations: the lower bounds apply equally to representations that may be computationally inefficient.
\end{definition}

\begin{definition}[$\epsilon$-learning problem in the SQ framework]\label{def:SQ_learning_problem}
	Let $\mathcal{D}$ be a class of distributions and $\epsilon,\tau\in(0,1)$. Under a fixed choice of representation for distributions, the task of $\epsilon$-learning $\mathcal{D}$ using statistical queries with tolerance $\tau$ is defined as follows: for any unknown distribution $P\in\mathcal{D}$, given access to the oracle $\Stat_\tau(P)$, design a learning algorithm that outputs an $\epsilon$-representation of $P$ (i.e., some $Q$ such that $d_{\mathrm{TV}}(P,Q)<\epsilon$).
\end{definition}

\begin{definition}[Average-case query complexity]\label{def:avg_case_complexity}
	Let $\mathcal{D}$ be a class of distributions, $\mu$ a probability measure over $\mathcal{D}$, and $\alpha,\beta\in(0,1)$.
	\begin{enumerate}
		\item Deterministic average-case complexity: For given $(\epsilon,\tau)$, a deterministic SQ learning algorithm $\mathcal{A}$ is said to $\epsilon$-learn $P$ within at most $q$ queries on average if
		\begin{equation}
			\Pr_{P\sim\mu}\bigl[
			\text{$\mathcal{A}^{\Stat_\tau(P)}$ $\epsilon$-learns $P$ within $q$ queries}
			\bigr]\ \ge\ \beta.
		\end{equation}
		Among all algorithms satisfying the above condition, the minimal such $q$ is called the deterministic average-case query complexity of Problem~\ref{def:SQ_learning_problem}.
		
		\item Randomized average-case complexity: If $\mathcal{A}$ is randomized and $\Pr_{\mathcal{A}}[\cdot]$ denotes the probability over its internal randomness, we require
		\begin{equation}
			\Pr_{P\sim\mu}\Bigl[
			\Pr_{\mathcal{A}}\bigl[
			\text{$\mathcal{A}^{\Stat_\tau(P)}$ $\epsilon$-learns $P$ within $q$ queries}
			\bigr] \ \ge\ \alpha
			\Bigr]\ \ge\ \beta .
		\end{equation}
		The minimal $q$ for which this condition holds is called the randomized average-case query complexity.
	\end{enumerate}
\end{definition}

\begin{remark}[Relationship between deterministic and randomized complexities]\label{rem:det_vs_rand}
	In the SQ framework, deterministic and randomized average-case complexities are related via the standard min-max principle: for given $(\alpha,\beta)$, a deterministic average-case lower bound can be converted into a randomized average-case lower bound, at the cost of a constant factor depending on $\alpha$ (for example, $2(\alpha-1/2)$). For this reason, the main text of this paper focuses primarily on deterministic average-case lower bounds, while the extension to the randomized setting can be obtained via standard arguments outlined in the appendix.
\end{remark}

These preliminaries allow us to state our SQ lower-bound results in a unified way: given a distribution class $\mathcal{D}$ and a prior measure $\mu$ over it, any SQ algorithm that attempts to $\epsilon$-learn $\mathcal{D}$ in the average-case sense (for parameters $(\alpha,\beta)$) must use at least $q$ statistical queries, where $q$ is lower-bounded in terms of certain information-theoretic quantities (such as the SQ dimension or related average-case hardness parameters).

\subsection{Reduction from learning to decision and average-case lower bounds}\label{subsec:SQ_decide}

In this subsection, we review the standard learning and decision framework used to establish SQ lower bounds due to Feldman~\cite{feldman2017general}. Our notation, however, mostly follow Nietner~\cite{nietner2025average}, and we adopt from their work a convenient average-case formulation.

\begin{definition}[Decision problem: $\mathcal{D}$ vs.\ $Q$, cf.~\cite{nietner2025average}]\label{def:decide_D_vs_Q}
	Let $\mathcal{D}\subseteq\mathcal{D}_X$ be a class of distributions and $Q\in\mathcal{D}_X$ a fixed reference distribution.
	The decision problem $\mathcal{D}$ vs.\ $Q$ is defined as follows:
	
	Given SQ oracle access to an unknown distribution $P\in\mathcal{D}\cup\{Q\}$, the task is to output, within finitely many queries, a correct judgment:
	\[
	\text{``$P=Q$'' or ``$P\in\mathcal{D}$''.}
	\]
	
	We continue to work with the statistical query oracle $\Stat_\tau(P)$ with tolerance $\tau$, as defined in
	Definition~\ref{def:SQ_oracle}.
\end{definition}

\begin{lemma}[Learning is no easier than decision, cf.~\cite{nietner2025average}]\label{lem:learning_hard_as_deciding}
	Let $\mathcal{D}\subseteq\mathcal{D}_X$ and $Q\in\mathcal{D}_X$ satisfy
	\begin{equation}\label{eq:TV_far_assumption}
		d_{\mathrm{TV}}(P,Q) > \epsilon+\tau,\qquad\forall\,P\in\mathcal{D},
	\end{equation}
	where $0<\tau\le\epsilon\le1$. If there exists a deterministic algorithm $\mathcal{A}$ that $\epsilon$-learns $\mathcal{D}$ with at most $q$
	$\tau$-accurate statistical queries (see Definition~\ref{def:SQ_learning_problem}),
	then there exists a deterministic algorithm $\mathcal{B}$ that solves the decision problem
	$\mathcal{D}$ vs.\ $Q$ (Definition~\ref{def:decide_D_vs_Q}) using at most $q+1$ $\tau$-accurate statistical queries, and is always correct.
\end{lemma}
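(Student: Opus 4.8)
The plan is to run the hypothesised learner $\mathcal A$ as a black box and then spend a single additional statistical query to certify which of the two alternatives ``$P=Q$'' or ``$P\in\mathcal D$'' actually generated $\mathcal A$'s output. Concretely, $\mathcal B$ simulates $\mathcal A$ on the unknown input $P\in\mathcal D\cup\{Q\}$: whenever $\mathcal A$ issues a query function $\varphi:X\to[-1,1]$, $\mathcal B$ relays it to its own oracle $\Stat_\tau(P)$ and passes the reply back verbatim, so $\mathcal A$ sees exactly the interface it was designed for and halts after at most $q$ queries with a representation of some hypothesis distribution $\widehat P$. If $P\in\mathcal D$, the learning guarantee gives $d_{\mathrm{TV}}(\widehat P,P)<\epsilon$, and then the triangle inequality together with $d_{\mathrm{TV}}(P,Q)>\epsilon+\tau$ forces $d_{\mathrm{TV}}(\widehat P,Q)>\tau$, i.e.\ $\widehat P$ sits a definite distance from $Q$; if instead $P=Q$, then $\widehat P$ is unconstrained, but every oracle reply is by definition within $\tau$ of the corresponding value $Q[\varphi]$. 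This asymmetry is exactly what the final query will exploit.

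For the verification step $\mathcal B$ issues one more query, the witness test function $\varphi^{\star}(x):=\sgn\bigl(\widehat P(x)-Q(x)\bigr)\in\{-1,0,+1\}\subseteq[-1,1]$, which $\mathcal B$ can compute because it holds representations of both $\widehat P$ and $Q$. Two elementary facts drive the argument: the identity $\widehat P[\varphi^{\star}]-Q[\varphi^{\star}]=\sum_{x}\bigl|\widehat P(x)-Q(x)\bigr|=2\,d_{\mathrm{TV}}(\widehat P,Q)$, a quantity $\mathcal B$ evaluates directly, and the Hölder-type bound $\bigl|P[\varphi^{\star}]-\widehat P[\varphi^{\star}]\bigr|\le\|\varphi^{\star}\|_\infty\,\|P-\widehat P\|_1\le 2\,d_{\mathrm{TV}}(P,\widehat P)$. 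Writing $v$ for the oracle reply, so that $|v-P[\varphi^{\star}]|\le\tau$, algorithm $\mathcal B$ compares $v$ against a threshold placed between the band of values consistent with $P=Q$ (those within $\tau$ of $Q[\varphi^{\star}]$) and the band consistent with $P$ lying within $\epsilon$ of $\widehat P$ (those within $2\epsilon+\tau$ of $\widehat P[\varphi^{\star}]$): it declares ``$P=Q$'' in the first case and ``$P\in\mathcal D$'' in the second. The total query count is at most $q+1$, all at tolerance $\tau$, as required.

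The main obstacle is the correctness of that final comparison, i.e.\ verifying that the two consistency bands are genuinely disjoint under the stated hypotheses rather than merely under a crude bound. In the branch $P=Q$ this is immediate, since $v\le Q[\varphi^{\star}]+\tau$ always holds; the work is the branch $P\in\mathcal D$, where one must show $v$ is pushed strictly past $Q[\varphi^{\star}]+\tau$. Here one feeds $d_{\mathrm{TV}}(\widehat P,P)<\epsilon$ and the separation $d_{\mathrm{TV}}(P,Q)>\epsilon+\tau$ through the identity for $\widehat P[\varphi^{\star}]-Q[\varphi^{\star}]$ and the Hölder bound, keeping track of the three parameters $\epsilon$, $\tau$ and the learning error simultaneously; the delicate regime is when $\widehat P$ is only moderately far from $Q$, for which the slack in the separation hypothesis must be spent carefully rather than discarded. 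Once the bands are shown disjoint, each alternative lands $\mathcal B$ deterministically in the correct branch, so $\mathcal B$ is always correct and the reduction is complete.
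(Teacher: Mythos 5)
Your reduction framework is right---simulate $\mathcal A$ on the oracle, then spend the $(q{+}1)$-th query on a witness---but the specific witness $\varphi^{\star}=\sgn(\widehat P-Q)$ does not make the two consistency bands disjoint, and this is not a gap that can be closed by bookkeeping. For the bands $\{|v-Q[\varphi^{\star}]|\le\tau\}$ and $\{|v-\widehat P[\varphi^{\star}]|\le 2\epsilon+\tau\}$ to be disjoint you need $\widehat P[\varphi^{\star}]-Q[\varphi^{\star}]=2\,d_{\mathrm{TV}}(\widehat P,Q)>2\epsilon+2\tau$, i.e.\ $d_{\mathrm{TV}}(\widehat P,Q)>\epsilon+\tau$. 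But the hypotheses only give $d_{\mathrm{TV}}(\widehat P,Q)\ge d_{\mathrm{TV}}(P,Q)-d_{\mathrm{TV}}(\widehat P,P)>\tau$, and since the lemma assumes $\tau\le\epsilon$, the window $\tau<d_{\mathrm{TV}}(\widehat P,Q)\le\epsilon+\tau$ is genuinely accessible. The same obstruction kills the alternative reading of your test: thresholding on $|v-Q[\varphi^{\star}]|>\tau$ requires $|P[\varphi^{\star}]-Q[\varphi^{\star}]|>2\tau$, and the best one can extract is $P[\varphi^{\star}]-Q[\varphi^{\star}]\ge 2\,d_{\mathrm{TV}}(\widehat P,Q)-2\,d_{\mathrm{TV}}(\widehat P,P)>2\tau-2\epsilon$, which is nonpositive in the permitted regime $\tau\le\epsilon$.

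This is not merely a loose estimate. Take $X=\{1,2,3,4\}$, $\epsilon=3/10$, $\tau=1/10$, $Q=(\tfrac14,\tfrac14,\tfrac14,\tfrac14)$, $P=(\tfrac{50}{100},\tfrac{41}{100},0,\tfrac{9}{100})$, $\widehat P=(\tfrac{24}{100},\tfrac{41}{100},\tfrac{26}{100},\tfrac{9}{100})$. Then $d_{\mathrm{TV}}(P,Q)=\tfrac{41}{100}>\epsilon+\tau$, $d_{\mathrm{TV}}(\widehat P,P)=\tfrac{26}{100}<\epsilon$, and $\varphi^{\star}=\sgn(\widehat P-Q)=(-1,+1,+1,-1)$ yields $P[\varphi^{\star}]-Q[\varphi^{\star}]=-\tfrac{18}{100}$, so $|P[\varphi^{\star}]-Q[\varphi^{\star}]|<2\tau$; any oracle reply $v\in[-\tfrac{10}{100},-\tfrac{8}{100}]$ is simultaneously within $\tau$ of both $P[\varphi^{\star}]$ and $Q[\varphi^{\star}]$, so your $\mathcal B$ cannot decide. (The lemma itself still holds here: taking $\mathcal D=\{P\}$, a $\mathcal B$ that knows $\mathcal D$ can query $\sgn(P-Q)$ instead, which does separate.) The missing idea is that the witness query cannot be constructed from $\widehat P$ and $Q$ alone by a naive sign rule; a correct proof must either choose the verification query using structural information about $\mathcal D$ restricted to the $\epsilon$-ball around $\widehat P$, or choose a more refined test than $\sgn(\widehat P-Q)$ (for instance, discarding coordinates where $|\widehat P(x)-Q(x)|$ is small, which is precisely where the sign of $\widehat P-Q$ can disagree with that of $P-Q$). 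As written, the step ``Once the bands are shown disjoint'' is asserted but not established, and in fact fails.
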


\begin{lemma}[Deterministic lower bound for decision problems, cf.~\cite{nietner2025average}]\label{lem:hardness_deciding_det}
	Let $\mathcal{D}\subseteq\mathcal{D}_X$, $Q\in\mathcal{D}_X$, and let $\mu$ be a probability measure on $\mathcal{D}$.
	If there exists a deterministic algorithm $\mathcal{A}$ that correctly decides
	$\mathcal{D}$ vs.\ $Q$ using at most $q$ $\tau$-accurate statistical queries,
	then necessarily
	\begin{equation}\label{eq:decide_lower_bound}
		q \ \ge\ \left(
		\max_{\phi:X\to[-1,1]}
		\Pr_{P\sim\mu}\bigl[\,|P[\phi]-Q[\phi]|>\tau\,\bigr]
		\right)^{-1}.
	\end{equation}
\end{lemma}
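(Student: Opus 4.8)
The plan is to pit the hypothetical deterministic decider $\mathcal{A}$ against the adversarial oracle that answers every query $\phi$ with the exact number $Q[\phi]$. Since $|Q[\phi]-Q[\phi]|=0\le\tau$, this oracle is a legitimate realization of $\Stat_\tau(Q)$, so $\mathcal{A}$ must halt within its query budget and output the verdict ``$P=Q$''. Because $\mathcal{A}$ is deterministic, its behavior is a function of the string of answers received so far; with every answer pinned to $Q[\phi_i]$, the induced transcript is a \emph{fixed} finite list of query functions $\phi_1,\dots,\phi_{q'}$ with $q'\le q$, terminating in the output ``$P=Q$''. This first bookkeeping step — that fixing the answers pins down both the queries and the verdict, and that the problem being ``correctly decided'' includes the instance $P=Q$ — is the place where one must be most careful, but it carries no analytic content.

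Next I would show that this single transcript is ``broken'' by every $P\in\mathcal{D}$. Fix $P\in\mathcal{D}$ and suppose, for contradiction, that $|P[\phi_i]-Q[\phi_i]|\le\tau$ for all $i=1,\dots,q'$. Then an oracle for $P$ is permitted to answer $Q[\phi_i]$ to each query $\phi_i$, and by determinism $\mathcal{A}$ would reproduce exactly the transcript above and again output ``$P=Q$'', contradicting correctness on the instance $P\in\mathcal{D}$. Hence for each $P\in\mathcal{D}$ there is an index $i(P)\in\{1,\dots,q'\}$ with $|P[\phi_{i(P)}]-Q[\phi_{i(P)}]|>\tau$; equivalently,
\[
\mathcal{D}\ \subseteq\ \bigcup_{i=1}^{q'}\bigl\{\,P:\ |P[\phi_i]-Q[\phi_i]|>\tau\,\bigr\}.
\]

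Finally I would apply a union bound against the prior $\mu$. Since $\mu$ is a probability measure on $\mathcal{D}$,
\[
1=\mu(\mathcal{D})\ \le\ \sum_{i=1}^{q'}\Pr_{P\sim\mu}\bigl[\,|P[\phi_i]-Q[\phi_i]|>\tau\,\bigr]\ \le\ q'\cdot\max_{\phi:X\to[-1,1]}\Pr_{P\sim\mu}\bigl[\,|P[\phi]-Q[\phi]|>\tau\,\bigr],
\]
and combining with $q'\le q$ and rearranging gives the stated lower bound $q\ge\bigl(\max_\phi\Pr_{P\sim\mu}[\,|P[\phi]-Q[\phi]|>\tau\,]\bigr)^{-1}$. (If the supremum over test functions is not attained, the same chain of inequalities yields $q\ge 1/\sup_\phi(\cdot)$, which is the intended reading of the statement.) The only genuine work is the transcript argument in the first two paragraphs; once the covering $\mathcal{D}\subseteq\bigcup_i\{\dots\}$ is in hand, the rest is the routine packing/union-bound step with no obstacles.
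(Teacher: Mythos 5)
Your proof is correct and is precisely the standard transcript-plus-union-bound argument used in the SQ literature (Feldman, Nietner et al.), which the paper cites rather than re-deriving. The key steps — running the deterministic decider against the all-$Q[\phi]$ oracle to pin down a fixed transcript, then observing that any $P\in\mathcal D$ consistent with that transcript within tolerance $\tau$ would be misclassified, yielding the covering $\mathcal D\subseteq\bigcup_i\{P:|P[\phi_i]-Q[\phi_i]|>\tau\}$ and the union bound against $\mu$ — match the intended proof exactly, including the careful handling of adaptivity via determinism.
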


\begin{theorem}[Deterministic average-case SQ query complexity lower bound, cf.~\cite{nietner2025average}]\label{thm:avg_case_SQ_lower}
	Let $\mathcal{D}\subseteq\mathcal{D}_X$, let $\mu$ be a probability measure on $\mathcal{D}$, and let $\epsilon,\tau\in(0,1)$, $\beta\in(0,1)$.
	Suppose there exists a deterministic algorithm $\mathcal{A}$ that, using at most $q$ $\tau$-accurate statistical queries, $\epsilon$-learns $P$ on a subset $\mathcal{D}'\subseteq\mathcal{D}$ of $\mu$-measure $\beta$ (that is,
	\(
	\Pr_{P\sim\mu}[\mathcal{A}\ \text{$\epsilon$-learns $P$}]\ge\beta
	\)).
	Then for any reference distribution $Q\in\mathcal{D}_X$ one has
	\begin{equation}\label{eq:avg_case_lower_bound}
		q+1 \ \ge\
		\frac{
			\beta - \Pr_{P\sim\mu}\bigl[d_{\mathrm{TV}}(P,Q)\le\epsilon+\tau\bigr]
		}{
			\displaystyle\max_{\phi:X\to[-1,1]}
			\Pr_{P\sim\mu}\bigl[\,|P[\phi]-Q[\phi]|>\tau\,\bigr]
		}.
	\end{equation}
\end{theorem}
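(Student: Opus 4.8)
The plan is to reduce the average-case learning lower bound to the two preceding ingredients: the learning-to-decision reduction (Lemma~\ref{lem:learning_hard_as_deciding}) and the deterministic decision lower bound (Lemma~\ref{lem:hardness_deciding_det}). The only genuinely new element is that we are promised success only on a $\beta$-fraction of inputs and for a reference distribution $Q$ that need not be uniformly far from every member of $\mathcal D$; both issues are handled by passing to a suitable subclass and by a bookkeeping argument with conditional measures.

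First I would define the ``good'' subclass
\[
\mathcal D'' \;:=\; \mathcal D' \,\cap\, \bigl\{\,P\in\mathcal D_X : d_{\mathrm{TV}}(P,Q) > \epsilon+\tau\,\bigr\},
\]
where $\mathcal D'\subseteq\mathcal D$ is the set with $\mu(\mathcal D')\ge\beta$ on which $\mathcal A$ is guaranteed to $\epsilon$-learn. By an inclusion--exclusion (union-bound) estimate,
\[
\mu(\mathcal D'') \;\ge\; \mu(\mathcal D') - \Pr_{P\sim\mu}\bigl[d_{\mathrm{TV}}(P,Q)\le\epsilon+\tau\bigr] \;\ge\; \beta - \Pr_{P\sim\mu}\bigl[d_{\mathrm{TV}}(P,Q)\le\epsilon+\tau\bigr].
\]
On $\mathcal D''$ both hypotheses of Lemma~\ref{lem:learning_hard_as_deciding} hold: $\mathcal A$ $\epsilon$-learns every $P\in\mathcal D''$ with at most $q$ $\tau$-accurate queries, and $d_{\mathrm{TV}}(P,Q)>\epsilon+\tau$ for every $P\in\mathcal D''$ by construction. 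Hence there is a deterministic algorithm $\mathcal B$ that always correctly decides $\mathcal D''$ vs.\ $Q$ using at most $q+1$ $\tau$-accurate queries.

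Next I would apply Lemma~\ref{lem:hardness_deciding_det} to $\mathcal B$ with reference distribution $Q$ and the \emph{conditional} probability measure $\mu'' := \mu(\cdot\cap\mathcal D'')/\mu(\mathcal D'')$ on $\mathcal D''$ (assuming $\mu(\mathcal D'')>0$; otherwise the right-hand side below is non-positive and there is nothing to prove). This gives
\[
q+1 \;\ge\; \Bigl(\max_{\phi:X\to[-1,1]} \Pr_{P\sim\mu''}\bigl[\,|P[\phi]-Q[\phi]|>\tau\,\bigr]\Bigr)^{-1}.
\]
Finally I would undo the normalization: for every test function $\phi$,
\[
\Pr_{P\sim\mu''}\bigl[\,|P[\phi]-Q[\phi]|>\tau\,\bigr] \;=\; \frac{\mu\bigl(\{|P[\phi]-Q[\phi]|>\tau\}\cap\mathcal D''\bigr)}{\mu(\mathcal D'')} \;\le\; \frac{1}{\mu(\mathcal D'')}\,\Pr_{P\sim\mu}\bigl[\,|P[\phi]-Q[\phi]|>\tau\,\bigr],
\]
so taking the maximum over $\phi$, inverting, and inserting the lower bound on $\mu(\mathcal D'')$ from the first step yields
\[
q+1 \;\ge\; \frac{\mu(\mathcal D'')}{\displaystyle\max_{\phi:X\to[-1,1]}\Pr_{P\sim\mu}\bigl[\,|P[\phi]-Q[\phi]|>\tau\,\bigr]} \;\ge\; \frac{\beta-\Pr_{P\sim\mu}\bigl[d_{\mathrm{TV}}(P,Q)\le\epsilon+\tau\bigr]}{\displaystyle\max_{\phi:X\to[-1,1]}\Pr_{P\sim\mu}\bigl[\,|P[\phi]-Q[\phi]|>\tau\,\bigr]},
\]
which is the claimed inequality \eqref{eq:avg_case_lower_bound}.

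No single step here is deep; the proof is a careful concatenation of two established lemmas, and the points needing attention are (i) that the learning-to-decision reduction requires the TV-far condition to hold \emph{for every} element of the class one feeds it, which is exactly why we restrict to $\mathcal D''$ rather than work directly with $\mathcal D$, and (ii) the measure bookkeeping --- making sure the union bound on $\mu(\mathcal D'')$ and the conditional-measure comparison $\Pr_{\mu''}[\cdot]\le\Pr_{\mu}[\cdot]/\mu(\mathcal D'')$ both point in the direction that preserves the chain of inequalities. The degenerate case $\mu(\mathcal D'')=0$ makes the numerator on the right-hand side non-positive, so the bound holds trivially (with the usual convention that it is vacuous if the denominator also vanishes); one also notes that whenever $\mu(\mathcal D'')>0$ the denominator is automatically positive, since the test $\phi(x)=\operatorname{sgn}(P(x)-Q(x))$ forces $d_{\mathrm{TV}}(P,Q)\le\tau/2<\epsilon+\tau$ whenever $|P[\phi]-Q[\phi]|\le\tau$.
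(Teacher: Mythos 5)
Your proposal is correct and follows essentially the standard route one would use to derive this theorem from Lemma~\ref{lem:learning_hard_as_deciding} and Lemma~\ref{lem:hardness_deciding_det}: restrict to the ``far-and-learnable'' subclass $\mathcal D''$, apply the learning-to-decision reduction there, invoke the decision lower bound under the conditional measure $\mu''$, and then unpack the conditional probability and the measure of $\mathcal D''$ to recover the stated bound. All the bookkeeping steps (the union-bound lower estimate on $\mu(\mathcal D'')$, the inequality $\Pr_{\mu''}\!\le\!\Pr_\mu/\mu(\mathcal D'')$, and the degenerate case $\mu(\mathcal D'')=0$) are handled correctly and in the right directions.
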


\subsection{Random statistical query algorithms vs.\ deterministic algorithms}
\label{subsec:random_SQ}

In this subsection, we summarize the relation between random (including classical randomized and quantum) algorithms and deterministic algorithms in terms of average-case query complexity in the statistical query learning framework. The presentation is essentially adapted from Appendix~F of~\cite{nietner2025average}.

Recall from the previous subsection that we have already established an average-case lower bound on the query complexity in the deterministic setting (corresponding to Theorem~30 in~\cite{nietner2025average}), whose core idea is to reduce the learning problem to a decision problem and then lower-bound the SQ complexity of that decision problem. The strategy in the randomized setting is completely analogous, except that we additionally need to handle the possibility that a learning algorithm may guess the correct answer using its internal randomness.

\begin{definition}[Random average-case statistical query complexity]
	Let $\mathcal D$ be a class of distributions and $\mu$ a probability measure on $\mathcal D$, and let $\epsilon,\tau\in(0,1)$.
	Given a statistical query oracle $\Stat_\tau(P)$ (as defined earlier), a randomized learning algorithm $\mathcal A$ (which may be a classical randomized or a quantum algorithm) is said to \emph{randomly $\epsilon$-learn $\mathcal D$ in the average case at parameters $(\alpha,\beta)\in(0,1)^2$} if there exists a query bound $q$ such that
	\[
	\Pr_{P\sim\mu}\!\left[
	\Pr_{\mathcal A}\!\left(
	\text{$\mathcal A^{\Stat_\tau(P)}$ $\epsilon$-learns $P$ using at most $q$ queries}
	\right)\ \ge \ \alpha
	\right] \ \ge\ \beta,
	\]
	where the outer probability is taken over $P\sim\mu$, and the inner probability is taken over the internal randomness of $\mathcal A$.
	The smallest such $q$ is called the random average-case statistical query complexity of $\mathcal D$ under $\mu$ and parameters $(\alpha,\beta)$.
\end{definition}

\begin{lemma}[Learning is at least as hard as decision (random setting), cf.~\cite{nietner2025average}]
	\label{lem:random_learn_as_hard_as_decide}
	Let $\mathcal D$ be a class of distributions and $Q$ a reference distribution such that for all $P\in\mathcal D$ one has
	\[
	d_{\mathrm{TV}}(P,Q)>\epsilon+\tau,
	\]
	where $0<\tau\le\epsilon\le1$. If there exists a randomized algorithm $\mathcal A$ which, with internal success probability at least $\alpha$, can $\epsilon$-learn $\mathcal D$ using at most $q$ $\tau$-accurate statistical queries, then there exists a randomized algorithm that, with the same success probability $\alpha$, can decide between
	``$P=Q$'' and ``$P\in\mathcal D$'' using only $q+1$ $\tau$-accurate statistical queries.
\end{lemma}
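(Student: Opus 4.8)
The plan is to mirror the reduction behind the deterministic statement, Lemma~\ref{lem:learning_hard_as_deciding}, using the randomized learner as a subroutine and carrying its internal randomness through to the resulting decider. First I would build the decider $\mathcal B$ for ``$\mathcal D$ vs.\ $Q$'': given oracle access to $\Stat_\tau(P)$ under the promise $P\in\mathcal D\cup\{Q\}$, $\mathcal B$ simulates $\mathcal A^{\Stat_\tau(P)}$ verbatim and records the hypothesis $\widehat P$ that $\mathcal A$ returns. This simulation is legitimate even when $P=Q\notin\mathcal D$, because the oracle in Definition~\ref{def:SQ_oracle} is defined for \emph{any} distribution, so $\mathcal A$ can be fed $\Stat_\tau(Q)$ although it was only promised to learn on $\mathcal D$. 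This phase costs at most $q$ queries.

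Next I would specify the single extra query and the decision rule, exactly as in the deterministic proof. $\mathcal B$ picks a test function $\varphi:X\to[-1,1]$ that witnesses the total-variation gap between $\widehat P$ and $Q$ — for instance $\varphi(x)=\sgn\!\bigl(\widehat P(x)-Q(x)\bigr)$, so that $\widehat P[\varphi]-Q[\varphi]=2\,d_{\mathrm{TV}}(\widehat P,Q)$ and $\|\varphi\|_\infty\le 1$ — queries $\Stat_\tau(P)$ once more to obtain $v$ with $|v-P[\varphi]|\le\tau$, and outputs ``$P\in\mathcal D$'' if $v$ lies on the $\widehat P$-side of $Q[\varphi]$ beyond the deterministic threshold, and ``$P=Q$'' otherwise; the total query count is $q+1$. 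The correctness analysis would split into the two familiar cases. If $P=Q$ then $P[\varphi]=Q[\varphi]$, so $|v-Q[\varphi]|\le\tau$ and $\mathcal B$ answers ``$P=Q$''; I would stress that this uses nothing about $\widehat P$, which matters because on input $Q$ the learner has no correctness guarantee. If instead $P\in\mathcal D$ and the simulated learner succeeded, then $d_{\mathrm{TV}}(\widehat P,P)<\epsilon$, and I would combine this with the separation promise $d_{\mathrm{TV}}(P,Q)>\epsilon+\tau$ through the triangle inequality and $\|\varphi\|_\infty\le1$ to conclude that $P[\varphi]$ is displaced from $Q[\varphi]$ on the $\widehat P$-side by more than $\tau$, so $v$ clears the threshold and $\mathcal B$ answers ``$P\in\mathcal D$''. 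This is the same inequality chain that establishes the deterministic case.

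Finally I would do the probability bookkeeping. All randomness of $\mathcal B$ is inherited from the simulated run of $\mathcal A$, since the choice of $\varphi$ from $\widehat P$, the extra query, and the threshold comparison are deterministic functions of $\widehat P$ and $v$; moreover the key design point — which I would check carefully — is that the ``$P=Q$'' branch succeeds \emph{with no assumption on} $\widehat P$, because $\varphi$ is chosen after, and the extra query is made against, the true oracle, so $P=Q$ forces $P[\varphi]=Q[\varphi]$ regardless of whatever (possibly meaningless) hypothesis $\mathcal A$ produced on input $Q$. Hence the only way $\mathcal B$ errs is that $P\in\mathcal D$ and $\mathcal A$ failed to learn, so for each $P\in\mathcal D$ one gets $\Pr_{\mathcal B}[\mathcal B\text{ correct}]\ge\Pr_{\mathcal A}[\mathcal A^{\Stat_\tau(P)}\ \epsilon\text{-learns }P]\ge\alpha$, while for $P=Q$ it is $1$; thus $\mathcal B$ decides ``$\mathcal D$ vs.\ $Q$'' with internal success probability at least $\alpha$ using at most $q+1$ $\tau$-accurate statistical queries, which is the claim. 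The step I expect to be most delicate is inherited from the deterministic lemma — verifying via the triangle inequality that a $\varphi$ exhibiting the $\widehat P$–$Q$ gap still separates the \emph{true} distribution $P$ from $Q$ by strictly more than $\tau$ once $d_{\mathrm{TV}}(\widehat P,P)<\epsilon$, which is exactly where the slack $\epsilon+\tau$ in the promise is consumed and also what pins down the threshold — whereas the randomness accounting above is routine.
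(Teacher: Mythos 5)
The paper gives no proof of this lemma — both Lemma~\ref{lem:learning_hard_as_deciding} and Lemma~\ref{lem:random_learn_as_hard_as_decide} are imported from Nietner \emph{et al.}\ as black-box preliminaries — so there is no ``paper's own proof'' to compare against. Evaluating your sketch on its own terms, the high-level scaffolding is the right Feldman-style reduction and the randomness bookkeeping in the last paragraph is sound: you correctly isolate the fact that all of $\mathcal B$'s randomness comes from the simulated run of $\mathcal A$, and that the ``$P=Q$'' branch must be made correct unconditionally so that the only failure mode is the learner failing on some $P\in\mathcal D$.

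The gap is in the step you yourself flag as ``most delicate.'' With $\varphi=\sgn(\widehat P-Q)$ one has $\widehat P[\varphi]-Q[\varphi]=2\,d_{\mathrm{TV}}(\widehat P,Q)$, and for \emph{any} bounded test $|P[\varphi]-\widehat P[\varphi]|\le \|P-\widehat P\|_1=2\,d_{\mathrm{TV}}(P,\widehat P)$. Combining these with $d_{\mathrm{TV}}(\widehat P,Q)\ge d_{\mathrm{TV}}(P,Q)-d_{\mathrm{TV}}(P,\widehat P)$ gives
\[
P[\varphi]-Q[\varphi]\;\ge\;2\,d_{\mathrm{TV}}(\widehat P,Q)-2\,d_{\mathrm{TV}}(P,\widehat P)
\;\ge\;2\,d_{\mathrm{TV}}(P,Q)-4\,d_{\mathrm{TV}}(P,\widehat P)
\;>\;2(\epsilon+\tau)-4\epsilon\;=\;2\tau-2\epsilon,
\]
which is nonpositive in the stated regime $0<\tau\le\epsilon$. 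Meanwhile, for the ``$P=Q$'' branch to be unconditionally correct the threshold on $|v-Q[\varphi]|$ must be at least $\tau$, and then soundness in the ``$P\in\mathcal D$, learner succeeded'' case requires $|P[\varphi]-Q[\varphi]|>2\tau$. Your sketch asserts the displacement is ``more than $\tau$'' via ``the triangle inequality and $\|\varphi\|_\infty\le1$,'' but the inequality chain above is exactly that argument, and it produces a bound that is too weak by $2\epsilon$ (and in particular is not even positive). So either the test function must be chosen differently, or the decision threshold must be argued with more information than $\widehat P$ alone, or a stronger separation hypothesis on $d_{\mathrm{TV}}(P,Q)$ is implicitly needed; as written the reduction does not close. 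I would encourage you to either track down the precise construction in Nietner \emph{et al.}\ and Feldman (the latter's version of this reduction assumes a promise like $d_{\mathrm{TV}}(P,Q)\ge 2\epsilon$ or adjusts the decision tolerance), or else supply an argument that genuinely bounds $\sum_{x:\,\sgn(\widehat P(x)-Q(x))\ne\sgn(P(x)-Q(x))}|P(x)-Q(x)|$ more tightly than by $2\,d_{\mathrm{TV}}(P,\widehat P)$, which is where the factor that kills the naive bound enters.
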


\begin{lemma}[SQ lower bound for random algorithms deciding $\mathcal D$ vs.\ $Q$, cf.~\cite{nietner2025average}]
	\label{lem:random_decision_hard}
	Let $\mathcal A$ be a randomized algorithm that uses at most $q$ $\tau$-accurate statistical queries to decide
	between $\mathcal D$ and $Q$, and outputs the correct answer with internal success probability at least $\alpha>1/2$.
	Then for any probability measure $\mu$ on $\mathcal D$,
	\[
	q \ \ge\
	\frac{2(\alpha-\tfrac12)}
	{\displaystyle\max_{\phi:X\to[-1,1]}
		\Pr_{P\sim\mu}\bigl[\bigl|P[\phi]-Q[\phi]\bigr|>\tau\bigr]}.
	\]
\end{lemma}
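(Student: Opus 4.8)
The plan is to run the standard ``identical-until-bad'' coupling argument that upgrades the deterministic decision lower bound of Lemma~\ref{lem:hardness_deciding_det} to the randomized setting, following Appendix~F of~\cite{nietner2025average}. First I would note that, once the internal coin flips of $\mathcal A$ (and, for a quantum learner, all intermediate measurement outcomes) are fixed to a string $r$, the algorithm collapses to a deterministic adaptive SQ strategy $\mathcal A_r$ that issues at most $q$ classical query functions $\phi_1^r,\phi_2^r,\dots$, with the choice of $\phi_{i+1}^r$ depending only on the scalar answers already received; thus $\mathcal A$ is a mixture of the $\mathcal A_r$ over $r$. What makes this reduction legitimate even for a quantum learner is that the SQ interface is classical — only query functions and real-valued answers cross the oracle boundary — so all quantum randomness can be absorbed into $r$.

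Next I would analyze two hidden-object scenarios. In the first, the object is $Q$ and the oracle answers every query $\phi$ with the exact value $Q[\phi]$; this is a valid $\Stat_\tau(Q)$, so by the success hypothesis $\Pr_r[\mathcal A_r \text{ outputs } P=Q] \ge \alpha$. Denote by $\phi_i^r$ the queries produced by $\mathcal A_r$ in this run, and set $B_r := \{\,P'\in\mathcal D:\ \exists\, i\le q,\ |P'[\phi_i^r]-Q[\phi_i^r]|>\tau\,\}$. In the second scenario, the object is some $P\in\mathcal D$ and the oracle $O_P$ answers $\phi$ with $Q[\phi]$ whenever $|P[\phi]-Q[\phi]|\le\tau$ and with $P[\phi]$ otherwise; since $|Q[\phi]-P[\phi]|\le\tau$ in the first case, $O_P$ is a valid $\Stat_\tau(P)$. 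Coupling the two runs through the same $r$, they stay pointwise identical — same queries, same answers, same verdict — until $\mathcal A_r$ first issues a query $\phi_i^r$ with $|P[\phi_i^r]-Q[\phi_i^r]|>\tau$; hence for $P\notin B_r$ the verdict on input $P$ equals the verdict on input $Q$, so $\mathcal A_r$ is correct on $P$ only if it answered $P\in\mathcal D$ in the $Q$-run. Therefore $\Pr_r[\mathcal A_r\text{ is correct on }P]\le\Pr_r[P\in B_r]+\Pr_r[\mathcal A_r\text{ outputs }P\in\mathcal D\text{ in the }Q\text{-run}]\le\Pr_r[P\in B_r]+(1-\alpha)$, and the success hypothesis on input $P$ against the valid oracle $O_P$ gives $\Pr_r[\mathcal A_r\text{ is correct on }P]\ge\alpha$; combining the two yields $\Pr_r[P\in B_r]\ge 2\alpha-1=2(\alpha-\tfrac12)$.

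The last step is to average over $P\sim\mu$ and apply a union bound. By Fubini, $\mathbb{E}_{P\sim\mu}\Pr_r[P\in B_r]=\mathbb{E}_r[\mu(B_r)]$, and for each fixed $r$ the union bound over the at most $q$ queries in $B_r$ gives $\mu(B_r)\le\sum_{i\le q}\Pr_{P\sim\mu}[|P[\phi_i^r]-Q[\phi_i^r]|>\tau]\le q\cdot\max_{\phi:X\to[-1,1]}\Pr_{P\sim\mu}[|P[\phi]-Q[\phi]|>\tau]$. Putting these together, $2(\alpha-\tfrac12)\le\mathbb{E}_{P\sim\mu}\Pr_r[P\in B_r]=\mathbb{E}_r[\mu(B_r)]\le q\cdot\max_{\phi}\Pr_{P\sim\mu}[|P[\phi]-Q[\phi]|>\tau]$, which rearranges to the claimed bound. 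As a consistency check, setting $\alpha=1$ recovers the deterministic estimate $q\ge\left(\max_\phi\Pr_{P\sim\mu}[|P[\phi]-Q[\phi]|>\tau]\right)^{-1}$ of Lemma~\ref{lem:hardness_deciding_det}.

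I do not expect a genuine obstacle here; the step to write out most carefully is the handling of the oracle, where one checks that (i) $O_P$ really stays within tolerance $\tau$ of $P$ on every query, (ii) the assumed success probability $\ge\alpha$ is demanded against \emph{adversarially chosen} valid oracles, so it applies to $O_P$ and to the exact-$Q$ oracle used above, and (iii) adaptivity does not break the coupling, since the branching of $\mathcal A_r$ depends only on answers that agree in the two runs up to the first ``bad'' query, beyond which no further claim is made. The quantum extension is then purely notational: $r$ simply records the sequence of measurement outcomes.
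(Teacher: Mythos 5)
Your proof is correct and follows the same hybrid-oracle ``identical-until-bad'' argument that underlies the cited Lemma in Nietner et al.\ (and, implicitly, the deterministic version in Lemma~\ref{lem:hardness_deciding_det}), which this paper states without reproducing the proof. The only points worth being slightly more explicit about in a write-up are that the $\ge\alpha$ success guarantee is required against \emph{every} valid $\Stat_\tau$ oracle (so it applies simultaneously to the exact-$Q$ oracle and to $O_P$), and that the bad set $B_r$ is defined from the query transcript of the $Q$-run only, which is exactly what the coupling needs — both of which you note and handle correctly.
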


\begin{theorem}[Random average-case SQ query complexity lower bound, cf.~\cite{nietner2025average}]
	\label{thm:random_avg_SQ_complexity}
	Let $\mathcal A$ be a randomized learning algorithm which, under measure $\mu$ and parameters $(\alpha,\beta)$, solves the $\epsilon$-learning problem using at most $q$ $\tau$-accurate statistical queries (i.e., it succeeds on a subset of distributions of $\mu$-measure $\beta$, and its internal success probability is at least $\alpha$).
	Then for any reference distribution $Q$,
	\[
	q+1
	\ \ge\
	2\cdot
	\frac{\bigl(\alpha-\tfrac12\bigr)\cdot
		\Bigl(\beta-\Pr_{P\sim\mu}\bigl[d_{\mathrm{TV}}(P,Q)\le\epsilon+\tau\bigr]\Bigr)}
	{\displaystyle\max_{\phi:X\to[-1,1]}
		\Pr_{P\sim\mu}\bigl[\bigl|P[\phi]-Q[\phi]\bigr|>\tau\bigr]}.
	\]
\end{theorem}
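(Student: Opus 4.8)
The plan is to mirror the proof of the deterministic bound, Theorem~\ref{thm:avg_case_SQ_lower}, but with its two ingredients replaced by their randomized counterparts, Lemma~\ref{lem:random_learn_as_hard_as_decide} (learning is at least as hard as deciding) and Lemma~\ref{lem:random_decision_hard} (the SQ lower bound for randomized deciders), and to accommodate the partial-success parameter $\beta$ by restricting to a suitable sub-ensemble. We work under the usual conventions of those lemmas, in particular $\alpha>1/2$; we may also assume $\beta-\Pr_{P\sim\mu}[d_{\mathrm{TV}}(P,Q)\le\epsilon+\tau]>0$, since otherwise the right-hand side of the claimed inequality is non-positive and there is nothing to prove.

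First I would isolate the ``good'' part of the ensemble. Let $\mathcal D_\alpha$ be the set of $P\in\mathcal D$ on which $\mathcal A$, using at most $q$ $\tau$-accurate queries, $\epsilon$-learns $P$ with internal success probability at least $\alpha$; by hypothesis $\mu(\mathcal D_\alpha)\ge\beta$. Put
\[
\mathcal D'':=\bigl\{P\in\mathcal D_\alpha:\ d_{\mathrm{TV}}(P,Q)>\epsilon+\tau\bigr\}.
\]
Since $\mathcal D_\alpha\setminus\mathcal D''\subseteq\{P:\ d_{\mathrm{TV}}(P,Q)\le\epsilon+\tau\}$, a union bound gives $\mu(\mathcal D'')\ge\beta-\Pr_{P\sim\mu}[d_{\mathrm{TV}}(P,Q)\le\epsilon+\tau]>0$. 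On the class $\mathcal D''$ both hypotheses of Lemma~\ref{lem:random_learn_as_hard_as_decide} hold: every $P\in\mathcal D''$ is $(\epsilon+\tau)$-far from $Q$ in TV, and $\mathcal A$ $\epsilon$-learns every such $P$ with internal probability at least $\alpha$. That lemma therefore produces a randomized algorithm $\mathcal B$ deciding ``$P\in\mathcal D''$'' versus ``$P=Q$'' with at most $q+1$ $\tau$-accurate queries and internal success probability at least $\alpha$.

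Next I would invoke Lemma~\ref{lem:random_decision_hard} for $\mathcal B$, using the normalized restricted measure $\nu(\cdot):=\mu(\cdot\cap\mathcal D'')/\mu(\mathcal D'')$ on $\mathcal D''$ (well-defined because $\mu(\mathcal D'')>0$). This gives
\[
q+1\ \ge\ \frac{2\bigl(\alpha-\tfrac12\bigr)}{\displaystyle\max_{\phi:X\to[-1,1]}\Pr_{P\sim\nu}\bigl[\,|P[\phi]-Q[\phi]|>\tau\,\bigr]} .
\]
For every $\phi$, conditioning on the event $P\in\mathcal D''$ only shrinks probabilities up to the factor $\mu(\mathcal D'')^{-1}$, so $\Pr_{P\sim\nu}[\,|P[\phi]-Q[\phi]|>\tau\,]\le\mu(\mathcal D'')^{-1}\,\Pr_{P\sim\mu}[\,|P[\phi]-Q[\phi]|>\tau\,]$. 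Substituting this into the denominator and then using $\mu(\mathcal D'')\ge\beta-\Pr_{P\sim\mu}[d_{\mathrm{TV}}(P,Q)\le\epsilon+\tau]$ yields exactly
\[
q+1\ \ge\ \frac{2\bigl(\alpha-\tfrac12\bigr)\,\bigl(\beta-\Pr_{P\sim\mu}[d_{\mathrm{TV}}(P,Q)\le\epsilon+\tau]\bigr)}{\displaystyle\max_{\phi:X\to[-1,1]}\Pr_{P\sim\mu}\bigl[\,|P[\phi]-Q[\phi]|>\tau\,\bigr]} ,
\]
which is the claim. (Since $\mathcal D''$ is nonempty and every element is $(\epsilon+\tau)$-far from $Q$, the denominator is strictly positive, so no division by zero occurs.)

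The step I expect to require the most care is the sub-ensemble bookkeeping: Lemma~\ref{lem:random_decision_hard} needs a decider that is correct on the \emph{whole} class to which it is applied, so one cannot feed it $(\mathcal D,\mu)$ directly but must pass to $\mathcal D''$ with the renormalized measure $\nu$ and then propagate the factor $1/\mu(\mathcal D'')$ to the end --- this is precisely what manufactures the $\beta-\Pr[d_{\mathrm{TV}}\le\epsilon+\tau]$ numerator. Everything else is a straightforward concatenation of the three cited lemmas; in particular the triangle-inequality observation ``$d_{\mathrm{TV}}(P,Q)>\epsilon+\tau$ and $d_{\mathrm{TV}}(\widehat P,P)<\epsilon$ imply $d_{\mathrm{TV}}(\widehat P,Q)>\tau$'', which is what makes the learning-to-decision reduction run, is already internal to Lemma~\ref{lem:random_learn_as_hard_as_decide} and need not be redone.
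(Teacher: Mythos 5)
Your proof is correct, and it is essentially the expected/standard argument: the paper does not prove this theorem but imports it from Nietner \emph{et al.}, whose proof proceeds exactly along your lines --- pass to the sub-ensemble $\mathcal D''$ of ``good'' distributions that are both learnable with internal probability $\ge\alpha$ and $(\epsilon+\tau)$-far from $Q$, invoke the randomized learning-to-decision reduction (Lemma~\ref{lem:random_learn_as_hard_as_decide}), apply the randomized decision lower bound (Lemma~\ref{lem:random_decision_hard}) with the renormalized measure $\nu$, and convert the $\nu$-probability in the denominator back to a $\mu$-probability by the inequality $\Pr_{\nu}[A]\le\mu(\mathcal D'')^{-1}\Pr_{\mu}[A]$, which produces the numerator factor $\mu(\mathcal D'')\ge\beta-\Pr_{\mu}[d_{\mathrm{TV}}(P,Q)\le\epsilon+\tau]$. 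One small inaccuracy: your closing parenthetical asserts that nonemptiness of $\mathcal D''$ with each element far from $Q$ makes the denominator strictly positive, but that does not by itself follow (the maximizing test function depends on $P$, and $\nu$ need not place mass at any single point); the cleaner dispatch is that if the denominator were $0$, Lemma~\ref{lem:random_decision_hard} would forbid the existence of any finite-query decider for $(\mathcal D'',\nu)$, contradicting the decider you have already constructed, so the inequality is vacuously true in that degenerate case anyway.
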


\subsection{QPStat model and general lower-bound}

In this subsection we recall the QPStat model and the general lower-bound template we need, mainly following~\cite{wadhwa2024noise}.

\begin{definition}[Quantum process statistical query (QPStat) oracle]\label{def:QPStat_oracle}
	Let $E:\mathcal B(\mathcal H)\to\mathcal B(\mathcal H)$ be a CPTP channel, and let $\mathsf S(\mathcal H)$ denote the state space.
	For any state $\rho\in\mathsf S(\mathcal H)$ and operator $O\in\mathcal B(\mathcal H)$ with $\|\rho\|_1\le1$ and $\|O\|_\infty\le1$, and for any tolerance parameter $\tau>0$, the QPStat oracle is defined as
	\[
	QPStat_E:\ (\rho,O,\tau)\ \longmapsto\ \alpha\in\mathbb R,
	\]
	and is required to output $\alpha$ satisfying
	\begin{equation}\label{eq:QPStat_def}
		\bigl|\alpha-\operatorname{Tr}\bigl[O\,E(\rho)\bigr]\bigr|\le\tau .
	\end{equation}
	A QPStat learning algorithm interacting with an unknown channel $E$ may adaptively query this oracle at most $q$ times; we say it uses $q$ QPStat queries with accuracy~$\tau$.
\end{definition}

In this work we are interested in the average-case QPStat learning complexity:
given a channel ensemble $(\mathcal E,\mu)$ and error parameters $(\varepsilon,\tau)$, we would like to use as few QPStat queries as possible so that, on a $\mu$-measure $\beta$ subset of channels, the algorithm outputs a hypothesis channel whose diamond distance from the true channel is at most $\varepsilon$, with internal success probability at least $\alpha$.
When we take the special case $\beta=1$, we recover the worst-case setting where the learner must succeed on all channels in the support of~$\mu$; in this case the lower bound in Lemma~\ref{lem:qpstat_many_vs_one} naturally simplifies to $q+1\ge (2\alpha-1)/\Gamma$.

Formally, for a family of CPTP channels $\mathcal E$ equipped with a measure $\mu$, we say a learning algorithm $\mathcal A$ satisfies:
\begin{itemize}
	\item[1.] It uses at most $q$ QPStat queries with accuracy~$\tau$;
	\item[2.] There exist constants $\alpha>1/2$ and $\beta>0$ such that
	\[
	\mu\Bigl(
	\Bigl\{E\in\mathcal E:\ 
	\Pr_{\text{internal randomness}}\bigl[d_\diamond(\widehat E,E)\le\varepsilon\bigr]
	\ge\alpha
	\Bigr\}
	\Bigr)\ \ge\ \beta ,
	\]
\end{itemize}
in which case we say that $\mathcal A$ QPStat-learns the channel ensemble $(\mathcal E,\mu)$ with parameters $(\varepsilon,\tau,\alpha,\beta)$ in the average-case sense.

\begin{lemma}[QPStat many-vs-one lower-bound template, cf.~\cite{wadhwa2024noise}]\label{lem:qpstat_many_vs_one}
	Let $(\mathcal E,\mu)$ be a family of CPTP channels acting on a fixed Hilbert space $\mathcal H$, together with a probability measure~$\mu$.
	Let $\bar E$ be some reference channel (in applications we will take $\bar E=\mathbb E_{E\sim\mu}E$).
	
	Consider the following QPStat learning problem:
	a (possibly randomized) learning algorithm $\mathcal A$ receives QPStat access to a random channel $E\sim\mu$, i.e., to the oracle $QPStat_E$, and may use at most $q$ QPStat queries with accuracy~$\tau$ before outputting a hypothesis channel~$\widehat E$.
	
	If there exist constants $\alpha>1/2$ and $\beta>0$ such that, for
	\[
	\mathcal G
	:=\{E\in\mathcal E:\ 
	\Pr_{\text{internal randomness}}\bigl[d_\diamond(\widehat E,E)\le\varepsilon\bigr]
	\ge\alpha
	\},
	\]
	one has $\mu(\mathcal G)\ge\beta$, then necessarily
	\begin{equation}\label{eq:qpstat_many_vs_one}
		q+1
		\ \ge\
		\frac{(2\alpha-1)\,\beta}{
			\displaystyle
			\max_{\rho,O}
			\Pr_{E\sim\mu}
			\Bigl(
			\bigl|\operatorname{Tr}[O E(\rho)]
			-\operatorname{Tr}[O\bar E(\rho)]\bigr|
			>\tau\Bigr)
		} ,
	\end{equation}
	where the maximum is taken over all pairs $(\rho,O)$ satisfying
	$\|\rho\|_1\le1$ and $\|O\|_\infty\le1$.
\end{lemma}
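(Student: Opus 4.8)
The plan is to transplant to the process setting the classical pipeline
learning $\Rightarrow$ decision $\Rightarrow$ statistical-dimension lower bound behind
Lemmas~\ref{lem:learning_hard_as_deciding}, \ref{lem:hardness_deciding_det},
\ref{lem:random_decision_hard} and Theorem~\ref{thm:random_avg_SQ_complexity}, with
``distributions and test functions $\varphi$'' replaced throughout by ``channels and query
pairs $(\rho,O)$''. Write
\[
\Gamma:=\max_{\|\rho\|_1\le1,\,\|O\|_\infty\le1}\Pr_{E\sim\mu}\Bigl(\bigl|\operatorname{Tr}[O\,E(\rho)]-\operatorname{Tr}[O\,\bar E(\rho)]\bigr|>\tau\Bigr)
\]
for the denominator of the target bound; note that $\bar E=\mathbb E_{E\sim\mu}E$ is itself CPTP, and that the only place channel structure enters is the Hölder estimate $\bigl|\operatorname{Tr}[O(E-\bar E)(\rho)]\bigr|\le\|O\|_\infty\,\|(E-\bar E)(\rho)\|_1$, which relates the per-query separation event defining $\Gamma$ to trace distance and, through Remark~\ref{rem:diamond_vs_1to1}, to diamond distance.

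\emph{Step 1 (a decision lower bound for ``$\bar E$ vs.\ $\mu$'').} Let $\mathcal B$ be any (possibly randomized) algorithm that, using at most $q+1$ QPStat queries of tolerance $\tau$, decides whether the hidden channel is $\bar E$ or is drawn from $\mu$. Fix the internal randomness $r$ and consider the \emph{reference run}, in which every query $(\rho_i,O_i)$ is answered by the exact value $\operatorname{Tr}[O_i\bar E(\rho_i)]$; this is a deterministic transcript, with queries $(\rho^{r}_i,O^{r}_i)_{i\le q+1}$ and a fixed verdict $v(r)$. Put $B_r:=\bigcup_{i\le q+1}\{E\in\mathcal E:\ |\operatorname{Tr}[O^{r}_i(E-\bar E)(\rho^{r}_i)]|>\tau\}$; the union bound gives $\mu(B_r)\le(q+1)\Gamma$, and $\bar E\notin B_r$. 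For every $E\notin B_r$ the reference-run answers are admissible responses of $\operatorname{QPStat}_E$, so $\mathcal B$ run against such an $E$ with randomness $r$ outputs the same verdict $v(r)$. Since $\mathcal B$ is correct with internal probability $\ge\alpha$ on $\bar E$, the verdict $v(r)$ is the wrong (``$\mu$'') one with probability $\le1-\alpha$; combined with correctness on a fixed $E\in\mathcal G$ and the fact that the verdict on $\{E\notin B_r\}$ equals $v(r)$, this gives $\alpha\le(1-\alpha)+\Pr_r[E\in B_r]$, i.e.\ $\Pr_r[E\in B_r]\ge2\alpha-1$ for every $E\in\mathcal G$ (the standard randomized two-outcome accounting, cf.\ Lemma~\ref{lem:random_decision_hard}). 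Integrating over $\mathcal G$ and exchanging the order of integration,
\[
(q+1)\,\Gamma\ \ge\ \mathbb E_r[\mu(B_r)]\ \ge\ \mathbb E_r[\mu(B_r\cap\mathcal G)]\ =\ \int_{\mathcal G}\Pr_r[E\in B_r]\,d\mu(E)\ \ge\ (2\alpha-1)\,\beta .
\]

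\emph{Step 2 (learning $\Rightarrow$ deciding, with one extra query).} Given the QPStat learner $\mathcal A$ with parameters $(\varepsilon,\tau,\alpha,\beta)$ and good set $\mathcal G$ of $\mu$-measure $\ge\beta$, define $\mathcal B$: run $\mathcal A$ (using at most $q$ queries) to produce a hypothesis channel $\widehat E$, then use one further query to evaluate the oracle at a pair $(\rho^\star,O^\star)$ chosen from the known $\widehat E,\bar E$ to realise the value $\|\widehat E-\bar E\|_{1\to1}$, and output ``$\bar E$'' iff the returned value lies within $\tau$ of $\operatorname{Tr}[O^\star\bar E(\rho^\star)]$. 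On the hidden channel $\bar E$ this consistency test always passes, so $\mathcal B$ is correct; on a hidden $E\in\mathcal G$ on which $\mathcal A$ succeeds, the triangle inequality for $\|\cdot\|_{1\to1}$ (equivalently $\|\cdot\|_\diamond$) together with the separation hypothesis that a typical $E$ is $\Omega(\varepsilon+\tau)$-far from $\bar E$ — which in the applications is precisely the open-system Porter--Thomas-type spreading of the ensemble about its mean, and is established there — forces the returned value ($\approx\operatorname{Tr}[O^\star E(\rho^\star)]$) to be more than $\tau$ from $\operatorname{Tr}[O^\star\bar E(\rho^\star)]$, so $\mathcal B$ correctly answers ``$E\sim\mu$''. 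Thus $\mathcal B$ is a $(q+1)$-query decider satisfying the hypotheses of Step~1, so $q+1\ge(2\alpha-1)\beta/\Gamma$, which is the claim; the worst-case version $\beta=1$ then reads $q+1\ge(2\alpha-1)/\Gamma$.

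\emph{Main obstacle.} The skeleton is short, and essentially all the care is concentrated in two spots. First, the reduction in Step~2 must convert ``$d_\diamond(\widehat E,E)\le\varepsilon$'' into a statement about the ancilla-free quantity $|\operatorname{Tr}[O^\star(E-\bar E)(\rho^\star)]|$ that a single QPStat query can access, so one has to run the triangle-inequality estimate while tracking the $\|\cdot\|_{1\to1}$–$\|\cdot\|_\diamond$ conversion factors (Remark~\ref{rem:diamond_vs_1to1}) and verify that the required ``far from $\bar E$'' hypothesis is genuinely delivered by the ensemble's concentration. Second, because the oracle $\operatorname{QPStat}_E$ of Definition~\ref{def:QPStat_oracle} is not unique, one must fix the canonical ``reply with $\operatorname{Tr}[O\bar E(\rho)]$ whenever admissible'' oracle and invoke that a valid learner/decider must succeed against \emph{every} admissible response pattern, and then keep the reference run coupled, transcript by transcript, to a genuine run against $\operatorname{QPStat}_E$, handling adaptivity simultaneously with the internal randomness $r$, so that the union bound $\mu(B_r)\le(q+1)\Gamma$, the $2\alpha-1$ accounting, and the exchange $\int_{\mathcal G}\Pr_r[E\in B_r]\,d\mu=\mathbb E_r[\mu(B_r\cap\mathcal G)]$ are all legitimate. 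None of this is deep, but it is where the technical content lies.
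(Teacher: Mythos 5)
The paper does not actually prove Lemma~\ref{lem:qpstat_many_vs_one}: it is quoted as a template with a ``cf.''~citation to~\cite{wadhwa2024noise}, so there is no in-paper proof to compare against, and your proposal is the only proof under discussion.

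Your Step~1 is a clean Feldman-style many-versus-one argument and is essentially correct: fixing the internal randomness $r$, running the decider against the canonical $\bar E$-answers produces a deterministic transcript and verdict $v(r)$; the bad set $B_r$ has $\mu$-measure at most $(q+1)\Gamma$ by a union bound over the $q+1$ query pairs in that transcript; canonical answers are admissible replies of $\operatorname{QPStat}_E$ exactly when $E\notin B_r$, so the coupled runs agree query by query and the verdict is $v(r)$; and the $2\alpha-1$ accounting plus the Fubini exchange yield $(q+1)\Gamma\ge(2\alpha-1)\beta$. This is the right way to handle adaptivity and the non-uniqueness of the oracle.

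Step~2 is where there is a genuine gap, and it is not merely a technicality to be deferred. You explicitly invoke a separation hypothesis — that $\mu$-typical channels $E$ are $\Omega(\varepsilon+\tau)$-far from $\bar E$ in the $1\to1$ norm — and this hypothesis is not part of the lemma's stated hypotheses. Without it, the lemma as printed is false: take a two-point ensemble $\{E_1,E_2\}$ with $\mu$ uniform, $\|E_1-E_2\|_{1\to1}=\|E_1-E_2\|_\diamond=\varepsilon_0$, and set $\bar E=(E_1+E_2)/2$; choose $\tau=\varepsilon$ slightly larger than $\varepsilon_0/2$ (so $\tau\le\varepsilon$). Then for every admissible pair $(\rho,O)$ one has $|\operatorname{Tr}[O(E_i-\bar E)(\rho)]|\le\|E_i-\bar E\|_{1\to1}=\varepsilon_0/2<\tau$, hence $\Gamma=0$; yet the zero-query learner that outputs $\bar E$ achieves $d_\diamond(\bar E,E_i)=\varepsilon_0/2\le\varepsilon$ deterministically, so the hypothesis of the lemma holds with $\alpha=1$, $\beta=1$, $q=0$, while the conclusion reads $1\ge\infty$. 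The correct statement must either assume separation or, mirroring the distribution-level Theorem~\ref{thm:avg_case_SQ_lower}, subtract a term like $\Pr_{E\sim\mu}\!\bigl[d_\diamond(E,\bar E)\le\varepsilon+C\tau\bigr]$ from $\beta$ in the numerator. Here $C$ must additionally absorb the $\|\cdot\|_{1\to1}$-to-$\|\cdot\|_\diamond$ conversion of Remark~\ref{rem:diamond_vs_1to1}, because a single QPStat query is ancilla-free: the consistency test in your Step~2 controls only $|\operatorname{Tr}[O^\star(E-\bar E)(\rho^\star)]|\le\|E-\bar E\|_{1\to1}$, while the learning-success hypothesis is in $d_\diamond$, and these differ by up to a factor $d$. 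Your ``main obstacle'' paragraph correctly flags both issues, but flagging them does not discharge them — as a proof of the lemma as printed, Step~2 does not go through, and the best you can honestly claim is a proof of a strengthened lemma with a $\|\cdot\|_{1\to1}$-separation hypothesis added (which, as you note, is what the downstream application supplies).
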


\begin{remark}[Without loss of generality we assume $\tau \leq \varepsilon$]
Throughout this work we restrict to the regime $0<\tau \leq \varepsilon$. If $\tau>\varepsilon$, then there can exist channels (or distributions) with $\varepsilon<d(\cdot, \cdot)<\tau$ that are indistinguishable using $\tau$-accurate query answers, so no $\varepsilon$-learner can exist in that parameter regime. Thus assuming $\tau \leq \varepsilon$ is without loss of generality for our lower-bound statements~\cite{nietner2025average}.
\end{remark}

\subsection{Fréchet derivatives and perturbation formulas}\label{subsec:frechet_duhamel}

In this paper we need to relate small changes in a generator to the corresponding change in the semigroup $e^{t\mathcal L}$.
The natural language for this is the Fréchet derivative on Banach spaces and the associated perturbation-type formulas.

\begin{definition}[Fréchet derivative on Banach spaces\cite{yosida2012functional}]\label{def:frechet}
	Let $(E,\|\cdot\|_E),(F,\|\cdot\|_F)$ be Banach spaces, and $U\subset E$ an open set.
	A map $F:U\to F$ is said to be \emph{Fréchet differentiable} at a point $x\in U$ if there exists a bounded linear operator $DF(x):E\to F$ such that
	\begin{equation}
		\lim_{\|h\|_E\to0}
		\frac{\|F(x+h)-F(x)-DF(x)[h]\|_F}{\|h\|_E} = 0.
	\end{equation}
	In this case $DF(x)$ is unique and is called the Fréchet derivative of $F$ at $x$.
	If $F$ is Fréchet differentiable everywhere on $U$ and the map $x\mapsto DF(x)$ is continuous, we say $F\in C^1(U;F)$.
\end{definition}

In the finite-dimensional case (e.g., $E=\mathbb R^n,F=\mathbb R^m$), the Fréchet derivative reduces to the familiar Jacobian matrix.
In the infinite-dimensional case, it provides a canonical version of the first-order linear approximation,
\(
F(x+h)\approx F(x)+DF(x)[h],
\)
see any functional analysis textbook, e.g. ~\cite{yosida2012functional}.

The main Banach space of interest in this paper is
\[
E = \mathsf B(\mathcal B(\mathcal H)),
\]
the space of all bounded superoperators $\mathcal L:\mathcal B(\mathcal H)\to\mathcal B(\mathcal H)$, equipped with the induced $1\to1$ norm
\(
\|\mathcal L\|_{1\to1}
:= \sup_{X\neq0}\frac{\|\mathcal L(X)\|_1}{\|X\|_1}
\)
(see Definition~\ref{def:induced_norms}).
On this space we consider the exponential map
\[
\Exp_t:\ \mathcal L\ \mapsto\ e^{t\mathcal L}
:= \sum_{n=0}^\infty \frac{t^n}{n!}\,\mathcal L^n,
\qquad t\ge0,
\]
whose values are again bounded superoperators $e^{t\mathcal L}:\mathcal B(\mathcal H)\to\mathcal B(\mathcal H)$.

\begin{proposition}[Fréchet differentiability of the operator exponential]\label{prop:exp_frechet}
	Let $E$ be an arbitrary Banach space and $\mathsf B(E)$ the space of its bounded linear operators.
	For any fixed $t\ge0$, the map
	\[
	\Exp_t:\mathsf B(E)\to\mathsf B(E),\qquad
	A\mapsto e^{tA}
	\]
	is a $C^\infty$ (indeed analytic) map on the Banach space $\mathsf B(E)$.
	Its Fréchet derivative at $A\in\mathsf B(E)$, $D\Exp_t(A):\mathsf B(E)\to\mathsf B(E)$, acts on any perturbation $\Delta A\in\mathsf B(E)$ as
	\begin{equation}\label{eq:frechet_exp_general}
		D\Exp_t(A)[\Delta A]
		= \int_0^t e^{(t-s)A}\,\Delta A\,e^{sA}\,\mathrm ds.
	\end{equation}
\end{proposition}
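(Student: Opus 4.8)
The plan is to treat $\Exp_t$ through its everywhere-convergent power series $\Exp_t(A)=\sum_{n\ge0}\frac{t^n}{n!}A^n$, deduce analyticity (hence $C^\infty$) from a locally-uniform-convergence argument, and then read off the first Fréchet derivative. Each summand $A\mapsto A^n$ is a continuous homogeneous polynomial of degree $n$ --- the restriction to the diagonal of the bounded $n$-linear map $(A_1,\dots,A_n)\mapsto A_1\cdots A_n$ --- with $\|A^n\|_{1\to1}\le\|A\|_{1\to1}^n$. On any ball $\{\|A\|_{1\to1}\le R\}$ the series is dominated by $\sum_n (tR)^n/n!=e^{tR}<\infty$, so it converges absolutely and uniformly there; since a locally uniform limit of Banach-space--valued analytic maps is analytic, $\Exp_t$ is analytic on $\mathsf B(E)$, in particular $C^\infty$. (Higher Fréchet derivatives come out the same way, but only first order is needed here.)

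For the derivative formula \eqref{eq:frechet_exp_general} I would use the Duhamel route. Differentiating $s\mapsto e^{(t-s)(A+\Delta A)}e^{sA}$ on $[0,t]$, and using that $A+\Delta A$ commutes with its own exponential, yields the exact identity
\[
e^{t(A+\Delta A)}-e^{tA}=\int_0^t e^{(t-s)(A+\Delta A)}\,\Delta A\,e^{sA}\,\mathrm ds .
\]
Replacing $e^{(t-s)(A+\Delta A)}$ by $e^{(t-s)A}$ in the integrand splits the right-hand side into the candidate linear map $\Delta A\mapsto\int_0^t e^{(t-s)A}\,\Delta A\,e^{sA}\,\mathrm ds$, which is manifestly bounded (norm at most $\int_0^t\|e^{(t-s)A}\|_{1\to1}\|e^{sA}\|_{1\to1}\,\mathrm ds$), plus a remainder bounded by $\sup_{0\le r\le t}\|e^{r(A+\Delta A)}-e^{rA}\|_{1\to1}\cdot t\,e^{t\|A\|_{1\to1}}\|\Delta A\|_{1\to1}$; since that supremum is itself $O(\|\Delta A\|_{1\to1})$ --- again by Duhamel, or directly from the power series --- the remainder is $O(\|\Delta A\|_{1\to1}^2)$ for bounded perturbations, which is exactly the Fréchet-differentiability condition. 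As a cross-check one can differentiate the series term by term: the derivative of $A\mapsto A^n$ is the Leibniz sum $\sum_{k=0}^{n-1}A^k\,\Delta A\,A^{n-1-k}$, the resulting series converges uniformly on bounded sets, and expanding $e^{(t-s)A}$ and $e^{sA}$ in \eqref{eq:frechet_exp_general}, interchanging sum and integral, and using $\int_0^t(t-s)^p s^q\,\mathrm ds=\frac{p!\,q!}{(p+q+1)!}\,t^{p+q+1}$ regroups (with $n=p+q+1$) into exactly that series.

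I expect the main obstacle to be bookkeeping rather than conceptual: in the series route, justifying the interchange of a double infinite sum with the $s$-integral; in the Duhamel route, confirming that the remainder is genuinely quadratic in $\|\Delta A\|_{1\to1}$. Both reduce to the crude bounds $\|A^n\|_{1\to1}\le\|A\|_{1\to1}^n$ and $\|e^{sA}\|_{1\to1}\le e^{s\|A\|_{1\to1}}$, so nothing deep is involved. It is worth emphasising that the noncommutativity of $A$ and $\Delta A$ is precisely what forces the symmetrised integral in \eqref{eq:frechet_exp_general}: only when $[A,\Delta A]=0$ does it collapse to $t\,e^{tA}\,\Delta A$. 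Specialising $E=\mathcal B(\mathcal H)$ with $A=\mathcal L$ and $\Delta A=\mathcal L'-\mathcal L$ then yields the semigroup perturbation formula used in the later linear-response analysis.
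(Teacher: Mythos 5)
Your proof is correct and follows essentially the same route as the paper's sketch: the power series gives analyticity, and the Duhamel identity isolates the derivative integral $\int_0^t e^{(t-s)A}\,\Delta A\,e^{sA}\,\mathrm ds$. Your version is slightly more complete in that you bound the remainder quadratically in $\|\Delta A\|$ to establish Fréchet (not merely Gateaux) differentiability directly, whereas the paper's sketch takes a $\delta\to0$ limit along a fixed direction and implicitly relies on analyticity for the upgrade.
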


\begin{proof}[Proof sketch]
	First, using the power series
	$e^{tA}=\sum_{n\ge0}t^nA^n/n!$, which converges uniformly in the operator norm, one sees that $\Exp_t$ is analytic on $\mathsf B(E)$.
	Let $A(\alpha)=A+\alpha\Delta A$.
	Applying the perturbation formula to $e^{tA(\alpha)}$ (see Lemma~\ref{lem:duhamel}), and considering
	\[
	e^{t(A+\delta\Delta A)} - e^{tA}
	= \int_0^t e^{(t-s)(A+\delta\Delta A)}\,\delta\Delta A\,e^{sA}\,\mathrm ds,
	\]
	dividing both sides by $\delta$ and letting $\delta\to0$, using the continuity of $e^{(t-s)(A+\delta\Delta A)}\to e^{(t-s)A}$ in the operator norm, one obtains the limit form \eqref{eq:frechet_exp_general}.
	A fully rigorous Banach-space proof can be found in~\cite{engel2000one}.
\end{proof}

We will use the above Fréchet derivative of the exponential map given by the perturbation formula in the main text, i.e., for each parameter direction $\Delta A$, the derivative $D\Exp_t(A)[\Delta A]$ is given by an operator-valued integral.

\medskip

\begin{lemma}[Perturbation formula]\label{lem:duhamel}
	Let $A,B\in\mathsf B(E)$ and $t\ge0$. Then
	\begin{equation}\label{eq:duhamel_basic}
		e^{t(A+B)} - e^{tA}
		= \int_0^t e^{(t-s)(A+B)}\,B\,e^{sA}\,\mathrm ds.
	\end{equation}
	More generally, if $A(\alpha)$ is a family of bounded operators that is Fréchet differentiable, with derivative $A'(\alpha)$, then
	\begin{equation}\label{eq:duhamel_derivative}
		\frac{\mathrm d}{\mathrm d\alpha}e^{tA(\alpha)}
		= \int_0^t e^{(t-s)A(\alpha)}\,A'(\alpha)\,e^{sA(\alpha)}\,\mathrm ds.
	\end{equation}
\end{lemma}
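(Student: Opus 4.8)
The plan is to establish the integral identity \eqref{eq:duhamel_basic} first, by the standard interpolation (telescoping) argument, and then to obtain the differential version \eqref{eq:duhamel_derivative} from it by a limiting argument. Throughout, all objects live in the Banach algebra $\mathsf B(E)$ with a fixed submultiplicative norm (in our applications the $1\to1$ norm), and since $A,B$ are bounded, $s\mapsto e^{sA}$ and $s\mapsto e^{s(A+B)}$ are norm-analytic.

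First I would introduce the path $\Phi\colon[0,t]\to\mathsf B(E)$ defined by $\Phi(s):=e^{(t-s)(A+B)}\,e^{sA}$, which interpolates between $\Phi(0)=e^{t(A+B)}$ and $\Phi(t)=e^{tA}$. By the product rule for differentiable $\mathsf B(E)$-valued maps, together with $\tfrac{\mathrm{d}}{\mathrm{d}s}e^{sA}=A e^{sA}$ and $\tfrac{\mathrm{d}}{\mathrm{d}s}e^{(t-s)(A+B)}=-e^{(t-s)(A+B)}(A+B)$ (the latter because $e^{(t-s)(A+B)}$ commutes with its own generator), the two terms containing $A$ cancel and one is left with $\Phi'(s)=-e^{(t-s)(A+B)}\,B\,e^{sA}$, which is norm-continuous on $[0,t]$. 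The Banach-space fundamental theorem of calculus then gives $\Phi(t)-\Phi(0)=\int_0^t\Phi'(s)\,\mathrm{d}s$, i.e. $e^{tA}-e^{t(A+B)}=-\int_0^t e^{(t-s)(A+B)}B e^{sA}\,\mathrm{d}s$; rearranging yields \eqref{eq:duhamel_basic}.

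For \eqref{eq:duhamel_derivative}, fix $\alpha$ and apply \eqref{eq:duhamel_basic} with $A\mapsto A(\alpha)$ and $B\mapsto \Delta_h:=A(\alpha+h)-A(\alpha)$, obtaining $e^{tA(\alpha+h)}-e^{tA(\alpha)}=\int_0^t e^{(t-s)A(\alpha+h)}\,\Delta_h\,e^{sA(\alpha)}\,\mathrm{d}s$. Dividing by $h$ and letting $h\to0$: Fréchet differentiability of $\alpha\mapsto A(\alpha)$ gives $\Delta_h/h\to A'(\alpha)$ in $\mathsf B(E)$ and in particular $A(\alpha+h)\to A(\alpha)$, so by norm-continuity of the exponential map (Proposition~\ref{prop:exp_frechet}, or simply uniform convergence of the exponential series on bounded sets) one has $\sup_{s\in[0,t]}\bigl\|e^{(t-s)A(\alpha+h)}-e^{(t-s)A(\alpha)}\bigr\|\to0$. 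Hence the integrand converges, uniformly in $s\in[0,t]$, to $e^{(t-s)A(\alpha)}A'(\alpha)e^{sA(\alpha)}$, and interchanging the limit with the integral over the compact interval $[0,t]$ produces \eqref{eq:duhamel_derivative}.

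The only point requiring genuine care is the Banach-space calculus: justifying the product rule and the fundamental theorem of calculus for the $\mathsf B(E)$-valued map $\Phi$, and the interchange of $\lim_{h\to0}$ with $\int_0^t$ in the last step. These are routine because $A,B\in\mathsf B(E)$, so every factor appearing is norm-continuous (indeed analytic) in $s$ and jointly continuous in the generator, and all convergences take place uniformly on the compact interval $[0,t]$; no domain or unboundedness subtleties arise. An alternative to the interpolation step is to expand $e^{t(A+B)}$ and $e^{tA}$ as power series and match iterated integrals term by term, but the interpolation argument is shorter and delivers the uniform bounds needed for \eqref{eq:duhamel_derivative} directly.
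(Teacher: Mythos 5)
Your proof is correct and follows essentially the same route as the paper: both define the interpolating path $U(s)=e^{(t-s)(A+B)}e^{sA}$, differentiate in $s$ to get $-e^{(t-s)(A+B)}Be^{sA}$, and integrate; and both obtain \eqref{eq:duhamel_derivative} by applying \eqref{eq:duhamel_basic} with $B$ replaced by the increment $A(\alpha+h)-A(\alpha)$, dividing, and passing to the limit. Your write-up is merely more explicit about the Banach-space product rule, the fundamental theorem of calculus, and the uniform convergence needed to interchange the limit with the integral.
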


\begin{proof}[Proof sketch]
	The basic form \eqref{eq:duhamel_basic} can be obtained by considering
	\(
	U(s):=e^{(t-s)(A+B)}e^{sA}.
	\)
	Differentiating with respect to $s$ gives
	\[
	\frac{\mathrm d}{\mathrm ds}U(s)
	= -e^{(t-s)(A+B)}(A+B)e^{sA}
	+e^{(t-s)(A+B)}Ae^{sA}
	= -e^{(t-s)(A+B)}Be^{sA}.
	\]
	Integrating $s\in[0,t]$ yields
	\(
	U(t)-U(0)=-\int_0^t e^{(t-s)(A+B)}Be^{sA}\,\mathrm ds,
	\)
	i.e.,
	\(
	e^{tA}-e^{t(A+B)}
	=-\int_0^t \cdots,
	\)
	which rearranges to \eqref{eq:duhamel_basic}.
	
	For \eqref{eq:duhamel_derivative}, replace $B$ by a small perturbation $\delta A'(\alpha)$, apply \eqref{eq:duhamel_basic}, divide by $\delta$, and then let $\delta\to0$.
	Details can be found in~\cite{najfeld1995derivatives}; perturbations of semigroups generated by operators in Banach spaces are treated in~\cite{kato2013perturbation}.
\end{proof}

\begin{remark}[Specialized form used in this paper]\label{rem:duhamel_in_paper}
	In this paper we take
	\(
	E=\mathcal B(\mathcal H),
	\)
	and $A(\alpha)=\mathcal L_\alpha$ to be a straight-line homotopy connecting two points $\mathcal L,\mathcal L'$:
	\(
	\mathcal L_\alpha=\mathcal L+\alpha(\mathcal L'-\mathcal L),
	\)
	so that $A'(\alpha)=\mathcal L'-\mathcal L$.
	Substituting into \eqref{eq:duhamel_derivative} gives exactly
	\begin{equation}
		\frac{\mathrm d}{\mathrm d\alpha}e^{t\mathcal L_\alpha}
		= \int_0^t e^{(t-s)\mathcal L_\alpha}\,(\mathcal L'-\mathcal L)\,e^{s\mathcal L_\alpha}\,\mathrm ds,
	\end{equation}
	which is precisely the Fréchet derivative representation used in Lemma~\ref{lem:L_to_Lambda_Lipschitz}.
	Estimating this integral in the induced $1\to1$ norm then yields a Lipschitz bound of the form
	\(
	\|\Lambda_t^{(\mathcal L')}-\Lambda_t^{(\mathcal L)}\|_{1\to1}
	\le t\|\mathcal L'-\mathcal L\|_{1\to1}.
	\)
\end{remark}

\subsection{Haar measure, concentration, and $2$-norm Lipschitz constants}\label{subsec:haar_concentration}

This subsection, following~\cite{ledoux2001concentration}, reviews Haar measure on compact groups, explains the Lévy concentration on the parameter sphere, and provides the $2$-norm Lipschitz constant estimates that we need later on.

\begin{definition}[Haar measure on a compact group]
	Let $G$ be a second countable compact topological group and $\mathcal B(G)$ its Borel $\sigma$-algebra.
	A probability measure
	$\mu_{\mathrm{Haar}}$ is called a (normalized) Haar measure on $G$ if
	\[
	\mu_{\mathrm{Haar}}(gA)=\mu_{\mathrm{Haar}}(Ag)=\mu_{\mathrm{Haar}}(A),
	\quad \forall\,g\in G,\ A\in\mathcal B(G).
	\]
	Haar measure exists and is unique.
	The most important examples for this paper are the unitary groups
	$G=\mathrm U(d)$ and $\mathrm{SU}(d)$; in these cases we use $\mu_{\mathrm{Haar}}$ to denote the normalized Haar probability measure.
\end{definition}

On the matrix group $\mathrm U(d)$ we choose the metric induced by the Hilbert-Schmidt norm as our geometric background. A discussion of quantum case can be found in ~\cite{brandao2021models}.

\begin{definition}[Hilbert-Schmidt norm and $2$-norm metric]\label{def:HS_norm}
	For any matrix $A\in\mathbb C^{d\times d}$, define the Hilbert-Schmidt norm by
	\[
	\|A\|_2
	:=\bigl(\operatorname{Tr}(A^\dagger A)\bigr)^{1/2}.
	\]
	On the unitary group $G=\mathrm U(d)$, we use
	\[
	d_2(U,V):=\|U-V\|_2,\qquad U,V\in\mathrm U(d)
	\]
	as the metric and call it the $2$-norm metric.
	For a real-valued function
	$F:G\to\mathbb R$, we denote its Lipschitz constant with respect to $d_2$ by
	\[
	\operatorname{Lip}_2(F)
	:=\sup_{U\neq V}\frac{|F(U)-F(V)|}{d_2(U,V)}.
	\]
	When $\operatorname{Lip}_2(F)\le1$ we say that $F$ is $1$-Lipschitz with respect to the $2$-norm.
\end{definition}

In high dimensions, $(G,d_2,\mu_{\mathrm{Haar}})$ forms a typical Lévy family, meaning that any $1$-Lipschitz function exhibits strong concentration under Haar measure.
Systematic treatments can be found in Ledoux's monograph on concentration inequalities~\cite{ledoux2001concentration}.

\begin{definition}[Lévy concentration under Haar measure]\label{def:haar_levy}
	Let $(G,d_2,\mu_{\mathrm{Haar}})$ be the metric probability space as above.
	For any Borel measurable function $F:G\to\mathbb R$, a number $m_F\in\mathbb R$ is called a
	median of $F$ with respect to the Haar measure if
	\[
	\mu_{\mathrm{Haar}}\bigl(\{F\le m_F\}\bigr)\ge\frac12,\qquad
	\mu_{\mathrm{Haar}}\bigl(\{F\ge m_F\}\bigr)\ge\frac12.
	\]
	We say that $(G,d_2,\mu_{\mathrm{Haar}})$ has sub-Gaussian concentration if there exist constants
	$c_G,C_G>0$ such that for any function $F$ that is $L$-Lipschitz with respect to the $2$-norm, and any $r>0$, we have
	\begin{equation}\label{eq:haar_levy_concentration}
		\mu_{\mathrm{Haar}}\bigl(\{|F-m_F|\ge r\}\bigr)
		\;\le\;2\exp\!\Bigl(-c_G\,\frac{r^2}{L^2}\Bigr).
	\end{equation}
	In particular, when $G=\mathrm U(d)$ or $\mathrm{SU}(d)$, one can take
	$c_G=\Theta(d^2)$, reflecting the high-dimensional effect coming from the manifold dimension $\dim G=\Theta(d^2)$.
\end{definition}

\begin{remark}[From the sphere to the unitary group]
	A standard proof strategy is to view $G=\mathrm U(d)$ as a smooth submanifold embedded into
	$\mathbb C^{d\times d}\cong\mathbb R^{2d^2}$, whose Riemannian metric is equivalent to the Hilbert-Schmidt norm.
	Consider a large sphere
	$S^{N-1}\subset\mathbb R^N$ (with $N\sim2d^2$) containing $G$, equipped with the uniform measure.
	Using the spherical Lévy lemma: for any $1$-Lipschitz function
	$f:S^{N-1}\to\mathbb R$,
	\[
	\Pr\bigl(|f-m_f|\ge r\bigr)
	\;\le\;2\exp\bigl(-cN r^2\bigr),
	\]
	and combining it with an equivariant embedding between Haar measure on $G$ and the spherical measure and the corresponding Lipschitz contraction, one obtains a concentration bound of the form \eqref{eq:haar_levy_concentration} (with constants changed only numerically).
\end{remark}

In this paper when we analyze the parameter sphere, we mainly use the following lemma.

\begin{lemma}[Lévy concentration on the parameter sphere, cf.~\cite{ledoux2001concentration}]\label{lem:levy_sphere}
	Let $S^{M-1}$ be the real sphere and $\mu_S$ is standard uniform measure.
	For any $L$-Lipschitz function
	\(
	f:S^{M-1}\to\mathbb R
	\)
	and any $\tau>0$, there exists a constant $c_{\mathrm{par}}>0$ (independent of $M$, $f$, and $\tau$) such that
	\[
	\Pr_{\theta\sim\mu_S}\Bigl(|f(\theta)-\mathbb E_{\mu_S}f|\ge\tau\Bigr)
	\le 2\exp\Bigl(-c_{\mathrm{par}}\,\frac{M\,\tau^2}{L^2}\Bigr).
	\]
\end{lemma}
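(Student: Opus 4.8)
The plan is to obtain this from the classical isoperimetric/concentration inequality for the uniform measure on the Euclidean sphere, which is precisely the content of Ledoux's monograph~\cite{ledoux2001concentration}, and then to pass from concentration around a median to concentration around the mean. First, by homogeneity we may reduce to the case $L=1$: if $f$ is $L$-Lipschitz, then $g:=f/L$ is $1$-Lipschitz, and an estimate for $g$ at level $\tau/L$ is equivalent to the claimed estimate for $f$ at level $\tau$. So it suffices to treat $1$-Lipschitz functions and then undo the rescaling at the end, which is what replaces $\tau^2$ by $\tau^2/L^2$ in the exponent.

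Second, I invoke the sharp spherical concentration inequality: for the uniform probability measure $\mu_S$ on $S^{M-1}$ and any $1$-Lipschitz $g:S^{M-1}\to\mathbb R$ with a median $m_g$,
\[
\mu_S\bigl(\{\,|g-m_g|\ge r\,\}\bigr)\;\le\;2\exp\!\Bigl(-\tfrac{(M-1)r^2}{2}\Bigr),\qquad r>0,
\]
which follows from the spherical isoperimetric inequality (equivalently, from a log-Sobolev/Bakry--Émery argument on $S^{M-1}$); see~\cite{ledoux2001concentration}. To convert the median into the mean, I integrate the tail bound:
\[
\bigl|\mathbb E_{\mu_S}g-m_g\bigr|\;\le\;\int_0^\infty\mu_S\bigl(|g-m_g|\ge r\bigr)\,\mathrm{d}r\;\le\;2\int_0^\infty e^{-(M-1)r^2/2}\,\mathrm{d}r\;=\;\sqrt{\frac{2\pi}{M-1}}.
\]
Hence whenever $\tau\ge 2\sqrt{2\pi/(M-1)}$ the event $\{|g-\mathbb E_{\mu_S}g|\ge\tau\}$ is contained in $\{|g-m_g|\ge\tau/2\}$, and the spherical bound together with $M-1\ge M/2$ (for $M\ge 2$) yields $\mu_S(|g-\mathbb E_{\mu_S}g|\ge\tau)\le 2\exp(-M\tau^2/16)$. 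For the remaining small-$\tau$ range, i.e.\ $M\tau^2$ bounded by the explicit threshold coming from the previous line, the asserted bound is automatic once $c_{\mathrm{par}}$ is chosen small enough that $2\exp(-c_{\mathrm{par}}M\tau^2)\ge 1$ there, since the left-hand side is a probability. Taking $c_{\mathrm{par}}$ to be the minimum of $1/16$ and this second constant gives a single value, independent of $M$, $f$, and $\tau$, and reinstating $g=f/L$ completes the proof.

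There is no genuine obstacle here, since the statement is classical; the only two points that require care are bookkeeping. The first is the median-to-mean passage: one must ensure that the constant loss is purely numerical and does not secretly reintroduce $M$-dependence, which the tail-integration estimate above guarantees at the cost of shrinking the exponent constant by a universal factor. The second is handling the regime where the Gaussian tail is uninformative (small $\tau$, or small $M$): there one simply uses that a probability is at most $1$ and absorbs this into the choice of $c_{\mathrm{par}}$, as indicated above. I expect the median-to-mean step to be the ``hardest'' part only in the sense of needing to track constants carefully; everything else is a direct citation.
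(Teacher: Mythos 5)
The paper states Lemma~\ref{lem:levy_sphere} as a black-box citation to Ledoux's monograph and does not supply a proof, so there is no internal argument to compare against. Your derivation is correct and is the standard textbook route: spherical isoperimetry/Lévy's lemma gives concentration of a $1$-Lipschitz function around its median, you bound $\lvert\mathbb E g - m_g\rvert$ by integrating the tail, you pass to the mean at a loss of a purely numerical factor (e.g.\ splitting at $\tau/2$), and you absorb the small-$\tau$ (or small-$M$) regime into the choice of $c_{\mathrm{par}}$, which remains independent of $M,f,\tau$. Two small points worth flagging for completeness. First, the Lipschitz hypothesis in the paper (Lemma~\ref{lem:theta_Lipschitz} etc.) is with respect to the Euclidean chordal metric $\|\theta-\theta'\|_2$, whereas the isoperimetric form of Lévy's lemma you invoke is usually stated for the geodesic metric; since geodesic distance dominates chordal distance on $S^{M-1}$, a chordal-$1$-Lipschitz function is geodesic-$1$-Lipschitz and the inequality applies, but this reduction should be stated explicitly. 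Second, Ledoux in fact also proves mean-centred concentration on the sphere directly from the logarithmic Sobolev inequality (with no median detour); if you cite that form you can skip the median-to-mean bookkeeping entirely, though your route is equally valid and yields the same shape of bound up to universal constants. Finally, the $M=1$ case falls outside your $M\ge2$ step $M-1\ge M/2$, but there the range of an $L$-Lipschitz function on $S^0$ has diameter at most $2L$, so the left-hand side vanishes for $\tau>L$ and a sufficiently small $c_{\mathrm{par}}$ handles $\tau\le L$ trivially — worth a one-line remark.
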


\section{Random Lindbladian ensembles}\label{zhang1}

\subsection{Linear parametrization and concentration for random Lindbladian ensembles}

In this subsection we formalize several points, first, for any fixed local dimension the family of Lindbladians can be embedded into a finite-dimensional real vector space; moreover, by choosing a bounded basis $\{G_j\}$ on this vector space, one can linearly parametrize Lindbladians; in turn, by endowing the parameter space with a natural random ensemble , we obtain concentration for SQ test functions
\(
F_\varphi(\mathcal L):=P_{\mathcal L}[\varphi]
\),
which yields an upper bound on
\(
\operatorname{frac}(\mu_{\mathcal L},Q,\tau)
\)
in the open-system SQ framework; finally, we discuss how to interpret this parametrization in terms of whether it ``covers all Lindbladians'' and in terms of universality for learning.

Here we assume the underlying Hilbert space is finite dimensional, $\mathcal H\simeq\mathbb C^d$. Denote by $\mathcal B(\mathcal H)$ the space of bounded operators and by $\mathrm{End}(\mathcal B(\mathcal H))$ the space of super-operators.

\begin{definition}[GKSL cone and affine hull]
	Let
	\[
	\mathsf{GKSL}
	\subset \mathrm{End}(\mathcal B(\mathcal H))
	\]
	denote the set of all Lindbladian generators satisfying the Gorini-Kossakowski-Sudarshan-Lindblad (GKSL) conditions, i.e.,
	\begin{itemize}
		\item[i.] $\mathcal L$ is Hermitian-preserving and trace-preserving;
		\item[ii.] $e^{t\mathcal L}$ generates a family of CPTP maps (a quantum Markov semigroup) for all $t\ge0$.
	\end{itemize}
	Let $\mathrm{Aff}(\mathsf{GKSL})$ be its affine hull (the smallest affine subspace containing $\mathsf{GKSL}$), and define
	\[
	\mathsf V := \mathrm{Aff}(\mathsf{GKSL}) - \mathcal L_{\mathrm{ref}}
	\]
	to be the linear space obtained by translating by some reference Lindbladian $\mathcal L_{\mathrm{ref}}\in\mathsf{GKSL}$ and taking this as the origin.
\end{definition}

Since $\mathrm{End}(\mathcal B(\mathcal H))$ is a finite-dimensional real vector space in the finite-dimensional case (with dimension $\mathcal O(d^4)$), we have:

\begin{lemma}[Completeness of linear parametrization]\label{lem:affine_span_L}
	For fixed local dimension $d$, there exist finitely many super-operators
	\(
	G_1,\dots,G_M\in \mathrm{End}(\mathcal B(\mathcal H))
	\)
	forming a linear basis of $\mathsf V$ such that:
	\begin{enumerate}
		\item For any $\mathcal L\in \mathsf{GKSL}$, there exists a unique coefficient vector
		\(
		\theta(\mathcal L)=(\theta_1,\dots,\theta_M)\in\mathbb R^M
		\)
		with
		\[
		\mathcal L
		= \mathcal L_{\mathrm{ref}}
		+ \sum_{j=1}^M \theta_j(\mathcal L)\,G_j.
		\]
		\item There exists a convex cone
		\(
		\mathcal C_{\mathrm{GKSL}}\subset\mathbb R^M
		\)
		such that
		\(
		\mathcal L\in\mathsf{GKSL}
		\)
		if and only if
		\(
		\theta(\mathcal L)\in\mathcal C_{\mathrm{GKSL}}.
		\)
	\end{enumerate}
\end{lemma}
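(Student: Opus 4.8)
\emph{Proof plan.} The idea is to strip the statement down to elementary linear algebra inside the finite-dimensional ambient space $\mathrm{End}(\mathcal B(\mathcal H))$; the only genuinely non-routine ingredient is the classical Gorini--Kossakowski--Sudarshan normal form, and it is needed only if one insists on the explicit value of $M$. First I would record that $\mathsf V$ is a bona fide finite-dimensional \emph{linear} subspace. Since $\dim_{\mathbb C}\mathcal B(\mathcal H)=d^{2}$, the real vector space $\mathrm{End}(\mathcal B(\mathcal H))$ has dimension $\mathcal O(d^{4})$, so all its subspaces are finite-dimensional. Moreover the zero superoperator lies in $\mathsf{GKSL}$ (it is Hermitian- and trace-preserving and $e^{t\cdot 0}=\mathrm{id}$ is CPTP, i.e.\ it is the GKSL form with $H=0$ and no jump operators), so $0\in\mathrm{Aff}(\mathsf{GKSL})$; an affine set through the origin is a linear subspace, hence $\mathrm{Aff}(\mathsf{GKSL})=\mathrm{span}(\mathsf{GKSL})$ and therefore $\mathsf V=\mathrm{Aff}(\mathsf{GKSL})-\mathcal L_{\mathrm{ref}}=\mathrm{span}(\mathsf{GKSL})$ for every $\mathcal L_{\mathrm{ref}}\in\mathsf{GKSL}\subseteq\mathrm{Aff}(\mathsf{GKSL})$. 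I then set $M:=\dim\mathsf V$ and pick any real-linear basis $G_{1},\dots,G_{M}$ of $\mathsf V$; because $\mathsf{GKSL}$ already spans $\mathsf V$, one may take every $G_{j}$ to be itself a GKSL generator (or a difference of two), and any such fixed superoperator automatically satisfies $\|G_{j}\|_{1\to1}<\infty$, which is all the concentration estimates later in the paper require.

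Item~1 is then immediate: for $\mathcal L\in\mathsf{GKSL}$ we have $\mathcal L-\mathcal L_{\mathrm{ref}}\in\mathrm{Aff}(\mathsf{GKSL})-\mathcal L_{\mathrm{ref}}=\mathsf V$, and since $\{G_{j}\}_{j=1}^{M}$ is a basis of $\mathsf V$ there is a unique $\theta(\mathcal L)\in\mathbb R^{M}$ with $\mathcal L-\mathcal L_{\mathrm{ref}}=\sum_{j}\theta_{j}(\mathcal L)G_{j}$. The same formula defines $\theta$ on all of $\mathrm{Aff}(\mathsf{GKSL})$, where $\mathcal L\mapsto\theta(\mathcal L)$ is an affine \emph{bijection} onto $\mathbb R^{M}$ (inverse $\theta\mapsto\mathcal L_{\mathrm{ref}}+\sum_{j}\theta_{j}G_{j}$); its injectivity is what makes item~2 work. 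For item~2 I would put $\mathcal C_{\mathrm{GKSL}}:=\theta(\mathsf{GKSL})$. Writing $T:\mathsf V\to\mathbb R^{M}$ for the coordinate isomorphism one has $\mathcal C_{\mathrm{GKSL}}=T(\mathsf{GKSL})-T(\mathcal L_{\mathrm{ref}})$, and by Remark~\ref{rem:GKSL_cone}(i) $\mathsf{GKSL}$ is a convex cone with apex $0$, so $T(\mathsf{GKSL})$ is a convex cone with apex $0$ in $\mathbb R^{M}$ and $\mathcal C_{\mathrm{GKSL}}$ is convex --- a genuine convex cone with apex at the origin when $\mathcal L_{\mathrm{ref}}=0$, and the translate of such a cone by $-T(\mathcal L_{\mathrm{ref}})$ (equivalently, a cone with apex $\theta(0)=-T(\mathcal L_{\mathrm{ref}})$) in general. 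The claimed equivalence, for $\mathcal L$ ranging over $\mathrm{Aff}(\mathsf{GKSL})$ (i.e.\ over all $\mathcal L$ for which $\theta(\mathcal L)$ is defined), then follows at once: ``$\Rightarrow$'' is the definition of $\mathcal C_{\mathrm{GKSL}}$, and ``$\Leftarrow$'' holds because $\theta(\mathcal L)\in\theta(\mathsf{GKSL})$ gives $\theta(\mathcal L)=\theta(\mathcal L')$ for some $\mathcal L'\in\mathsf{GKSL}$, whence injectivity of $\theta$ forces $\mathcal L=\mathcal L'\in\mathsf{GKSL}$.

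I do not expect a real obstacle here; the care needed is twofold. First, one must keep the affine-versus-linear bookkeeping around $\mathcal L_{\mathrm{ref}}$ straight --- cleanly handled once one notices $0\in\mathsf{GKSL}$ --- and accept the mild abuse of the word ``cone'' in item~2 when $\mathcal L_{\mathrm{ref}}\neq0$, where $\mathcal C_{\mathrm{GKSL}}$ is a translated cone rather than one with apex at the origin. Second --- and this is the only substantive input, required only if one also wants the concrete description $\mathsf V=\{\text{Hermitian-preserving, trace-annihilating superoperators}\}$ and hence the explicit count $M=d^{4}-d^{2}$ behind the ``$\mathcal O(d^{4})$'' remark --- one invokes the Gorini--Kossakowski--Sudarshan theorem: $\mathsf{GKSL}\subseteq\{\text{HP, trace-annihilating}\}$ is clear, and for the reverse any such $\mathcal L$ can be written as $-\mathrm i[H,\cdot]+\sum_{ij}c_{ij}\bigl(F_{i}\,\cdot\,F_{j}^{\dagger}-\tfrac12\{F_{j}^{\dagger}F_{i},\cdot\}\bigr)$ with Hermitian Kossakowski matrix $c$; splitting $c=c_{+}-c_{-}$ with $c_{\pm}\succeq0$ exhibits $\mathcal L$ as a difference of two GKSL generators, so $\mathrm{span}(\mathsf{GKSL})$ exhausts that space. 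This step is classical and poses no difficulty.
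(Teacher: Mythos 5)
Your proof is correct and takes the same core approach as the paper's: $\mathsf V$ is finite-dimensional, so pick a basis $\{G_j\}$; Item~1 is then elementary linear algebra, and Item~2 follows from convexity of $\mathsf{GKSL}$. The differences are refinements rather than a new route, but three of them are worth noting. (i) Observing $0\in\mathsf{GKSL}$ makes $\mathrm{Aff}(\mathsf{GKSL})=\mathrm{span}(\mathsf{GKSL})$ a linear subspace, so $\mathsf V$ is seen to be independent of the choice of $\mathcal L_{\mathrm{ref}}$; the paper skips this bookkeeping. (ii) You correctly flag that the coordinate map $\theta$ is \emph{affine}, not linear, whenever $\mathcal L_{\mathrm{ref}}\ne0$, so $\mathcal C_{\mathrm{GKSL}}=\theta(\mathsf{GKSL})$ is the translate of a convex cone with apex at $-T(\mathcal L_{\mathrm{ref}})$ rather than a cone through the origin. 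The paper's phrasing (``linear coordinate map,'' ``this image is a convex cone'') is imprecise on exactly this point, and the imprecision cannot simply be resolved by choosing $\mathcal L_{\mathrm{ref}}=0$ because later constructions (Lemma~\ref{lem:convexity_GKSL_slice}) need $\mathcal L_{\mathrm{ref}}$ to be a GKSL \emph{interior} point, forcing $\mathcal L_{\mathrm{ref}}\ne0$; so the ``cone'' in Item~2 should honestly be read as a translated cone. (iii) The GKS-normal-form argument showing $\mathrm{span}(\mathsf{GKSL})$ is exactly the Hermitian-preserving, trace-annihilating superoperators, with the explicit count $M=d^4-d^2$, is correct and makes the paper's informal ``$\mathcal O(d^4)$'' precise. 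None of this changes the logical skeleton; it tightens it.
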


\begin{proof}
	$\mathrm{End}(\mathcal B(\mathcal H))$ is a finite-dimensional real vector space and $\mathrm{Aff}(\mathsf{GKSL})$ is an affine subspace. Fixing a reference point $\mathcal L_{\mathrm{ref}}\in\mathsf{GKSL}$ and translating, we obtain the linear subspace
	\(
	\mathsf V=\mathrm{Aff}(\mathsf{GKSL})-\mathcal L_{\mathrm{ref}}.
	\)
	As $\mathsf V$ is finite-dimensional, there exists a finite basis $\{G_j\}_{j=1}^M$. For any
	$\mathcal L\in\mathsf{GKSL}$ we have
	\(
	\mathcal L-\mathcal L_{\mathrm{ref}}\in\mathsf V
	\), hence there exists a unique coefficient vector $\theta(\mathcal L)$ such that
	\[
	\mathcal L - \mathcal L_{\mathrm{ref}} = \sum_j \theta_j(\mathcal L) G_j.
	\]
	Let the image of the linear coordinate map
	\(
	\mathcal L\mapsto\theta(\mathcal L)
	\)
	be denoted by $\mathcal C_{\mathrm{GKSL}}$. By convexity of the GKSL conditions this image is a convex cone.
\end{proof}

\begin{remark}[On covering all Lindbladians]
	The lemma above shows that for fixed $d$,
	in a suitable linear coordinate system every Lindbladian can be written as
	\[
	\mathcal L = \mathcal L_{\mathrm{ref}}+\sum_{j=1}^M \theta_j G_j,
	\]
	with the caveat that admissible parameters $\theta$ must lie in a convex cone $\mathcal C_{\mathrm{GKSL}}$. In other words, from a purely linear-algebraic point of view, the linear parametrization is complete on the affine hull; but once we take CPTP constraints into account, only those coefficients inside the cone correspond to genuine Lindbladians.
\end{remark}

Next, in order to apply Lévy concentration(Lemma~\ref{lem:levy_sphere}) on a high-dimensional parameter sphere, we select from $\mathcal C_{\mathrm{GKSL}}$ a geometrically ``round'' high-dimensional subset.

\begin{definition}[Parameter-sphere ensemble]\label{def:param_sphere_ensemble}
	Let
	\[
	S^{M-1}:=\{\theta\in\mathbb R^M:\ \|\theta\|_2=1\}
	\]
	be the unit sphere. Given a small parameter $\delta>0$, define
	\[
	\mathcal L(\theta)
	:= \mathcal L_{\mathrm{ref}}
	+ \frac{\delta}{\sqrt M}\sum_{j=1}^M \theta_j G_j.
	\]
	Suppose there exists a nonempty subset
	\(
	\Theta\subseteq S^{M-1}
	\)
	such that for all $\theta\in\Theta$,
	\(
	\mathcal L(\theta)\in\mathsf{GKSL}
	\)
	(for instance, when $\delta$ is small enough that $\mathcal L(\theta)$ stays in a small GKSL neighbourhood of $\mathcal L_{\mathrm{ref}}$).
	Define the parameter measure $\mu_\Theta$ to be the normalized restriction of the spherical measure to $\Theta$. The random Lindbladian ensemble is then defined as the pushforward measure
	\[
	\mu_{\mathcal L} := \mathcal L_\#(\mu_\Theta),
	\]
	i.e., drawing
	\(
	\mathcal L\sim\mu_{\mathcal L}
	\)
	is equivalent to sampling
	\(
	\theta\sim\mu_\Theta
	\)
	and setting
	\(
	\mathcal L=\mathcal L(\theta).
	\)
\end{definition}

\begin{remark}[Local slice and high-dimensional neighbourhood]
	The construction above selects only a high-dimensional neighbourhood of $\mathcal L_{\mathrm{ref}}$ inside the GKSL cone, rather than the entire set $\mathsf{GKSL}$. Thus the random ensemble $\mu_{\mathcal L}$ is distribution-specific and it describes high-dimensional random perturbations around a physically natural Lindbladian model family, not all Lindbladians.
\end{remark}

\begin{remark}[Spherical random-perturbation Lindbladian ensemble]\label{ball lin}
	In the later discussion of random Lindbladian ensembles we will frequently use the parametrization
	\begin{equation}\label{eq:spherical_random_lindbladian_ensemble}
		\mathcal{L}(\theta)
		:= \mathcal{L}_{\mathrm{ref}} + \frac{\delta}{\sqrt{M}} \sum_{j=1}^M \theta_j G_j,
		\qquad \theta \in S^{M-1} \subset \mathbb{R}^M,
	\end{equation}
	where $\mathcal{L}_{\mathrm{ref}}$ is a fixed GKSL generator (the reference Lindbladian), $\{G_j\}_{j=1}^M$ are bounded super-operator directions (for instance those induced by local jump operators, local noise channels, or a basis of Kossakowski matrices; concrete examples will be given in Sec.~\ref{zhang44}), $\theta$ is uniformly distributed on the unit sphere $S^{M-1}$, and $\delta>0$ controls the perturbation strength(For making the line segments staying inside GKSL, see Lemma ~\ref{lem:convexity_GKSL_slice} for proof). We refer to the random Lindbladian family defined by \eqref{eq:spherical_random_lindbladian_ensemble} as the
	spherical random-perturbation Lindbladian ensemble.
	
	From a physical viewpoint, this parametrization can be understood as follows, in a high-dimensional parameter space we add an isotropic small perturbation around some ``average dynamics'' $\mathcal{L}_{\mathrm{ref}}$, where
	$\theta_j$ encodes the random amplitude along different noise/disorder directions, and the $1/\sqrt{M}$ rescaling ensures that the total squared magnitude of the perturbation remains of order $O(\delta^2)$ as $M\to\infty$, analogous to the $1/\sqrt{N}$ normalization in Wigner random matrix models. This choice is also consistent with the spectral rescaling in Denisov’s work~\cite{denisov2019universal}; here, however, our motivation is to ensure that the probabilistic bounds we derive below exhibit effective concentration.
	
	Such ensembles naturally arise in the following physical models:
	\begin{enumerate}
		\item Open-system dynamics in random spin chains:
		For example, in XXZ/XYZ chain models with random local fields and dephasing, sample-to-sample variations in the random field strengths $h_i$, couplings $J_i$, and Lindblad jump rates can all be viewed as different components of the parameter vector $\theta$, corresponding to a set of local super-operator directions $\{G_j\}$, and thus yield a multi-parameter static-disorder ensemble around some average Lindbladian~\cite{monthus2017dissipative}.
		
		\item Structured perturbations of random Kossakowski matrices:
		In GKSL form the noise part is determined by a Kossakowski matrix $K$. Expanding $K$ in a Hermitian basis $\{E_j\}$,
		\(
		K(\theta) = K_{\mathrm{ref}} + \frac{\delta}{\sqrt{M}} \sum_j \theta_j E_j,
		\)
		and mapping the basis $\{E_j\}$ to super-operators $\{G_j\}$ via the Lindblad formula, one obtains an isotropic random-perturbation ensemble around a reference matrix $K_{\mathrm{ref}}$. This may be viewed as a structured version of the fully random Kossakowski ensembles considered in the random Lindblad operator literature~\cite{denisov2019universal}.
		
		\item Multi-channel noise disorder in large-scale quantum processors:
		On platforms such as superconducting qubits or trapped ions, the noise strengths of different physical qubits and coupling channels (e.g., $T_1/T_2$ times~\cite{arute2019quantum,carroll2022dynamics}, dephasing rates~\cite{bruzewicz2019trapped}, leakage rates~\cite{wu2022erasure}) typically exhibit static disorder across samples and within a single chip.
		Incorporating the average noise model into $\mathcal{L}_{\mathrm{ref}}$ and collecting the deviations from the average on each qubit/channel into the directions $\{G_j\}$ yields a high-dimensional random Lindbladian ensemble of the form~\eqref{eq:spherical_random_lindbladian_ensemble}, which captures device-level multi-parameter noise mismatches.
	\end{enumerate}
	Thus \eqref{eq:spherical_random_lindbladian_ensemble} is not merely a toy model, rather, it provides an abstract description of a broad class of multi-parameter static-disorder Lindbladian ensembles widely encountered in open quantum many-body systems and quantum-information hardware. In our later analysis of Lipschitz properties across $\mathcal{L}$ and high-dimensional concentration, this ensemble will serve as our canonical example.
\end{remark}

\begin{lemma}[Convex GKSL slice and line segments staying inside GKSL]
	\label{lem:convexity_GKSL_slice}
	In the construction above there exists $\delta_0>0$ such that for any $0<\delta\le\delta_0$:
	\begin{enumerate}
		\item For all $\theta\in S^{M-1}$, $\mathcal L(\theta)$ is a GKSL generator.
		\item Let
		\(
		\mathcal F := \{\mathcal L(\theta):\ \theta\in S^{M-1}\}.
		\)
		Then for any $\theta,\theta'\in S^{M-1}$ and any $\alpha\in[0,1]$, the convex combination
		\[
		\mathcal L_\alpha := (1-\alpha)\,\mathcal L(\theta) + \alpha\,\mathcal L(\theta')
		\]
		is still a GKSL generator. 
	\end{enumerate}
\end{lemma}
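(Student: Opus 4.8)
The plan is to reduce both claims to the standard Kossakowski-matrix characterisation of $\mathsf{GKSL}$ together with a first-order eigenvalue-perturbation estimate. Recall that every Hermitian-preserving, trace-preserving generator --- and in particular every element of $\mathrm{Aff}(\mathsf{GKSL})$, since Hermitian- and trace-preservation are linear constraints that hold on all of $\mathsf{GKSL}$ --- admits a canonical Gorini--Kossakowski--Sudarshan decomposition with a unique Hermitian Kossakowski matrix $K(\mathcal L)$ (relative to a fixed traceless orthonormal operator basis, with the Hamiltonian part normalised to be traceless); the assignment $\mathcal L\mapsto K(\mathcal L)$ is real-linear on $\mathrm{End}(\mathcal B(\mathcal H))$, and $\mathcal L\in\mathsf{GKSL}$ if and only if $K(\mathcal L)\succeq 0$. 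The first step is to fix the reference generator $\mathcal L_{\mathrm{ref}}$ of Lemma~\ref{lem:affine_span_L} in the \emph{relative interior} of $\mathsf{GKSL}$ --- the lemma leaves this choice free, and a convenient representative is the fully symmetric dissipator whose Kossakowski matrix is the identity --- so that $\lambda_0:=\lambda_{\min}\bigl(K(\mathcal L_{\mathrm{ref}})\bigr)>0$.

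For part~1 I would write $\mathcal L(\theta)=\mathcal L_{\mathrm{ref}}+\delta\,W(\theta)$ with $W(\theta):=\tfrac1{\sqrt M}\sum_{j=1}^M\theta_j G_j\in\mathsf V$. With $C_G:=\max_j\|G_j\|_{1\to1}$, Cauchy--Schwarz gives $\|W(\theta)\|_{1\to1}\le \tfrac1{\sqrt M}\,\|\theta\|_1\,C_G\le C_G$ for every $\theta\in S^{M-1}$, uniformly in $M$. Since $\mathcal L\mapsto K(\mathcal L)$ is a fixed linear map on the finite-dimensional space $\mathrm{End}(\mathcal B(\mathcal H))$, on which all norms are equivalent with constants depending only on $d$, this yields $\|K(W(\theta))\|_\infty\le R$ with $R=R(d,C_G)$ independent of $M$ and $\theta$. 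By linearity of $K(\cdot)$,
\[
K\bigl(\mathcal L(\theta)\bigr)=K(\mathcal L_{\mathrm{ref}})+\delta\,K\bigl(W(\theta)\bigr)\ \succeq\ \bigl(\lambda_0-\delta R\bigr)\,I ,
\]
so taking $\delta_0:=\lambda_0/(2R)$ gives $K(\mathcal L(\theta))\succeq \tfrac{\lambda_0}{2}\,I\succ 0$ for all $0<\delta\le\delta_0$ and all $\theta\in S^{M-1}$, hence $\mathcal L(\theta)\in\mathsf{GKSL}$; it is automatically Hermitian- and trace-preserving because it lies in $\mathrm{Aff}(\mathsf{GKSL})$. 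Note that the $1/\sqrt M$ normalisation in Definition~\ref{def:param_sphere_ensemble} is exactly what makes $\delta_0$ independent of $M$.

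Part~2 is then immediate: by Remark~\ref{rem:GKSL_cone}, $\mathsf{GKSL}$ is a convex cone, so every convex combination of the GKSL generators $\mathcal L(\theta),\mathcal L(\theta')$ obtained in part~1 is again in $\mathsf{GKSL}$; equivalently, $K\bigl((1-\alpha)\mathcal L(\theta)+\alpha\mathcal L(\theta')\bigr)=(1-\alpha)K(\mathcal L(\theta))+\alpha K(\mathcal L(\theta'))\succeq 0$. The step I expect to be the only genuine subtlety is the choice of $\mathcal L_{\mathrm{ref}}$ in the relative interior: here it is not merely convenient but necessary. If $K(\mathcal L_{\mathrm{ref}})$ had a kernel, then --- because $\{G_j\}$ spans $\mathsf V$ and hence $K(\mathsf V)$ is the whole space of Hermitian matrices --- one could pick a direction $W\in\mathsf V$ with $K(W)=-vv^\dagger$ for some $v\in\ker K(\mathcal L_{\mathrm{ref}})$, and then $K(\mathcal L_{\mathrm{ref}}+\delta W)\not\succeq 0$ for every $\delta>0$, so part~1 would fail; in specialised models where one insists on keeping $\mathcal L_{\mathrm{ref}}$ on the boundary (e.g.\ $\mathcal L_{\mathrm{ref}}=0$ with non-spanning $\{G_j\}$) one must instead restrict $\Theta$ to a proper subset of $S^{M-1}$. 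Everything else --- verifying the canonical-form facts invoked above and the routine norm bookkeeping --- is standard.
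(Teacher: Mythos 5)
Your proof is correct and reaches the same conclusion, but the mechanism is genuinely different from the paper's. The paper works abstractly: it asserts that $\mathcal L_{\mathrm{ref}}$ is an interior point of the cone $\mathsf{GKSL}$, hence some $1\to1$-norm ball of radius $r$ around $\mathcal L_{\mathrm{ref}}$ is contained in $\mathsf{GKSL}$; it then bounds $\|\mathcal L(\theta)-\mathcal L_{\mathrm{ref}}\|_{1\to1}\le\delta C_G$ using Cauchy--Schwarz and takes $\delta_0=r/C_G$. For part~2 the paper reruns the same norm estimate on $\mathcal L_\alpha$ rather than just invoking convexity. You instead make the ``ball inside the cone'' argument concrete by passing to the Kossakowski-matrix coordinate $K(\cdot)$, using positivity $K(\mathcal L_{\mathrm{ref}})\succeq\lambda_0 I$ and a uniform bound $\|K(W(\theta))\|_\infty\le R$ to get $K(\mathcal L(\theta))\succeq(\lambda_0-\delta R)I$, then set $\delta_0=\lambda_0/(2R)$; and your part~2 is the one-line observation that $\mathsf{GKSL}$ is convex, which is cleaner than the paper's redundant re-estimate.

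The most valuable thing in your write-up is that you make explicit what the paper quietly asserts without justification: that $\mathcal L_{\mathrm{ref}}$ must be a (relative) interior point of the GKSL cone, and that this is a nontrivial requirement rather than a free consequence of $\mathcal L_{\mathrm{ref}}\in\mathsf{GKSL}$. Your counterexample with $\ker K(\mathcal L_{\mathrm{ref}})\neq 0$ and a rank-one direction $K(W)=-vv^\dagger$ is exactly right, and your observation that the paper's own amplitude-damping example (Sec.~\ref{zhang5}) uses $\mathcal L_{\mathrm{ref}}=0$ --- a boundary point for which this lemma cannot hold on all of $S^{M-1}$ --- is a genuine tension in the paper that you correctly resolve by noting that Definition~\ref{def:param_sphere_ensemble} already allows restricting to a subset $\Theta\subsetneq S^{M-1}$. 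One small inaccuracy: you say $\mathcal L\mapsto K(\mathcal L)$ is real-linear ``on $\mathrm{End}(\mathcal B(\mathcal H))$,'' but the Kossakowski matrix is only canonically defined on the subspace of Hermiticity- and trace-preserving superoperators; this does not affect the argument since you only ever apply $K$ to elements of $\mathrm{Aff}(\mathsf{GKSL})$ and its translate $\mathsf V$.
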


\begin{proof}
	(1) GKSL generators form a convex cone and $\mathcal L_{\mathrm{ref}}$ is an interior point of this cone. Thus there exists $r>0$ such that whenever $\|\mathcal L-\mathcal L_{\mathrm{ref}}\|_{1\to1}\le r$, the operator $\mathcal L$ is still a GKSL generator. For $\mathcal L(\theta)$ we estimate
	\begin{align*}
		\|\mathcal L(\theta)-\mathcal L_{\mathrm{ref}}\|_{1\to1}
		&\le \frac{\delta}{\sqrt M}\sum_{j=1}^M|\theta_j|\,\|G_j\|_{1\to1} \\
		&\le \frac{\delta C_G}{\sqrt M}\|\theta\|_1 \\
		&\le \delta C_G,
	\end{align*}
	where in the last step we used $\|\theta\|_1\le\sqrt M\|\theta\|_2=\sqrt M$ for $\theta\in S^{M-1}$. Thus, as long as $\delta C_G\le r$ (for instance, taking $\delta_0:=r/C_G$ and requiring $0<\delta\le\delta_0$), this bound guarantees that $\mathcal L(\theta)$ is GKSL for all $\theta\in S^{M-1}$.
	
	(2) For $\theta,\theta'\in S^{M-1}$ and $\alpha\in[0,1]$, define
	\[
	\mathcal L_\alpha
	:= (1-\alpha)\,\mathcal L(\theta)+\alpha\,\mathcal L(\theta').
	\]
	Then
	\[
	\mathcal L_\alpha - \mathcal L_{\mathrm{ref}}
	= \frac{\delta}{\sqrt M}
	\sum_{j=1}^M \bigl[(1-\alpha)\theta_j+\alpha\theta'_j\bigr] G_j,
	\]
	and hence
	\begin{align*}
		\|\mathcal L_\alpha-\mathcal L_{\mathrm{ref}}\|_{1\to1}
		&\le \frac{\delta C_G}{\sqrt M}
		\sum_{j=1}^M\bigl|(1-\alpha)\theta_j+\alpha\theta'_j\bigr| \\
		&\le \frac{\delta C_G}{\sqrt M}\bigl((1-\alpha)\|\theta\|_1+\alpha\|\theta'\|_1\bigr) \\
		&\le \frac{\delta C_G}{\sqrt M}\bigl((1-\alpha)\sqrt M+\alpha\sqrt M\bigr) \\
		& = \delta C_G
		\le r.
	\end{align*}
	Therefore, for all $\alpha\in[0,1]$ and all $0<\delta\le\delta_0$, $\mathcal L_\alpha$ remains in the GKSL neighbourhood of $\mathcal L_{\mathrm{ref}}$ and is thus a GKSL generator. 
\end{proof}

\subsection{Interface and Lipschitz continuity of $P_{\mathcal L}[\varphi]$}\label{zming}

We now fix the open-system SQ interface: for a fixed time $t>0$, fixed input state $\rho_{\mathrm{in}}$ and POVM $\{M_x\}_{x\in X}$, we define for each Lindbladian $\mathcal L$ the QMS element
\(
\Lambda_t^{(\mathcal L)} := e^{t\mathcal L}
\)
and the associated classical output distribution
\[
P_{\mathcal L}(x)
:= \operatorname{Tr}\!\bigl(M_x \Lambda_t^{(\mathcal L)}(\rho_{\mathrm{in}})\bigr),\qquad x\in X.
\]
For any test function $\varphi:X\to[-1,1]$, we define the corresponding statistical query expectation
\[
P_{\mathcal L}[\varphi]
:= \sum_{x\in X}\varphi(x)P_{\mathcal L}(x)
= \operatorname{Tr}\!\Bigl(O_\varphi\,\Lambda_t^{(\mathcal L)}(\rho_{\mathrm{in}})\Bigr),
\]
where
\(
O_\varphi:=\sum_{x\in X}\varphi(x)M_x.
\)

\begin{lemma}[Norm bound for the observable $O_\varphi$]\label{lem:Ophi_norm}
	For any $\varphi:X\to[-1,1]$ we have
	\(
	\|O_\varphi\|_\infty \le 1.
	\)
\end{lemma}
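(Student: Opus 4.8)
The plan is to use the defining properties of a POVM, namely that each effect operator $M_x$ is positive semidefinite and that $\sum_{x\in X}M_x=\id$, together with the fact that $O_\varphi$ is Hermitian so that its operator norm equals the supremum of $|\langle\psi,O_\varphi\psi\rangle|$ over unit vectors $\psi$. Concretely, first I would observe that since each $M_x\ge 0$ and $\varphi(x)\in\mathbb R$, the operator $O_\varphi=\sum_x\varphi(x)M_x$ is self-adjoint, hence $\|O_\varphi\|_\infty=\sup_{\|\psi\|=1}|\langle\psi,O_\varphi\psi\rangle|$.

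Then, for any fixed unit vector $\psi$, I would write $p_x:=\langle\psi,M_x\psi\rangle\ge 0$ and note that $\sum_x p_x=\langle\psi,\id\psi\rangle=1$, so $(p_x)_{x\in X}$ is a probability distribution on $X$. The triangle inequality then gives
\[
|\langle\psi,O_\varphi\psi\rangle|
=\Bigl|\sum_{x\in X}\varphi(x)\,p_x\Bigr|
\le \sum_{x\in X}|\varphi(x)|\,p_x
\le \sum_{x\in X}p_x
= 1,
\]
where the last inequality uses $|\varphi(x)|\le 1$. Taking the supremum over unit vectors $\psi$ yields $\|O_\varphi\|_\infty\le 1$. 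There is no real obstacle here: the only thing to be careful about is invoking self-adjointness before passing to the variational characterization of $\|\cdot\|_\infty$ (so that the bound on the quadratic form actually controls the operator norm, rather than just the numerical radius in the non-normal case — though even that coincidence is not needed since $O_\varphi$ is genuinely Hermitian).
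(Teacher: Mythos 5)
Your proof is correct and takes essentially the same route as the paper's: both reduce to the quadratic form $\langle\psi,O_\varphi\psi\rangle$ over unit vectors, use the POVM conditions to identify $p_x=\langle\psi,M_x\psi\rangle$ as a probability distribution, and bound the resulting expectation of $\varphi$ by $1$. Your version is marginally more careful in making explicit that the variational characterization $\|O_\varphi\|_\infty=\sup_{\|\psi\|=1}|\langle\psi,O_\varphi\psi\rangle|$ relies on $O_\varphi$ being self-adjoint, which the paper uses implicitly.
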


\begin{proof}
	For any state $\rho\ge0$ with $\operatorname{Tr}\rho=1$,
	\[
	\operatorname{Tr}(O_\varphi\rho)
	= \sum_x \varphi(x)\operatorname{Tr}(M_x\rho),
	\]
	where $\operatorname{Tr}(M_x\rho)\ge0$ and $\sum_x\operatorname{Tr}(M_x\rho)=1$. Hence
	\(
	\operatorname{Tr}(O_\varphi\rho)
	\)
	is a convex combination of values $\varphi(x)\in[-1,1]$, and thus
	\(
	|\operatorname{Tr}(O_\varphi\rho)|\le1
	\)
	for all $\rho$. Since
	\(
	\|O_\varphi\|_\infty = \sup_{|\psi\rangle}\bigl|\langle\psi|O_\varphi|\psi\rangle\bigr|
	\)
	and the expectation value with respect to any pure state can be written via a density matrix expectation, it follows that
	\(
	\|O_\varphi\|_\infty\le1.
	\)
\end{proof}

\begin{lemma}[Lipschitz continuity with respect to the generator]\label{lem:L_to_Lambda_Lipschitz}
	Let $\mathcal L,\mathcal L'\in\mathsf{GKSL}$ and assume that for all $\alpha\in[0,1]$ the straight-line homotopy
	\(
	\mathcal L_\alpha := \mathcal L + \alpha(\mathcal L'-\mathcal L)
	\)
	stays within $\mathsf{GKSL}$. Then for any $t\ge0$ we have
	\[
	\bigl\|\Lambda_t^{(\mathcal L')}-\Lambda_t^{(\mathcal L)}\bigr\|_{1\to1}
	\le t\,\|\mathcal L'-\mathcal L\|_{1\to1},
	\]
	where $\|\cdot\|_{1\to1}$ is the induced $1\to1$ super-operator norm.
\end{lemma}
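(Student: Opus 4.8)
The plan is to express the difference $\Lambda_t^{(\mathcal L')}-\Lambda_t^{(\mathcal L)}$ as an operator-valued integral via the Duhamel/perturbation formula of Lemma~\ref{lem:duhamel}, and then estimate that integral in the induced $1\to1$ norm using the fact that every intermediate semigroup is trace-norm contractive. Concretely, write $\mathcal L_\alpha := \mathcal L + \alpha(\mathcal L'-\mathcal L)$ for $\alpha\in[0,1]$, so that $\mathcal L_0=\mathcal L$, $\mathcal L_1=\mathcal L'$, and $\tfrac{\mathrm d}{\mathrm d\alpha}\mathcal L_\alpha = \mathcal L'-\mathcal L$. Applying the derivative form~\eqref{eq:duhamel_derivative} of Lemma~\ref{lem:duhamel} to this straight-line homotopy (as recorded in Remark~\ref{rem:duhamel_in_paper}) gives
\[
\frac{\mathrm d}{\mathrm d\alpha}\,e^{t\mathcal L_\alpha}
= \int_0^t e^{(t-s)\mathcal L_\alpha}\,(\mathcal L'-\mathcal L)\,e^{s\mathcal L_\alpha}\,\mathrm ds .
\]
Integrating over $\alpha\in[0,1]$ and using $\Lambda_t^{(\mathcal L')}=e^{t\mathcal L_1}$, $\Lambda_t^{(\mathcal L)}=e^{t\mathcal L_0}$, one obtains the representation
\[
\Lambda_t^{(\mathcal L')}-\Lambda_t^{(\mathcal L)}
= \int_0^1\!\!\int_0^t e^{(t-s)\mathcal L_\alpha}\,(\mathcal L'-\mathcal L)\,e^{s\mathcal L_\alpha}\,\mathrm ds\,\mathrm d\alpha .
\]

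Next I take the induced $1\to1$ norm of both sides. Since $\|\cdot\|_{1\to1}$ is a genuine operator norm on $\mathsf B(\mathcal B(\mathcal H))$ (Definition~\ref{def:induced_norms}), it is sub-multiplicative under composition of superoperators, and in finite dimensions the triangle inequality lets me pull the norm inside the (Bochner) double integral. This yields
\[
\bigl\|\Lambda_t^{(\mathcal L')}-\Lambda_t^{(\mathcal L)}\bigr\|_{1\to1}
\le \int_0^1\!\!\int_0^t
\bigl\|e^{(t-s)\mathcal L_\alpha}\bigr\|_{1\to1}\,
\bigl\|\mathcal L'-\mathcal L\bigr\|_{1\to1}\,
\bigl\|e^{s\mathcal L_\alpha}\bigr\|_{1\to1}\,\mathrm ds\,\mathrm d\alpha .
\]
The key input is the hypothesis that the whole segment $\{\mathcal L_\alpha:\alpha\in[0,1]\}$ lies in $\mathsf{GKSL}$: this guarantees that for each $\alpha$ the family $\{e^{u\mathcal L_\alpha}\}_{u\ge0}$ is a quantum Markov semigroup, hence CPTP, hence a trace-norm contraction by Proposition~\ref{prop:trace_contractive} and Remark~\ref{rem:1to1_contraction}, so $\|e^{u\mathcal L_\alpha}\|_{1\to1}\le1$ for every $u\ge0$. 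For $s\in[0,t]$ both exponents $s$ and $t-s$ are nonnegative, so both propagator factors are bounded by $1$ and the integrand is at most $\|\mathcal L'-\mathcal L\|_{1\to1}$. Integrating this constant over $(\alpha,s)\in[0,1]\times[0,t]$ gives exactly $t\,\|\mathcal L'-\mathcal L\|_{1\to1}$, which is the claimed bound.

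The argument is essentially routine once the representation formula is in place; the only genuine content — and the place to be careful — is precisely the role of the homotopy hypothesis. Without knowing that the intermediate generators $\mathcal L_\alpha$ are themselves GKSL, one could not assert that the propagators $e^{u\mathcal L_\alpha}$ are trace-norm contractions, and the estimate would instead pick up exponentially growing factors of the form $e^{u\|\mathcal L_\alpha\|_{1\to1}}$; the convexity of $\mathsf{GKSL}$ (Remark~\ref{rem:GKSL_cone}) is what makes this hypothesis natural and easy to verify in the ensembles of interest. (One could also avoid the homotopy altogether by using the basic Duhamel identity~\eqref{eq:duhamel_basic} with $A=\mathcal L$, $B=\mathcal L'-\mathcal L$: then only $e^{(t-s)\mathcal L'}$ and $e^{s\mathcal L}$ appear, both CPTP directly from $\mathcal L,\mathcal L'\in\mathsf{GKSL}$. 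I present the homotopy version here to match the Fréchet-derivative language of Remark~\ref{rem:duhamel_in_paper}.) A fully rigorous treatment of the operator-valued integral in the Banach space $\mathsf B(\mathcal B(\mathcal H))$ is standard and follows the references cited for Lemma~\ref{lem:duhamel}.
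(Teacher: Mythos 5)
Your proof is correct and follows essentially the same route as the paper: both pass through the straight-line homotopy $\mathcal L_\alpha$, apply the Fréchet-derivative/Duhamel formula to obtain the integral representation of $\Lambda_t^{(\mathcal L')}-\Lambda_t^{(\mathcal L)}$, and bound the propagators $e^{u\mathcal L_\alpha}$ by $1$ using CPTP contractivity before integrating the constant bound. Your closing parenthetical is also a genuine and correct observation — applying the basic Duhamel identity directly with $A=\mathcal L$, $B=\mathcal L'-\mathcal L$ produces only the propagators $e^{(t-s)\mathcal L'}$ and $e^{s\mathcal L}$, both CPTP from $\mathcal L,\mathcal L'\in\mathsf{GKSL}$ alone, so the homotopy hypothesis in the lemma statement is actually dispensable for this conclusion.
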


\begin{proof}
	Consider the map $\alpha\mapsto e^{t\mathcal L_\alpha}$. In finite dimensions the exponential map is differentiable in the operator-norm topology, and its Fréchet derivative is given by the perturbation formula (Definition~\ref{def:frechet}):
	\[
	\frac{\mathrm d}{\mathrm d\alpha}e^{t\mathcal L_\alpha}
	= \int_0^t e^{(t-s)\mathcal L_\alpha}\,(\mathcal L'-\mathcal L)\,e^{s\mathcal L_\alpha}\,\mathrm ds.
	\]
	Taking the induced $1\to1$ norm, and using submultiplicativity together with the fact that the QMS generated by each $\mathcal L_\alpha$ is CPTP (hence
	$\|e^{r\mathcal L_\alpha}\|_{1\to1}\le1$ for all $r\ge0$), we obtain
	\begin{align*}
		\Bigl\|\frac{\mathrm d}{\mathrm d\alpha}e^{t\mathcal L_\alpha}\Bigr\|_{1\to1}
		&\le \int_0^t \bigl\| e^{(t-s)\mathcal L_\alpha}\,(\mathcal L'-\mathcal L)\,e^{s\mathcal L_\alpha}\bigr\|_{1\to1}\,\mathrm ds \\
		&\le \int_0^t \|e^{(t-s)\mathcal L_\alpha}\|_{1\to1}\,\|\mathcal L'-\mathcal L\|_{1\to1}\,\|e^{s\mathcal L_\alpha}\|_{1\to1}\,\mathrm ds \\
		&\le \int_0^t \|\mathcal L'-\mathcal L\|_{1\to1}\,\mathrm ds \\
		&= t\,\|\mathcal L'-\mathcal L\|_{1\to1}.
	\end{align*}
	Integrating this bound over $\alpha\in[0,1]$ gives
	\begin{align*}
		\bigl\|e^{t\mathcal L'}-e^{t\mathcal L}\bigr\|_{1\to1}
		&= \left\|\int_0^1\frac{\mathrm d}{\mathrm d\alpha}e^{t\mathcal L_\alpha}\,\mathrm d\alpha\right\|_{1\to1} \\
		&\le \int_0^1
		\Bigl\|\frac{\mathrm d}{\mathrm d\alpha}e^{t\mathcal L_\alpha}\Bigr\|_{1\to1}\,\mathrm d\alpha \\
		&\le \int_0^1 t\,\|\mathcal L'-\mathcal L\|_{1\to1}\,\mathrm d\alpha \\
		&= t\,\|\mathcal L'-\mathcal L\|_{1\to1}.
	\end{align*}
	This is exactly the claimed Lipschitz bound with respect to the generator.
\end{proof}

\begin{corollary}[Lipschitz continuity of SQ expectations with respect to the generator]\label{cor:PL_Lipschitz_L}
	Under the assumptions of Lemma~\ref{lem:L_to_Lambda_Lipschitz}, for any test function $\varphi:X\to[-1,1]$ we have
	\[
	\bigl|P_{\mathcal L}[\varphi]-P_{\mathcal L'}[\varphi]\bigr|
	\le t\,\|\mathcal L'-\mathcal L\|_{1\to1}.
	\]
\end{corollary}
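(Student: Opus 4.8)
The plan is to chain together three elementary ingredients that are all already in place: the integral representation of $P_{\mathcal L}[\varphi]$ as a trace against the observable $O_\varphi$, the matrix Hölder inequality, and the generator-level Lipschitz bound of Lemma~\ref{lem:L_to_Lambda_Lipschitz}. Concretely, I would start from the definitions fixed just above the corollary, namely $P_{\mathcal L}[\varphi]=\operatorname{Tr}\bigl(O_\varphi\,\Lambda_t^{(\mathcal L)}(\rho_{\mathrm{in}})\bigr)$ with $O_\varphi=\sum_{x\in X}\varphi(x)M_x$, and the analogous expression for $\mathcal L'$. Subtracting and using linearity of the trace, one writes
\[
P_{\mathcal L}[\varphi]-P_{\mathcal L'}[\varphi]
=\operatorname{Tr}\!\Bigl(O_\varphi\,\bigl(\Lambda_t^{(\mathcal L)}-\Lambda_t^{(\mathcal L')}\bigr)(\rho_{\mathrm{in}})\Bigr).
\]

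Next I would apply the matrix Hölder inequality $|\operatorname{Tr}(O_\varphi X)|\le\|O_\varphi\|_\infty\|X\|_1$ with $X=\bigl(\Lambda_t^{(\mathcal L)}-\Lambda_t^{(\mathcal L')}\bigr)(\rho_{\mathrm{in}})$, invoke Lemma~\ref{lem:Ophi_norm} to bound $\|O_\varphi\|_\infty\le1$, and then pass from the trace norm of the output to the induced $1\to1$ norm of the superoperator difference: $\|X\|_1\le\bigl\|\Lambda_t^{(\mathcal L)}-\Lambda_t^{(\mathcal L')}\bigr\|_{1\to1}\,\|\rho_{\mathrm{in}}\|_1$, where $\|\rho_{\mathrm{in}}\|_1=1$ since $\rho_{\mathrm{in}}$ is a state. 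This yields $\bigl|P_{\mathcal L}[\varphi]-P_{\mathcal L'}[\varphi]\bigr|\le\bigl\|\Lambda_t^{(\mathcal L')}-\Lambda_t^{(\mathcal L)}\bigr\|_{1\to1}$, and finally Lemma~\ref{lem:L_to_Lambda_Lipschitz} (whose hypotheses are assumed verbatim here) bounds the right-hand side by $t\,\|\mathcal L'-\mathcal L\|_{1\to1}$, completing the argument.

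There is no genuine obstacle: the statement is an immediate consequence of Lemma~\ref{lem:L_to_Lambda_Lipschitz} together with the contractivity-type estimates already recorded in Remark~\ref{rem:1to1_contraction} and Lemma~\ref{lem:Ophi_norm}. The only points that deserve a sentence of care are (i) noting that $\rho_{\mathrm{in}}$ having unit trace norm is what removes any dependence on the input state from the bound, and (ii) that the estimate is uniform over all test functions $\varphi:X\to[-1,1]$, which is exactly what is needed downstream to turn the generator-level Lipschitz constant into a parameter-space Lipschitz constant for $F_\varphi(\theta)=P_{\mathcal L(\theta)}[\varphi]$ via the linear parametrization. I would keep the written proof to three or four lines of display math.
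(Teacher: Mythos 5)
Your proposal is correct and follows exactly the same chain as the paper's proof: express the difference as a trace against $O_\varphi$, apply matrix H\"older with $\|O_\varphi\|_\infty\le1$ from Lemma~\ref{lem:Ophi_norm}, pass to the induced $1\to1$ norm using $\|\rho_{\mathrm{in}}\|_1=1$, and finish with Lemma~\ref{lem:L_to_Lambda_Lipschitz}. There is nothing to add; the observations about uniformity in $\varphi$ and independence of $\rho_{\mathrm{in}}$ are exactly the points the paper relies on downstream.
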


\begin{proof}
	By definition,
	\[
	P_{\mathcal L}[\varphi]-P_{\mathcal L'}[\varphi]
	= \operatorname{Tr}\Bigl(O_\varphi\bigl(\Lambda_t^{(\mathcal L)}-\Lambda_t^{(\mathcal L')}\bigr)(\rho_{\mathrm{in}})\Bigr).
	\]
	Using matrix Hölder's inequality (see Remark~\ref{rem:schatten_relations}) together with Lemma~\ref{lem:Ophi_norm} and the definition of the induced norm (Definition~\ref{def:induced_norms}), we obtain
	\begin{align*}
		\bigl|P_{\mathcal L}[\varphi]-P_{\mathcal L'}[\varphi]\bigr|
		&= \left|\operatorname{Tr}\Bigl(O_\varphi\bigl(\Lambda_t^{(\mathcal L)}-\Lambda_t^{(\mathcal L')}\bigr)(\rho_{\mathrm{in}})\Bigr)\right| \\
		&\le \|O_\varphi\|_\infty\,
		\bigl\|\bigl(\Lambda_t^{(\mathcal L)}-\Lambda_t^{(\mathcal L')}\bigr)(\rho_{\mathrm{in}})\bigr\|_1 \\
		&\le \|O_\varphi\|_\infty\,
		\bigl\|\Lambda_t^{(\mathcal L)}-\Lambda_t^{(\mathcal L')}\bigr\|_{1\to1}\,\|\rho_{\mathrm{in}}\|_1 \\
		&\le \bigl\|\Lambda_t^{(\mathcal L)}-\Lambda_t^{(\mathcal L')}\bigr\|_{1\to1},
	\end{align*}
	where we used $\|O_\varphi\|_\infty\le1$ (Lemma~\ref{lem:Ophi_norm}) and $\|\rho_{\mathrm{in}}\|_1=1$ for a density operator. Applying Lemma~\ref{lem:L_to_Lambda_Lipschitz} yields the desired bound
	\[
	\bigl|P_{\mathcal L}[\varphi]-P_{\mathcal L'}[\varphi]\bigr|
	\le t\,\|\mathcal L'-\mathcal L\|_{1\to1}.
	\]
\end{proof}

\subsection{Lipschitz constants and Lévy concentration on the parameter sphere}

We now work with the parametrization
\(
\mathcal L(\theta)
\)
introduced in Definition~\ref{def:param_sphere_ensemble}.

\begin{lemma}[Lipschitz constant on the parameter sphere]\label{lem:theta_Lipschitz}
	Suppose there exists a constant $C_G>0$ such that
	\(
	\|G_j\|_{1\to1}\le C_G
	\)
	for all $1\le j\le M$. Define
	\(
	F_\varphi(\theta)
	:= P_{\mathcal L(\theta)}[\varphi]
	\)
	for $\theta\in\Theta$. Then for any $\theta,\theta'\in\Theta$,
	\[
	|F_\varphi(\theta)-F_\varphi(\theta')|
	\le L_0\,\|\theta-\theta'\|_2,\qquad
	L_0:=t\,\delta\,C_G,
	\]
	and $L_0$ is independent of the test function $\varphi$ and of the dimension $M$.
\end{lemma}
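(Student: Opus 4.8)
The plan is to chain the generator-level Lipschitz bound already established in Corollary~\ref{cor:PL_Lipschitz_L} with the explicit linear parametrization $\mathcal L(\theta)=\mathcal L_{\mathrm{ref}}+\frac{\delta}{\sqrt M}\sum_j\theta_j G_j$, and then convert an $\ell_1$ estimate on the parameter difference into the desired $\ell_2$ estimate by Cauchy--Schwarz. First I would write, for $\theta,\theta'\in\Theta$,
\[
\mathcal L(\theta)-\mathcal L(\theta')
=\frac{\delta}{\sqrt M}\sum_{j=1}^M(\theta_j-\theta'_j)\,G_j,
\]
so that by the triangle inequality for the induced $1\to1$ norm together with the uniform bound $\|G_j\|_{1\to1}\le C_G$,
\[
\|\mathcal L(\theta)-\mathcal L(\theta')\|_{1\to1}
\le\frac{\delta C_G}{\sqrt M}\sum_{j=1}^M|\theta_j-\theta'_j|
=\frac{\delta C_G}{\sqrt M}\,\|\theta-\theta'\|_1
\le\delta C_G\,\|\theta-\theta'\|_2,
\]
where the last step uses $\|v\|_1\le\sqrt M\,\|v\|_2$ for $v\in\mathbb R^M$. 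The $1/\sqrt M$ rescaling built into the ensemble is precisely what cancels the $\sqrt M$ produced by this norm conversion, which is why the resulting constant ends up dimension-independent.

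Before invoking Corollary~\ref{cor:PL_Lipschitz_L} I need to verify its hypothesis, namely that the straight-line homotopy $\mathcal L_\alpha=\mathcal L(\theta)+\alpha\bigl(\mathcal L(\theta')-\mathcal L(\theta)\bigr)=(1-\alpha)\mathcal L(\theta)+\alpha\mathcal L(\theta')$ stays inside $\mathsf{GKSL}$ for every $\alpha\in[0,1]$. This is exactly the content of Lemma~\ref{lem:convexity_GKSL_slice}(2): for $\delta$ no larger than the threshold $\delta_0$ produced there, every convex combination of two generators of the form $\mathcal L(\cdot)$ is again a valid GKSL generator, even though such a convex combination is in general not itself of the form $\mathcal L(\theta'')$ with $\theta''\in S^{M-1}$. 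With this in hand, Corollary~\ref{cor:PL_Lipschitz_L} applies along the chord joining $\theta$ and $\theta'$ and gives
\[
|F_\varphi(\theta)-F_\varphi(\theta')|
=\bigl|P_{\mathcal L(\theta)}[\varphi]-P_{\mathcal L(\theta')}[\varphi]\bigr|
\le t\,\|\mathcal L(\theta)-\mathcal L(\theta')\|_{1\to1}
\le t\,\delta\,C_G\,\|\theta-\theta'\|_2,
\]
which is the claimed bound with $L_0=t\delta C_G$.

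Finally, uniformity in $\varphi$ and in $M$ is immediate from the structure of the estimate: the test function enters only through $O_\varphi=\sum_x\varphi(x)M_x$, and by Lemma~\ref{lem:Ophi_norm} one has $\|O_\varphi\|_\infty\le1$ regardless of $\varphi$, so the constant $t$ in Corollary~\ref{cor:PL_Lipschitz_L} is $\varphi$-independent; the bound $C_G$ on the generator directions is $M$-independent by assumption; and the $\ell_1$--$\ell_2$ conversion factor has been absorbed by the $1/\sqrt M$ normalization. The argument is essentially routine; the one point that genuinely requires care is the ``homotopy stays in $\mathsf{GKSL}$'' hypothesis, and I expect that to be the main (minor) obstacle — one must ensure that the $\delta$ fixed in Definition~\ref{def:param_sphere_ensemble} is chosen below the threshold $\delta_0$ of Lemma~\ref{lem:convexity_GKSL_slice}, so that Corollary~\ref{cor:PL_Lipschitz_L} is legitimately applicable along every chord of the sphere and not merely at its endpoints.
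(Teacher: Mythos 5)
Your proof is correct and follows essentially the same route as the paper's: decompose $\mathcal L(\theta)-\mathcal L(\theta')$, bound the $1\to1$ norm with the triangle inequality and $\|G_j\|_{1\to1}\le C_G$, convert $\ell_1$ to $\ell_2$ via Cauchy--Schwarz to absorb the $1/\sqrt M$, and then invoke Corollary~\ref{cor:PL_Lipschitz_L}. Your explicit check that the chord $(1-\alpha)\mathcal L(\theta)+\alpha\mathcal L(\theta')$ remains in $\mathsf{GKSL}$ (via Lemma~\ref{lem:convexity_GKSL_slice}(2), assuming $\delta\le\delta_0$) is a hypothesis the paper's proof uses but leaves implicit, so your version is if anything slightly more careful.
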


\begin{proof}
	By definition of the parametrization,
	\[
	\mathcal L(\theta')-\mathcal L(\theta)
	= \frac{\delta}{\sqrt M}\sum_{j=1}^M (\theta'_j-\theta_j)\,G_j.
	\]
	Taking the $1\to1$ norm gives
	\begin{align*}
		\bigl\|\mathcal L(\theta')-\mathcal L(\theta)\bigr\|_{1\to1}
		&\le \frac{\delta}{\sqrt M}
		\sum_{j=1}^M |\theta'_j-\theta_j|\,\|G_j\|_{1\to1} \\
		&\le \frac{\delta C_G}{\sqrt M}
		\sum_{j=1}^M |\theta'_j-\theta_j| \\
		&\le \frac{\delta C_G}{\sqrt M}\,\sqrt M\,\|\theta'-\theta\|_2 \\
		&= \delta C_G \,\|\theta'-\theta\|_2,
	\end{align*}
	where in the last step we used the Cauchy-Schwarz inequality.
	$\sum_{j=1}^M |a_j|\le\sqrt M\,\|a\|_2$.
	Then by Corollary~\ref{cor:PL_Lipschitz_L},
	\begin{align*}
		|F_\varphi(\theta)-F_\varphi(\theta')|
		&= \bigl|P_{\mathcal L(\theta)}[\varphi]-P_{\mathcal L(\theta')}[\varphi]\bigr| \\
		&\le t\,\bigl\|\mathcal L(\theta')-\mathcal L(\theta)\bigr\|_{1\to1} \\
		&\le t\,\delta C_G \,\|\theta'-\theta\|_2.
	\end{align*}
	Thus $L_0=t\delta C_G$ is a uniform Lipschitz constant, independent of $\varphi$ and the parameter dimension~$M$.
\end{proof}

\begin{theorem}[Upper bound on the maximally distinguishable fraction in the open-system SQ framework]\label{thm:frac_bound_open_system}
	Let the parameter ensemble $(\Theta,\mu_\Theta,\mathcal L(\theta))$ be as in Definition~\ref{def:param_sphere_ensemble}, and let the Lipschitz constant $L_0$ be as in Lemma~\ref{lem:theta_Lipschitz}. In the SQ interface, for each test function $\varphi:X\to[-1,1]$ define
	\[
	F_\varphi(\theta):=P_{\mathcal L(\theta)}[\varphi],\qquad
	Q[\varphi]:=\mathbb E_{\mathcal L\sim\mu_{\mathcal L}} P_{\mathcal L}[\varphi]
	= \mathbb E_{\theta\sim\mu_\Theta} F_\varphi(\theta).
	\]
	Define
	\[
	\operatorname{frac}(\mu_{\mathcal L},Q,\tau)
	:= \sup_{\varphi:X\to[-1,1]}
	\Pr_{\mathcal L\sim\mu_{\mathcal L}}\bigl(|P_{\mathcal L}[\varphi]-Q[\varphi]|\ge\tau\bigr).
	\]
	Then for any $\tau>0$,
	\[
	\operatorname{frac}(\mu_{\mathcal L},Q,\tau)
	\le 2\exp\!\left(-c_{\mathrm{par}}\,\frac{M\,\tau^2}{L_0^2}\right),
	\]
	where $c_{\mathrm{par}}$ is the constant from Lemma~\ref{lem:levy_sphere} and $M$ is the parameter dimension.
\end{theorem}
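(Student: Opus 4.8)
The plan is to reduce the statement to a single application of the spherical Lévy inequality, Lemma~\ref{lem:levy_sphere}, after transporting everything from the Lindbladian side to the parameter sphere via the pushforward $\mu_{\mathcal L} = \mathcal L_\#(\mu_\Theta)$. First I would fix $0 < \delta \le \delta_0$ as in Lemma~\ref{lem:convexity_GKSL_slice}, so that \emph{every} $\mathcal L(\theta)$ with $\theta \in S^{M-1}$ is a genuine GKSL generator; then we may take $\Theta = S^{M-1}$ and $\mu_\Theta = \mu_S$ the normalized uniform (spherical) measure, which avoids any issue about restricting the concentration inequality to a proper subset of the sphere. Under the pushforward, for a fixed test function $\varphi : X \to [-1,1]$ one has the identity of events
\[
\Pr_{\mathcal L \sim \mu_{\mathcal L}}\bigl(|P_{\mathcal L}[\varphi] - Q[\varphi]| \ge \tau\bigr)
= \Pr_{\theta \sim \mu_S}\bigl(|F_\varphi(\theta) - \mathbb E_{\mu_S} F_\varphi| \ge \tau\bigr),
\]
since $Q[\varphi] = \mathbb E_{\theta \sim \mu_\Theta} F_\varphi(\theta)$ by definition of $Q$.

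Next I would invoke Lemma~\ref{lem:theta_Lipschitz}: the map $F_\varphi(\theta) = P_{\mathcal L(\theta)}[\varphi]$ is $L_0$-Lipschitz on $S^{M-1}$ with respect to the Euclidean metric $d_2$, where $L_0 = t\,\delta\,C_G$, and — crucially — $L_0$ depends neither on $\varphi$ nor on $M$. Applying Lemma~\ref{lem:levy_sphere} to the $L_0$-Lipschitz function $F_\varphi$ at deviation level $\tau$ gives, for each fixed $\varphi$,
\[
\Pr_{\theta \sim \mu_S}\bigl(|F_\varphi(\theta) - \mathbb E_{\mu_S} F_\varphi| \ge \tau\bigr)
\le 2\exp\!\Bigl(-c_{\mathrm{par}}\,\frac{M\tau^2}{L_0^2}\Bigr),
\]
with $c_{\mathrm{par}}$ the absolute constant from Lemma~\ref{lem:levy_sphere}.

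Finally, because the right-hand side is completely independent of $\varphi$ (it depends only on $M$, $\tau$, $t$, $\delta$, $C_G$, and $c_{\mathrm{par}}$), I can take the supremum over all $\varphi : X \to [-1,1]$ on the left without degrading the bound, which yields precisely
\[
\operatorname{frac}(\mu_{\mathcal L}, Q, \tau)
= \sup_{\varphi : X \to [-1,1]} \Pr_{\mathcal L \sim \mu_{\mathcal L}}\bigl(|P_{\mathcal L}[\varphi] - Q[\varphi]| \ge \tau\bigr)
\le 2\exp\!\Bigl(-c_{\mathrm{par}}\,\frac{M\tau^2}{L_0^2}\Bigr).
\]
There is no serious obstacle here: the theorem is essentially a dictionary translation plus one off-the-shelf concentration inequality. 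The only place a naive write-up could go wrong is the interface between Lemma~\ref{lem:levy_sphere} (stated for the whole sphere, the uniform measure, and the mean) and Definition~\ref{def:param_sphere_ensemble} (which a priori allows $\Theta \subsetneq S^{M-1}$ and a renormalized measure); this is handled by fixing $\delta \le \delta_0$ as above, or, if one insists on treating the proper-subset case, by extending $F_\varphi$ to a global $L_0$-Lipschitz function on $S^{M-1}$ via McShane extension, applying Lévy on the full sphere, and absorbing the harmless factor $\mu_S(\Theta)^{-1}$ (bounded as long as $\mu_S(\Theta)$ is bounded below) into the conditional probability, which does not affect the $\exp(-\Omega(M\tau^2/L_0^2))$ scaling.
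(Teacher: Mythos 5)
Your argument is correct and follows the paper's own route exactly: invoke the Lipschitz bound of Lemma~\ref{lem:theta_Lipschitz}, feed the $L_0$-Lipschitz function $F_\varphi$ into the spherical L\'evy inequality of Lemma~\ref{lem:levy_sphere}, transport via the pushforward $\mu_{\mathcal L}=\mathcal L_\#\mu_\Theta$, and take the supremum over $\varphi$ using that the right-hand side is $\varphi$-independent. The one place you go beyond the paper is a welcome tightening: the paper's proof simply asserts that L\'evy ``applies directly'' to the normalized restriction of $\mu_S$ to $\Theta$, whereas you make this rigorous either by fixing $\delta\le\delta_0$ (so that $\Theta=S^{M-1}$ by Lemma~\ref{lem:convexity_GKSL_slice}, the cleanest route and the one implicitly used throughout the paper) or by a McShane extension argument with a $\mu_S(\Theta)^{-1}$ correction; do note, if you pursue the latter, that the mean under $\mu_\Theta$ and the mean of the extended function under $\mu_S$ are not identical and would need a small additional estimate.
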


\begin{proof}
	For each fixed $\varphi$, Lemma~\ref{lem:theta_Lipschitz} tells us that $F_\varphi$ is $L_0$-Lipschitz on $\Theta\subseteq S^{M-1}$. Viewing $\mu_\Theta$ as the normalized restriction of $\mu_S$ to $\Theta$, we can apply Lemma~\ref{lem:levy_sphere} directly (which yields the same type of concentration for normalized measures on any full-measure subset of the sphere) to obtain
	\[
	\Pr_{\theta\sim\mu_\Theta}\bigl(|F_\varphi(\theta)-\mathbb E_{\theta\sim\mu_\Theta}F_\varphi(\theta)|\ge\tau\bigr)
	\le 2\exp\!\left(-c_{\mathrm{par}}\,\frac{M\,\tau^2}{L_0^2}\right).
	\]
	In terms of the pushforward measure $\mu_{\mathcal L}$ this is equivalent to
	\[
	\Pr_{\mathcal L\sim\mu_{\mathcal L}}\Bigl(|P_{\mathcal L}[\varphi]-Q[\varphi]|\ge\tau\Bigr)
	\le 2\exp\!\left(-c_{\mathrm{par}}\,\frac{M\,\tau^2}{L_0^2}\right),
	\]
	where $Q[\varphi]=\mathbb E_{\mathcal L}P_{\mathcal L}[\varphi]$. Since the right-hand side is independent of $\varphi$, taking the supremum over all $\varphi:X\to[-1,1]$ gives
	\[
	\operatorname{frac}(\mu_{\mathcal L},Q,\tau)
	\le 2\exp\!\left(-c_{\mathrm{par}}\,\frac{M\,\tau^2}{L_0^2}\right).
	\]
\end{proof}

\begin{remark}[Metric, measure, and geometry in random Lindbladian ensembles]\label{rem:metric_measure_ensemble}
	The form of Theorem~\ref{thm:frac_bound_open_system} can naturally be understood within concentration framework~\cite{ledoux2001concentration}, what we are really using is a metric probability space
	\[
	(\Theta,d_2,\mu_\Theta),\qquad
	d_2(\theta,\theta'):=\|\theta-\theta'\|_2,
	\]
	where $\Theta\subseteq S^{M-1}$ is a full-measure subset of the parameter sphere and $\mu_\Theta$ is the probability measure obtained by restricting and renormalizing the rotation-invariant measure (normalized surface measure) on the sphere. For each test function $\varphi:X\to[-1,1]$ we consider
	\[
	F_\varphi:\Theta\to\mathbb R,\qquad
	F_\varphi(\theta):=P_{\mathcal L(\theta)}[\varphi].
	\]
	Lemma~\ref{lem:theta_Lipschitz} states that $F_\varphi$ is $L_0$-Lipschitz with respect to $d_2$, while Lemma~\ref{lem:levy_sphere} is a Lévy-type concentration theorem on $(S^{M-1},d_2,\mu_S)$. Combining the two yields the desired concentration inequality on the parameter space.
	
	This structure is fully analogous to Haar-random states and unitaries in quantum information~\cite{mele2024introduction,brandao2016local,haferkamp2022random}. In the case of random pure states, the space is the $2$-norm sphere $S^{2d-1}$ in Hilbert space, the distance is the Euclidean metric induced by the inner product, and the measure is the unique unitary-invariant probability measure induced by $U(d)$ on the sphere (also called the Haar measure). In this geometry, for any $1$-Lipschitz function $f:S^{2d-1}\to\mathbb R$ one obtains
	\[
	\Pr_{\psi\sim\mu_{\mathrm{Haar}}}\bigl(|f(\psi)-\mathbb Ef|\ge s\bigr)
	\le 2\exp(-c\,d\,s^2),
	\]
	which is precisely Lévy concentration in the language of random pure states.
	
	In our parameter-sphere ensemble, the Euclidean metric $d_2$ and the rotation-invariant measure $\mu_\Theta$ play the same role, they guarantee that all coordinate directions in parameter space are geometrically unbiased, and they ensure that the isoperimetric constant in the sense of Lévy~\cite{ledoux2001concentration} scales linearly with the dimension $M$, leading to a concentration bound with an exponential factor in $M$:
	\[
	\Pr_{\theta\sim\mu_\Theta}\bigl(|F_\varphi(\theta)-\mathbb E_{\mu_\Theta}F_\varphi|\ge\tau\bigr)
	\le 2\exp\!\left(-c_{\mathrm{par}}\,\frac{M\tau^2}{L_0^2}\right).
	\]
	The normalization factor $\delta/\sqrt M$ introduced in the parametrization is exactly what keeps the Lipschitz constant $L_0=t\delta C_G$ from blowing up with the dimension.
	
	More generally, as emphasized in~\cite{ledoux2001concentration}, concentration theory is not restricted to $2$-norm spheres and Haar measure; it applies to arbitrary metric probability spaces $(X,d,\mu)$. If we are willing to change the metric and measure on the parameter space, for example:
	\begin{itemize}
		\item[i.] choose a product measure with independent coordinates (each local coupling $\theta_j$ is an independent random variable), together with an $\ell_1$ or Hamming distance;
		\item[ii.] or choose a symmetric log-concave measure (similar to a high-dimensional Gaussian) together with the Euclidean metric;
	\end{itemize}
	then as long as $(X,d,\mu)$ has good isoperimetric properties (for instance, satisfies a Poincaré or log-Sobolev inequality), one can still obtain concentration inequalities for $F_\varphi$, although the constants and exponential scaling structure will change with the underlying geometry. In that case, the induced $\mu_{\mathcal L}$ describes another random Lindbladian ensemble, corresponding to a different physical modeling assumption (e.g., independently random local jump strengths), rather than the Euclidean-sphere-based ``parameter Haar'' ensemble considered here.
	
	It is worth emphasizing once more that these concentration and SQ-hardness results are inherently distribution-specific, they are always stated relative to a chosen prior $\mu_{\mathcal L}$. This is completely parallel to the situations in random circuits~\cite{mele2024introduction,brandao2016local,haferkamp2022random} and random matrix theory~\cite{mehta2004random,khaymovich2019eigenstate}:
	\begin{itemize}
		\item[i.] In the random-circuit literature, one chooses either Haar-random unitaries or more physically motivated local random gate sets, thereby defining an ensemble $\mu_{\mathcal C}$ of circuits, and then studies concentration and average-case hardness of typical output distributions with respect to that ensemble~\cite{nietner2025average}.
		\item[ii.] In random matrix theory, ensembles such as the Gaussian Unitary Ensemble (GUE)~\cite{khaymovich2019eigenstate}, the Ginibre ensemble~\cite{hamazaki2020universality}, and local random Hamiltonians~\cite{sugimoto2021test,nakata2014thermal} serve as standard random models for structureless chaotic Hamiltonians and non-Hermitian generators of open systems; all statements about spectral statistics and typicality are made relative to these ensembles $\mu_H$.
		\item[iii.] For Lindbladians, our Definition~\ref{def:param_sphere_ensemble} adopts an unbiased Euclidean-sphere parameter ensemble as a baseline example. One can likewise consider other random GKSL ensembles that better reflect specific physical platforms (such as random local jumps plus random Kossakowski matrices), and apply Lévy-type tools on each associated $(X,d,\mu)$ to obtain the corresponding open-system concentration and SQ-hardness results.
	\end{itemize}
	Therefore, Theorem~\ref{thm:frac_bound_open_system} and the preceding lemmas show that once we fix a physically natural and geometrically well-behaved random ensemble $(\mathcal F,\mu_{\mathcal L})$ on the Lindbladian space, the typical SQ expectations $P_{\mathcal L}[\varphi]$ will exhibit strong Lévy-type concentration over the parameter space. In Sec.~\ref{local} we present another example based on an $\ell_1$ metric and a product-measure ensemble of independent local couplings.
\end{remark}

\subsection{Product-measure concentration for independent local couplings}\label{local}

The discussion of the parameter sphere $(\Theta,\mu_\Theta)$ above relies on the rotation-invariant measure and Lévy's lemma on the Euclidean sphere $S^{M-1}$. In this subsection we present a parallel ensemble with independent local couplings, each local coupling constant is a small independent random variable, leading to a product measure on parameter space together with an $\ell_1$ metric. Unlike the spherical case, here one can keep the Lipschitz constant independent of the dimension $M$ without introducing a $\delta/\sqrt M$ normalization, at the expense of having the dependence on $M$ enter the concentration exponent via a McDiarmid-type inequality.

\begin{definition}[Random Lindbladian ensemble with independent local couplings]\label{def:prod_measure_ensemble}
	Let $M\in\mathbb N$ be the parameter dimension.
	Let the single-coordinate distribution $\mu_1$ be supported on the interval $[-\delta,\delta]\subset\mathbb R$ (where $\delta>0$ sets the typical scale of the couplings), and define
	\[
	\Theta_{\mathrm{prod}}
	:= [-\delta,\delta]^M,\qquad
	\mu_{\mathrm{prod}}
	:= \mu_1^{\otimes M}.
	\]
	For $\theta=(\theta_1,\dots,\theta_M)\in\Theta_{\mathrm{prod}}$, define the linear parametrization
	\[
	\mathcal L(\theta)
	:= \mathcal L_{\mathrm{ref}} + \sum_{j=1}^M \theta_j G_j,
	\]
	where $\mathcal L_{\mathrm{ref}}$ is a fixed Lindbladian and $\{G_j\}_{j=1}^M$ is a family of linear maps with $\|G_j\|_{1\to1}\le C_G$ for some constant $C_G>0$. Define the pushforward measure
	\[
	\mu_{\mathcal L}^{\mathrm{prod}}
	:= (\mathcal L)_\#\mu_{\mathrm{prod}},
	\]
	i.e., for any measurable set $A$,
	$\mu_{\mathcal L}^{\mathrm{prod}}(A)
	:= \mu_{\mathrm{prod}}\{\theta:\mathcal L(\theta)\in A\}$.
	Equip the parameter space with the $\ell_1$ metric
	\[
	d_1(\theta,\theta')
	:= \|\theta-\theta'\|_1
	= \sum_{j=1}^M|\theta_j-\theta'_j|.
	\]
\end{definition}

\begin{lemma}[Lipschitz constant on the product-measure parameter space]\label{lem:prod_theta_Lipschitz}
	In the setting of Definition~\ref{def:prod_measure_ensemble}, let
	\[
	F_\varphi(\theta)
	:= P_{\mathcal L(\theta)}[\varphi],
	\qquad
	\theta\in\Theta_{\mathrm{prod}},
	\]
	for an arbitrary test function $\varphi:X\to[-1,1]$.
	Then for all $\theta,\theta'\in\Theta_{\mathrm{prod}}$,
	\[
	|F_\varphi(\theta)-F_\varphi(\theta')|
	\le L_1\,\|\theta-\theta'\|_1,\qquad
	L_1:=t\,C_G,
	\]
	and $L_1$ is independent of $\varphi$ and of the dimension $M$.
\end{lemma}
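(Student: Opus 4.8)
The plan is to mirror the proof of Lemma~\ref{lem:theta_Lipschitz}, but exploiting that the $\ell_1$ metric already sums over coordinates, so no $\delta/\sqrt{M}$ rescaling is needed and the constant stays dimension-free for free. First I would fix a test function $\varphi:X\to[-1,1]$ and two parameters $\theta,\theta'\in\Theta_{\mathrm{prod}}$, and use the linear parametrization to write
\[
\mathcal L(\theta')-\mathcal L(\theta)
= \sum_{j=1}^M (\theta'_j-\theta_j)\,G_j .
\]
Taking the induced $1\to1$ norm and applying the triangle inequality together with the uniform bound $\|G_j\|_{1\to1}\le C_G$ gives
\[
\bigl\|\mathcal L(\theta')-\mathcal L(\theta)\bigr\|_{1\to1}
\le \sum_{j=1}^M |\theta'_j-\theta_j|\,\|G_j\|_{1\to1}
\le C_G\sum_{j=1}^M|\theta'_j-\theta'_j|
\,\longrightarrow\, C_G\,\|\theta-\theta'\|_1 .
\]
In contrast to the spherical case one does not invoke Cauchy--Schwarz at all: the $\ell_1$ norm is precisely what makes the constant independent of $M$ without any normalization.

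Second, I would feed this into Corollary~\ref{cor:PL_Lipschitz_L}, obtaining
\[
|F_\varphi(\theta)-F_\varphi(\theta')|
= \bigl|P_{\mathcal L(\theta)}[\varphi]-P_{\mathcal L(\theta')}[\varphi]\bigr|
\le t\,\bigl\|\mathcal L(\theta')-\mathcal L(\theta)\bigr\|_{1\to1}
\le t\,C_G\,\|\theta-\theta'\|_1 ,
\]
which is the claimed bound with $L_1=tC_G$. Since neither $t$, $C_G$, nor any step in the chain depends on $\varphi$ or on the dimension $M$, the constant $L_1$ is uniform in both. The only point requiring a word of care is that Corollary~\ref{cor:PL_Lipschitz_L} presupposes the straight-line homotopy $\alpha\mapsto\mathcal L_\alpha:=\mathcal L(\theta)+\alpha\bigl(\mathcal L(\theta')-\mathcal L(\theta)\bigr)$ to stay inside $\mathsf{GKSL}$ for $\alpha\in[0,1]$; by the same convexity argument as in Lemma~\ref{lem:convexity_GKSL_slice} (using $\|\mathcal L(\theta)-\mathcal L_{\mathrm{ref}}\|_{1\to1}\le C_G\|\theta\|_1$ and that $\mathcal L_{\mathrm{ref}}$ is an interior point of the GKSL cone), this holds provided $\delta$ is small enough that every $\mathcal L(\theta)$ with $\theta\in[-\delta,\delta]^M$ lies in the GKSL neighbourhood of $\mathcal L_{\mathrm{ref}}$, which is exactly why Definition~\ref{def:prod_measure_ensemble} restricts the single-coordinate support to a small interval.

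I do not anticipate a real obstacle here: the computation is routine, being just the triangle inequality plus an application of Corollary~\ref{cor:PL_Lipschitz_L}. If there is any subtlety worth flagging, it is the GKSL-admissibility of the interpolating segment noted above, since it is the one place where the smallness of $\delta$ genuinely enters; the dimension-independence of $L_1$ itself is automatic once the estimate is carried out in the $\ell_1$ norm.
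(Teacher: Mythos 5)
Your proof is correct and mirrors the paper's argument exactly: decompose $\mathcal L(\theta')-\mathcal L(\theta)$ linearly, bound the $1\to1$ norm termwise by $C_G\|\theta-\theta'\|_1$, then invoke Corollary~\ref{cor:PL_Lipschitz_L} to gain the factor $t$. One small typo to fix — in your second display you write $C_G\sum_j|\theta'_j-\theta'_j|$ where you mean $C_G\sum_j|\theta'_j-\theta_j|$ — and your explicit flag about GKSL-admissibility of the interpolating segment (needed by Corollary~\ref{cor:PL_Lipschitz_L} via Lemma~\ref{lem:L_to_Lambda_Lipschitz}) is a careful point the paper's proof leaves implicit.
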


\begin{proof}
	By linearity of the parametrization,
	\[
	\mathcal L(\theta')-\mathcal L(\theta)
	= \sum_{j=1}^M (\theta'_j-\theta_j)\,G_j.
	\]
	Taking the $1\to1$ norm and using $\|G_j\|_{1\to1}\le C_G$, we obtain
	\begin{align*}
		\bigl\|\mathcal L(\theta')-\mathcal L(\theta)\bigr\|_{1\to1}
		&\le \sum_{j=1}^M|\theta'_j-\theta_j|\,\|G_j\|_{1\to1} \\
		&\le C_G\sum_{j=1}^M|\theta'_j-\theta_j| \\
		&= C_G\,\|\theta'-\theta\|_1.
	\end{align*}
	By Corollary~\ref{cor:PL_Lipschitz_L} (Lipschitz continuity of $P_{\mathcal L}[\varphi]$ with respect to $\mathcal L$), we obtain
	\begin{align*}
		|F_\varphi(\theta)-F_\varphi(\theta')|
		&= \bigl|P_{\mathcal L(\theta)}[\varphi]-P_{\mathcal L(\theta')}[\varphi]\bigr| \\
		&\le t\,\bigl\|\mathcal L(\theta')-\mathcal L(\theta)\bigr\|_{1\to1} \\
		&\le t\,C_G\,\|\theta'-\theta\|_1.
	\end{align*}
	Thus $L_1=tC_G$ is a uniform Lipschitz constant with respect to the $\ell_1$ metric $d_1$, independent of $\varphi$ and of $M$.
\end{proof}

We now use a McDiarmid-type bounded-differences inequality to obtain concentration under the product measure.
Since $\mu_{\mathrm{prod}}$ is a product measure with coordinates supported on a bounded interval $[-\delta,\delta]$, it is well suited to this classical tool.

\begin{lemma}[McDiarmid bounded-differences inequality, cf.~\cite{ledoux2001concentration}]\label{lem:mcdiarmid}
	Let $\theta=(\theta_1,\dots,\theta_M)$ have independent coordinates,
	$\theta_j\sim\mu_1$ supported on $[-\delta,\delta]$.
	Let $F:\Theta_{\mathrm{prod}}\to\mathbb R$ be such that for each $j$ there exists $c_j\ge0$ with the property that for any $\theta$ and any $a,a'\in[-\delta,\delta]$,
	\[
	\bigl|F(\theta_1,\dots,\theta_{j-1},a,\theta_{j+1},\dots,\theta_M)
	- F(\theta_1,\dots,\theta_{j-1},a',\theta_{j+1},\dots,\theta_M)\bigr|
	\le c_j.
	\]
	Then for all $\tau>0$,
	\[
	\Pr_{\theta\sim\mu_{\mathrm{prod}}}\bigl(|F(\theta)-\mathbb E F|\ge\tau\bigr)
	\le 2\exp\!\left(-\frac{2\tau^2}{\sum_{j=1}^M c_j^2}\right).
	\]
\end{lemma}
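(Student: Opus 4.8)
The plan is to prove Lemma~\ref{lem:mcdiarmid} by the standard Doob-martingale (bounded-differences) argument: build a martingale that interpolates between the mean and the value of $F$, control its increments using the bounded-differences hypothesis, and then apply an Azuma--Hoeffding-type moment-generating-function bound followed by Markov's inequality. Write $\mu:=\mathbb{E}_{\theta\sim\mu_{\mathrm{prod}}}F$ and, for $k=0,1,\dots,M$, set
\[
Z_k:=\mathbb{E}\bigl[F(\theta_1,\dots,\theta_M)\mid\theta_1,\dots,\theta_k\bigr],
\]
so that $Z_0=\mu$ is a constant (here independence of the coordinates is essential: it makes the conditioning on the empty prefix trivial) and $Z_M=F(\theta)$. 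The increments $D_k:=Z_k-Z_{k-1}$ satisfy $\mathbb{E}[D_k\mid\theta_1,\dots,\theta_{k-1}]=0$ and $F(\theta)-\mu=\sum_{k=1}^M D_k$.

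The key step is a conditional oscillation estimate for $D_k$. Fix $\theta_1,\dots,\theta_{k-1}$ and, using independence, define for $a\in[-\delta,\delta]$
\[
g(a):=\mathbb{E}_{\theta_{k+1},\dots,\theta_M}\bigl[F(\theta_1,\dots,\theta_{k-1},a,\theta_{k+1},\dots,\theta_M)\bigr],
\]
so that $Z_k=g(\theta_k)$ and $Z_{k-1}=\mathbb{E}_{\theta_k}[g(\theta_k)]$, hence $D_k=g(\theta_k)-\mathbb{E}_{\theta_k}g$. The hypothesis $|F(\dots,a,\dots)-F(\dots,a',\dots)|\le c_k$ survives the outer expectation over $\theta_{k+1},\dots,\theta_M$, so $|g(a)-g(a')|\le c_k$ for all $a,a'$; thus, conditionally on $\theta_1,\dots,\theta_{k-1}$, the variable $D_k$ has mean zero and lies in an interval of length at most $c_k$. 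Hoeffding's lemma then yields, for every $s\in\mathbb{R}$,
\[
\mathbb{E}\bigl[e^{sD_k}\mid\theta_1,\dots,\theta_{k-1}\bigr]\le\exp\!\Bigl(\tfrac{s^2 c_k^2}{8}\Bigr).
\]

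Iterating this from $k=M$ down to $k=1$ via the tower property gives $\mathbb{E}[e^{s(F(\theta)-\mu)}]\le\exp\!\bigl(\tfrac{s^2}{8}\sum_{j=1}^M c_j^2\bigr)$. Markov's inequality applied to $e^{s(F-\mu)}$ with the optimal choice $s=4\tau/\sum_j c_j^2$ gives the one-sided bound $\Pr[F(\theta)-\mu\ge\tau]\le\exp\!\bigl(-2\tau^2/\sum_j c_j^2\bigr)$; running the same argument for $-F$ (which has the same bounded-difference constants) and taking a union bound produces the stated two-sided inequality with prefactor $2$.

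The only genuinely delicate point is the conditional oscillation estimate: one must verify that integrating out the ``future'' coordinates $\theta_{k+1},\dots,\theta_M$ does not enlarge the oscillation of the increment $D_k$, which is precisely where independence of the $\theta_j$ is used (the conditional law of the future coordinates does not depend on the past). Everything else is Hoeffding's lemma together with the Chernoff bound, both entirely routine. Since this is a textbook statement, one could instead simply cite~\cite{ledoux2001concentration}, but the martingale proof above is short enough to reproduce in full.
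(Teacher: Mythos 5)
The paper does not supply its own proof of Lemma~\ref{lem:mcdiarmid}; it states it with a citation to Ledoux and treats it as a standard fact. Your argument is the canonical Doob-martingale proof of McDiarmid's bounded-differences inequality, and every step is correct: the conditional oscillation bound on $D_k$ follows from the bounded-differences hypothesis plus independence exactly as you say, Hoeffding's lemma gives the sub-Gaussian conditional MGF bound, the tower-property iteration and Chernoff optimization produce the one-sided bound with constant $2$ in the exponent, and the union bound supplies the two-sided prefactor; this matches the standard argument the cited reference points to.
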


Combining Lemma~\ref{lem:prod_theta_Lipschitz} with Lemma~\ref{lem:mcdiarmid} yields a concentration theorem for the ensemble with independent local couplings.

\begin{theorem}[Upper bound on the maximally distinguishable fraction for the independent local-coupling ensemble]\label{thm:frac_bound_open_system_prod}
	Let $(\Theta_{\mathrm{prod}},\mu_{\mathrm{prod}},\mathcal L(\theta))$
	be as in Definition~\ref{def:prod_measure_ensemble}, and let the Lipschitz constant $L_1$ be as in Lemma~\ref{lem:prod_theta_Lipschitz}.
	In the SQ interface, for each test function $\varphi:X\to[-1,1]$ define
	\[
	F_\varphi(\theta):=P_{\mathcal L(\theta)}[\varphi],\qquad
	Q[\varphi]:=\mathbb E_{\mathcal L\sim\mu_{\mathcal L}^{\mathrm{prod}}} P_{\mathcal L}[\varphi]
	= \mathbb E_{\theta\sim\mu_{\mathrm{prod}}} F_\varphi(\theta).
	\]
	Define the maximally distinguishable fraction
	\[
	\operatorname{frac}(\mu_{\mathcal L}^{\mathrm{prod}},Q,\tau)
	:= \sup_{\varphi:X\to[-1,1]}
	\Pr_{\mathcal L\sim\mu_{\mathcal L}^{\mathrm{prod}}}\bigl(|P_{\mathcal L}[\varphi]-Q[\varphi]|\ge\tau\bigr).
	\]
	Then for any $\tau>0$,
	\[
	\operatorname{frac}(\mu_{\mathcal L}^{\mathrm{prod}},Q,\tau)
	\;\le\; 2\exp\!\left(-c_{\mathrm{prod}}\,
	\frac{\tau^2}{M\,\delta^2\,L_1^2}\right),
	\qquad
	L_1=tC_G,
	\]
	where $c_{\mathrm{prod}}>0$ is an absolute constant (for instance, one may take $c_{\mathrm{prod}}=\tfrac12$ as suggested by McDiarmid's inequality).
\end{theorem}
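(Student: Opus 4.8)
The plan is to run exactly the template of the spherical case (Theorem~\ref{thm:frac_bound_open_system}), but with Lévy's inequality on the sphere replaced by the McDiarmid bounded-differences inequality (Lemma~\ref{lem:mcdiarmid}), using the dimension-independent $\ell_1$-Lipschitz bound of Lemma~\ref{lem:prod_theta_Lipschitz} as the single geometric input. Throughout I take $\delta\le\delta_0$ small enough (as in Lemma~\ref{lem:convexity_GKSL_slice}) so that all $\mathcal L(\theta)$ with $\theta\in\Theta_{\mathrm{prod}}$, and all points on the straight-line homotopies between them, remain GKSL generators, which is what licenses the use of Corollary~\ref{cor:PL_Lipschitz_L} inside Lemma~\ref{lem:prod_theta_Lipschitz}.

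First I would fix an arbitrary test function $\varphi:X\to[-1,1]$ and set $F_\varphi(\theta):=P_{\mathcal L(\theta)}[\varphi]$ on $\Theta_{\mathrm{prod}}=[-\delta,\delta]^M$. By Lemma~\ref{lem:prod_theta_Lipschitz}, $|F_\varphi(\theta)-F_\varphi(\theta')|\le L_1\|\theta-\theta'\|_1$ with $L_1=tC_G$. The one conversion step is to turn this global $\ell_1$-Lipschitz bound into the per-coordinate oscillation bound that McDiarmid requires: if we change a single coordinate $\theta_j$ from $a$ to $a'$, both lying in $[-\delta,\delta]$, the two parameter vectors differ in $\ell_1$ by $|a-a'|\le 2\delta$, so $F_\varphi$ changes by at most $2\delta L_1$. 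Hence the bounded-differences hypothesis of Lemma~\ref{lem:mcdiarmid} holds with $c_j:=2\delta L_1$ for every $j$, giving $\sum_{j=1}^M c_j^2=4M\delta^2 L_1^2$.

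Next I would apply Lemma~\ref{lem:mcdiarmid} to $F_\varphi$ under $\mu_{\mathrm{prod}}=\mu_1^{\otimes M}$ (a genuine product measure with bounded coordinates, so the hypotheses are met), obtaining for every $\tau>0$
\[
\Pr_{\theta\sim\mu_{\mathrm{prod}}}\bigl(|F_\varphi(\theta)-\mathbb E F_\varphi|\ge\tau\bigr)
\le 2\exp\!\Bigl(-\frac{2\tau^2}{4M\delta^2 L_1^2}\Bigr)
= 2\exp\!\Bigl(-\frac{\tau^2}{2M\delta^2 L_1^2}\Bigr),
\]
so $c_{\mathrm{prod}}=\tfrac12$. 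By definition $\mathbb E_{\theta\sim\mu_{\mathrm{prod}}}F_\varphi(\theta)=Q[\varphi]$, and since $\mu_{\mathcal L}^{\mathrm{prod}}=(\mathcal L)_\#\mu_{\mathrm{prod}}$ is the law of $\mathcal L(\theta)$, this reads $\Pr_{\mathcal L\sim\mu_{\mathcal L}^{\mathrm{prod}}}(|P_{\mathcal L}[\varphi]-Q[\varphi]|\ge\tau)\le 2\exp(-\tau^2/(2M\delta^2 L_1^2))$. Finally, since the right-hand side does not depend on $\varphi$, taking the supremum over all $\varphi:X\to[-1,1]$ yields the claimed bound on $\operatorname{frac}(\mu_{\mathcal L}^{\mathrm{prod}},Q,\tau)$.

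There is no serious obstacle here — the result is essentially a repackaging of the two cited lemmas. The only point worth emphasizing (and the place where one could slip) is the extra factor $2\delta$ in the per-coordinate constant $c_j$: it is exactly what converts the $M$-independent $\ell_1$-Lipschitz constant $L_1$ into the quantity that controls the McDiarmid exponent, and it is the reason the exponent scales as $\tau^2/(M\delta^2 L_1^2)$ — with $M$ in the \emph{denominator} — rather than as $M\tau^2/L_1^2$ in the spherical ensemble. Because the product-measure ensemble carries no $\delta/\sqrt M$ normalization, the distinguishable fraction is small only when $\tau\gtrsim\sqrt M\,\delta$; one should therefore flag this trade-off (e.g.\ in a remark) when comparing the two ensembles and when feeding the bound into the SQ lower-bound machinery of Section~\ref{subsec:SQ_decide}.
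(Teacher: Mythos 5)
Your proposal matches the paper's proof exactly: both fix a test function $\varphi$, derive per-coordinate oscillation constants $c_j = 2\delta L_1$ from the $\ell_1$-Lipschitz bound of Lemma~\ref{lem:prod_theta_Lipschitz}, apply McDiarmid's inequality (Lemma~\ref{lem:mcdiarmid}) with $\sum_j c_j^2 = 4M\delta^2 L_1^2$, and pass to the pushforward measure before taking the supremum over $\varphi$. Your closing observation about the $M$-in-the-denominator scaling (so the bound is nontrivial only when $\tau\gtrsim\sqrt{M}\,\delta L_1$) is correct and is precisely the content of the paper's Remark~\ref{rem:prod_vs_sphere}.
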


\begin{proof}
	Fix a test function $\varphi$.
	For any $j$, consider changing only the $j$-th coordinate:
	given $\theta\in\Theta_{\mathrm{prod}}$ and $a,a'\in[-\delta,\delta]$, define
	\[
	\theta^{(j,a)}:=(\theta_1,\dots,\theta_{j-1},a,\theta_{j+1},\dots,\theta_M),
	\quad
	\theta^{(j,a')}:=(\theta_1,\dots,\theta_{j-1},a',\theta_{j+1},\dots,\theta_M).
	\]
	By the $L_1$-Lipschitz property of Lemma~\ref{lem:prod_theta_Lipschitz},
	\begin{align*}
		\bigl|F_\varphi(\theta^{(j,a)})-F_\varphi(\theta^{(j,a')})\bigr|
		&\le L_1\,\|\theta^{(j,a)}-\theta^{(j,a')}\|_1 \\
		&= L_1\,|a-a'| \\
		&\le 2\delta\,L_1.
	\end{align*}
	Thus we may take $c_j:=2\delta L_1$ for all $1\le j\le M$.
	Inserting this into Lemma~\ref{lem:mcdiarmid} yields
	\begin{align*}
		\Pr_{\theta\sim\mu_{\mathrm{prod}}}\bigl(|F_\varphi(\theta)-\mathbb E_{\theta\sim\mu_{\mathrm{prod}}}F_\varphi(\theta)|\ge\tau\bigr)
		&\le 2\exp\!\left(
		-\frac{2\tau^2}{\sum_{j=1}^M c_j^2}
		\right) \\
		&\le 2\exp\!\left(
		-\frac{2\tau^2}{M\cdot (2\delta L_1)^2}
		\right) \\
		&= 2\exp\!\left(
		-\,\frac{\tau^2}{2M\delta^2 L_1^2}
		\right).
	\end{align*}
	In terms of the pushforward measure $\mu_{\mathcal L}^{\mathrm{prod}}$, this is equivalent to
	\[
	\Pr_{\mathcal L\sim\mu_{\mathcal L}^{\mathrm{prod}}}
	\Bigl(|P_{\mathcal L}[\varphi]-Q[\varphi]|\ge\tau\Bigr)
	\le 2\exp\!\left(
	-\,\frac{\tau^2}{2M\delta^2 L_1^2}
	\right),
	\]
	where $Q[\varphi]=\mathbb E_{\mathcal L}P_{\mathcal L}[\varphi]$.
	Since the right-hand side is independent of $\varphi$, taking the supremum over all $\varphi:X\to[-1,1]$ yields
	\[
	\operatorname{frac}(\mu_{\mathcal L}^{\mathrm{prod}},Q,\tau)
	\le 2\exp\!\left(-c_{\mathrm{prod}}\,
	\frac{\tau^2}{M\,\delta^2\,L_1^2}\right),
	\]
	for some absolute constant $c_{\mathrm{prod}}$ (e.g., $c_{\mathrm{prod}}=\tfrac12$).
\end{proof}

\begin{remark}[recover $\exp(-\Omega(M))$ scaling]\label{rem:prod_vs_sphere}
	
	If one wishes to recover an $\exp(-\Omega(M))$ scaling in the product-measure model, there are various ways to proceed. For example, one can strengthen the small-coupling assumption by shrinking the range of each local parameter with $M$, i.e., let
	\[
	\theta_j\in\Bigl[-\frac{\delta}{M},\frac{\delta}{M}\Bigr].
	\]
	Then
	\(
	\|\mathcal L_{\mathrm{prod}}(\theta)-\mathcal L_{\mathrm{ref}}\|_{1\to1}
	\le \sum_j|\theta_j|\|G_j\|_{1\to1}
	\le M\cdot \frac{\delta}{M}C_G
	=\delta C_G,
	\)
	so the overall scale of the perturbation remains $O(\delta)$, while the bounded-differences constants become $c_j\le 2(\delta/M)L_1$, and
	\[
	\sum_{j=1}^M c_j^2
	\le M\cdot \bigl(2\delta L_1/M\bigr)^2
	= \mathcal O\!\bigl(\delta^2L_1^2/M\bigr).
	\]
	Substituting into McDiarmid gives
	\[
	\Pr(|F_\varphi-\mathbb EF_\varphi|\ge\tau)
	\lesssim \exp\!\left(-\Omega\Bigl(M\,\frac{\tau^2}{\delta^2L_1^2}\Bigr)\right),
	\]
	which mirrors the $\exp(-\Omega(M))$ scaling in the spherical case.
	From the viewpoint of parametrization, this is equivalent to ``moving'' the overall normalization factor $\delta/\sqrt M$ (used in the spherical ensemble) into the coordinate ranges: one formulation is
	\(
	\mathcal L=\mathcal L_{\mathrm{ref}} + \frac{\delta}{\sqrt M}\sum_j \theta_j G_j,\ \|\theta\|_2=1;
	\)
	another is
	\(
	\mathcal L=\mathcal L_{\mathrm{ref}} + \sum_j \theta_j G_j,\ |\theta_j|\lesssim 1/M;
	\)
	in both cases the perturbation scale in $\mathcal L$ space is $O(\delta)$, but the metric and measure on parameter space distribute this scale differently.

\end{remark}

So, our Lipschitz and Lévy machinery provides an exponentially small upper bound for the denominator
\(
\max_\varphi\Pr[|P[\varphi]-Q[\varphi]|>\tau]
\),
 whereas for average-case SQ learning, we additionally need to control the ``far-from-$Q$'' volume in the numerator (i.e., the behaviour of $\beta-\Pr[d_{\mathrm{TV}}(P,Q)\le\epsilon+\tau]$), which will be discussed in detail in Sec.~\ref{zhang3}.

\subsection{QPStat hardness of learning Random Lindbladian channel in the average-case}\label{zhang8}

Provisionally, we now turn to the hardness of learning random Lindbladian channel ensembles.

\begin{lemma}[Lipschitz continuity of QPStat expectations in generator parameters]\label{lem:qpstat_Lipschitz}
	Using the notation of Definition~\ref{def:param_sphere_ensemble},
	let $\mathcal L(\theta)$ be a family of Lindbladians indexed by a parameter space
	$(\Theta,d,\mu_\Theta)$, and assume that:
	\begin{enumerate}
		\item Each $\mathcal L(\theta)$, $\theta\in\Theta$, is a GKSL generator, and the line-segment interpolation
		\(
		\mathcal L_\alpha
		:= (1-\alpha)\mathcal L(\theta)
		+\alpha\mathcal L(\theta')
		\)
		remains in the GKSL cone for all $\alpha\in[0,1]$
		(by Lemma~\ref{lem:convexity_GKSL_slice});
		\item There exists a constant $L_{\mathcal L}>0$, independent of the parameter dimension $M$, such that
		\begin{equation}\label{eq:L_generator_Lipschitz}
			\|\mathcal L(\theta)-\mathcal L(\theta')\|_{1\to1}
			\le L_{\mathcal L}\,\|\theta-\theta'\|_2,
			\qquad\forall\,\theta,\theta'\in\Theta .
		\end{equation}
	\end{enumerate}
	Let $E_\theta := e^{t\mathcal L(\theta)}$ be the one-step channel of the QMS at fixed time $t>0$.
	For any state $\rho$ and operator $O$ with $\|\rho\|_1\le1$ and $\|O\|_\infty\le1$, define
	\begin{equation}
		f_{\rho,O}(\theta)
		:= \operatorname{Tr}\bigl[O\,E_\theta(\rho)\bigr]
		= \operatorname{Tr}\bigl[O\,e^{t\mathcal L(\theta)}(\rho)\bigr] .
	\end{equation}
	Then there exists a constant
	\begin{equation}
		L_{\mathrm{QP}} := t\,L_{\mathcal L} ,
	\end{equation}
	such that for all $\theta,\theta'\in\Theta$,
	\begin{equation}\label{eq:f_rho_O_Lipschitz}
		\bigl|f_{\rho,O}(\theta)-f_{\rho,O}(\theta')\bigr|
		\le L_{\mathrm{QP}}\;\|\theta-\theta'\|_2 .
	\end{equation}
	In other words, $f_{\rho,O}$ is $L_{\mathrm{QP}}$-Lipschitz on $(\Theta,\|\cdot\|_2)$, and the Lipschitz constant is independent of $(\rho,O)$.
\end{lemma}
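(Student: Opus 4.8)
The plan is to follow exactly the chain of estimates used in the proof of Corollary~\ref{cor:PL_Lipschitz_L}, only now with a generic bounded observable $O$ and sub-normalized operator $\rho$ in place of the POVM-induced observable $O_\varphi$ and the fixed input state $\rho_{\mathrm{in}}$. Concretely, I would first write the difference as the action of the channel difference,
\[
f_{\rho,O}(\theta) - f_{\rho,O}(\theta')
= \operatorname{Tr}\!\bigl[O\,(E_\theta - E_{\theta'})(\rho)\bigr],
\]
and then peel off the three scalar factors in turn: matrix Hölder (Remark~\ref{rem:schatten_relations}) to pass to $\|O\|_\infty\,\|(E_\theta - E_{\theta'})(\rho)\|_1$; the definition of the induced $1\to1$ norm (Definition~\ref{def:induced_norms}) to bound $\|(E_\theta - E_{\theta'})(\rho)\|_1 \le \|E_\theta - E_{\theta'}\|_{1\to1}\,\|\rho\|_1$; and finally the hypotheses $\|O\|_\infty\le1$, $\|\rho\|_1\le1$ to drop these two scalars. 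This already yields $|f_{\rho,O}(\theta)-f_{\rho,O}(\theta')|\le\|E_\theta-E_{\theta'}\|_{1\to1}$, and the crucial point is that the right-hand side no longer depends on $(\rho,O)$ — this is precisely the source of the claimed uniformity of $L_{\mathrm{QP}}$.

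The second step is to control the channel difference in terms of the generator difference. Here hypothesis~(1) — that the straight-line homotopy $\mathcal L_\alpha=(1-\alpha)\mathcal L(\theta)+\alpha\mathcal L(\theta')$ stays inside the GKSL cone, which Lemma~\ref{lem:convexity_GKSL_slice} guarantees for the ensembles of interest — is exactly what licenses applying Lemma~\ref{lem:L_to_Lambda_Lipschitz} with $\mathcal L=\mathcal L(\theta)$ and $\mathcal L'=\mathcal L(\theta')$. That lemma gives
\[
\|E_\theta - E_{\theta'}\|_{1\to1}
= \bigl\|\Lambda_t^{(\mathcal L(\theta'))} - \Lambda_t^{(\mathcal L(\theta))}\bigr\|_{1\to1}
\le t\,\|\mathcal L(\theta') - \mathcal L(\theta)\|_{1\to1}.
\]
Combining this with hypothesis~(2), $\|\mathcal L(\theta)-\mathcal L(\theta')\|_{1\to1}\le L_{\mathcal L}\|\theta-\theta'\|_2$, and setting $L_{\mathrm{QP}}:=t\,L_{\mathcal L}$ closes the argument and establishes \eqref{eq:f_rho_O_Lipschitz}.

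Since every step is just Hölder's inequality, the definition of an induced norm, and two already-established facts, I do not anticipate any real obstacle; the only point worth stating carefully is the \emph{placement} of the hypotheses — hypothesis~(1) to invoke the perturbation-formula Lipschitz bound of Lemma~\ref{lem:L_to_Lambda_Lipschitz}, and the norm bounds on $\rho$ and $O$ to make the constant independent of the query data. One could equivalently avoid quoting Lemma~\ref{lem:L_to_Lambda_Lipschitz} and reprove the generator-to-channel estimate inline via the Duhamel/Fréchet representation of Remark~\ref{rem:duhamel_in_paper} together with $\|e^{r\mathcal L_\alpha}\|_{1\to1}\le1$ for CPTP semigroups, but invoking the lemma is cleaner and keeps the dimension-independence of $L_{\mathrm{QP}}$ manifest.
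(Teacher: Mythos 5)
Your proposal is correct and follows essentially the same route as the paper: the Hölder/induced-norm chain to reduce to $\|E_\theta-E_{\theta'}\|_{1\to1}$, then a Duhamel-type generator-to-channel bound, then hypothesis~(2). The only cosmetic difference is that the paper re-derives the estimate $\|E_\theta-E_{\theta'}\|_{1\to1}\le t\,\|\mathcal L(\theta)-\mathcal L(\theta')\|_{1\to1}$ inline from the Fréchet-derivative formula and CPTP contractivity, whereas you (rightly) observe it is already Lemma~\ref{lem:L_to_Lambda_Lipschitz} and cite it directly.
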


\begin{proof}
	The proof is almost identical to Lemma~\ref{lem:theta_Lipschitz} and Theorem~\ref{thm:frac_bound_open_system}.
	
	\emph{Step 1: From channel difference to expectation difference.}
	For any $\theta,\theta'$ we have
	\begin{align*}
		\bigl|f_{\rho,O}(\theta)-f_{\rho,O}(\theta')\bigr|
		&= \bigl|\operatorname{Tr}\bigl[O\,\bigl(E_\theta-E_{\theta'}\bigr)(\rho)\bigr]\bigr| \\
		&\le \|O\|_\infty\,
		\bigl\|\bigl(E_\theta-E_{\theta'}\bigr)(\rho)\bigr\|_1 \\
		&\le \|O\|_\infty\,
		\|E_\theta-E_{\theta'}\|_{1\to1}\,\|\rho\|_1 \\
		&\le \|E_\theta-E_{\theta'}\|_{1\to1},
	\end{align*}
	where we used the matrix Hölder inequality, the definition of the induced $1\to1$ norm, and the bounds
	$\|O\|_\infty\le1$, $\|\rho\|_1\le1$.
	
	\emph{Step 2: Bounding the channel difference via the generator difference.}
	Define along the line segment
	\(
	\mathcal L_\alpha
	:= (1-\alpha)\mathcal L(\theta)
	+\alpha\mathcal L(\theta')
	\)
	and set $E_\alpha:=e^{t\mathcal L_\alpha}$.
	By Lemma~\ref{lem:convexity_GKSL_slice}, $\mathcal L_\alpha$ is GKSL for all $\alpha\in[0,1]$, so $E_\alpha$ is CPTP and satisfies $\|E_\alpha\|_{1\to1}\le1$.
	
	By the Fréchet derivative formula (Lemma~\ref{lem:duhamel}), we have
	\begin{equation}\label{eq:Frechet_QPStat}
		\frac{d}{d\alpha}E_\alpha
		= \int_0^t
		e^{(t-s)\mathcal L_\alpha}\,
		\bigl(\mathcal L(\theta')-\mathcal L(\theta)\bigr)\,
		e^{s\mathcal L_\alpha}\,ds .
	\end{equation}
	Taking the induced $1\to1$ norm and using CPTP contractivity yields
	\begin{align*}
		\Bigl\|\frac{d}{d\alpha}E_\alpha\Bigr\|_{1\to1}
		&\le \int_0^t
		\bigl\|e^{(t-s)\mathcal L_\alpha}\bigr\|_{1\to1}\,
		\bigl\|\mathcal L(\theta')-\mathcal L(\theta)\bigr\|_{1\to1}\,
		\bigl\|e^{s\mathcal L_\alpha}\bigr\|_{1\to1}\,ds \\
		&\le \int_0^t
		\bigl\|\mathcal L(\theta')-\mathcal L(\theta)\bigr\|_{1\to1}\,ds \\
		&= t\,\bigl\|\mathcal L(\theta')-\mathcal L(\theta)\bigr\|_{1\to1} .
	\end{align*}
	
	\emph{Step 3: Integrating along the line segment in parameter space.}
	By the fundamental theorem of calculus,
	\begin{equation*}
		E_\theta-E_{\theta'}
		= E_1-E_0
		= \int_0^1 \frac{d}{d\alpha}E_\alpha\,d\alpha ,
	\end{equation*}
	and hence
	\begin{align*}
		\|E_\theta-E_{\theta'}\|_{1\to1}
		&\le \int_0^1
		\Bigl\|\frac{d}{d\alpha}E_\alpha\Bigr\|_{1\to1}\,d\alpha \\
		&\le \int_0^1 t\,\bigl\|\mathcal L(\theta')-\mathcal L(\theta)\bigr\|_{1\to1}\,d\alpha \\
		&= t\,\bigl\|\mathcal L(\theta')-\mathcal L(\theta)\bigr\|_{1\to1}.
	\end{align*}
	Combining this with the generator Lipschitz condition~\eqref{eq:L_generator_Lipschitz} gives
	\begin{equation*}
		\|E_\theta-E_{\theta'}\|_{1\to1}
		\le t\,L_{\mathcal L}\,\|\theta-\theta'\|_2 .
	\end{equation*}
	Substituting this bound into the estimate from Step~1 yields
	\[
	\bigl|f_{\rho,O}(\theta)-f_{\rho,O}(\theta')\bigr|
	\le t\,L_{\mathcal L}\,\|\theta-\theta'\|_2,
	\]
	which is exactly~\eqref{eq:f_rho_O_Lipschitz} with $L_{\mathrm{QP}}:=tL_{\mathcal L}$. This Lipschitz constant is uniform over all $\rho$ and $O$ satisfying $\|\rho\|_1\le1$, $\|O\|_\infty\le1$.
\end{proof}

\begin{remark}
	In the concrete random local Lindbladian ensemble of Definition~\ref{def:random_local_L}, combining \eqref{eq:Gj_norm_bound} and \eqref{eq:L_theta_def} gives
	\(
	L_{\mathcal L}\lesssim \delta C_G
	\),
	and hence $L_{\mathrm{QP}}=tL_{\mathcal L}$ is of the same order as $L_0=t\delta C_G$ from Lemma~\ref{lem:F_phi_Lipschitz}.
	This shows that the classical SQ and QPStat access models are naturally compatible at the level of the underlying Lipschitz structure.
\end{remark}

Next we rewrite the Lévy/McDiarmid-type concentration results proved earlier on parameter space into a form that acts directly on QPStat expectations.

\begin{lemma}[Parameter-space concentration for QPStat deviations]\label{lem:qpstat_concentration}
	Let $(\Theta,d,\mu_\Theta)$ be one of the random Lindbladian parameter ensembles considered in this work, with parameter dimension $M$, and assume that it satisfies the concentration statements established for the spherical ensemble (Theorem~\ref{thm:frac_bound_open_system}) and for the product--measure ensemble (Theorem~\ref{thm:frac_bound_open_system_prod} and Remark~\ref{rem:prod_vs_sphere}):
	namely, there exist constants $c_{\mathrm{par}}>0$ and $L>0$ such that for any measurable $L$-Lipschitz function
	$F:\Theta\to\mathbb R$ and any $\tau>0$,
	\begin{equation}\label{eq:param_concentration}
		\Pr_{\theta\sim\mu_\Theta}
		\bigl(|F(\theta)-\mathbb E_\Theta F|\ge\tau\bigr)
		\le 2\exp\Bigl(-c_{\mathrm{par}}\frac{M\tau^2}{L^2}\Bigr) .
	\end{equation}
	
	Let $E_\theta:=e^{t\mathcal L(\theta)}$, and define the average channel
	\[
	\bar E := \mathbb E_{\theta\sim\mu_\Theta} E_\theta .
	\]
	For any pair $(\rho,O)$ with $\|\rho\|_1\le1$ and $\|O\|_\infty\le1$, set
	\[
	F_{\rho,O}(\theta)
	:= \operatorname{Tr}\bigl[O\,E_\theta(\rho)\bigr] .
	\]
	If $\mathcal L(\theta)$ satisfies the generator Lipschitz condition of Lemma~\ref{lem:qpstat_Lipschitz}, then $F_{\rho,O}$ is $L_{\mathrm{QP}}$-Lipschitz, and for all $\tau>0$,
	\begin{equation}\label{eq:qpstat_concentration}
		\Pr_{\theta\sim\mu_\Theta}
		\Bigl(
		\bigl|\operatorname{Tr}[O E_\theta(\rho)]
		-\operatorname{Tr}[O\bar E(\rho)]\bigr| \geq \tau
		\Bigr)
		\le 2\exp\Bigl(-c_{\mathrm{par}}\frac{M\tau^2}{L_{\mathrm{QP}}^2}\Bigr) ,
	\end{equation}
	where $L_{\mathrm{QP}}=tL_{\mathcal L}$.
\end{lemma}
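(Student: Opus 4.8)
The plan is to derive \eqref{eq:qpstat_concentration} as a direct instance of the abstract parameter-space concentration hypothesis \eqref{eq:param_concentration}, once two elementary facts are in place: that $F_{\rho,O}$ is $L_{\mathrm{QP}}$-Lipschitz with a constant that does not depend on the pair $(\rho,O)$, and that its mean under $\mu_\Theta$ equals $\operatorname{Tr}[O\bar E(\rho)]$. Nothing deeper than bookkeeping is required; the heavy lifting has already been done in Lemma~\ref{lem:qpstat_Lipschitz} and in the concentration hypothesis that the present lemma imports.

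First I would record the Lipschitz input. Fix any $(\rho,O)$ with $\|\rho\|_1\le 1$ and $\|O\|_\infty\le 1$. By hypothesis the family $\mathcal L(\theta)$ satisfies the generator Lipschitz bound \eqref{eq:L_generator_Lipschitz}, and by Lemma~\ref{lem:convexity_GKSL_slice} every $\mathcal L(\theta)$ as well as each line segment $\mathcal L_\alpha=(1-\alpha)\mathcal L(\theta)+\alpha\mathcal L(\theta')$ lies in the GKSL cone; hence the hypotheses of Lemma~\ref{lem:qpstat_Lipschitz} are met, and that lemma gives that $F_{\rho,O}(\theta)=\operatorname{Tr}[O\,E_\theta(\rho)]$ is $L_{\mathrm{QP}}$-Lipschitz on $(\Theta,\|\cdot\|_2)$ with $L_{\mathrm{QP}}=tL_{\mathcal L}$. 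Crucially, the same constant works for every admissible $(\rho,O)$, so the bound we end up with will be uniform over the pairs entering the supremum.

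Next I would identify the mean. Since $\mathcal H$ is finite-dimensional, $\theta\mapsto E_\theta$ takes values in a fixed finite-dimensional vector space with $\|E_\theta\|_{1\to1}\le 1$, so $\bar E:=\mathbb E_{\theta\sim\mu_\Theta}E_\theta$ is well defined, and since $\operatorname{Tr}[O\,(\cdot)(\rho)]$ is a bounded linear functional of the superoperator (with integrand bounded in modulus by $\|O\|_\infty\|\rho\|_1\le 1$), dominated convergence / Fubini gives
\[
\mathbb E_{\theta\sim\mu_\Theta}F_{\rho,O}(\theta)
=\operatorname{Tr}\!\bigl[O\,\bigl(\mathbb E_{\theta}E_\theta\bigr)(\rho)\bigr]
=\operatorname{Tr}\bigl[O\,\bar E(\rho)\bigr].
\]
Applying \eqref{eq:param_concentration} with $F=F_{\rho,O}$ and $L=L_{\mathrm{QP}}$ — the hypothesis is stated for every $L$-Lipschitz function, so no extra work is needed — then yields
\[
\Pr_{\theta\sim\mu_\Theta}\bigl(\,|F_{\rho,O}(\theta)-\operatorname{Tr}[O\bar E(\rho)]|\ge\tau\,\bigr)
\le 2\exp\!\Bigl(-c_{\mathrm{par}}\,\frac{M\tau^2}{L_{\mathrm{QP}}^2}\Bigr),
\]
which is exactly \eqref{eq:qpstat_concentration}; since the right-hand side is independent of $(\rho,O)$ it also bounds the supremum over all such pairs.

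The only point requiring care — rather than a genuine obstacle — is the provenance of the factor $M$ in the exponent across the two ensemble types. For the spherical ensemble the $\delta/\sqrt M$ normalization is precisely what keeps $L_{\mathcal L}$ (hence $L_{\mathrm{QP}}$) independent of $M$, so the $M$ comes entirely from the Lévy dimension factor in \eqref{eq:param_concentration}; for the product-measure ensemble one must instead invoke the $M$-scaled coordinate ranges $[-\delta/M,\delta/M]$ of Remark~\ref{rem:prod_vs_sphere}, so that the McDiarmid bounded-differences constants satisfy $\sum_j c_j^2=O(\delta^2 L_1^2/M)$ and \eqref{eq:param_concentration} again holds in the displayed $\exp(-\Omega(M\tau^2/L^2))$ form. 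Measurability and interchange-of-limits are immaterial here since everything lives in a fixed finite-dimensional space with integrands uniformly bounded by $1$.
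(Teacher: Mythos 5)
Your proposal matches the paper's proof: both verify that $F_{\rho,O}$ is $L_{\mathrm{QP}}$-Lipschitz via Lemma~\ref{lem:qpstat_Lipschitz}, identify $\mathbb E_\Theta F_{\rho,O}=\operatorname{Tr}[O\bar E(\rho)]$ by linearity, and then substitute into \eqref{eq:param_concentration} with $L=L_{\mathrm{QP}}$. The extra remarks on uniformity over $(\rho,O)$ and on the provenance of the $M$ factor are sound but add nothing beyond what the cited theorems already encode.
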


\begin{proof}
	By Lemma~\ref{lem:qpstat_Lipschitz}, $F_{\rho,O}$ is $L_{\mathrm{QP}}$-Lipschitz on $(\Theta,\|\cdot\|_2)$, with $L_{\mathrm{QP}}=tL_{\mathcal L}$.
	On the other hand,
	\begin{align*}
		\mathbb E_{\theta\sim\mu_\Theta}F_{\rho,O}(\theta)
		&= \mathbb E_{\theta\sim\mu_\Theta}\operatorname{Tr}[O E_\theta(\rho)] \\
		&= \operatorname{Tr}\bigl[O\,(\mathbb E_{\theta\sim\mu_\Theta} E_\theta)(\rho)\bigr] \\
		&= \operatorname{Tr}[O\bar E(\rho)] .
	\end{align*}
	Thus $F:=F_{\rho,O}$ has Lipschitz constant $L_{\mathrm{QP}}$ and mean
	$\mathbb E_\Theta F = \operatorname{Tr}[O\bar E(\rho)]$.
	Applying~\eqref{eq:param_concentration} with $L:=L_{\mathrm{QP}}$ immediately gives~\eqref{eq:qpstat_concentration}.
\end{proof}

\begin{definition}[Random Lindbladian channel ensembles and the QPStat model]\label{def:lindblad_qpstat_ensemble}
	Let $(\Theta,d,\mu_\Theta)$ be one of the Lindbladian parameter spaces defined earlier
	(for example, the spherical random-perturbation ensemble or the independent local-coupling product--measure ensemble in Definition~\ref{def:random_local_L}), with parameter dimension $M$.
	
	For each parameter $\theta\in\Theta$, let
	\[
	\mathcal L(\theta):\ \mathcal B(\mathcal H)\to \mathcal B(\mathcal H)
	\]
	be a Lindbladian generator.
	Let the corresponding quantum Markov semigroup be $\Lambda_t^{(\theta)}:=e^{t\mathcal L(\theta)}$, assumed CPTP for all $t\ge0$.
	At a fixed evolution time $t>0$, we obtain the CPTP channel ensemble
	\[
	E_\theta := \Lambda_t^{(\theta)} = e^{t\mathcal L(\theta)},
	\qquad \theta\sim\mu_\Theta ,
	\]
	and the average channel
	\[
	\bar E := \mathbb E_{\theta\sim\mu_\Theta} E_\theta .
	\]
	
	For each channel $E$, the QPStat oracle $QPStat_E$ is defined as in Definition~\ref{def:QPStat_oracle}.
	A QPStat learning algorithm $\mathcal A$ interacting with an unknown parameter $\theta\sim\mu_\Theta$ and its associated channel $E_\theta$ may adaptively call this oracle at most $q$ times; we say it uses $q$ QPStat queries with accuracy~$\tau$.
\end{definition}

Note that under the linear parametrization~\eqref{eq:L_theta_def} in Definition~\ref{def:random_local_L},
Lemma~\ref{lem:convexity_GKSL_slice} ensures that $\mathcal L(\theta)$ forms a GKSL slice, while Lemma~\ref{lem:F_phi_Lipschitz} and Lemma~\ref{lem:qpstat_Lipschitz} give a uniform Lipschitz constant:
\[
\|\mathcal L(\theta)-\mathcal L(\theta')\|_{1\to1}
\le \delta C_G\,\|\theta-\theta'\|_2,
\qquad
L_{\mathrm{QP}} = t\delta C_G .
\]
Thus, for these concrete random Lindbladian ensembles, the assumptions of Lemma~\ref{lem:qpstat_concentration} are automatically satisfied.

\begin{theorem}[Exponential average-case QPStat lower bound for random Lindbladian channels]\label{thm:lindblad_qpstat_lower_bound}
	Let $(\Theta,d,\mu_\Theta)$ be one of the random Lindbladian parameter ensembles considered in this work, with parameter dimension $M$, and let $\mathcal L(\theta)$, $E_\theta$ and the average channel $\bar E$ be defined as in Definition~\ref{def:lindblad_qpstat_ensemble}.
	
	Assume that:
	\begin{enumerate}
		\item[\emph{(i)}] (Generator Lipschitz property)
		There exists a constant $L_{\mathcal L}>0$, independent of $M$, such that
		\[
		\|\mathcal L(\theta)-\mathcal L(\theta')\|_{1\to1}
		\le L_{\mathcal L}\,\|\theta-\theta'\|_2,
		\qquad\forall\,\theta,\theta'\in\Theta ,
		\]
		and the line-segment interpolation $\mathcal L_\alpha$ remains in the GKSL cone
		(as guaranteed by Lemma~\ref{lem:convexity_GKSL_slice});
		\item[\emph{(ii)}] (Concentration on parameter space)
		The parameter space $(\Theta,\mu_\Theta)$ satisfies the concentration inequality
		\eqref{eq:qpstat_concentration} in Lemma~\ref{lem:qpstat_concentration}, i.e., for any
		$\|\rho\|_1\le1$, $\|O\|_\infty\le1$ and any $\tau>0$,
		\[
		\Pr_{\theta\sim\mu_\Theta}
		\Bigl(
		\bigl|\operatorname{Tr}[O E_\theta(\rho)]
		-\operatorname{Tr}[O\bar E(\rho)]\bigr| \geq \tau
		\Bigr)
		\le 2\exp\Bigl(-c_{\mathrm{par}}\frac{M\tau^2}{L_{\mathrm{QP}}^2}\Bigr) ,
		\]
		where $L_{\mathrm{QP}}=tL_{\mathcal L}$ is independent of $M$.
	\end{enumerate}
	
	Consider any QPStat learning algorithm $\mathcal A$ for the random channels $E_\theta$ with $\theta\sim\mu_\Theta$, such that:
	\begin{itemize}
		\item[\emph{i.}] $\mathcal A$ uses at most $q$ QPStat queries with accuracy~$\tau$;
		\item[\emph{ii.}] There exist constants $\alpha>1/2$ and $\beta>0$ such that, for
		\[
		\Theta_{\mathrm{good}}
		:=\Bigl\{\theta\in\Theta:\ 
		\Pr_{\text{internal randomness}}\bigl[d_\diamond(\widehat E,E_\theta)\le\varepsilon\bigr]
		\ge\alpha
		\Bigr\},
		\]
		one has $\mu_\Theta(\Theta_{\mathrm{good}})\ge\beta$.
	\end{itemize}
	Then there exists a constant $c>0$, independent of $M$, such that for all sufficiently large $M$ (and in particular in the worst-case setting where one may take $\beta=1$),
	\begin{equation}\label{eq:q_exp_lower_bound}
		q
		\ \ge\ c\,(2\alpha-1)\,\beta\,
		\exp\bigl(c\,M\bigr) .
	\end{equation}
	In particular, $q$ must grow at least exponentially in the parameter dimension $M$, that is,
	\begin{equation}
		q \ge \exp\bigl(\Omega(M)\bigr) .
	\end{equation}
	In other words, under the QPStat access model, the random Lindbladian channel ensemble is exponentially hard to learn on average.
\end{theorem}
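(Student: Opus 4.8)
The plan is to obtain the bound as a direct assembly of the QPStat many-vs-one template (Lemma~\ref{lem:qpstat_many_vs_one}) and the parameter-space concentration estimate (Lemma~\ref{lem:qpstat_concentration}). Take the reference channel to be $\bar E=\mathbb E_{\theta\sim\mu_\Theta}E_\theta$, which is CPTP since it is a convex combination of the CPTP maps $E_\theta=e^{t\mathcal L(\theta)}$. An algorithm $\mathcal A$ as in the statement is precisely an algorithm that QPStat-learns the ensemble $(\{E_\theta\},\mu_\Theta)$ with parameters $(\varepsilon,\tau,\alpha,\beta)$ in the sense introduced before Lemma~\ref{lem:qpstat_many_vs_one}: hypothesis (ii) says that the good set $\Theta_{\mathrm{good}}$ (equivalently the set $\mathcal G$ of channels on which $\mathcal A$ succeeds with internal probability at least $\alpha$) has $\mu_\Theta$-measure at least $\beta$. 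We stay in the standing regime $\tau\le\varepsilon$. Lemma~\ref{lem:qpstat_many_vs_one} then yields
\[
q+1 \ \ge\ \frac{(2\alpha-1)\,\beta}{\Gamma},\qquad
\Gamma:=\sup_{\|\rho\|_1\le1,\ \|O\|_\infty\le1}\Pr_{\theta\sim\mu_\Theta}\bigl(\bigl|\operatorname{Tr}[O E_\theta(\rho)]-\operatorname{Tr}[O\bar E(\rho)]\bigr|>\tau\bigr),
\]
so everything reduces to an exponentially small upper bound on the distinguishing fraction $\Gamma$.

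Second, I would bound $\Gamma$ using Lemma~\ref{lem:qpstat_concentration}. By hypothesis (i) the map $\theta\mapsto\mathcal L(\theta)$ is $L_{\mathcal L}$-Lipschitz in $\|\cdot\|_{1\to1}$ with $L_{\mathcal L}$ independent of $M$, and by Lemma~\ref{lem:convexity_GKSL_slice} the line segments stay inside the GKSL cone; these are exactly the hypotheses of Lemma~\ref{lem:qpstat_Lipschitz}, so for every admissible pair $(\rho,O)$ the function $F_{\rho,O}(\theta)=\operatorname{Tr}[O E_\theta(\rho)]$ is $L_{\mathrm{QP}}$-Lipschitz on $(\Theta,\|\cdot\|_2)$ with $L_{\mathrm{QP}}=tL_{\mathcal L}$, and its $\mu_\Theta$-mean equals $\operatorname{Tr}[O\bar E(\rho)]$. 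Plugging this into the Lévy/McDiarmid concentration inequality~\eqref{eq:param_concentration}, valid with an $M$-independent constant $c_{\mathrm{par}}$, gives for each $(\rho,O)$ the bound $\Pr_{\theta\sim\mu_\Theta}(|\operatorname{Tr}[O E_\theta(\rho)]-\operatorname{Tr}[O\bar E(\rho)]|>\tau)\le 2\exp(-c_{\mathrm{par}}M\tau^2/L_{\mathrm{QP}}^2)$. Since the right-hand side does not depend on $(\rho,O)$, the supremum is controlled by the same quantity, i.e. $\Gamma\le 2\exp(-c_{\mathrm{par}}M\tau^2/L_{\mathrm{QP}}^2)$.

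Third, I would combine the two estimates. As $\tau$ is a fixed accuracy and $L_{\mathrm{QP}}=tL_{\mathcal L}$ is independent of $M$, the exponent $c_{\mathrm{par}}M\tau^2/L_{\mathrm{QP}}^2$ is $\Omega(M)$. Substituting into $q+1\ge(2\alpha-1)\beta/\Gamma$ gives $q+1\ge\tfrac12(2\alpha-1)\beta\exp(c_{\mathrm{par}}M\tau^2/L_{\mathrm{QP}}^2)$; for $M$ large enough the additive $1$ is negligible, and absorbing the numerical prefactor into a single constant $c>0$ (depending only on $t$, $L_{\mathcal L}$, $\tau$, $c_{\mathrm{par}}$, none of which involve $M$) yields $q\ge c(2\alpha-1)\beta\exp(cM)=(\alpha-\tfrac12)\beta\exp(\Omega(M))$. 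Setting $\beta=1$ reproduces the worst-case (uniform-learning) statement, and identifying $L_*=L_{\mathrm{QP}}$ recovers the form quoted in Theorem~\ref{thm:intro_QPStat}.

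The part that most deserves care — rather than a genuine obstacle — is checking that hypotheses (i)–(ii) genuinely hold for the concrete ensembles of Definition~\ref{def:random_local_L_recall} \emph{with an $M$-independent $L_{\mathrm{QP}}$}; this is exactly where the $\delta/\sqrt M$ normalisation (spherical case), or the $1/M$ coordinate scaling (product case), is used, via the estimate $\|\mathcal L(\theta)-\mathcal L(\theta')\|_{1\to1}\le\delta C_G\|\theta-\theta'\|_2$ recorded after Definition~\ref{def:lindblad_qpstat_ensemble}. Without this normalisation the Lipschitz constant would grow with $M$ and the exponent would degrade. A secondary point is that Lemma~\ref{lem:qpstat_many_vs_one} controls the good set in diamond distance whereas our concentration bound only constrains single $(\rho,O)$ expectation-value deviations; no conversion is needed, since the diamond norm enters only through the definition of $\mathcal G$ and the template's denominator is already expressed in terms of such expectation values. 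Finally, the supremum over the continuum of pairs $(\rho,O)$ is harmless precisely because the concentration estimate is uniform in $(\rho,O)$, so no covering-net or union-bound argument is required.
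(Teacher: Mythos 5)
Your proof is correct and follows exactly the paper's own route: invoke the many-vs-one template (Lemma~\ref{lem:qpstat_many_vs_one}) with reference channel $\bar E$, bound the denominator $\Gamma$ uniformly over $(\rho,O)$ via the Lipschitz-plus-concentration estimate of Lemma~\ref{lem:qpstat_concentration}, and absorb the additive $1$ and numerical prefactors into a single $M$-independent constant. The closing remarks about the $\delta/\sqrt M$ normalisation and the role of the diamond norm are accurate but not logically needed for the theorem as stated (they are hypotheses, not conclusions, at this level).
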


\begin{proof}
	By Lemma~\ref{lem:qpstat_concentration}, for any pair $(\rho,O)$ with $\|\rho\|_1\le1$ and $\|O\|_\infty\le1$ we have
	\begin{equation}\label{eq:Gamma_bound_step}
		\Pr_{\theta\sim\mu_\Theta}
		\Bigl(
		\bigl|\operatorname{Tr}[O E_\theta(\rho)]
		-\operatorname{Tr}[O\bar E(\rho)]\bigr| \geq \tau
		\Bigr)
		\le 2\exp\Bigl(-c_{\mathrm{par}}\frac{M\tau^2}{L_{\mathrm{QP}}^2}\Bigr) .
	\end{equation}
	Hence
	\begin{equation}\label{eq:Gamma_bound}
		\Gamma
		:=\max_{\rho,O}
		\Pr_{\theta\sim\mu_\Theta}
		\Bigl(
		\bigl|\operatorname{Tr}[O E_\theta(\rho)]
		-\operatorname{Tr}[O\bar E(\rho)]\bigr| \geq \tau
		\Bigr)
		\le 2\exp\Bigl(-c_{\mathrm{par}}\frac{M\tau^2}{L_{\mathrm{QP}}^2}\Bigr) ,
	\end{equation}
	where the maximum is over all $\|\rho\|_1\le1$, $\|O\|_\infty\le1$.
	
	On the other hand, viewing the channel ensemble $\{E_\theta\}$ with measure $\mu_\Theta$ as $(\mathcal E,\mu)$ in Lemma~\ref{lem:qpstat_many_vs_one}, and taking the reference channel to be $\bar E$, we obtain
	\begin{equation}\label{eq:qpstat_general_lb}
		q+1
		\ge \frac{(2\alpha-1)\,\beta}{\Gamma} .
	\end{equation}
	Combining~\eqref{eq:Gamma_bound} and~\eqref{eq:qpstat_general_lb}, we arrive at
	\begin{align*}
		q+1
		&\ge \frac{(2\alpha-1)\,\beta}{\Gamma} \\
		&\ge \frac{(2\alpha-1)\,\beta}{2}
		\exp\Bigl(\frac{c_{\mathrm{par}} M\tau^2}{L_{\mathrm{QP}}^2}\Bigr) .
	\end{align*}
	Define
	\begin{equation}\label{eq:def_c_constant}
		c := \frac{c_{\mathrm{par}}\tau^2}{2L_{\mathrm{QP}}^2}>0 .
	\end{equation}
	Then
	\begin{equation}\label{eq:q_plus_one_exp}
		q+1
		\;\ge\; \frac{(2\alpha-1)\,\beta}{2}\,\exp(2cM) .
	\end{equation}
	For sufficiently large $M$, the additive constant $1$ can be absorbed into the exponential term: there exists $M_0$ such that for all $M\ge M_0$,
	\[
	q
	\ge \frac{(2\alpha-1)\,\beta}{4}\,\exp(2cM) .
	\]
	Absorbing fixed multiplicative constants into $c$ (and adjusting $M_0$ if necessary), this implies the existence of a constant $c>0$, independent of $M$, such that
	\[
	q
	\ \ge\ c\,(2\alpha-1)\,\beta\,\exp\bigl(cM\bigr),
	\]
	which is exactly the claimed lower bound~\eqref{eq:q_exp_lower_bound}. In particular,
	\[
	q \ge \exp\bigl(\Omega(M)\bigr),
	\]
	i.e., the number of QPStat queries must grow at least exponentially in the parameter dimension~$M$. This completes the proof.
\end{proof}

\begin{remark}[Physical interpretation]
	Compared with our SQ-hardness results for output distributions, Theorem~\ref{thm:lindblad_qpstat_lower_bound} shows that even when the learner is granted the more powerful QPStat access (allowing arbitrary input states and measurements of arbitrary bounded observables), typical Lindbladian channels remain exponentially hard to learn in diamond distance on average, as long as the parameter dimension $M$ is large enough.
	
	From a physical point of view, Theorem~\ref{thm:lindblad_qpstat_lower_bound} says that, for a high-dimensional and physically natural random Lindbladian ensemble, even if the learner can probe the channel globally and in an entangled fashion in each QPStat query, the parameter rigidity and statistical concentration of typical Lindbladian noise still force any procedure that aims to reconstruct the microscopic Lindbladian structure (in the sense of small diamond distance) to use exponentially many statistical queries on average.
	Together with our earlier distribution-level SQ-hardness results, this provides a coherent picture: in the open-system setting, both learning the induced output distributions and learning the underlying channels themselves are generically exponentially hard in the QPStat/SQ framework.
\end{remark}

\section{Physically natural spherical random-perturbation Lindbladian ensembles}\label{zhang44}

In this subsection we present a physically natural model of random Lindbladian ensembles, and, within this model, rigorously realize the abstract hypothesis of ``concentration for random Lindbladian ensembles'' put forward earlier. N amely, on a high-dimensional parameter sphere, for all SQ test functions $\varphi$ the map
\[
\mathcal L \mapsto P_{\mathcal L}[\varphi]
\]
is uniformly Lipschitz, hence satisfies a L\'evy-type concentration inequality on the parameter space, which in turn implies exponential decay of $\operatorname{frac}(\mu_{\mathcal L},Q,\tau)$ in the SQ framework. Based on this model, we also discuss how the parameter dimension $M$ depends on the number of system qubits. Figure~\ref{fig:random_local_L} shows a one-dimensional spin chain with local dephasing and amplitude-damping jump operators on each site and two-body dissipative channels on each bond, which together generate the directions ${G_j}$ used in our spherical embedding.

\begin{figure}[t]
	\centering
	\includegraphics[width=0.7\textwidth]{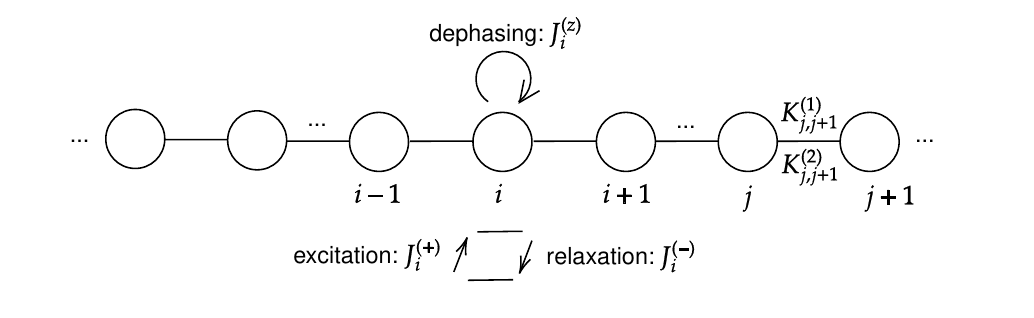}
	\caption{Random local Lindbladian on a one-dimensional spin chain.
		Each circle represents a qubit on the chain. On site $i$, local
		dephasing is described by the jump operator $J_i^{(z)}$, indicated
		by the loop arrow. Excitation and relaxation processes are
		implemented by the amplitude-raising and amplitude-lowering jump
		operators $J_i^{(+)}$ and $J_i^{(-)}$, respectively. On each bond
		$(j,j+1)$ we include two-body dissipative channels
		$K_{j,j+1}^{(1)}$ and $K_{j,j+1}^{(2)}$, representing, for
		example, incoherent spin-exchange and correlated dephasing. In our
		construction, the corresponding dissipators are collected into a
		set of local directions $\{G_j\}$ that enter the spherical
		random-perturbation parametrization
		$\mathcal L(\theta)=\mathcal L_{\mathrm{ref}}
		+ \frac{\delta}{\sqrt{M}}\sum_j \theta_j G_j$.}
	\label{fig:random_local_L}
\end{figure}

\begin{definition}[Random local Lindbladian ensemble]\label{def:random_local_L}
	Consider a one-dimensional chain of $N$ qubits with Hilbert space
	$\mathcal H := (\mathbb C^2)^{\otimes N}$ of dimension $d=2^N$.
	Let $\mathcal B(\mathcal H)$ be the operator algebra on the system.
	
	\begin{enumerate}
		\item 
		Let $\sigma\in\mathcal S_d^+$ be a fixed full-rank steady state (e.g., the Gibbs state of some local Hamiltonian $H$), and let $\mathcal L_{\mathrm{ref}}:\mathcal B(\mathcal H)\to\mathcal B(\mathcal H)$ be a given GKSL generator whose unique fixed point is $\sigma$:
		\[
		\mathcal L_{\mathrm{ref}}(\sigma) = 0,\qquad
		\Lambda_t^{(\mathrm{ref})} := e^{t\mathcal L_{\mathrm{ref}}}
		\text{ is a CPTP QMS}.
		\]
		
		\item 
		On each lattice site $i=1,\dots,N$, choose some local Lindblad jump operators, for example
		\[
		J_i^{(z)} := \sqrt{\gamma_z}\,\sigma_i^z,\quad
		J_i^{(+)} := \sqrt{\gamma_+}\,\sigma_i^+,\quad
		J_i^{(-)} := \sqrt{\gamma_-}\,\sigma_i^-,
		\]
		and on each nearest-neighbour pair $(i,i+1)$ choose some two-body jump operators, for example
		\[
		K_{i,i+1}^{(1)} := \sqrt{\kappa_1}\,\sigma_i^+\sigma_{i+1}^-,
		\qquad
		K_{i,i+1}^{(2)} := \sqrt{\kappa_2}\,\sigma_i^z\sigma_{i+1}^z.
		\]
		For each jump operator $L_\alpha$ (where the index $\alpha$ encodes position and type), define the dissipator
		\[
		\mathcal D_\alpha(\rho)
		:= L_\alpha\rho L_\alpha^\dagger
		-\frac12\{L_\alpha^\dagger L_\alpha,\rho\}.
		\]
		Each $\mathcal D_\alpha$ by itself is a GKSL generator (with the Hamiltonian part ignored).
		
		Now enumerate all these local dissipators as
		\(
		G_1,\dots,G_M
		\), i.e.,
		\[
		\{G_j\}_{j=1}^M
		=\{\mathcal D_\alpha\}_{\alpha\in\mathcal I},
		\]
		where the index set $\mathcal I$ has cardinality $M$ growing with the system size $N$ (e.g., $M=O(N)$ or $O(N^2)$).
		
		\item 
		Assume there exists a constant $C_G>0$ (independent of the parameter dimension $M$) such that
		\begin{equation}\label{eq:Gj_norm_bound}
			\|G_j\|_{1\to1} \le C_G,\qquad \forall j=1,\dots,M.
		\end{equation}
		
		\item 
		Take the parameter space to be the real sphere
		\[
		S^{M-1} := \{\theta\in\mathbb R^M:\ \|\theta\|_2=1\},
		\]
		choose a sufficiently small $\delta>0$ and define the linear parametrization
		\begin{equation}\label{eq:L_theta_def}
			\mathcal L(\theta)
			:=
			\mathcal L_{\mathrm{ref}}
			+ \frac{\delta}{\sqrt M}
			\sum_{j=1}^M \theta_j G_j, \qquad \theta\in S^{M-1}.
		\end{equation}
		
		\item 
		On $S^{M-1}$, take the uniform spherical measure $\mu_{\mathrm{sph}}$ and push it forward through the map $\theta\mapsto\mathcal L(\theta)$ to obtain
		\[
		\mu_{\mathcal L} := (\mathcal L)_\#\mu_{\mathrm{sph}},
		\]
		which is a random local Lindbladian ensemble defined on the space of Lindbladians.
		We denote this family by
		\[
		\mathcal F := \{\mathcal L(\theta):\ \theta\in S^{M-1}\}.
		\]
	\end{enumerate}
\end{definition}

\begin{lemma}[Uniform Lipschitz bound for SQ expectations]\label{lem:F_phi_Lipschitz}
	Maintain the setup above.
	For a fixed evolution time $t>0$, input state $\rho_{\mathrm{in}}\in\mathcal S_d^+$,
	and a POVM $\{M_x\}_{x\in X}$, define for any Lindbladian $\mathcal L$ the measurement distribution
	\[
	P_{\mathcal L}(x)
	:= \operatorname{Tr}\Bigl(M_x\,\Lambda_t^{(\mathcal L)}(\rho_{\mathrm{in}})\Bigr),
	\qquad x\in X,
	\]
	and for any test function $\varphi:X\to[-1,1]$,
	\[
	P_{\mathcal L}[\varphi]
	:= \sum_{x\in X}\varphi(x)P_{\mathcal L}(x)
	= \operatorname{Tr}\Bigl(O_\varphi\,\Lambda_t^{(\mathcal L)}(\rho_{\mathrm{in}})\Bigr),
	\quad
	O_\varphi := \sum_{x\in X}\varphi(x)M_x.
	\]
	
	Let $C_G$ be as in~\eqref{eq:Gj_norm_bound}, let $\mathcal L(\theta)$ for $\theta\in S^{M-1}$ be defined by~\eqref{eq:L_theta_def}, and set
	\[
	F_\varphi(\theta) := P_{\mathcal L(\theta)}[\varphi].
	\]
	Then there exists a constant
	\[
	L_0 := t\,\delta\,C_G,
	\]
	such that for any $\varphi:X\to[-1,1]$ and any $\theta,\theta'\in S^{M-1}$,
	\begin{equation}\label{eq:F_phi_Lipschitz}
		|F_\varphi(\theta)-F_\varphi(\theta')|
		\le L_0\,\|\theta-\theta'\|_2.
	\end{equation}
	In other words, $F_\varphi$ is Lipschitz on the parameter sphere, with a Lipschitz constant independent of $\varphi$.
\end{lemma}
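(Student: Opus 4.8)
The plan is to reduce everything to the generator-level Lipschitz estimate already available from Corollary~\ref{cor:PL_Lipschitz_L}, so that the only genuinely new work is to bound $\|\mathcal L(\theta)-\mathcal L(\theta')\|_{1\to1}$ in terms of $\|\theta-\theta'\|_2$ with a constant that does not grow with $M$. First I would write the difference of the two generators directly from the parametrization~\eqref{eq:L_theta_def}, namely $\mathcal L(\theta')-\mathcal L(\theta)=\frac{\delta}{\sqrt M}\sum_{j=1}^M(\theta'_j-\theta_j)\,G_j$, and take the induced $1\to1$ norm. Applying the triangle inequality together with the uniform bound~\eqref{eq:Gj_norm_bound} gives $\|\mathcal L(\theta')-\mathcal L(\theta)\|_{1\to1}\le\frac{\delta C_G}{\sqrt M}\sum_{j=1}^M|\theta'_j-\theta_j|$, and then the Cauchy-Schwarz inequality $\sum_{j=1}^M|a_j|\le\sqrt M\,\|a\|_2$ absorbs the $\sqrt M$ in the denominator, yielding the clean bound $\|\mathcal L(\theta')-\mathcal L(\theta)\|_{1\to1}\le\delta C_G\,\|\theta'-\theta\|_2$. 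This is the step in which the $1/\sqrt M$ normalization built into the ensemble does its job, keeping the Lipschitz constant dimension-independent.

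Next I would feed this estimate into Corollary~\ref{cor:PL_Lipschitz_L}, which gives $|P_{\mathcal L}[\varphi]-P_{\mathcal L'}[\varphi]|\le t\,\|\mathcal L'-\mathcal L\|_{1\to1}$ for every test function $\varphi:X\to[-1,1]$, provided the straight-line homotopy $\mathcal L_\alpha=(1-\alpha)\mathcal L(\theta)+\alpha\mathcal L(\theta')$ stays inside $\mathsf{GKSL}$. That hypothesis is exactly what Lemma~\ref{lem:convexity_GKSL_slice}(2) supplies for $0<\delta\le\delta_0$ in this ensemble, so there is no gap to fill there; I would simply record explicitly that $\delta$ is taken small enough for that lemma to apply. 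Combining the two estimates yields $|F_\varphi(\theta)-F_\varphi(\theta')|\le t\delta C_G\,\|\theta-\theta'\|_2$, which is precisely~\eqref{eq:F_phi_Lipschitz} with $L_0:=t\delta C_G$; since none of $t$, $\delta$, $C_G$ depends on $\varphi$ or on $M$, the Lipschitz constant is uniform in both, as claimed.

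I do not anticipate a real obstacle: this lemma is essentially the concrete-model instantiation of Lemma~\ref{lem:theta_Lipschitz}, and all of the analytic content has already been established upstream --- the Duhamel/Fr\'echet-derivative bound behind Lemma~\ref{lem:L_to_Lambda_Lipschitz}, the matrix H\"older argument behind Corollary~\ref{cor:PL_Lipschitz_L}, and the convex-slice stability behind Lemma~\ref{lem:convexity_GKSL_slice}. The one point deserving a sentence of care is the invocation of Corollary~\ref{cor:PL_Lipschitz_L} along the whole interpolating segment, which is only legitimate because Lemma~\ref{lem:convexity_GKSL_slice} keeps that segment inside the GKSL cone for $\delta\le\delta_0$; everything else is a short chain of triangle inequality, $\|G_j\|_{1\to1}\le C_G$, and Cauchy-Schwarz.
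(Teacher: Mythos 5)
Your proposal is correct and follows essentially the same route as the paper: the paper's proof of Lemma~\ref{lem:F_phi_Lipschitz} simply re-derives the $\|O_\varphi\|_\infty\le1$, H\"older, and Duhamel estimates inline (its Steps~1--3 reproduce the content of Lemma~\ref{lem:Ophi_norm}, Corollary~\ref{cor:PL_Lipschitz_L}, and Lemma~\ref{lem:L_to_Lambda_Lipschitz}), whereas you cite those results as black boxes; the only genuinely model-specific computation, $\|\mathcal L(\theta')-\mathcal L(\theta)\|_{1\to1}\le\delta C_G\|\theta'-\theta\|_2$ via triangle inequality plus Cauchy--Schwarz, is identical in both. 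You are also right to flag that Lemma~\ref{lem:convexity_GKSL_slice} is what licenses the convexity hypothesis of Corollary~\ref{cor:PL_Lipschitz_L}, a point the paper invokes implicitly in its Step~3.
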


\begin{proof}
	Similar to the proofs in Sec.~\ref{zming}, we proceed in four steps.
	
	\emph{Step 1: Bounding $\|O_\varphi\|_\infty$.}
	Since $\{M_x\}$ is a POVM, we have $M_x\ge0$ and $\sum_x M_x=\mathbb I$.
	For any state $\rho$,
	\[
	\operatorname{Tr}(O_\varphi\rho)
	= \sum_x \varphi(x)\operatorname{Tr}(M_x\rho),
	\]
	where $\operatorname{Tr}(M_x\rho)\ge0$ and $\sum_x\operatorname{Tr}(M_x\rho)=1$.
	Hence the right-hand side is a weighted average of values $\varphi(x)\in[-1,1]$, and therefore
	\(
	|\operatorname{Tr}(O_\varphi\rho)|\le1
	\)
	for all $\rho$.
	As the operator norm is given by $\|O\|_\infty=\sup_{\rho}|\operatorname{Tr}(O\rho)|$, we have $\|O_\varphi\|_\infty\le1$.
	
	\emph{Step 2: From channel difference to expectation difference.}
	For any Lindbladians $\mathcal L,\mathcal L'$, we compute
	\begin{align*}
		\bigl|P_{\mathcal L}[\varphi]-P_{\mathcal L'}[\varphi]\bigr|
		&= \Bigl|
		\operatorname{Tr}\Bigl(
		O_\varphi
		\bigl[\Lambda_t^{(\mathcal L)}-\Lambda_t^{(\mathcal L')}\bigr]
		(\rho_{\mathrm{in}})
		\Bigr)
		\Bigr|\\
		&\le
		\|O_\varphi\|_\infty\,
		\bigl\|\bigl[\Lambda_t^{(\mathcal L)}-\Lambda_t^{(\mathcal L')}\bigr]
		(\rho_{\mathrm{in}})\bigr\|_1\\
		&\le
		\|O_\varphi\|_\infty\,
		\bigl\|\Lambda_t^{(\mathcal L)}-\Lambda_t^{(\mathcal L')}\bigr\|_{1\to1}\,
		\|\rho_{\mathrm{in}}\|_1\\
		&\le
		\bigl\|\Lambda_t^{(\mathcal L)}-\Lambda_t^{(\mathcal L')}\bigr\|_{1\to1},
	\end{align*}
	where we used matrix Hölder's inequality
	$|\operatorname{Tr}(AB)|\le\|A\|_\infty\|B\|_1$, the definition of the induced $1\to1$ norm, and the bounds $\|O_\varphi\|_\infty\le1$, $\|\rho_{\mathrm{in}}\|_1=1$.
	
	\emph{Step 3: Bounding $\bigl\|\Lambda_t^{(\mathcal L)}-\Lambda_t^{(\mathcal L')}\bigr\|_{1\to1}$ via the perturbation formula.}
	For $\alpha\in[0,1]$ define
	\[
	\mathcal L_\alpha := (1-\alpha)\mathcal L+\alpha\mathcal L',
	\]
	and by Lemma~\ref{lem:convexity_GKSL_slice} we know that $\mathcal L_\alpha$ remains GKSL along the line segment, so $\Lambda_t^{(\alpha)}:=e^{t\mathcal L_\alpha}$ is CPTP and
	$\|\Lambda_t^{(\alpha)}\|_{1\to1}\le1$.
	The perturbation (Duhamel) formula gives
	\[
	\frac{\mathrm d}{\mathrm d\alpha}\Lambda_t^{(\alpha)}
	= \int_0^t
	\Lambda_{t-s}^{(\alpha)}\circ(\mathcal L'-\mathcal L)\circ\Lambda_s^{(\alpha)}
	\,\mathrm ds.
	\]
	Taking the $1\to1$ norm and using the contractivity of CPTP maps, we have
	\begin{align*}
		\Bigl\|
		\frac{\mathrm d}{\mathrm d\alpha}\Lambda_t^{(\alpha)}
		\Bigr\|_{1\to1}
		&\le \int_0^t
		\bigl\|\Lambda_{t-s}^{(\alpha)}\bigr\|_{1\to1}\,
		\|\mathcal L'-\mathcal L\|_{1\to1}\,
		\bigl\|\Lambda_s^{(\alpha)}\bigr\|_{1\to1}\,\mathrm ds\\
		&\le \int_0^t \|\mathcal L'-\mathcal L\|_{1\to1}\,\mathrm ds\\
		&= t\|\mathcal L'-\mathcal L\|_{1\to1}.
	\end{align*}
	Integrating over $\alpha\in[0,1]$ yields
	\begin{align*}
		\bigl\|\Lambda_t^{(\mathcal L')}-\Lambda_t^{(\mathcal L)}\bigr\|_{1\to1}
		&= \left\|\int_0^1
		\frac{\mathrm d}{\mathrm d\alpha}\Lambda_t^{(\alpha)}
		\,\mathrm d\alpha\right\|_{1\to1}\\
		&\le \int_0^1
		\Bigl\|
		\frac{\mathrm d}{\mathrm d\alpha}\Lambda_t^{(\alpha)}
		\Bigr\|_{1\to1}\,\mathrm d\alpha\\
		&\le \int_0^1 t\|\mathcal L'-\mathcal L\|_{1\to1}\,\mathrm d\alpha\\
		&= t\|\mathcal L'-\mathcal L\|_{1\to1}.
	\end{align*}
	
	\emph{Step 4: Specializing to $\mathcal L(\theta),\mathcal L(\theta')$.}
	From~\eqref{eq:L_theta_def},
	\[
	\mathcal L(\theta')-\mathcal L(\theta)
	= \frac{\delta}{\sqrt M}
	\sum_{j=1}^M (\theta_j'-\theta_j)G_j,
	\]
	and hence
	\begin{align*}
		\bigl\|\mathcal L(\theta')-\mathcal L(\theta)\bigr\|_{1\to1}
		&\le \frac{\delta}{\sqrt M}
		\sum_{j=1}^M |\theta_j'-\theta_j|\,\|G_j\|_{1\to1}\\
		&\le \frac{\delta C_G}{\sqrt M}
		\sum_{j=1}^M |\theta_j'-\theta_j|\\
		&\le \frac{\delta C_G}{\sqrt M}\,\sqrt M\,\|\theta'-\theta\|_2\\
		&= \delta C_G \,\|\theta'-\theta\|_2,
	\end{align*}
	where we used the Cauchy--Schwarz inequality
	$\sum_j|a_j|\le\sqrt M\|a\|_2$.
	Combining this with the previous bound, we obtain
	\[
	\bigl\|\Lambda_t^{(\mathcal L(\theta'))}-\Lambda_t^{(\mathcal L(\theta))}\bigr\|_{1\to1}
	\le t\,\delta C_G\,\|\theta'-\theta\|_2.
	\]
	Substituting into the estimate from Step~2 gives
	\[
	|F_\varphi(\theta')-F_\varphi(\theta)|
	\le t\,\delta C_G\,\|\theta'-\theta\|_2.
	\]
	Setting $L_0:=t\delta C_G$ yields~\eqref{eq:F_phi_Lipschitz}.
	Because the right-hand side is independent of $\varphi$, this Lipschitz constant applies uniformly to all test functions.
\end{proof}

\begin{remark}
	Lemma~\ref{lem:F_phi_Lipschitz} gives the second part of the abstract assumption:
``For every test function $\varphi$, the map
		$F_\varphi(\mathcal L):=P_{\mathcal L}[\varphi]$ is $L_0$-Lipschitz on $\mathcal F$,
		with $L_0$ independent of $\varphi$.''
\end{remark}

\begin{theorem}[L\'evy concentration and \texorpdfstring{$\operatorname{frac}$}{frac} upper bound for random local Lindbladian ensembles]
	\label{thm:L_Levy_frac}
	Under the assumptions of Definition~\ref{def:random_local_L} and Lemma~\ref{lem:F_phi_Lipschitz}, suppose the parameter dimension $M$ is sufficiently large and the spherical measure $\mu_{\mathrm{sph}}$ satisfies the standard L\'evy concentration inequality: there exists a constant $c_{\mathrm{sph}}>0$ such that for any
	$L_0$-Lipschitz function $f:S^{M-1}\to\mathbb R$ and any $\tau>0$,
	\begin{equation}\label{eq:Levy_on_sphere}
		\Pr_{\theta\sim\mu_{\mathrm{sph}}}
		\bigl(|f(\theta)-\mathbb Ef|\ge\tau\bigr)
		\le 2\exp\Bigl(-c_{\mathrm{sph}}\frac{M\tau^2}{L_0^2}\Bigr).
	\end{equation}
	
	For any test function $\varphi:X\to[-1,1]$, set
	\[
	Q[\varphi]
	:= \mathbb E_{\mathcal L\sim\mu_{\mathcal L}}
	\bigl[P_{\mathcal L}[\varphi]\bigr]
	= \mathbb E_{\theta\sim\mu_{\mathrm{sph}}}
	\bigl[F_\varphi(\theta)\bigr],
	\]
	and define in the SQ framework
	\[
	\operatorname{frac}(\mu_{\mathcal L},Q,\tau)
	:= \sup_{\varphi:X\to[-1,1]}
	\Pr_{\mathcal L\sim\mu_{\mathcal L}}
	\bigl(|P_{\mathcal L}[\varphi]-Q[\varphi]|\ge\tau\bigr).
	\]
	Then for any $\tau>0$ we have the exponentially small bound
	\begin{equation}\label{eq:frac_L_Q_tau_bound}
		\operatorname{frac}(\mu_{\mathcal L},Q,\tau)
		\le 2\exp\Bigl(-c_{\mathrm{sph}}\frac{M\tau^2}{L_0^2}\Bigr),
		\qquad L_0=t\delta C_G.
	\end{equation}
\end{theorem}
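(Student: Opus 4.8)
The plan is to reduce the statement to the combination of two facts already available: the uniform Lipschitz bound of Lemma~\ref{lem:F_phi_Lipschitz} and the spherical Lévy concentration inequality \eqref{eq:Levy_on_sphere}, exactly mirroring the argument behind Theorem~\ref{thm:frac_bound_open_system}. First I would fix an arbitrary test function $\varphi:X\to[-1,1]$ and consider the function $F_\varphi:S^{M-1}\to\mathbb R$ defined by $F_\varphi(\theta)=P_{\mathcal L(\theta)}[\varphi]$. By Lemma~\ref{lem:F_phi_Lipschitz}, $F_\varphi$ is $L_0$-Lipschitz with respect to the Euclidean metric on $S^{M-1}$, with $L_0=t\delta C_G$ independent of $\varphi$; in particular $F_\varphi$ is continuous, hence Borel measurable, so the concentration hypothesis applies.

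Next I would invoke \eqref{eq:Levy_on_sphere} with $f=F_\varphi$ and the given tolerance $\tau$, obtaining
\[
\Pr_{\theta\sim\mu_{\mathrm{sph}}}\bigl(|F_\varphi(\theta)-\mathbb E_{\theta\sim\mu_{\mathrm{sph}}}F_\varphi|\ge\tau\bigr)
\le 2\exp\!\Bigl(-c_{\mathrm{sph}}\frac{M\tau^2}{L_0^2}\Bigr).
\]
Then I would translate this back to the Lindbladian side using the pushforward $\mu_{\mathcal L}=(\mathcal L)_\#\mu_{\mathrm{sph}}$: sampling $\mathcal L\sim\mu_{\mathcal L}$ is the same as sampling $\theta\sim\mu_{\mathrm{sph}}$ and setting $\mathcal L=\mathcal L(\theta)$, so $F_\varphi(\theta)=P_{\mathcal L}[\varphi]$ and $\mathbb E_{\theta\sim\mu_{\mathrm{sph}}}F_\varphi=\mathbb E_{\mathcal L\sim\mu_{\mathcal L}}P_{\mathcal L}[\varphi]=Q[\varphi]$ by the definition of $Q[\varphi]$ in the statement. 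Hence the above inequality reads $\Pr_{\mathcal L\sim\mu_{\mathcal L}}\bigl(|P_{\mathcal L}[\varphi]-Q[\varphi]|\ge\tau\bigr)\le 2\exp(-c_{\mathrm{sph}}M\tau^2/L_0^2)$.

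Finally, since the right-hand side does not depend on $\varphi$, I would take the supremum over all $\varphi:X\to[-1,1]$, which yields precisely \eqref{eq:frac_L_Q_tau_bound} with $L_0=t\delta C_G$. There is no genuine obstacle here: the only points requiring mild care are (i) checking that $\mathcal L(\theta)$ is a valid GKSL generator for all $\theta\in S^{M-1}$ for $\delta$ small, which is guaranteed by Lemma~\ref{lem:convexity_GKSL_slice} and is implicit in Definition~\ref{def:random_local_L}, so that $P_{\mathcal L(\theta)}$ is a genuine probability distribution and $F_\varphi$ is well defined; and (ii) the measure-theoretic identification of $\mathbb E_{\mathcal L\sim\mu_{\mathcal L}}$ with $\mathbb E_{\theta\sim\mu_{\mathrm{sph}}}$ under the pushforward, which is routine. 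The "hard work" has already been done upstream in establishing the dimension-independent Lipschitz constant (Lemma~\ref{lem:F_phi_Lipschitz}), where the $\delta/\sqrt M$ normalization is exactly what prevents $L_0$ from growing with $M$; this theorem is then simply the assembly of that bound with the Lévy inequality on the sphere.
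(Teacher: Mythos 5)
Your proposal is correct and follows essentially the same route as the paper's own proof: fix $\varphi$, apply Lemma~\ref{lem:F_phi_Lipschitz} to get the uniform $L_0$-Lipschitz bound, plug $F_\varphi$ into the spherical Lévy inequality, identify the mean with $Q[\varphi]$ via the pushforward, and take the supremum over $\varphi$ using the fact that the right-hand side is $\varphi$-independent. The additional remarks you make about GKSL validity and measurability are sound but not logically required beyond what is implicit in the paper.
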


\begin{proof}
	By Lemma~\ref{lem:F_phi_Lipschitz}, for each fixed test function $\varphi$ the function $F_\varphi(\theta)=P_{\mathcal L(\theta)}[\varphi]$ is $L_0$-Lipschitz on $S^{M-1}$. Applying the L\'evy inequality~\eqref{eq:Levy_on_sphere}, we obtain for any $\tau>0$,
	\[
	\Pr_{\theta\sim\mu_{\mathrm{sph}}}
	\bigl(|F_\varphi(\theta)-\mathbb E F_\varphi|\ge\tau\bigr)
	\le 2\exp\Bigl(-c_{\mathrm{sph}}\frac{M\tau^2}{L_0^2}\Bigr).
	\]
	Noting that the pushforward of $\mu_{\mathrm{sph}}$ under $\mathcal L(\theta)$ is precisely $\mu_{\mathcal L}$, and that $\mathbb EF_\varphi = Q[\varphi]$, we obtain
	\[
	\Pr_{\mathcal L\sim\mu_{\mathcal L}}
	\bigl(|P_{\mathcal L}[\varphi]-Q[\varphi]|\ge\tau\bigr)
	\le 2\exp\Bigl(-c_{\mathrm{sph}}\frac{M\tau^2}{L_0^2}\Bigr),
	\]
	with the right-hand side independent of $\varphi$. Taking the supremum over $\varphi$ then yields
	\[
	\operatorname{frac}(\mu_{\mathcal L},Q,\tau)
	\le 2\exp\Bigl(-c_{\mathrm{sph}}\frac{M\tau^2}{L_0^2}\Bigr),
	\]
	which is precisely~\eqref{eq:frac_L_Q_tau_bound}.
\end{proof}

Therefore,~\eqref{eq:frac_L_Q_tau_bound} gives a Liouvillian concentration form of $\operatorname{frac}(\mu_{\mathcal L},Q,\tau)$ for the random local Lindbladian ensemble, which can be directly inserted into the general SQ-hardness criteria. QPStat is also a similar proof. To avoid repetition, it will not be elaborated here.

\section{far-from-$Q$}\label{zhang3}

In Sec.~\ref{zhang1} we established, for the random Lindbladian parameter-sphere ensemble, a uniform Lipschitz constant $L_0=t\delta C_G$ for the SQ test functions
\(
F_\varphi(\theta)=P_{\mathcal L(\theta)}[\varphi]
\),
and by L\'evy’s lemma on the sphere obtained an exponential upper bound on
\(
\operatorname{frac}(\mu_{\mathcal L},Q,\tau)
\)
(Theorem~\ref{thm:frac_bound_open_system}). In this section, using a measure-theoretic framework, we further derive:
\begin{itemize}
	\item[1.] concentration of $d_{\mathrm{TV}}(P_{\mathcal L},Q)$ on the parameter sphere (far-from-$Q$);
	\item[2.] a far-from-$D$ result for any fixed $D$.
\end{itemize}

\subsection{Lipschitz continuity of TV distance with respect to parameters}

We first upgrade Lemma~\ref{lem:theta_Lipschitz} to the level of TV distance. For convenience in computing the Lipschitz constant, we give an equivalent variational form of the TV distance.

\begin{lemma}[Variational characterizations of total variation distance]\label{lem:TV_variational}
	Let $X$ be a finite set and $P,Q\in\mathcal D_X$ two probability distributions on $X$. By Definition~\ref{def:TV_and_expectation}, their total variation distance is
	\[
	d_{\mathrm{TV}}(P,Q)
	:=\frac{1}{2}\sum_{x\in X}\bigl|P(x)-Q(x)\bigr|.
	\]
	Then the following equivalent characterizations hold:
	\begin{align}
		d_{\mathrm{TV}}(P,Q)
		&= \max_{S\subseteq X}\bigl|P(S)-Q(S)\bigr|,
		\label{eq:TV_set_variation}\\[0.5em]
		&= \frac12 \sup_{\varphi:X\to[-1,1]}
		\bigl|P[\varphi]-Q[\varphi]\bigr|,
		\label{eq:TV_test_function}
	\end{align}
	where
	\(
	P[\varphi]:=\sum_{x\in X}\varphi(x)P(x)
	\)
	denotes the expectation of $\varphi$ under $P$.
\end{lemma}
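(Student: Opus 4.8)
The plan is to prove both identities by the standard ``split at the sign of the difference'' argument. Write $\Delta(x):=P(x)-Q(x)$ and note that, since $P$ and $Q$ are probability distributions on the finite set $X$, one has $\sum_{x\in X}\Delta(x)=0$; consequently $\sum_{x:\Delta(x)>0}\Delta(x)=-\sum_{x:\Delta(x)<0}\Delta(x)=\tfrac12\sum_{x\in X}|\Delta(x)|=d_{\mathrm{TV}}(P,Q)$. This balanced-mass observation is the single fact that drives everything.

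For the set-variation formula \eqref{eq:TV_set_variation} I would first take $S_\star:=\{x\in X:\Delta(x)\ge 0\}$; then $P(S_\star)-Q(S_\star)=\sum_{x\in S_\star}\Delta(x)=d_{\mathrm{TV}}(P,Q)$, so the maximum on the right-hand side is at least $d_{\mathrm{TV}}(P,Q)$. For the reverse inequality, for an arbitrary $S\subseteq X$ I bound $P(S)-Q(S)=\sum_{x\in S}\Delta(x)\le\sum_{x\in S,\ \Delta(x)>0}\Delta(x)\le\sum_{x:\Delta(x)>0}\Delta(x)=d_{\mathrm{TV}}(P,Q)$, and symmetrically (applying the same bound to the complement $S^{c}$ and using $P(S)-Q(S)=-(P(S^{c})-Q(S^{c}))$) also $Q(S)-P(S)\le d_{\mathrm{TV}}(P,Q)$. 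Hence $|P(S)-Q(S)|\le d_{\mathrm{TV}}(P,Q)$ for every $S$, which combined with the first step yields \eqref{eq:TV_set_variation}.

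For the test-function formula \eqref{eq:TV_test_function} I would argue in two directions. The upper bound is H\"older's inequality together with $\|\varphi\|_\infty\le 1$: for any $\varphi:X\to[-1,1]$, $|P[\varphi]-Q[\varphi]|=\bigl|\sum_{x}\varphi(x)\Delta(x)\bigr|\le\sum_x|\varphi(x)|\,|\Delta(x)|\le\sum_x|\Delta(x)|=2\,d_{\mathrm{TV}}(P,Q)$. For the matching lower bound I would exhibit an optimal test function: take $\varphi_\star(x):=\sgn(\Delta(x))$, with the convention $\varphi_\star(x):=1$ when $\Delta(x)=0$ (such terms contribute nothing), so that $\varphi_\star:X\to\{-1,1\}\subseteq[-1,1]$ and $P[\varphi_\star]-Q[\varphi_\star]=\sum_x\sgn(\Delta(x))\Delta(x)=\sum_x|\Delta(x)|=2\,d_{\mathrm{TV}}(P,Q)$. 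Thus the supremum equals $2\,d_{\mathrm{TV}}(P,Q)$. Alternatively one can deduce \eqref{eq:TV_test_function} directly from \eqref{eq:TV_set_variation} by restricting the supremum to the sign functions $\varphi=\indicator_S-\indicator_{X\setminus S}$, for which $\tfrac12|P[\varphi]-Q[\varphi]|=|P(S)-Q(S)|$, and then combining with the H\"older upper bound; both routes produce the same constant.

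There is no substantive obstacle here: the argument is finite-dimensional and elementary throughout, and the only point requiring a line of care is the treatment of ties $\Delta(x)=0$ in the definition of $\varphi_\star$ (and, correspondingly, whether such $x$ are placed in $S_\star$), which is immaterial since those terms contribute zero to every sum involved.
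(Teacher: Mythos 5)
Your proposal is correct and follows essentially the same route as the paper's proof: both hinge on $\sum_x(P(x)-Q(x))=0$ to balance the positive and negative parts, both take the optimal set $S_+ = \{x : P(x)\ge Q(x)\}$ to saturate the set-variation formula, and both use the sign test function $\varphi_\star = \sgn(P-Q)$ (together with the trivial $\lvert\varphi\rvert\le 1$ upper bound) to saturate the test-function formula.
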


\begin{proof}
	Define
	\[
	S_+ := \{x\in X:\ P(x)\ge Q(x)\},\qquad
	S_- := X\setminus S_+.
	\]
	Then
	\begin{align*}
		\sum_{x\in X}\bigl|P(x)-Q(x)\bigr|
		&= \sum_{x\in S_+}\bigl(P(x)-Q(x)\bigr)
		+\sum_{x\in S_-}\bigl(Q(x)-P(x)\bigr).
	\end{align*}
	On the other hand, since $P$ and $Q$ are both probability distributions,
	\[
	\sum_{x\in X}\bigl(P(x)-Q(x)\bigr) = 0,
	\]
	so
	\[
	\sum_{x \in S_{+}}(P(x)-Q(x))=-\sum_{x \in S_{-}}(P(x)-Q(x)) .
	\]
	Hence
	\[
	\sum_{x\in S_+}\bigl(P(x)-Q(x)\bigr)
	=
	\sum_{x\in S_-}\bigl(Q(x)-P(x)\bigr)
	= \frac12\sum_{x\in X}\bigl|P(x)-Q(x)\bigr|.
	\]
	Therefore
	\[
	d_{\mathrm{TV}}(P,Q)
	= \frac12\sum_{x\in X}\bigl|P(x)-Q(x)\bigr|
	= P(S_+)-Q(S_+).
	\]
	
	Next, for any subset $S\subseteq X$, decompose it as $S=(S\cap S_+)\cup(S\cap S_-)$. On $S \cap S_{-}$ we have $P(x)-Q(x)<0$, so discarding this part can only increase the sum; then extending $S \cap S_{+}$ to the whole $S_{+}$ can further increase the sum. Thus
	\begin{align*}
		P(S)-Q(S)
		&= \sum_{x\in S}\bigl(P(x)-Q(x)\bigr)\\
		&= \sum_{x\in S\cap S_+}\bigl(P(x)-Q(x)\bigr)
		+\sum_{x\in S\cap S_-}\bigl(P(x)-Q(x)\bigr)\\
		&\le \sum_{x\in S_+}\bigl(P(x)-Q(x)\bigr)
		= d_{\mathrm{TV}}(P,Q),
	\end{align*}
	and similarly
	\(
	Q(S)-P(S)\le d_{\mathrm{TV}}(P,Q)
	\). Therefore,
	\[
	\bigl|P(S)-Q(S)\bigr|
	\le d_{\mathrm{TV}}(P,Q)
	\quad\text{for all }S\subseteq X.
	\]
	Choosing $S=S_+$ yields
	\(
	\max_S |P(S)-Q(S)|
	= d_{\mathrm{TV}}(P,Q)
	\), which proves~\eqref{eq:TV_set_variation}.
	
	We now prove
	\[
	\sup_{\varphi:X\to[-1,1]}\bigl|P[\varphi]-Q[\varphi]\bigr|
	\le 2\,d_{\mathrm{TV}}(P,Q).
	\]
	For any $\varphi:X\to[-1,1]$,
	\begin{align*}
		P[\varphi]-Q[\varphi]
		&= \sum_{x\in X}\varphi(x)\bigl(P(x)-Q(x)\bigr),
	\end{align*}
	so
	\begin{align*}
		\bigl|P[\varphi]-Q[\varphi]\bigr|
		&\le \sum_{x\in X}|\varphi(x)|\,\bigl|P(x)-Q(x)\bigr|\\
		&\le \sum_{x\in X}\bigl|P(x)-Q(x)\bigr|\\
		&= 2\,d_{\mathrm{TV}}(P,Q),
	\end{align*}
	where we used $|\varphi(x)|\le1$. Taking the supremum over all $\varphi$ gives
	\[
	\sup_{\varphi:X\to[-1,1]}\bigl|P[\varphi]-Q[\varphi]\bigr|
	\le 2\,d_{\mathrm{TV}}(P,Q).
	\]
	
	Next we construct a test function that attains this bound. Define
	\[
	\varphi_*(x)
	:= \begin{cases}
		+1, & P(x)\ge Q(x),\\
		-1, & P(x)< Q(x),
	\end{cases}
	\]
	(when $P(x)=Q(x)$ we may choose any value in $[-1,1]$ without affecting the calculation below).
	Then
	\begin{align*}
		P[\varphi_*]-Q[\varphi_*]
		&= \sum_{x\in X}\varphi_*(x)\bigl(P(x)-Q(x)\bigr)\\
		&= \sum_{x\in S_+} (+1)\bigl(P(x)-Q(x)\bigr)
		+\sum_{x\in S_-} (-1)\bigl(P(x)-Q(x)\bigr)\\
		&= \sum_{x\in S_+}\bigl(P(x)-Q(x)\bigr)
		+\sum_{x\in S_-}\bigl(Q(x)-P(x)\bigr)\\
		&= \sum_{x\in X}\bigl|P(x)-Q(x)\bigr|\\
		&= 2\,d_{\mathrm{TV}}(P,Q).
	\end{align*}
	Hence
	\[
	\sup_{\varphi:X\to[-1,1]}\bigl|P[\varphi]-Q[\varphi]\bigr|
	\ge \bigl|P[\varphi_*]-Q[\varphi_*]\bigr|
	= 2\,d_{\mathrm{TV}}(P,Q).
	\]
	
	Combining the upper and lower bounds, we obtain
	\[
	\sup_{\varphi:X\to[-1,1]}\bigl|P[\varphi]-Q[\varphi]\bigr|
	= 2\,d_{\mathrm{TV}}(P,Q),
	\]
	which is equivalent to
	\[
	d_{\mathrm{TV}}(P,Q)
	= \frac12\sup_{\varphi:X\to[-1,1]}\bigl|P[\varphi]-Q[\varphi]\bigr|.
	\]
	This proves~\eqref{eq:TV_test_function}.
\end{proof}

\begin{lemma}[Lipschitz continuity of TV distance with respect to parameters]\label{lem:TV_Lipschitz_theta}
	Under the assumptions of Lemma~\ref{lem:theta_Lipschitz}, for any reference distribution $Q\in\mathcal D_X$ define, using Lemma~\ref{lem:TV_variational},
	\[
	F_{\mathrm{TV}}(\theta)
	:= d_{\mathrm{TV}}(P_{\mathcal L(\theta)},Q)
	= \frac12 \sup_{\varphi:X\to[-1,1]}
	\bigl|P_{\mathcal L(\theta)}[\varphi]-Q[\varphi]\bigr|.
	\]
	Then for any $\theta,\theta'\in\Theta$,
	\[
	\bigl|F_{\mathrm{TV}}(\theta)-F_{\mathrm{TV}}(\theta')\bigr|
	\le L_{\mathrm{TV}}\,\|\theta-\theta'\|_2,\qquad
	L_{\mathrm{TV}}:=\frac{L_0}{2}=\frac{t\delta C_G}{2}.
	\]
\end{lemma}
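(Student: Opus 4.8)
The plan is to reduce the statement to the scalar Lipschitz bound already in hand, namely Lemma~\ref{lem:theta_Lipschitz} (equivalently Corollary~\ref{cor:PL_Lipschitz_L}), via the test-function variational formula for the total variation distance from Lemma~\ref{lem:TV_variational}. Writing
\[
F_{\mathrm{TV}}(\theta)
= \tfrac12\,\sup_{\varphi:X\to[-1,1]}\bigl|P_{\mathcal L(\theta)}[\varphi]-Q[\varphi]\bigr|
=: \tfrac12\,\sup_{\varphi} g_\theta(\varphi),
\]
exhibits $F_{\mathrm{TV}}$ as (half) a supremum, over the fixed index set $\{\varphi:X\to[-1,1]\}$, of the $\theta$-dependent functionals $g_\theta(\varphi)=\bigl|P_{\mathcal L(\theta)}[\varphi]-Q[\varphi]\bigr|$.

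First I would record the elementary stability of the supremum: for any two real families $\{a_i\}_i$, $\{b_i\}_i$ indexed by the same set, $\bigl|\sup_i a_i-\sup_i b_i\bigr|\le\sup_i|a_i-b_i|$, which follows from $a_i\le b_i+\sup_j|a_j-b_j|$ for every $i$, taking $\sup_i$, and symmetrizing in $a,b$. Applying this with $a_\varphi=g_\theta(\varphi)$ and $b_\varphi=g_{\theta'}(\varphi)$ gives
\[
\bigl|F_{\mathrm{TV}}(\theta)-F_{\mathrm{TV}}(\theta')\bigr|
\le \tfrac12\,\sup_{\varphi}\bigl|g_\theta(\varphi)-g_{\theta'}(\varphi)\bigr|.
\]
Next I would bound the integrand by the ordinary reverse triangle inequality, noting that the reference term $Q[\varphi]$ cancels:
\[
\bigl|g_\theta(\varphi)-g_{\theta'}(\varphi)\bigr|
= \Bigl|\,\bigl|P_{\mathcal L(\theta)}[\varphi]-Q[\varphi]\bigr|-\bigl|P_{\mathcal L(\theta')}[\varphi]-Q[\varphi]\bigr|\,\Bigr|
\le \bigl|P_{\mathcal L(\theta)}[\varphi]-P_{\mathcal L(\theta')}[\varphi]\bigr|.
\]
Now invoke Lemma~\ref{lem:theta_Lipschitz}: $\bigl|P_{\mathcal L(\theta)}[\varphi]-P_{\mathcal L(\theta')}[\varphi]\bigr|\le L_0\,\|\theta-\theta'\|_2$ with $L_0=t\delta C_G$, and — crucially — this bound is uniform over all test functions $\varphi:X\to[-1,1]$. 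Taking the supremum over $\varphi$ and substituting into the previous display yields
\[
\bigl|F_{\mathrm{TV}}(\theta)-F_{\mathrm{TV}}(\theta')\bigr|
\le \tfrac12\,L_0\,\|\theta-\theta'\|_2
= \tfrac{t\delta C_G}{2}\,\|\theta-\theta'\|_2,
\]
which is exactly the claim with $L_{\mathrm{TV}}=L_0/2$.

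There is no genuinely hard step: the argument is a two-line chain combining supremum-stability, the reverse triangle inequality, and the pre-existing uniform Lipschitz estimate. The only point that deserves emphasis is that the $\varphi$-independence of the constant $L_0$ in Lemma~\ref{lem:theta_Lipschitz} is precisely what licenses pulling the bound through the supremum over test functions; a $\varphi$-dependent Lipschitz constant would not suffice. As an alternative route one could instead work with the set-variational form $d_{\mathrm{TV}}(P,Q)=\max_{S\subseteq X}|P(S)-Q(S)|$ from Lemma~\ref{lem:TV_variational} together with the $\{-1,+1\}$-valued test functions $\varphi=\indicator_S-\indicator_{X\setminus S}$, but the $[-1,1]$-valued formulation above is the cleanest and avoids any case analysis.
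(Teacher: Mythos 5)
Your proposal is correct and follows essentially the same route as the paper's own proof: rewrite $F_{\mathrm{TV}}$ via the test-function variational formula, pass the supremum over $\varphi$ through the difference using $\bigl|\sup_i a_i - \sup_i b_i\bigr|\le\sup_i|a_i-b_i|$, apply the reverse triangle inequality so the $Q[\varphi]$ term cancels, and then invoke the $\varphi$-uniform Lipschitz bound from Lemma~\ref{lem:theta_Lipschitz}. The only difference is cosmetic — you spell out the supremum-stability inequality as a separate remark — and your observation that the uniformity of $L_0$ in $\varphi$ is the load-bearing point is exactly right.
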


\begin{proof}
	Let $P_\theta:=P_{\mathcal L(\theta)}$. By the variational representation of the TV distance,
	\[
	F_{\mathrm{TV}}(\theta)
	= \frac12 \sup_{\varphi:X\to[-1,1]} |P_\theta[\varphi]-Q[\varphi]|.
	\]
	Thus
	\begin{align*}
		\bigl|F_{\mathrm{TV}}(\theta)-F_{\mathrm{TV}}(\theta')\bigr|
		&= \frac12\Bigl|
		\sup_{\varphi} |P_\theta[\varphi]-Q[\varphi]|
		-\sup_{\varphi} |P_{\theta'}[\varphi]-Q[\varphi]|
		\Bigr| \\
		&\le \frac12 \sup_{\varphi}
		\Bigl|
		|P_\theta[\varphi]-Q[\varphi]|
		-|P_{\theta'}[\varphi]-Q[\varphi]|
		\Bigr|\\
		&\le \frac12 \sup_{\varphi}
		|P_\theta[\varphi]-P_{\theta'}[\varphi]|.
	\end{align*}
	In the last inequality we used that for any real numbers $x,y,a$,
	\(
	\bigl||x-a|-|y-a|\bigr|\le|x-y|
	\).
	By Lemma~\ref{lem:theta_Lipschitz}, for any $\varphi:X\to[-1,1]$,
	\[
	|P_\theta[\varphi]-P_{\theta'}[\varphi]|
	= |F_\varphi(\theta)-F_\varphi(\theta')|
	\le L_0\,\|\theta-\theta'\|_2.
	\]
	Substituting into the previous inequality gives
	\[
	|F_{\mathrm{TV}}(\theta)-F_{\mathrm{TV}}(\theta')|
	\le \frac12 L_0\,\|\theta-\theta'\|_2,
	\]
	so $L_{\mathrm{TV}}=L_0/2$.
\end{proof}

\subsection{Mean assumption and far-from-$Q$ concentration}

We now introduce a nonzero-mean assumption. Intuitively this is an ``open-system version of the Porter-Thomas mean assumption''. This assumption is natural, under sufficiently deep random circuits the output probability distribution approaches the Porter-Thomas distribution~\cite{boixo2018characterizing}, so many global indicators (such as cross-entropy difference and heavy-output probability) converge to constants in the high-dimensional limit~\cite{hangleiter2023computational}. In the case of Nietner’s random unitary circuits~\cite{nietner2025average}, this constant is computed by Gaussian integration as
$\mathbb{E} d_{\mathrm{TV}}\left(P_U, U\right) \approx 1 / e$.
This computation relies on the fact that the TV distance depends only on a few independent Gaussian variables. In our setting, however, the exponential map makes the dependence non-linear on the entire vector of parameters, so we can no longer compute the mean exactly in the same way. Nevertheless, in the linear-response regime we can give an explicit formula, as we will show in Sec.~\ref{zhang4}; and in Sec.~\ref{zhang5} we present a simple but concrete example in which this mean takes a simple constant value and perform some numerical simulations, lending support to our assumption.

\begin{assumption}[Open-system Porter-Thomas mean assumption]\label{ass:mean_TV}
	There exists a constant $m_0>0$ and an error term $\epsilon_{\mathrm{mean}}>0$ which tends to $0$ as the system parameters (e.g., the Hilbert space dimension $d$ or the parameter dimension $M$) grow, such that the TV distance
	\[
	F_{\mathrm{TV}}(\theta)
	= d_{\mathrm{TV}}(P_{\mathcal L(\theta)},Q)
	\]
	has mean over the parameter-sphere ensemble $(\Theta,\mu_\Theta)$ satisfying
	\[
	m:=\mathbb E_{\theta\sim\mu_\Theta}F_{\mathrm{TV}}(\theta)
	\in [m_0-\epsilon_{\mathrm{mean}},\ m_0+\epsilon_{\mathrm{mean}}].
	\]
\end{assumption}

Given the Lipschitz constant and the mean assumption, we can use L\'evy’s lemma on the parameter sphere (Lemma~\ref{lem:levy_sphere}) to obtain a concentration result for $d_{\mathrm{TV}}(P_{\mathcal L},Q)$.

\begin{theorem}[Concentration of TV distance with respect to the reference $Q$]\label{thm:far_from_Q_open}
	In the setting of Definition~\ref{def:param_sphere_ensemble}, assume that Lemmas~\ref{lem:theta_Lipschitz} and \ref{lem:TV_Lipschitz_theta} and Assumption~\ref{ass:mean_TV} all hold. Then for any $\xi>0$,
	\[
	\Pr_{\theta\sim\mu_\Theta}
	\Bigl(
	\bigl|d_{\mathrm{TV}}(P_{\mathcal L(\theta)},Q)-m_0\bigr|
	\ge \xi+\epsilon_{\mathrm{mean}}
	\Bigr)
	\le
	2\exp\!\left(
	-c_{\mathrm{par}}\,
	\frac{M\,\xi^2}{L_{\mathrm{TV}}^2}
	\right),
	\]
	where $c_{\mathrm{par}}$ is the constant from Lemma~\ref{lem:levy_sphere}, $M$ is the parameter dimension, and $L_{\mathrm{TV}}=t\delta C_G/2$.
\end{theorem}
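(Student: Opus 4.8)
The plan is to combine the parameter-space Lipschitz bound for the total variation distance with Lévy's concentration inequality on the sphere, and then pass from the true mean to the constant $m_0$ via the triangle inequality.

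First I would record that, by Lemma~\ref{lem:TV_Lipschitz_theta}, the function $F_{\mathrm{TV}}(\theta):=d_{\mathrm{TV}}(P_{\mathcal L(\theta)},Q)$ is $L_{\mathrm{TV}}$-Lipschitz on $\Theta\subseteq S^{M-1}$ with respect to the Euclidean metric, where $L_{\mathrm{TV}}=t\delta C_G/2$ is independent of the parameter dimension $M$ and of the reference distribution $Q$. Writing $m:=\mathbb E_{\theta\sim\mu_\Theta}F_{\mathrm{TV}}(\theta)$ for its mean, Lemma~\ref{lem:levy_sphere} --- applied exactly as in the proof of Theorem~\ref{thm:frac_bound_open_system}, using that $\mu_\Theta$ is the renormalized restriction of the uniform spherical measure to the full-measure subset $\Theta$ --- yields, for every $\xi>0$,
\[
\Pr_{\theta\sim\mu_\Theta}\bigl(|F_{\mathrm{TV}}(\theta)-m|\ge\xi\bigr)
\le 2\exp\!\left(-c_{\mathrm{par}}\,\frac{M\,\xi^2}{L_{\mathrm{TV}}^2}\right).
\]

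Next I would eliminate the dependence on the a priori unknown mean $m$ using Assumption~\ref{ass:mean_TV}, which guarantees $|m-m_0|\le\epsilon_{\mathrm{mean}}$. By the triangle inequality, any $\theta$ with $|F_{\mathrm{TV}}(\theta)-m_0|\ge\xi+\epsilon_{\mathrm{mean}}$ satisfies
\[
|F_{\mathrm{TV}}(\theta)-m|
\ge |F_{\mathrm{TV}}(\theta)-m_0|-|m-m_0|
\ge \xi+\epsilon_{\mathrm{mean}}-\epsilon_{\mathrm{mean}}=\xi,
\]
so the event $\{|F_{\mathrm{TV}}(\theta)-m_0|\ge\xi+\epsilon_{\mathrm{mean}}\}$ is contained in the event $\{|F_{\mathrm{TV}}(\theta)-m|\ge\xi\}$. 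Monotonicity of probability together with the displayed Lévy bound then gives the claimed inequality.

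Since every step is an immediate consequence of results already established, there is no substantial obstacle here; the only point requiring a little care is the transfer of Lévy concentration from the full sphere $S^{M-1}$ to the subset $\Theta$ on which all $\mathcal L(\theta)$ are genuine GKSL generators. I would handle this by the same renormalization argument used earlier, noting that the concentration constant $c_{\mathrm{par}}$ and the Lipschitz constant $L_{\mathrm{TV}}$ are unaffected provided $\Theta$ carries full spherical measure (in particular when $\delta\le\delta_0$, so that $\Theta=S^{M-1}$ by Lemma~\ref{lem:convexity_GKSL_slice}).
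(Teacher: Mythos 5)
Your proof is correct and follows the same route as the paper: invoke the $L_{\mathrm{TV}}$-Lipschitz bound from Lemma~\ref{lem:TV_Lipschitz_theta}, apply the L\'evy concentration of Lemma~\ref{lem:levy_sphere} to $F_{\mathrm{TV}}$ around its true mean $m$, and then absorb the gap $|m-m_0|\le\epsilon_{\mathrm{mean}}$ from Assumption~\ref{ass:mean_TV} via the triangle-inequality inclusion of events. The closing remark about renormalizing the spherical measure to $\Theta$ is a sensible (and accurate) elaboration of a step the paper leaves implicit.
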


\begin{proof}
	By Lemma~\ref{lem:TV_Lipschitz_theta}, $F_{\mathrm{TV}}(\theta)$ is $L_{\mathrm{TV}}$-Lipschitz on $\Theta\subseteq S^{M-1}$.  
	Applying Lemma~\ref{lem:levy_sphere}, we obtain concentration around the mean $m:=\mathbb EF_{\mathrm{TV}}$:
	\[
	\Pr_{\theta\sim\mu_\Theta}
	\bigl(|F_{\mathrm{TV}}(\theta)-m|\ge s\bigr)
	\le 2\exp\!\left(
	-c_{\mathrm{par}}\,
	\frac{M\,s^2}{L_{\mathrm{TV}}^2}
	\right).
	\]
	By Assumption~\ref{ass:mean_TV}, $|m-m_0|\le\epsilon_{\mathrm{mean}}$, hence
	\[
	|F_{\mathrm{TV}}(\theta)-m_0|
	\ge \xi+\epsilon_{\mathrm{mean}}
	\ \Rightarrow\
	|F_{\mathrm{TV}}(\theta)-m|
	\ge \xi.
	\]
	Thus
	\[
	\Pr_{\theta}\Bigl(|F_{\mathrm{TV}}(\theta)-m_0|\ge\xi+\epsilon_{\mathrm{mean}}\Bigr)
	\le
	\Pr_{\theta}\bigl(|F_{\mathrm{TV}}(\theta)-m|\ge\xi\bigr)
	\le
	2\exp\!\left(
	-c_{\mathrm{par}}\,
	\frac{M\,\xi^2}{L_{\mathrm{TV}}^2}
	\right).
	\]
\end{proof}

\begin{corollary}[Typicality of far-from-$Q$]\label{cor:far_from_Q_typical}
	Under the assumptions of Theorem~\ref{thm:far_from_Q_open}, choose $\xi=m_0/4$ and assume the system size is large enough that $\epsilon_{\mathrm{mean}}\le m_0/4$. Moreover, we have
	\[
	\left\{\theta:\ d_{\mathrm{TV}}(P_{\mathcal L(\theta)},Q)\le \frac{m_0}{2}\right\} \subseteq\left\{\theta:\left|F(\theta)-m_0\right| \geq \frac{m_0}{2}\right\},
	\]
	so the probability bound gives
	\[
	\Pr_{\theta\sim\mu_\Theta}
	\Bigl(
	d_{\mathrm{TV}}(P_{\mathcal L(\theta)},Q)\le \frac{m_0}{2}
	\Bigr)
	\le
	2\exp\!\left(
	-c_{\mathrm{par}}\,
	\frac{M\,m_0^2}{16L_{\mathrm{TV}}^2}
	\right).
	\]
	In other words, as the parameter dimension $M$ grows linearly, for the vast majority of Lindbladians the output distribution $P_{\mathcal L}$ is at least a constant distance $m_0/2$ away from the reference distribution $Q$ in TV distance.
\end{corollary}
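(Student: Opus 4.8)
The plan is to obtain this as a one-step specialization of Theorem~\ref{thm:far_from_Q_open} followed by an elementary containment of events. First I would note that the standing hypotheses here --- the Lipschitz bounds of Lemmas~\ref{lem:theta_Lipschitz} and~\ref{lem:TV_Lipschitz_theta} together with the mean Assumption~\ref{ass:mean_TV} --- are precisely those under which Theorem~\ref{thm:far_from_Q_open} is stated, so the theorem applies verbatim to $F(\theta):=F_{\mathrm{TV}}(\theta)=d_{\mathrm{TV}}(P_{\mathcal L(\theta)},Q)$. Inserting the particular choice $\xi=m_0/4$ into its conclusion gives
\[
\Pr_{\theta\sim\mu_\Theta}\Bigl(\bigl|F(\theta)-m_0\bigr|\ge \tfrac{m_0}{4}+\epsilon_{\mathrm{mean}}\Bigr)
\;\le\; 2\exp\!\left(-c_{\mathrm{par}}\,\frac{M\,(m_0/4)^2}{L_{\mathrm{TV}}^2}\right)
= 2\exp\!\left(-c_{\mathrm{par}}\,\frac{M\,m_0^2}{16\,L_{\mathrm{TV}}^2}\right),
\]
which is already the right-hand side appearing in the statement.

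Next I would justify the set inclusion. Using the standing assumption $\epsilon_{\mathrm{mean}}\le m_0/4$ (available once the system size is large enough), we have $\tfrac{m_0}{4}+\epsilon_{\mathrm{mean}}\le\tfrac{m_0}{2}$. If $\theta$ satisfies $F(\theta)=d_{\mathrm{TV}}(P_{\mathcal L(\theta)},Q)\le m_0/2$, then $m_0-F(\theta)\ge m_0/2>0$, hence $|F(\theta)-m_0|\ge m_0/2\ge \tfrac{m_0}{4}+\epsilon_{\mathrm{mean}}$, which is exactly the claimed inclusion
\[
\left\{\theta:\ F(\theta)\le \tfrac{m_0}{2}\right\}
\;\subseteq\;
\left\{\theta:\ |F(\theta)-m_0|\ge \tfrac{m_0}{2}\right\}
\;\subseteq\;
\left\{\theta:\ |F(\theta)-m_0|\ge \tfrac{m_0}{4}+\epsilon_{\mathrm{mean}}\right\}.
\]
Monotonicity of $\mu_\Theta$ then transports the exponential bound from the largest of these events down to $\{F(\theta)\le m_0/2\}$, which proves the displayed probability estimate. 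I would close by reading off the qualitative statement: since $L_{\mathrm{TV}}=t\delta C_G/2$ and $m_0,c_{\mathrm{par}}$ are constants independent of $M$, the exponent is $\Theta(M)$, so the ``bad'' set of Lindbladians with $d_{\mathrm{TV}}(P_{\mathcal L},Q)\le m_0/2$ has measure $\exp(-\Omega(M))$.

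I do not expect any genuine obstacle: the argument is a specialization of an already-proved concentration bound plus a containment of events, and the only point requiring attention is that the hypothesis $\epsilon_{\mathrm{mean}}\le m_0/4$ is exactly what is needed to push $\xi+\epsilon_{\mathrm{mean}}$ below $m_0/2$ so that the inclusion holds with the clean constant $m_0^2/16$ in the exponent. If desired one could additionally record the two-sided version, namely that with probability $1-2\exp(-\Omega(M))$ one has $d_{\mathrm{TV}}(P_{\mathcal L(\theta)},Q)\in[\,m_0/2,\,3m_0/2\,]$, but only the lower tail is needed for the ``far-from-$Q$'' volume estimate feeding into the average-case SQ lower bounds of Sec.~\ref{zhang3} and beyond.
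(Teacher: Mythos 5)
Your proposal is correct and matches the paper's argument: specialize Theorem~\ref{thm:far_from_Q_open} with $\xi=m_0/4$, use $\epsilon_{\mathrm{mean}}\le m_0/4$ to get $\xi+\epsilon_{\mathrm{mean}}\le m_0/2$, and chain the event inclusions $\{F\le m_0/2\}\subseteq\{|F-m_0|\ge m_0/2\}\subseteq\{|F-m_0|\ge \xi+\epsilon_{\mathrm{mean}}\}$ to transport the concentration bound. The paper's corollary embeds precisely this reasoning inline, and you have just spelled it out in full.
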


\subsection{Average-case SQ complexity}\label{result 2}

Based on
\begin{itemize}
	\item[(a)] the Lipschitz and L\'evy concentration on the parameter-sphere ensemble (Theorem~\ref{thm:frac_bound_open_system}), which gives an exponentially small upper bound on
	\(
	\max_{\varphi:X\to[-1,1]}\Pr_{\mathcal L\sim\mu_{\mathcal L}}[|P_{\mathcal L}[\varphi]-Q[\varphi]|>\tau]
	\)
	(the denominato);
	\item[(b)] the open-system Porter-Thomas mean assumption and its induced ``far-from-$Q$/far-from-any-fixed-$D$'' volume estimates (Theorem~\ref{thm:far_from_Q_open} and its Corollary~\ref{cor:far_from_Q_typical}), which provide control of
	\(
	\beta-\Pr_{\mathcal L\sim\mu_{\mathcal L}}[d_{\mathrm{TV}}(P_{\mathcal L},Q)\le\epsilon+\tau]
	\)
	(the numerator).
\end{itemize}
Combining these with the SQ learning and decision framework~\cite{feldman2017general,nietner2025average} (Theorems~\ref{thm:avg_case_SQ_lower}, \ref{thm:random_avg_SQ_complexity} and related lemmas), we can already state at the beginning of the paper the following average-case SQ complexity statements.

\begin{itemize}
	\item[i.] The average-case SQ lower bound for deterministic decision problems is likewise exponential.
	
	Let $\mu_{\mathcal L}$ be the distribution over Lindbladians induced by the parameter-sphere ensemble, and define the reference distribution $Q$ as the ensemble average
	\(
	Q[\varphi]=\mathbb E_{\mathcal L\sim\mu_{\mathcal L}}P_{\mathcal L}[\varphi]
	\).
	From Theorem~\ref{thm:frac_bound_open_system} and TV-Lipschitz continuity, we already know that for any $\tau>0$,
	\[
	\max_{\varphi:X\to[-1,1]}
	\Pr_{\mathcal L\sim\mu_{\mathcal L}}\bigl[|P_{\mathcal L}[\varphi]-Q[\varphi]|>\tau\bigr]
	\ =:\ \operatorname{frac}(\mu_{\mathcal L},Q,\tau)
	\ \le\
	2\exp\!\left(-c_{\mathrm{par}}\frac{M\tau^2}{L_{0}^2}\right),
	\]
	On the other hand, the ``far-from-$Q$'' result (Theorem~\ref{thm:far_from_Q_open} and its corollary) yields, for suitable constant $m_0>0$ and sufficiently large system size,
	\[
	\Pr_{\mathcal L\sim\mu_{\mathcal L}}\bigl[d_{\mathrm{TV}}(P_{\mathcal L},Q)\le\epsilon+\tau\bigr]
	\ \le\ 2\exp\!\left(-c'_{\mathrm{par}}M\right).
	\]
	Hence for any fixed $\beta\in(0,1)$, if we only require an algorithm to succeed on a subset of Lindbladians of $\mu_{\mathcal L}$-measure at least $\beta$, then the ``numerator''
	\[
	\beta-\Pr_{\mathcal L\sim\mu_{\mathcal L}}\bigl[d_{\mathrm{TV}}(P_{\mathcal L},Q)\le\epsilon+\tau\bigr]
	\]
	still remains of order constant for sufficiently large $M$ (for example $\ge\beta/2$). Substituting these two estimates into the average-case decision lower bound (a special case of Theorem~\ref{thm:avg_case_SQ_lower}), we obtain that any deterministic SQ algorithm that attempts to correctly decide
	``$P\in\mathcal D$'' vs.\ ``$P=Q$'' on a subset of $\mu_{\mathcal L}$-measure at least $\beta$ must have average-case statistical query complexity
	\[
	q+1 \ \gtrsim\ \frac{\beta}{4}\,
	\exp\!\left(
	c_{\mathrm{par}}\frac{M\tau^2}{L_{\mathrm{0}}^2}
	\right)
	\;=\;
	\exp\!\Bigl(\Omega(M)\Bigr).
	\]
	That is, the average-case SQ query complexity of the decision problem still grows exponentially in the parameter dimension $M$.
	
	\item[ii.] The average-case SQ lower bound for randomized decision problems carries an $(\alpha,\beta)$ factor but is still exponential.
	
	For randomized (including classical probabilistic or quantum) decision algorithms, Theorem~\ref{thm:random_avg_SQ_complexity} states that if the algorithm, with internal randomness, succeeds with probability at least $\alpha>1/2$ and correctly decides $\mathcal D$ vs.\ $Q$ on a subset of Lindbladians of $\mu_{\mathcal L}$-measure at least $\beta$, then
	\[
	q+1
	\ \ge\
	2\cdot
	\frac{\bigl(\alpha-\tfrac12\bigr)\cdot
		\Bigl(\beta-\Pr_{\mathcal L\sim\mu_{\mathcal L}}[d_{\mathrm{TV}}(P_{\mathcal L},Q)\le\epsilon+\tau]\Bigr)}
	{\displaystyle\max_{\varphi:X\to[-1,1]}
		\Pr_{\mathcal L\sim\mu_{\mathcal L}}[|P_{\mathcal L}[\varphi]-Q[\varphi]|>\tau]}.
	\]
	Again substituting the ``numerator'' and ``denominator'' estimates above, for sufficiently large $M$ we obtain
	\[
	q+1 \ \gtrsim\ (\alpha-\tfrac12)\,\beta\,
	\exp\!\left(
	c_{\mathrm{par}}\frac{M\tau^2}{L_{\mathrm{0}}^2}
	\right).
	\]
	Therefore, as long as we require a fixed constant success probability $\alpha>1/2$ and a fixed constant coverage $\beta>0$ in the average sense, the average-case SQ query complexity of randomized decision algorithms is likewise exponential in $M$, with an additional natural linear prefactor $(\alpha-\tfrac12)\beta$.
	
	\item[iii.] Average-case SQ learning complexity of random Lindbladian microstructure: learning is no easier than decision.
	
	Furthermore, using the general reduction ``learning is no easier than decision'' (Theorem~\ref{thm:avg_case_SQ_lower} itself and its randomized variant~\ref{thm:random_avg_SQ_complexity}), we directly obtain exponential average-case lower bounds for the open-system SQ learning problem. Consider the SQ learning task: given access to a SQ oracle for an unknown $\mathcal L\sim\mu_{\mathcal L}$ (via a fixed time $t$, fixed input state $\rho_{\mathrm{in}}$ and a finite POVM interface), we are required to output a distribution $\widehat P$ such that
	\(
	d_{\mathrm{TV}}(\widehat P,P_{\mathcal L})<\epsilon
	\).
	
	\begin{itemize}
		\item[1.] Deterministic average-case learning algorithms: suppose a deterministic algorithm, with at most $q$ many $\tau$-accurate statistical queries, succeeds in $\epsilon$-learning on a subfamily of Lindbladians of $\mu_{\mathcal L}$-measure at least $\beta$. Then Theorem~\ref{thm:avg_case_SQ_lower} implies
		\[
		q+1 \ \ge\
		\frac{
			\beta - \Pr_{\mathcal L\sim\mu_{\mathcal L}}\bigl[d_{\mathrm{TV}}(P_{\mathcal L},Q)\le\epsilon+\tau\bigr]
		}{
			\displaystyle\max_{\varphi:X\to[-1,1]}
			\Pr_{\mathcal L\sim\mu_{\mathcal L}}[|P_{\mathcal L}[\varphi]-Q[\varphi]|>\tau]
		}.
		\]
		Plugging in the estimates ``numerator $\sim \beta$'' and ``denominator $\sim \exp(-cM)$'' yields, for sufficiently large $M$,
		\[
		q+1 \ \gtrsim\ \beta\,
		\exp\!\left(
		c_{\mathrm{par}}\frac{M\tau^2}{L_{\mathrm{0}}^2}
		\right),
		\qquad
		q_{\mathrm{det}}^{\mathrm{avg\text{-}learn}}(\epsilon,\tau;\beta)
		\ =\ \exp\!\bigl(\Omega(M)\bigr).
		\]
		In other words, deterministic SQ learning of the full open-system dynamical family still requires an exponential number of statistical queries.
		
		\item[2.] Randomized (classical or quantum) average-case learning algorithms: if an algorithm, with internal randomness, succeeds with probability at least $\alpha>1/2$ and completes $\epsilon$-learning on a subfamily of Lindbladians of $\mu_{\mathcal L}$-measure at least $\beta$, then by Theorem~\ref{thm:random_avg_SQ_complexity},
		\[
		q+1
		\ \ge\
		2\cdot
		\frac{\bigl(\alpha-\tfrac12\bigr)\cdot
			\Bigl(\beta-\Pr_{\mathcal L\sim\mu_{\mathcal L}}[d_{\mathrm{TV}}(P_{\mathcal L},Q)\le\epsilon+\tau]\Bigr)}
		{\displaystyle\max_{\varphi:X\to[-1,1]}
			\Pr_{\mathcal L\sim\mu_{\mathcal L}}[|P_{\mathcal L}[\varphi]-Q[\varphi]|>\tau]}.
		\]
		Substituting our estimates for the ``numerator'' and ``denominator'' and taking $M$ sufficiently large, we obtain
		\[
		q+1 \ \gtrsim\ (\alpha-\tfrac12)\,\beta\,
		\exp\!\left(
		c_{\mathrm{par}}\frac{M\tau^2}{L_{\mathrm{0}}^2}
		\right),
		\qquad
		q_{\mathrm{rand}}^{\mathrm{avg\text{-}learn}}(\epsilon,\tau;\alpha,\beta)
		\ =\ \exp\!\bigl(\Omega(M)\bigr).
		\]
		This shows that randomness (including internal randomness of quantum algorithms) cannot significantly reduce the average-case SQ query complexity of learning open-system dynamics.
	\end{itemize}
\end{itemize}

We may conclude that as long as the randomly parametrized Lindbladian family satisfies Lipschitz and L\'evy concentration on a high-dimensional parameter sphere, and the TV distance to the reference distribution $Q$ enjoys a Porter-Thomas-like nonzero-mean property, then in the SQ model—whether for decision or learning problems, deterministic or randomized—the average-case statistical query complexity always grows exponentially with the parameter dimension $M$.

\subsection{From far-from-$Q$ to far-from-any-$D$}

We have an additional interesting corollary: we can derive an open-system version of ``far-from-any-$D$'' from the far-from-$Q$ result, to control the TV distance with respect to any fixed distribution $D$.

\begin{lemma}[far-from-$Q\Rightarrow$ far-from-$D$]\label{lem:OS_far_from_Q_to_D}
	Suppose $Q,D\in\mathcal D_X$ satisfy
	\(
	d_{\mathrm{TV}}(Q,D)>\epsilon+\tau
	\)
	for some $\epsilon,\tau>0$. Let $\mu_{\mathcal L}$ be the random Lindbladian ensemble above. For each $\varphi:X\to[-1,1]$, define
	\(
	P_{\mathcal L}[\varphi]
	\)
	and $Q[\varphi]$ as in Theorem~\ref{thm:frac_bound_open_system}, and set
	\[
	\operatorname{frac}(\mu_{\mathcal L},Q,\tau)
	:= \sup_{\varphi:X\to[-1,1]}
	\Pr_{\mathcal L\sim\mu_{\mathcal L}}
	\bigl(|P_{\mathcal L}[\varphi]-Q[\varphi]|\ge\tau\bigr).
	\]
	Then
	\[
	\Pr_{\mathcal L\sim\mu_{\mathcal L}}
	\bigl[d_{\mathrm{TV}}(P_{\mathcal L},D)<\epsilon\bigr]
	\le \operatorname{frac}(\mu_{\mathcal L},Q,\tau).
	\]
\end{lemma}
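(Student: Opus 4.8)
The plan is to reduce the ``far-from-$D$'' statement to a \emph{single}, well-chosen test function and then invoke the definition of $\operatorname{frac}$ directly. The crucial structural point is that this test function must depend only on the fixed pair $(D,Q)$ and \emph{not} on the random Lindbladian $\mathcal L$; otherwise one cannot pass from a statement of the form $\Pr_{\mathcal L}[\exists\varphi:\dots]$ to the supremum $\sup_\varphi\Pr_{\mathcal L}[\dots]$ that defines $\operatorname{frac}(\mu_{\mathcal L},Q,\tau)$.

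First I would use the set-variational characterization of the total variation distance from Lemma~\ref{lem:TV_variational}, namely $d_{\mathrm{TV}}(D,Q)=\max_{S\subseteq X}|D(S)-Q(S)|$, to fix a subset $S_\ast\subseteq X$ with $|D(S_\ast)-Q(S_\ast)|=d_{\mathrm{TV}}(D,Q)>\epsilon+\tau$. Then I would define the fixed test function $\varphi_\ast:=\indicator_{S_\ast}:X\to\{0,1\}\subseteq[-1,1]$, so that $P[\varphi_\ast]=P(S_\ast)$ for every distribution $P\in\mathcal D_X$. Next I would establish the event inclusion $\{\mathcal L:\ d_{\mathrm{TV}}(P_{\mathcal L},D)<\epsilon\}\subseteq\{\mathcal L:\ |P_{\mathcal L}[\varphi_\ast]-Q[\varphi_\ast]|\ge\tau\}$ by the triangle inequality: if $d_{\mathrm{TV}}(P_{\mathcal L},D)<\epsilon$ then $|P_{\mathcal L}(S_\ast)-D(S_\ast)|\le d_{\mathrm{TV}}(P_{\mathcal L},D)<\epsilon$, and combining with $|D(S_\ast)-Q(S_\ast)|>\epsilon+\tau$ gives $|P_{\mathcal L}(S_\ast)-Q(S_\ast)|>(\epsilon+\tau)-\epsilon=\tau$, i.e.\ $|P_{\mathcal L}[\varphi_\ast]-Q[\varphi_\ast]|>\tau$. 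Taking $\mu_{\mathcal L}$-probabilities of both sides and then bounding $\Pr_{\mathcal L}[\,|P_{\mathcal L}[\varphi_\ast]-Q[\varphi_\ast]|\ge\tau\,]$ by the supremum over all admissible test functions yields exactly $\Pr_{\mathcal L\sim\mu_{\mathcal L}}[d_{\mathrm{TV}}(P_{\mathcal L},D)<\epsilon]\le\operatorname{frac}(\mu_{\mathcal L},Q,\tau)$.

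There is no serious technical obstacle here; the only point requiring care is the order of quantifiers just noted --- the argument works precisely because $\varphi_\ast$ is chosen before $\mathcal L$ is drawn, so the single-function deviation probability is dominated by the worst-case one defining $\operatorname{frac}$. (If one prefers a symmetric test function taking values in $\{-1,1\}$, one may instead take $\varphi_\ast=2\indicator_{S_\ast}-1$, which only improves the bound, since the relevant event then becomes $|P_{\mathcal L}[\varphi_\ast]-Q[\varphi_\ast]|>2\tau$.) Combined with the exponential upper bound on $\operatorname{frac}(\mu_{\mathcal L},Q,\tau)$ from Theorem~\ref{thm:frac_bound_open_system} (or Theorem~\ref{thm:L_Levy_frac}), this immediately upgrades ``far-from-$Q$'' concentration to an open-system ``far-from-any fixed $D$'' statement.
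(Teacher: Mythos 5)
Your proposal is correct and follows essentially the same route as the paper's proof: pick $S_\ast$ attaining $d_{\mathrm{TV}}(Q,D)$ via the set-variational characterization, use $\varphi_\ast=\mathbf 1_{S_\ast}$ as a fixed test function, apply the triangle inequality to obtain the event inclusion, and then dominate the single-$\varphi_\ast$ deviation probability by $\operatorname{frac}(\mu_{\mathcal L},Q,\tau)$. Your explicit emphasis on the quantifier order --- that $\varphi_\ast$ depends only on $(D,Q)$ and is fixed before $\mathcal L$ is drawn --- is exactly the point that makes the reduction go through, and is implicit in the paper's argument.
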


\begin{proof}
	By the variational representation of TV distance, there exists a measurable subset $S\subseteq X$ such that
	\[
	d_{\mathrm{TV}}(Q,D)=|Q(S)-D(S)|,
	\]
	and let $\varphi:=\mathbf 1_S: X \rightarrow\{0,1\}$. Then
	\(
	d_{\mathrm{TV}}(Q,D)=|D[\varphi]-Q[\varphi]|
	\).
	For any $\mathcal L$ satisfying $d_{\mathrm{TV}}(P_{\mathcal L},D)<\epsilon$, we have
	\[
	|P_{\mathcal L}[\varphi]-D[\varphi]|
	\le d_{\mathrm{TV}}(P_{\mathcal L},D)
	<\epsilon.
	\]
	Hence
	\begin{align*}
		\bigl|P_{\mathcal L}[\varphi]-Q[\varphi]\bigr|
		&\ge |D[\varphi]-Q[\varphi]|
		- |P_{\mathcal L}[\varphi]-D[\varphi]|\\
		&> d_{\mathrm{TV}}(Q,D)-\epsilon\\
		&> (\epsilon+\tau)-\epsilon = \tau.
	\end{align*}
	Therefore
	\[
	\bigl\{d_{\mathrm{TV}}(P_{\mathcal L},D)<\epsilon\bigr\}
	\subseteq
	\bigl\{|P_{\mathcal L}[\varphi]-Q[\varphi]|>\tau\bigr\}
	\subseteq
	\bigl\{|P_{\mathcal L}[\varphi]-Q[\varphi]|\ge\tau\bigr\},
	\]
	and taking probabilities under $\mu_{\mathcal L}$ and using the definition of $\operatorname{frac}$ gives the claim.
\end{proof}

\begin{lemma}[Controlling the ``heaviest $\epsilon$-ball'']\label{lem:OS_ball_weight}
	In the setting of Lemma~\ref{lem:OS_far_from_Q_to_D}, for any $\epsilon,\tau>0$ and any $D\in\mathcal D_X$,
	\[
	\Pr_{\mathcal L\sim\mu_{\mathcal L}}
	\bigl[d_{\mathrm{TV}}(P_{\mathcal L},D)<\epsilon\bigr]
	\le
	\max\Bigl\{
	\operatorname{frac}(\mu_{\mathcal L},Q,\tau),\ 
	\Pr_{\mathcal L\sim\mu_{\mathcal L}}
	\bigl[d_{\mathrm{TV}}(P_{\mathcal L},Q)\le 2\epsilon+\tau\bigr]
	\Bigr\}.
	\]
\end{lemma}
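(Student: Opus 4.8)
The plan is to reduce the statement to the previous lemma by a clean two-way case split on the total variation distance between the fixed distribution $D$ and the reference distribution $Q$. In one regime the ``far-from-$Q$'' bound from Lemma~\ref{lem:OS_far_from_Q_to_D} applies verbatim; in the complementary regime the triangle inequality for $d_{\mathrm{TV}}$ forces any $P_{\mathcal L}$ close to $D$ to also be close to $Q$, so the second quantity inside the maximum dominates. Since the two cases are exhaustive, the maximum of the two bounds works uniformly.

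Concretely, I would first assume $d_{\mathrm{TV}}(Q,D)>\epsilon+\tau$. This is exactly the hypothesis of Lemma~\ref{lem:OS_far_from_Q_to_D}, so that lemma gives directly
\[
\Pr_{\mathcal L\sim\mu_{\mathcal L}}\bigl[d_{\mathrm{TV}}(P_{\mathcal L},D)<\epsilon\bigr]\le\operatorname{frac}(\mu_{\mathcal L},Q,\tau),
\]
and hence the left-hand side is at most the claimed maximum. I would then treat the remaining case $d_{\mathrm{TV}}(Q,D)\le\epsilon+\tau$. Here, for any $\mathcal L$ with $d_{\mathrm{TV}}(P_{\mathcal L},D)<\epsilon$, the triangle inequality yields $d_{\mathrm{TV}}(P_{\mathcal L},Q)\le d_{\mathrm{TV}}(P_{\mathcal L},D)+d_{\mathrm{TV}}(D,Q)<\epsilon+(\epsilon+\tau)=2\epsilon+\tau$, so the event $\{d_{\mathrm{TV}}(P_{\mathcal L},D)<\epsilon\}$ is contained in $\{d_{\mathrm{TV}}(P_{\mathcal L},Q)\le 2\epsilon+\tau\}$; taking $\mu_{\mathcal L}$-probabilities gives
\[
\Pr_{\mathcal L\sim\mu_{\mathcal L}}\bigl[d_{\mathrm{TV}}(P_{\mathcal L},D)<\epsilon\bigr]\le\Pr_{\mathcal L\sim\mu_{\mathcal L}}\bigl[d_{\mathrm{TV}}(P_{\mathcal L},Q)\le 2\epsilon+\tau\bigr],
\]
again bounded by the maximum. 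Combining the two cases completes the argument.

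There is essentially no obstacle here beyond bookkeeping: the only points to watch are that the case split is genuinely exhaustive (the second case is simply the negation of the first) and that the constants compose correctly, namely that the triangle inequality produces the threshold $2\epsilon+\tau=\epsilon+(\epsilon+\tau)$ appearing in the statement. I would also remark that the proof is purely measure-theoretic and dimension-free, invoking only the abstract ensemble $\mu_{\mathcal L}$ and Lemma~\ref{lem:OS_far_from_Q_to_D}; nothing about the Lindbladian parametrization or the concentration estimates enters at this stage.
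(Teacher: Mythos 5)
Your proof is correct and follows essentially the same two-case argument as the paper: split on whether $d_{\mathrm{TV}}(Q,D)$ exceeds $\epsilon+\tau$, invoke Lemma~\ref{lem:OS_far_from_Q_to_D} in the first case, and use the triangle inequality and event inclusion in the second. No differences worth noting.
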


\begin{proof}
	We split into two cases according to $d_{\mathrm{TV}}(D,Q)$.
	
	\emph{Case 1:} If $d_{\mathrm{TV}}(D,Q)>\epsilon+\tau$, then by Lemma~\ref{lem:OS_far_from_Q_to_D},
	\[
	\Pr_{\mathcal L\sim\mu_{\mathcal L}}\bigl[d_{\mathrm{TV}}(P_{\mathcal L},D)<\epsilon\bigr]
	\le \operatorname{frac}(\mu_{\mathcal L},Q,\tau).
	\]
	
	\emph{Case 2:} If $d_{\mathrm{TV}}(D,Q)\le\epsilon+\tau$, then for any $\mathcal L$ with $d_{\mathrm{TV}}(P_{\mathcal L},D)<\epsilon$, the triangle inequality gives
	\[
	d_{\mathrm{TV}}(P_{\mathcal L},Q)
	\le d_{\mathrm{TV}}(P_{\mathcal L},D)+d_{\mathrm{TV}}(D,Q)
	<\epsilon+(\epsilon+\tau)
	=2\epsilon+\tau,
	\]
	hence
	\[
	\bigl\{d_{\mathrm{TV}}(P_{\mathcal L},D)<\epsilon\bigr\}
	\subseteq
	\bigl\{d_{\mathrm{TV}}(P_{\mathcal L},Q)\le 2\epsilon+\tau\bigr\},
	\]
	and taking probabilities yields
	\[
	\Pr_{\mathcal L\sim\mu_{\mathcal L}}\bigl[d_{\mathrm{TV}}(P_{\mathcal L},D)<\epsilon\bigr]
	\le
	\Pr_{\mathcal L\sim\mu_{\mathcal L}}\bigl[d_{\mathrm{TV}}(P_{\mathcal L},Q)\le 2\epsilon+\tau\bigr].
	\]
	Combining the two cases gives the desired upper bound.
\end{proof}

Combining Theorem~\ref{thm:frac_bound_open_system} (which gives an exponential upper bound on $\operatorname{frac}(\mu_{\mathcal L},Q,\tau)$) with Corollary~\ref{cor:far_from_Q_typical} (which gives a high-probability lower bound on $d_{\mathrm{TV}}(P_{\mathcal L},Q)$ staying above a constant threshold), we obtain the following ``far-from-any-$D$'' result:

\begin{theorem}[Far-from-$D$ for random Lindbladian ensembles with respect to any fixed distribution $D$]\label{thm:far_from_all_D_open}
	Under the assumptions of Theorems~\ref{thm:frac_bound_open_system} and \ref{thm:far_from_Q_open} (and Corollary~\ref{cor:far_from_Q_typical}), there exist constants $\epsilon_*>0$ and $c_*>0$ such that for sufficiently large parameter dimension $M$ and for any $D\in\mathcal D_X$,
	\[
	\Pr_{\mathcal L\sim\mu_{\mathcal L}}
	\bigl[d_{\mathrm{TV}}(P_{\mathcal L},D)\ge \epsilon_*\bigr]
	\;\ge\;
	1 - 2\exp(-c_* M).
	\]
	In other words, for the random Lindbladian parameter-sphere ensemble, for almost all Lindbladians the output distribution $P_{\mathcal L}$ is at least a constant $\epsilon_*$ away in TV distance from any fixed classical distribution $D$, and the measure of the bad set decays exponentially in the parameter dimension $M$.
\end{theorem}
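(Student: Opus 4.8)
The plan is to assemble the statement from the two volume estimates already established---the exponentially small distinguishable fraction of Theorem~\ref{thm:frac_bound_open_system} and the far-from-$Q$ concentration of Theorem~\ref{thm:far_from_Q_open} and Corollary~\ref{cor:far_from_Q_typical}---by feeding them into the reduction of Lemma~\ref{lem:OS_ball_weight}, whose right-hand side is already independent of $D$ and hence automatically delivers uniformity over every fixed $D\in\mathcal D_X$.

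First I would fix the constants. Take $\epsilon_*:=m_0/8$ and a query tolerance $\tau:=m_0/16$, so that $2\epsilon_*+\tau=5m_0/16<m_0/2$. Since $\epsilon_{\mathrm{mean}}\to 0$ as the system size grows (Assumption~\ref{ass:mean_TV}), there is $M_0$ such that $\epsilon_{\mathrm{mean}}\le m_0/4$ for all $M\ge M_0$, which is precisely the hypothesis needed to invoke Corollary~\ref{cor:far_from_Q_typical}. Then for $M\ge M_0$ the inclusion $\{d_{\mathrm{TV}}(P_{\mathcal L},Q)\le 2\epsilon_*+\tau\}\subseteq\{d_{\mathrm{TV}}(P_{\mathcal L},Q)\le m_0/2\}$ combined with that corollary gives
\[
\Pr_{\mathcal L\sim\mu_{\mathcal L}}\bigl[d_{\mathrm{TV}}(P_{\mathcal L},Q)\le 2\epsilon_*+\tau\bigr]\le 2\exp\!\Bigl(-c_{\mathrm{par}}\,\frac{M\,m_0^2}{16\,L_{\mathrm{TV}}^2}\Bigr),
\]
while Theorem~\ref{thm:frac_bound_open_system} at tolerance $\tau=m_0/16$ gives $\operatorname{frac}(\mu_{\mathcal L},Q,\tau)\le 2\exp(-c_{\mathrm{par}}M\tau^2/L_0^2)$. (If one wants more slack in the choice of $\epsilon_*$, one can instead apply Theorem~\ref{thm:far_from_Q_open} directly at the shifted threshold $\xi:=m_0-(2\epsilon_*+\tau)-\epsilon_{\mathrm{mean}}$, which stays a fixed positive fraction of $m_0$ once $\epsilon_{\mathrm{mean}}$ is small; this is slightly more flexible but not needed.)

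Next I would apply Lemma~\ref{lem:OS_ball_weight} with these $\epsilon_*$ and $\tau$: for every $D\in\mathcal D_X$,
\[
\Pr_{\mathcal L\sim\mu_{\mathcal L}}\bigl[d_{\mathrm{TV}}(P_{\mathcal L},D)<\epsilon_*\bigr]\le\max\bigl\{\operatorname{frac}(\mu_{\mathcal L},Q,\tau),\ \Pr_{\mathcal L\sim\mu_{\mathcal L}}[d_{\mathrm{TV}}(P_{\mathcal L},Q)\le 2\epsilon_*+\tau]\bigr\}\le 2\exp(-c_*M),
\]
where $c_*:=\min\{c_{\mathrm{par}}\tau^2/L_0^2,\ c_{\mathrm{par}}m_0^2/(16L_{\mathrm{TV}}^2)\}>0$ depends only on the ensemble constants $c_{\mathrm{par}},L_0,L_{\mathrm{TV}},m_0$; taking complements yields the asserted bound, uniformly in $D$.

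I do not expect a genuine obstacle inside this argument: the substantive content---positivity of the mean $m_0$ (Assumption~\ref{ass:mean_TV}) and the dimension-independent Lipschitz--L\'evy control---has already been discharged in the preceding subsections. The only point requiring care is the bookkeeping of constants: one must choose $\epsilon_*$ and $\tau$ so that $2\epsilon_*+\tau$ stays a fixed distance below $m_0$, absorb the vanishing drift $\epsilon_{\mathrm{mean}}$ into ``$M$ sufficiently large'', and observe that the case split inside Lemma~\ref{lem:OS_ball_weight}---whether or not $D$ itself lies within $\epsilon_*+\tau$ of $Q$---is exactly what makes a single exponential bound hold simultaneously for all $D$.
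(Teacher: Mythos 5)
Your proposal is correct and follows essentially the same route as the paper: pick $\epsilon_*$ and $\tau$ as small fixed multiples of $m_0$ so that $2\epsilon_*+\tau<m_0/2$, bound the two quantities in Lemma~\ref{lem:OS_ball_weight} by Theorem~\ref{thm:frac_bound_open_system} and Corollary~\ref{cor:far_from_Q_typical} respectively, take the minimum of the two exponents as $c_*$, and pass to complements. The only deviation (taking $\tau=m_0/16$ rather than the paper's $\tau=m_0/8$) is cosmetic and does not change the argument.
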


\begin{proof}
	We start from the two existing exponentially small upper bounds and substitute them into Lemma~\ref{lem:OS_ball_weight}.
	
	First, by Theorem~\ref{thm:frac_bound_open_system}, for any fixed $\tau>0$,
	\[
	\operatorname{frac}(\mu_{\mathcal L},Q,\tau)
	:=\sup_{\varphi:X\to[-1,1]}
	\Pr_{\mathcal L\sim\mu_{\mathcal L}}
	\bigl(|P_{\mathcal L}[\varphi]-Q[\varphi]|\ge\tau\bigr)
	\le
	2\exp\!\left(
	-c_{\mathrm{par}}\,
	\frac{M\,\tau^2}{L_0^2}
	\right),
	\]
	where $L_0=t\delta C_G$ is the uniform Lipschitz constant in Lemma~\ref{lem:theta_Lipschitz}.
	
	Second, by Theorem~\ref{thm:far_from_Q_open} and the open-system Porter-Thomas mean assumption, there exist a constant $m_0>0$ and $\epsilon_{\mathrm{mean}}\to0$ such that, for sufficiently large system size satisfying $\epsilon_{\mathrm{mean}}\le m_0/4$, Corollary~\ref{cor:far_from_Q_typical} gives
	\[
	\Pr_{\mathcal L\sim\mu_{\mathcal L}}
	\bigl[d_{\mathrm{TV}}(P_{\mathcal L},Q)\le m_0/2\bigr]
	\le
	2\exp\!\left(
	-c_{\mathrm{par}}\,
	\frac{M\,m_0^2}{4L_0^2}
	\right),
	\]
	where we used $L_{\mathrm{TV}}=L_0/2$.
	
	Now fix the concrete constants
	\[
	\epsilon_*:=\frac{m_0}{8},\qquad
	\tau:=\frac{m_0}{8},
	\]
	so that
	\(
	2\epsilon_*+\tau = \frac{3m_0}{8}<\frac{m_0}{2}
	\).
	By monotonicity of the probability bound we have
	\[
	\Pr_{\mathcal L\sim\mu_{\mathcal L}}
	\bigl[d_{\mathrm{TV}}(P_{\mathcal L},Q)\le 2\epsilon_*+\tau\bigr]
	\le
	\Pr_{\mathcal L\sim\mu_{\mathcal L}}
	\bigl[d_{\mathrm{TV}}(P_{\mathcal L},Q)\le m_0/2\bigr]
	\le
	2\exp\!\left(
	-c_2 M
	\right),
	\]
	where
	\(
	c_2:=c_{\mathrm{par}}\,m_0^2/(4L_0^2)
	\).
	On the other hand, for $\tau=m_0/8$,
	\[
	\operatorname{frac}(\mu_{\mathcal L},Q,\tau)
	\le
	2\exp\!\left(
	-c_1M
	\right),
	\qquad
	c_1:=c_{\mathrm{par}}\,\frac{m_0^2}{64L_0^2}.
	\]
	
	Substituting these two bounds into Lemma~\ref{lem:OS_ball_weight}, we obtain for any $D\in\mathcal D_X$,
	\[
	\Pr_{\mathcal L\sim\mu_{\mathcal L}}
	\bigl[d_{\mathrm{TV}}(P_{\mathcal L},D)<\epsilon_*\bigr]
	\le
	\max\Bigl\{
	2e^{-c_1M},\ 2e^{-c_2M}
	\Bigr\}
	=
	2e^{-c_*M},
	\]
	where $c_*:=\min\{c_1,c_2\}>0$.
	Taking complements yields
	\[
	\Pr_{\mathcal L\sim\mu_{\mathcal L}}
	\bigl[d_{\mathrm{TV}}(P_{\mathcal L},D)\ge\epsilon_*\bigr]
	\ge
	1-2e^{-c_*M},
	\]
	which holds for any fixed $D\in\mathcal D_X$, proving the theorem.
\end{proof}

\section{Linear-response scaling of the mean TV distance in random Lindbladian ensembles}\label{zhang4}

In this section, for the concrete random local Lindbladian ensemble
$\{\mathcal L(\theta)\}_{\theta\in S^{M-1}}$ introduced earlier, we derive the explicit asymptotic scaling of the mean
\[
F_{\rm TV}(\theta)
:= d_{\rm TV}\bigl(P_{\mathcal L(\theta)},Q\bigr),
\qquad
m(\delta):=\mathbb E_{\theta\sim\mu_{\rm sph}}F_{\rm TV}(\theta),
\]
in the linear-response (small-perturbation $\delta\to0$) regime. Here $\delta>0$ is the perturbation amplitude in the random Lindbladian parametrization. The technical inspiration for this analysis comes from Kubo’s linear-response theory~\cite{kubo1957statistical}; see Remark~\ref{kubo} for a more detailed discussion.

We focus on the TV distance
\[
F_{\rm TV}(\theta)
:= d_{\rm TV}\bigl(P_{\mathcal L(\theta)},Q\bigr)
= \frac12\sum_{x\in X}
\bigl|P_{\mathcal L(\theta)}(x)-Q(x)\bigr|.
\]

\medskip

We proceed in three steps, first we derive a linear-response expansion from the perturbation formula; then analyze the absolute-value expectation of a linear functional on the sphere; and finally combine the two to obtain the linear-response scaling of $m(\delta)$.

\subsection{Linear-response expansion from the perturbation formula}

\begin{lemma}[Perturbation formula and first-order response coefficients]\label{lem:perturbation_linear_response}
	In the setting of Definition~\ref{def:random_local_L_recall}, define
	\[
	K(\theta)
	:= \frac{1}{\sqrt M}\sum_{j=1}^M\theta_j G_j,
	\qquad
	\mathcal L(\theta)=\mathcal L_{\rm ref}+\delta K(\theta).
	\]
	Then there exists a constant $C_{\rm rem}(t,C_G)$ independent of $\theta$ and $M$, such that for each $x\in X$ we have the expansion
	\begin{equation}\label{eq:Delta_p_linear_response}
		\Delta p_x(\theta)
		:= P_{\mathcal L(\theta)}(x)-Q(x)
		= \frac{\delta}{\sqrt M}\sum_{j=1}^M \theta_j a_{j,x}
		+ R_x(\theta,\delta),
	\end{equation}
	where
	\begin{equation}\label{eq:ajx_def}
		a_{j,x}
		:= \operatorname{Tr}\biggl(
		M_x\int_0^t
		\Lambda_{t-s}^{({\rm ref})}\circ G_j\circ
		\Lambda_s^{({\rm ref})}(\rho_{\rm in})\,\mathrm ds
		\biggr),
	\end{equation}
	and the remainder satisfies the uniform bound
	\begin{equation}\label{eq:R_x_delta_bound}
		\sup_{\theta\in S^{M-1}}|R_x(\theta,\delta)|
		\le C_{\rm rem}(t,C_G)\,\delta^2.
	\end{equation}
\end{lemma}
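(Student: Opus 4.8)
The plan is to obtain \eqref{eq:Delta_p_linear_response} by applying the Duhamel (perturbation) formula of Lemma~\ref{lem:duhamel} twice — once to extract the leading term linear in $\delta$, and once more to linearise the perturbed semigroup appearing in the integrand — and then to control the leftover by a H\"older estimate in the induced $1\to1$ norm, which gives the uniform remainder bound \eqref{eq:R_x_delta_bound}. Concretely, writing $\mathcal L(\theta)=\mathcal L_{\mathrm{ref}}+\delta K(\theta)$, the first application of \eqref{eq:duhamel_basic} with $A=\mathcal L_{\mathrm{ref}}$, $B=\delta K(\theta)$ yields
\[
e^{t\mathcal L(\theta)}-\Lambda_t^{(\mathrm{ref})}
=\delta\int_0^t e^{(t-s)\mathcal L(\theta)}\,K(\theta)\,\Lambda_s^{(\mathrm{ref})}\,\mathrm ds .
\]
A second application of \eqref{eq:duhamel_basic} to $e^{(t-s)\mathcal L(\theta)}$ replaces it by $\Lambda_{t-s}^{(\mathrm{ref})}$ up to an $O(\delta)$ correction, so that after substituting back, applying $\operatorname{Tr}\!\bigl(M_x\,(\,\cdot\,)(\rho_{\mathrm{in}})\bigr)$, and identifying $Q(x)$ with $P_{\mathcal L_{\mathrm{ref}}}(x)=\operatorname{Tr}\!\bigl(M_x\Lambda_t^{(\mathrm{ref})}(\rho_{\mathrm{in}})\bigr)$ up to an $O(\delta^2)$ term (the ensemble mean $Q$ differs from the reference output only at second order, since $\mathbb E_{\theta\sim\mu_{\mathrm{sph}}}\theta_j=0$), one arrives at
\[
\Delta p_x(\theta)
=\delta\operatorname{Tr}\!\Bigl(M_x\int_0^t \Lambda_{t-s}^{(\mathrm{ref})}\circ K(\theta)\circ\Lambda_s^{(\mathrm{ref})}(\rho_{\mathrm{in}})\,\mathrm ds\Bigr)+R_x(\theta,\delta),
\]
with $R_x$ collecting all terms of order $\delta^2$ and higher. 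Because $K(\theta)=\tfrac{1}{\sqrt M}\sum_j\theta_j G_j$ is linear in $\theta$, the first term is exactly $\tfrac{\delta}{\sqrt M}\sum_j\theta_j a_{j,x}$ with $a_{j,x}$ as in \eqref{eq:ajx_def}.

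For the remainder bound, $R_x(\theta,\delta)$ is of the schematic form $\delta^2\operatorname{Tr}(M_x Y)$, where $Y$ is a double time integral of $e^{(t-s-u)\mathcal L(\theta)}\,K(\theta)\,\Lambda_u^{(\mathrm{ref})}\,K(\theta)\,\Lambda_s^{(\mathrm{ref})}(\rho_{\mathrm{in}})$ over $0\le s\le t$, $0\le u\le t-s$. I would estimate $|R_x(\theta,\delta)|\le\delta^2\|Y\|_1$ using $\|M_x\|_\infty\le1$ and matrix H\"older (as in Lemma~\ref{lem:Ophi_norm} and the surrounding discussion), then peel the integrand apart by submultiplicativity of $\|\cdot\|_{1\to1}$, using the CPTP contractivity $\|e^{r\mathcal L_{\mathrm{ref}}}\|_{1\to1}\le1$ and likewise $\|e^{r\mathcal L(\theta)}\|_{1\to1}\le1$ — the latter valid for \emph{every} $\theta\in S^{M-1}$ because Lemma~\ref{lem:convexity_GKSL_slice} guarantees $\mathcal L(\theta)\in\mathsf{GKSL}$ once $\delta\le\delta_0$ — together with $\|\rho_{\mathrm{in}}\|_1=1$ and the $M$-independent bound $\|K(\theta)\|_{1\to1}\le\tfrac{1}{\sqrt M}\sum_j|\theta_j|\,\|G_j\|_{1\to1}\le C_G$, which follows from Cauchy--Schwarz and $\|\theta\|_2=1$. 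This yields $\|Y\|_1\le C_G^2\int_0^t\!\int_0^{t-s}\mathrm du\,\mathrm ds=\tfrac12 C_G^2 t^2$, hence $|R_x(\theta,\delta)|\le\tfrac12 C_G^2 t^2\,\delta^2$, uniformly in $\theta$, $x$, and $M$; one sets $C_{\mathrm{rem}}(t,C_G):=\tfrac12 C_G^2 t^2$.

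The step requiring the most care — and the only one that is not a routine computation — is precisely the uniformity of the remainder bound over the whole sphere and over all $M$. This is exactly where the $\delta/\sqrt M$ normalisation of Definition~\ref{def:random_local_L_recall} does its work, converting the spherical constraint $\|\theta\|_2=1$ into an $M$-independent bound on $\|K(\theta)\|_{1\to1}$, and where Lemma~\ref{lem:convexity_GKSL_slice} is indispensable, since without it one could not apply $\|e^{r\mathcal L(\theta)}\|_{1\to1}\le1$ to the perturbed propagator that survives inside the second Duhamel integral. A minor bookkeeping point is the choice of reference distribution: one should either simply \emph{define} $Q:=P_{\mathcal L_{\mathrm{ref}}}$, or absorb the $O(\delta^2)$ gap between the ensemble mean and $P_{\mathcal L_{\mathrm{ref}}}$ into $R_x$, which is harmless as both contributions are second order in $\delta$ and the bound above is stated only up to a constant times $\delta^2$.
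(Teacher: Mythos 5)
Your proof is correct and follows essentially the same strategy as the paper: a second-order Duhamel expansion of $e^{t\mathcal L(\theta)}$ around $\Lambda_t^{(\mathrm{ref})}$, followed by an application of matrix H\"older and CPTP contractivity (the latter justified by Lemma~\ref{lem:convexity_GKSL_slice}) to bound the remainder uniformly in $\theta$ and $M$, yielding $C_{\mathrm{rem}}(t,C_G)=\tfrac12 C_G^2 t^2$. The paper reaches the same second-order formula via the $\alpha$-interpolation $\mathcal L_\alpha=\mathcal L_{\mathrm{ref}}+\alpha\delta K(\theta)$, whereas you apply \eqref{eq:duhamel_basic} directly twice; this is a cosmetic difference, and both give the same double-integral remainder with the same norm estimates. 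One small point you handle more carefully than the paper: you explicitly observe that the identification used in the proof requires $Q(x)=P_{\mathcal L_{\mathrm{ref}}}(x)$ (or that the gap between the ensemble mean and the reference output is already $O(\delta^2)$ and can be folded into $R_x$), which the paper's proof leaves implicit.
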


\begin{proof}
	Set
	\[
	\Lambda_t^{(\theta)} = e^{t(\mathcal L_{\rm ref}+\delta K(\theta))},
	\]
	and perform a perturbative expansion in $\delta$.
	Define the interpolating generator
	\[
	\mathcal L_\alpha := \mathcal L_{\rm ref} + \alpha\delta K(\theta),
	\]
	with corresponding QMS
	\[
	\Lambda_t^{(\alpha)} := e^{t\mathcal L_\alpha}.
	\]
	The standard perturbation formula gives
	\[
	\Lambda_t^{(\theta)}-\Lambda_t^{({\rm ref})}
	= \int_0^1 \frac{\mathrm d}{\mathrm d\alpha}
	\Lambda_t^{(\alpha)}\,\mathrm d\alpha,
	\]
	with
	\[
	\frac{\mathrm d}{\mathrm d\alpha}\Lambda_t^{(\alpha)}
	= \int_0^t
	\Lambda_{t-s}^{(\alpha)}\circ(\delta K(\theta))\circ
	\Lambda_s^{(\alpha)}\,\mathrm ds.
	\]
	Expanding $\Lambda_t^{(\alpha)}$ itself once more in $\delta$, one obtains the standard second-order formula
	\begin{equation}\label{eq:second_order_perturbation}
		\Lambda_t^{(\theta)}
		= \Lambda_t^{({\rm ref})}
		+ \delta\int_0^t \Lambda_{t-s}^{({\rm ref})}\circ K(\theta)\circ
		\Lambda_s^{({\rm ref})}\,\mathrm ds
		+ \delta^2 R_t(\theta,\delta),
	\end{equation}
	where the remainder $R_t(\theta,\delta)$ can be written as a double integral
	\[
	R_t(\theta,\delta)
	= \int_0^t\!\int_0^{s}
	\Lambda_{t-s}^{({\rm ref})}\circ K(\theta)\circ
	\Lambda_{s-r}^{(\alpha_{s,r})}\circ K(\theta)\circ
	\Lambda_r^{({\rm ref})}\,\mathrm dr\,\mathrm ds,
	\]
	with some interpolation parameter $\alpha_{s,r}\in[0,1]$.
	
	Taking the $1\to1$ norm gives
	\begin{align*}
		\|R_t(\theta,\delta)\|_{1\to1}
		&\le \int_0^t\!\int_0^{s}
		\bigl\|\Lambda_{t-s}^{({\rm ref})}\bigr\|_{1\to1}\,
		\bigl\|K(\theta)\bigr\|_{1\to1}\,
		\bigl\|\Lambda_{s-r}^{(\alpha_{s,r})}\bigr\|_{1\to1}\,
		\bigl\|K(\theta)\bigr\|_{1\to1}\,
		\bigl\|\Lambda_r^{({\rm ref})}\bigr\|_{1\to1}\,\mathrm dr\,\mathrm ds \\
		&\le \|K(\theta)\|_{1\to1}^2
		\int_0^t\!\int_0^{s} \mathrm dr\,\mathrm ds,
	\end{align*}
	where we used that all $\Lambda_\cdot^{(\cdot)}$ are CPTP and hence
	$\|\Lambda_\cdot^{(\cdot)}\|_{1\to1}\le1$.
	On the other hand, by~\eqref{eq:Gj_norm_bound_recall} and
	$
	K(\theta)=\frac1{\sqrt M}\sum_j\theta_j G_j
	$ we obtain
	\begin{align*}
		\|K(\theta)\|_{1\to1}
		&\le \frac{1}{\sqrt M}\sum_{j=1}^M|\theta_j|\,\|G_j\|_{1\to1} \\
		&\le \frac{C_G}{\sqrt M}\sum_{j=1}^M|\theta_j| \\
		&\le \frac{C_G}{\sqrt M}\,\sqrt M\,\|\theta\|_2 \\
		&= C_G,
	\end{align*}
	where we used $\|G_j\|_{1\to1}\le C_G$ and
	$\|\theta\|_1\le\sqrt M\|\theta\|_2=\sqrt M$ for $\theta\in S^{M-1}$.
	Thus
	\[
	\|R_t(\theta,\delta)\|_{1\to1}
	\le C_G^2\int_0^t\!\int_0^{s} \mathrm dr\,\mathrm ds
	= \frac{C_G^2 t^2}{2}
	=: C_{\rm rem}(t,C_G),
	\]
	which is uniform in $\theta$ and $M$.
	
	Applying~\eqref{eq:second_order_perturbation} to $\rho_{\rm in}$, and then taking the trace with $M_x$, we get
	\begin{align*}
		\Delta p_x(\theta)
		&= P_{\mathcal L(\theta)}(x)-Q(x) \\
		&= \delta\,\operatorname{Tr}\biggl(
		M_x\int_0^t
		\Lambda_{t-s}^{({\rm ref})}\circ K(\theta)\circ
		\Lambda_s^{({\rm ref})}(\rho_{\rm in})\,\mathrm ds
		\biggr)
		+ \delta^2\operatorname{Tr}\bigl(M_x R_t(\theta,\delta)(\rho_{\rm in})\bigr).
	\end{align*}
	Using $|\operatorname{Tr}(AB)|\le\|A\|_\infty\|B\|_1$ and $\|M_x\|_\infty\le1$, we obtain
	\begin{align*}
		\bigl|\operatorname{Tr}\bigl(M_x R_t(\theta,\delta)(\rho_{\rm in})\bigr)\bigr|
		&\le \|M_x\|_\infty\,\bigl\|R_t(\theta,\delta)(\rho_{\rm in})\bigr\|_1 \\
		&\le \|R_t(\theta,\delta)\|_{1\to1} \\
		&\le C_{\rm rem}(t,C_G).
	\end{align*}
	Expanding $K(\theta)$ as
	$
	K(\theta)=\frac1{\sqrt M}\sum_{j=1}^M\theta_j G_j
	$,
	we obtain~\eqref{eq:Delta_p_linear_response}--\eqref{eq:R_x_delta_bound}.
\end{proof}

\begin{remark}[Physical meaning of the first-order response coefficients $a_{j,x}$]
	The coefficient $a_{j,x}$ captures the linear response along the $j$-th local dissipative direction $G_j$: it is obtained by inserting $G_j$ into the reference dynamics $\Lambda_t^{({\rm ref})}$, integrating this response over the time window $[0,t]$, and then reading it out in the POVM outcome $x$. The collection of $a_{j,x}$ over all pairs $(j,x)$ characterizes the first-order effect of the random perturbation on the measurement statistics.
\end{remark}

\begin{remark}[Relation to Kubo linear response]\label{kubo}
	It is worth noting that the perturbative expansion used here is structurally analogous to Kubo’s linear-response theory. In the Hamiltonian closed-system case, an external field $f_j(t)$ couples to an operator $B_j$, and the response of some observable $A$ is written as~\cite{kamenev2023field}
	\[
	\delta\langle A(t)\rangle
	= \sum_j\int_{-\infty}^t \chi_{A,B_j}(t-s)\,f_j(s)\,\mathrm ds,
	\]
	where $\chi_{A,B_j}$ is expressed in terms of time-correlation functions of $A$ and $B_j$ under the unperturbed dynamics. In our open-system framework, $\delta\theta_j$ plays the role of a static external field, $G_j$ plays the role of the coupling channel, and $a_{j,x}$ is the response kernel of the measurement probability $P_{\mathcal L}(x)$ with respect to channel $j$. The differences are:
	\begin{enumerate}
		\item the dynamics is governed by a GKSL quantum Markov semigroup $\Lambda_t=e^{t\mathcal L}$ rather than a unitary flow;
		\item the observable we care about is the entire output distribution $P_{\mathcal L}$, measured by its TV distance to a reference distribution $Q$, i.e., a nonlinear distinguishability functional. In Theorem~\ref{thm:mean_TV_linear_response} we further average over the high-dimensional parameter sphere in $\theta$ and obtain the linear-response scaling of the mean TV distance $m(\delta)$.
	\end{enumerate}
	From this perspective, the linear-response analysis in this section can be viewed as a generalization of Kubo theory to the GKSL open-system setting, with a nonstandard observable tailored to learnability.
\end{remark}

\subsection{Absolute-value expectation of linear forms on the sphere}

We now give a purely geometric lemma describing the expectation of $|\langle\theta,b\rangle|$ on a high-dimensional sphere.

\begin{lemma}[Absolute-value expectation of one-dimensional projections on the sphere]\label{lem:sphere_inner_product}
	Let $M\ge2$, let $\theta\sim\mu_{\rm sph}$ be a uniformly random vector on $S^{M-1}\subset\mathbb R^M$, and $b\in\mathbb R^M$ any nonzero vector. Then
	\begin{equation}\label{eq:inner_product_expect_exact}
		\mathbb E_{\theta\sim\mu_{\rm sph}}
		\bigl|\langle\theta,b\rangle\bigr|
		= \|b\|_2\,
		\frac{\Gamma\bigl(\tfrac M2\bigr)}
		{\sqrt\pi\,\Gamma\bigl(\tfrac{M+1}2\bigr)}.
	\end{equation}
	In particular, in the high-dimensional limit $M\to\infty$,
	\begin{equation}\label{eq:inner_product_expect_asymp}
		\mathbb E_{\theta\sim\mu_{\rm sph}}
		\bigl|\langle\theta,b\rangle\bigr|
		= \|b\|_2\,\sqrt{\frac{2}{\pi M}}\,(1+o(1)).
	\end{equation}
\end{lemma}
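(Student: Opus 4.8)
The plan is to reduce the computation to a one–dimensional marginal of the uniform measure on $S^{M-1}$. First I would exploit rotational invariance: for any fixed nonzero $b\in\mathbb R^M$, write $b=\|b\|_2\,e$ with $e$ a unit vector, and note that $\mu_{\rm sph}$ is invariant under the orthogonal group, so $\langle\theta,b\rangle=\|b\|_2\langle\theta,e\rangle$ has the same distribution as $\|b\|_2\,\theta_1$ (the first coordinate of a uniform point on the sphere). Hence
\[
\mathbb E_{\theta\sim\mu_{\rm sph}}\bigl|\langle\theta,b\rangle\bigr|
=\|b\|_2\,\mathbb E_{\theta\sim\mu_{\rm sph}}|\theta_1|,
\]
and the problem becomes computing $\mathbb E|\theta_1|$.

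Next I would recall the explicit density of a single coordinate of a uniform point on $S^{M-1}$: for $M\ge 2$, $\theta_1$ has density proportional to $(1-u^2)^{(M-3)/2}$ on $[-1,1]$, with normalizing constant expressible via the Beta function, namely $\tfrac{1}{B(1/2,(M-1)/2)}$. Then
\[
\mathbb E|\theta_1|
=\frac{1}{B\bigl(\tfrac12,\tfrac{M-1}{2}\bigr)}\int_{-1}^{1}|u|\,(1-u^2)^{(M-3)/2}\,\mathrm du
=\frac{2}{B\bigl(\tfrac12,\tfrac{M-1}{2}\bigr)}\int_{0}^{1}u\,(1-u^2)^{(M-3)/2}\,\mathrm du.
\]
The remaining integral is elementary: substituting $v=u^2$ gives $\tfrac12\int_0^1 (1-v)^{(M-3)/2}\,\mathrm dv=\tfrac{1}{M-1}$. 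Assembling the Gamma-function expressions for the Beta function, $B(1/2,(M-1)/2)=\Gamma(1/2)\Gamma((M-1)/2)/\Gamma(M/2)=\sqrt\pi\,\Gamma((M-1)/2)/\Gamma(M/2)$, and simplifying with $\Gamma((M-1)/2)=\tfrac{2}{M-1}\Gamma((M+1)/2)$, I expect to land exactly on
\[
\mathbb E|\theta_1|=\frac{\Gamma(M/2)}{\sqrt\pi\,\Gamma((M+1)/2)},
\]
which is \eqref{eq:inner_product_expect_exact} after multiplying by $\|b\|_2$. (An alternative, equally clean route avoids the coordinate density entirely: write $\theta=g/\|g\|_2$ with $g\sim\mathcal N(0,I_M)$, so $\langle\theta,b\rangle=\langle g,b\rangle/\|g\|_2$; since $\langle g,b\rangle\sim\mathcal N(0,\|b\|_2^2)$ is independent of $\|g\|_2$, one gets $\mathbb E|\langle\theta,b\rangle|=\|b\|_2\,\mathbb E|Z|\cdot\mathbb E[\|g\|_2^{-1}]$ with $Z\sim\mathcal N(0,1)$, and $\mathbb E|Z|=\sqrt{2/\pi}$, $\mathbb E[\|g\|_2^{-1}]=\tfrac{\Gamma((M-1)/2)}{\sqrt2\,\Gamma(M/2)}$, again reproducing the ratio of Gammas. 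I would probably present the Gaussian route as the main argument since the independence of direction and norm makes it almost a one-liner.)

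Finally, for the asymptotic statement \eqref{eq:inner_product_expect_asymp} I would invoke the standard ratio asymptotics for the Gamma function, $\Gamma(z+a)/\Gamma(z+b)=z^{a-b}(1+O(1/z))$ as $z\to\infty$; with $z=M/2$, $a=0$, $b=1/2$ this gives $\Gamma(M/2)/\Gamma((M+1)/2)=(M/2)^{-1/2}(1+o(1))=\sqrt{2/M}\,(1+o(1))$, whence $\mathbb E_{\theta}|\langle\theta,b\rangle|=\|b\|_2\sqrt{2/(\pi M)}\,(1+o(1))$. There is no real obstacle here — the only mild care needed is to handle the degenerate case $b=0$ separately (trivial) and to make sure the edge case $M=2$ is covered by the Beta-function formula (it is, since $(1-u^2)^{-1/2}$ is integrable). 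The computation is entirely classical; the "hard part," such as it is, is merely keeping the Gamma-function bookkeeping straight, which the Gaussian representation sidesteps almost completely.
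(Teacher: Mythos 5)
Your main argument (reduction to $\mathbb{E}|\theta_1|$ by rotational invariance, the Beta-type density $(1-u^2)^{(M-3)/2}$ of a single coordinate, the substitution $v=u^2$, and Gamma-function bookkeeping, followed by Stirling-type asymptotics) is correct and is essentially identical to the paper's own proof. However, the Gaussian alternative that you say you would prefer to present as the main argument contains a genuine error, and as written it gives the wrong answer. The claim that $\langle g,b\rangle$ is independent of $\|g\|_2$ is false: take $b=e_1$, so $\langle g,b\rangle=g_1$, which obviously influences $\|g\|_2=\sqrt{g_1^2+\cdots+g_M^2}$. What is true is that the \emph{direction} $\theta=g/\|g\|_2$ is independent of the \emph{norm} $\|g\|_2$. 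This gives $\mathbb E|\langle g,b\rangle|=\mathbb E[\|g\|_2]\cdot\mathbb E|\langle\theta,b\rangle|$, hence $\mathbb E|\langle\theta,b\rangle|=\|b\|_2\sqrt{2/\pi}\,/\,\mathbb E[\|g\|_2]$ with $\mathbb E[\|g\|_2]=\sqrt2\,\Gamma(\tfrac{M+1}{2})/\Gamma(\tfrac{M}{2})$, which does recover the stated ratio. Your version instead produces $\|b\|_2\,\Gamma(\tfrac{M-1}{2})/\bigl(\sqrt\pi\,\Gamma(\tfrac{M}{2})\bigr)$, because the false independence would entail the factor $\mathbb E[\|g\|_2^{-1}]$ rather than $1/\mathbb E[\|g\|_2]$, and these are not equal. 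A quick sanity check at $M=3$ (where $\theta_1$ is uniform on $[-1,1]$, so $\mathbb E|\theta_1|=1/2$) shows your formula yields $2/\pi\neq 1/2$. So keep the coordinate-density derivation as the main proof; if you want the Gaussian shortcut, phrase it via direction/norm independence and $\mathbb E[\|g\|_2]$, not via the incorrect independence of $\langle g,b\rangle$ and $\|g\|_2$.
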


\begin{proof}
	By spherical symmetry, we may assume $b$ is aligned with the first coordinate direction, i.e., $b=\|b\|_2\,e_1$, where $e_1=(1,0,\dots,0)$. Then
	$
	\langle\theta,b\rangle=\|b\|_2\,\theta_1
	$,
	so
	\[
	\mathbb E_{\theta\sim\mu_{\rm sph}}\bigl|\langle\theta,b\rangle\bigr|
	= \|b\|_2\,\mathbb E_{\theta\sim\mu_{\rm sph}}|\theta_1|.
	\]
	It thus suffices to determine the distribution of $\theta_1$ and its absolute-value expectation.
	
	If $\theta$ is uniform on $S^{M-1}$, the density of $\theta_1$ on $[-1,1]$ is (a Beta-type density)
	\[
	f_M(u)
	= c_M(1-u^2)^{\frac{M-3}{2}},
	\qquad u\in[-1,1],
	\]
	where the normalization constant $c_M$ is given by
	\begin{align*}
		c_M^{-1}
		&= \int_{-1}^1(1-u^2)^{\frac{M-3}{2}}\,\mathrm du \\
		&= 2\int_0^1(1-u^2)^{\frac{M-3}{2}}\,\mathrm du.
	\end{align*}
	Letting $v=u^2$ yields
	\begin{align*}
		c_M^{-1}
		&= 2\int_0^1 (1-u^2)^{\frac{M-3}{2}}\,\mathrm du \\
		&= 2\int_0^1 (1-v)^{\frac{M-3}{2}}\frac{\mathrm dv}{2\sqrt v} \\
		&= \int_0^1 v^{-1/2}(1-v)^{\frac{M-3}{2}}\,\mathrm dv \\
		&= \mathrm{B}\Bigl(\tfrac12,\tfrac{M-1}{2}\Bigr),
	\end{align*}
	where $\mathrm{B}$ is the Beta function.
	Using
	$
	\mathrm{B}(x,y)=\frac{\Gamma(x)\Gamma(y)}{\Gamma(x+y)}
	$
	we obtain
	\[
	c_M
	= \frac{\Gamma\bigl(\tfrac M2\bigr)}
	{\sqrt\pi\,\Gamma\bigl(\tfrac{M-1}2\bigr)}.
	\]
	Now compute
	\begin{align*}
		\mathbb E_{\theta\sim\mu_{\rm sph}}|\theta_1|
		&= 2c_M\int_0^1 u(1-u^2)^{\frac{M-3}{2}}\,\mathrm du.
	\end{align*}
	Again let $v=u^2$, $\mathrm dv=2u\,\mathrm du$, to get
	\begin{align*}
		\mathbb E_{\theta\sim\mu_{\rm sph}}|\theta_1|
		&= 2c_M\int_0^1 u(1-u^2)^{\frac{M-3}{2}}\,\mathrm du \\
		&= c_M\int_0^1 (1-v)^{\frac{M-3}{2}}\,\mathrm dv \\
		&= c_M\,
		\mathrm{B}\Bigl(1,\tfrac{M-1}{2}\Bigr).
	\end{align*}
	Since
	\[
	\mathrm{B}\Bigl(1,\tfrac{M-1}{2}\Bigr)
	=\frac{\Gamma(1)\Gamma\bigl(\tfrac{M-1}{2}\bigr)}
	{\Gamma\bigl(\tfrac{M+1}{2}\bigr)}
	=\frac{\Gamma\bigl(\tfrac{M-1}{2}\bigr)}
	{\Gamma\bigl(\tfrac{M+1}{2}\bigr)},
	\]
	we find
	\begin{align*}
		\mathbb E_{\theta\sim\mu_{\rm sph}}|\theta_1|
		&= c_M\,\frac{\Gamma\bigl(\tfrac{M-1}{2}\bigr)}
		{\Gamma\bigl(\tfrac{M+1}{2}\bigr)} \\
		&= \frac{\Gamma\bigl(\tfrac M2\bigr)}
		{\sqrt\pi\,\Gamma\bigl(\tfrac{M+1}2\bigr)}.
	\end{align*}
	This gives~\eqref{eq:inner_product_expect_exact}.
	
	For the asymptotics~\eqref{eq:inner_product_expect_asymp}, apply Stirling’s formula for the Gamma function,
	\[
	\frac{\Gamma(z+\tfrac12)}{\Gamma(z)}
	\sim z^{1/2}
	\quad (z\to\infty),
	\]
	which implies
	\[
	\frac{\Gamma\bigl(\tfrac M2\bigr)}
	{\Gamma\bigl(\tfrac{M+1}2\bigr)}
	\sim \Bigl(\tfrac M2\Bigr)^{-1/2},
	\]
	and hence
	\[
	\mathbb E_{\theta\sim\mu_{\rm sph}}|\theta_1|
	= \frac{1}{\sqrt\pi}
	\frac{\Gamma\bigl(\tfrac M2\bigr)}
	{\Gamma\bigl(\tfrac{M+1}2\bigr)}
	\sim \frac{1}{\sqrt\pi}\Bigl(\frac{2}{M}\Bigr)^{1/2}
	= \sqrt{\frac{2}{\pi M}}.
	\]
	Substituting into $\mathbb E_{\theta\sim\mu_{\rm sph}}|\langle\theta,b\rangle|$ yields~\eqref{eq:inner_product_expect_asymp}.
\end{proof}

\begin{remark}[Relation to Gaussian approximation]\label{gauss}
	When $M$ is large, the coordinates of $\theta$ may be thought of as ``approximately independent Gaussians'' constrained by $\sum_j\theta_j^2=1$; Lemma~\ref{lem:sphere_inner_product} can be viewed as a precise formulation of this intuition. In the high-dimensional limit, the distribution of $\langle\theta,b\rangle$ approaches that of a Gaussian variable with variance $1/M$, and thus $\mathbb E|\langle\theta,b\rangle|\approx\|b\|_2\sqrt{2/(\pi M)}$.
\end{remark}

For later convenience, define
\begin{equation}\label{eq:kappa_M_def}
	\kappa_M
	:= \sqrt M\,
	\frac{\Gamma\bigl(\tfrac M2\bigr)}
	{\sqrt\pi\,\Gamma\bigl(\tfrac{M+1}2\bigr)},
\end{equation}
so that Lemma~\ref{lem:sphere_inner_product} is equivalent to
\begin{equation}\label{eq:inner_product_expect_kappaM}
	\mathbb E_{\theta\sim\mu_{\rm sph}}
	\bigl|\langle\theta,b\rangle\bigr|
	= \frac{\kappa_M}{\sqrt M}\,\|b\|_2,
	\qquad
	\kappa_M\to\sqrt{\frac{2}{\pi}}\quad(M\to\infty).
\end{equation}

\subsection{Linear-response scaling of the mean TV distance}

We now combine the two ingredients above to derive a linear-response expression for $m(\delta)$.

\begin{theorem}[Asymptotic scaling of the mean TV distance in the linear-response regime]\label{thm:mean_TV_linear_response}
	In the setting of Definition~\ref{def:random_local_L_recall} and
	Lemma~\ref{lem:perturbation_linear_response},
	define
	\[
	F_{\rm TV}(\theta)
	:= d_{\rm TV}\bigl(P_{\mathcal L(\theta)},Q\bigr)
	= \frac12\sum_{x\in X}
	\bigl|P_{\mathcal L(\theta)}(x)-Q(x)\bigr|,
	\quad
	m(\delta)
	:= \mathbb E_{\theta\sim\mu_{\rm sph}}F_{\rm TV}(\theta).
	\]
	Introduce the first-order response vectors
	\[
	a^{(x)} := (a_{1,x},\dots,a_{M,x})\in\mathbb R^M,
	\qquad
	\|a^{(x)}\|_2
	:= \Bigl(\sum_{j=1}^M a_{j,x}^2\Bigr)^{1/2},
	\]
	where $a_{j,x}$ is defined in~\eqref{eq:ajx_def}.
	Then in the linear-response regime $\delta\to0$, we have
	\begin{equation}\label{eq:m_delta_linear_asymp}
		m(\delta)
		= \delta\,m_0(M)
		+ \mathcal O(\delta^2),
	\end{equation}
	where
	\begin{equation}\label{eq:m0_M_def}
		m_0(M)
		:= \frac{\kappa_M}{2M}
		\sum_{x\in X}\|a^{(x)}\|_2,
	\end{equation}
	with $\kappa_M$ defined in~\eqref{eq:kappa_M_def}.
	In particular, in the high-dimensional limit $M\to\infty$,
	\begin{equation}\label{eq:m_delta_linear_asymp_simplified}
		m(\delta)
		= \delta\,\frac{1}{2M}\sqrt{\frac{2}{\pi}}
		\sum_{x\in X}\|a^{(x)}\|_2
		+ \mathcal O(\delta^2)
		+ o(\delta),
	\end{equation}
	where the $o(\delta)$ term comes from the asymptotic error in
	$\kappa_M\to\sqrt{2/\pi}$.
\end{theorem}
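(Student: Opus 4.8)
The plan is to combine the linear-response expansion of Lemma~\ref{lem:perturbation_linear_response} with the spherical absolute-value formula of Lemma~\ref{lem:sphere_inner_product}, taking care to control the error terms uniformly so that they survive the double summation over $x\in X$ and the spherical expectation. First I would fix $x\in X$ and write, using \eqref{eq:Delta_p_linear_response},
\[
\Delta p_x(\theta)
= \frac{\delta}{\sqrt M}\langle\theta,a^{(x)}\rangle
+ R_x(\theta,\delta),
\qquad
\sup_{\theta\in S^{M-1}}|R_x(\theta,\delta)|\le C_{\rm rem}(t,C_G)\,\delta^2 .
\]
The elementary inequality $\bigl||u|-|v|\bigr|\le|u-v|$ then gives, pointwise in $\theta$,
\[
\Bigl|\,|\Delta p_x(\theta)| - \tfrac{\delta}{\sqrt M}\,|\langle\theta,a^{(x)}\rangle|\,\Bigr|
\le |R_x(\theta,\delta)|
\le C_{\rm rem}\,\delta^2 .
\]
Summing over $x$, dividing by $2$, and taking $\mathbb E_{\theta\sim\mu_{\rm sph}}$, the triangle inequality for expectations yields
\[
\Bigl|\,m(\delta) - \frac{\delta}{2\sqrt M}\sum_{x\in X}\mathbb E_{\theta\sim\mu_{\rm sph}}|\langle\theta,a^{(x)}\rangle|\,\Bigr|
\le \frac{|X|}{2}\,C_{\rm rem}\,\delta^2 = \mathcal O(\delta^2),
\]
where the implied constant depends on $|X|$, $t$, $C_G$ but not on $\delta$.

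Next I would evaluate the leading term. By Lemma~\ref{lem:sphere_inner_product} in the form \eqref{eq:inner_product_expect_kappaM}, each spherical expectation is exactly $\mathbb E_{\theta}|\langle\theta,a^{(x)}\rangle| = (\kappa_M/\sqrt M)\,\|a^{(x)}\|_2$, so
\[
\frac{\delta}{2\sqrt M}\sum_{x\in X}\mathbb E_{\theta}|\langle\theta,a^{(x)}\rangle|
= \frac{\delta}{2\sqrt M}\cdot\frac{\kappa_M}{\sqrt M}\sum_{x\in X}\|a^{(x)}\|_2
= \delta\cdot\frac{\kappa_M}{2M}\sum_{x\in X}\|a^{(x)}\|_2
= \delta\, m_0(M),
\]
which is precisely \eqref{eq:m0_M_def}. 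This establishes \eqref{eq:m_delta_linear_asymp}. For the high-dimensional statement \eqref{eq:m_delta_linear_asymp_simplified}, I would invoke the Stirling asymptotics already proved inside Lemma~\ref{lem:sphere_inner_product}, namely $\kappa_M = \sqrt{2/\pi}\,(1+o(1))$ as $M\to\infty$; writing $\kappa_M = \sqrt{2/\pi} + (\kappa_M-\sqrt{2/\pi})$ and substituting into $\delta\,m_0(M)$ splits the leading term into $\delta\cdot\frac{1}{2M}\sqrt{2/\pi}\sum_x\|a^{(x)}\|_2$ plus a term that is $o(\delta)$ (with the $o(\cdot)$ referring to the $M\to\infty$ limit at fixed $\delta$, absorbing the factor $\frac{1}{2M}\sum_x\|a^{(x)}\|_2$, which is bounded under the standing normalization). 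Collecting the three pieces gives \eqref{eq:m_delta_linear_asymp_simplified}.

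The main obstacle, and the place where I would be most careful, is the uniformity of the remainder bound across the summation and expectation: the argument only works because \eqref{eq:R_x_delta_bound} is uniform in both $\theta\in S^{M-1}$ \emph{and} $M$, which is exactly what Lemma~\ref{lem:perturbation_linear_response} supplies (the constant $C_{\rm rem}(t,C_G)=C_G^2t^2/2$ came from CPTP contractivity plus $\|K(\theta)\|_{1\to1}\le C_G$). A secondary subtlety is bookkeeping of the two distinct limiting regimes — $\delta\to0$ at fixed $M$ for the $\mathcal O(\delta^2)$ term, versus $M\to\infty$ for the $o(\delta)$ term coming from $\kappa_M$ — so the final display \eqref{eq:m_delta_linear_asymp_simplified} should be read as an expansion valid for small $\delta$ and large $M$, with the two error symbols referring to these two limits respectively; I would state this explicitly rather than conflate them. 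No genuinely hard estimate is required beyond what the two cited lemmas already provide; the proof is essentially an assembly, and the only real work is making the error-tracking honest.
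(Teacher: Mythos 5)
Your proposal is correct and uses essentially the same argument as the paper: expand $\Delta p_x(\theta)$ via Lemma~\ref{lem:perturbation_linear_response}, control the remainder pointwise with the triangle inequality $\bigl||u|-|v|\bigr|\le|u-v|$, evaluate the leading spherical expectation via Lemma~\ref{lem:sphere_inner_product} (in the $\kappa_M$ form of~\eqref{eq:inner_product_expect_kappaM}), and substitute the Stirling asymptotics for $\kappa_M$. The only difference from the paper's presentation is cosmetic order of operations (you sum over $x$ before taking the $\theta$-expectation, the paper does the reverse), and your explicit flag that $\mathcal O(\delta^2)$ and $o(\delta)$ track two distinct limits ($\delta\to0$ at fixed $M$ versus $M\to\infty$) is a correct sharpening of what the paper leaves implicit.
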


\begin{proof}
	By Lemma~\ref{lem:perturbation_linear_response},
	\[
	\Delta p_x(\theta)
	= \frac{\delta}{\sqrt M}\sum_{j=1}^M\theta_j a_{j,x}
	+ R_x(\theta,\delta)
	= \frac{\delta}{\sqrt M}\,\langle\theta,a^{(x)}\rangle
	+ R_x(\theta,\delta),
	\]
	with
	\(
	\sup_{\theta}|R_x(\theta,\delta)|
	\le C_{\rm rem}\delta^2
	\).
	Hence
	\[
	\bigl|\Delta p_x(\theta)\bigr|
	\le \frac{\delta}{\sqrt M}
	\bigl|\langle\theta,a^{(x)}\rangle\bigr|
	+ |R_x(\theta,\delta)|.
	\]
	By the triangle inequality,
	\[
	\Bigl|\,
	\bigl|\Delta p_x(\theta)\bigr|
	- \frac{\delta}{\sqrt M}
	\bigl|\langle\theta,a^{(x)}\rangle\bigr|
	\Bigr|
	\le |R_x(\theta,\delta)|
	\le C_{\rm rem}\delta^2.
	\]
	Taking expectation over $\theta$ and applying Lemma~\ref{lem:sphere_inner_product} yields
	\begin{align*}
		\mathbb E_\theta\bigl|\Delta p_x(\theta)\bigr|
		&= \frac{\delta}{\sqrt M}\,
		\mathbb E_\theta
		\bigl|\langle\theta,a^{(x)}\rangle\bigr|
		+ \mathcal O(\delta^2) \\
		&= \frac{\delta}{\sqrt M}\,
		\frac{\kappa_M}{\sqrt M}\,\|a^{(x)}\|_2
		+ \mathcal O(\delta^2) \\
		&= \delta\,\frac{\kappa_M}{M}\,\|a^{(x)}\|_2
		+ \mathcal O(\delta^2),
	\end{align*}
	where the constant hidden in $\mathcal O(\delta^2)$ can be controlled by
	$C_{\rm rem}(t,C_G)$ and $|X|$, and is independent of $M$ and $\theta$.
	
	By the definition of TV distance,
	\[
	F_{\rm TV}(\theta)
	= \frac12\sum_{x\in X}|\Delta p_x(\theta)|,
	\]
	and taking expectation over $\theta$ gives
	\begin{align*}
		m(\delta)
		&= \mathbb E_\theta F_{\rm TV}(\theta)
		= \frac12\sum_{x\in X}
		\mathbb E_\theta\bigl|\Delta p_x(\theta)\bigr| \\
		&= \frac12\sum_{x\in X}
		\Bigl[\delta\,\frac{\kappa_M}{M}\,\|a^{(x)}\|_2
		+ \mathcal O(\delta^2)\Bigr].
	\end{align*}
	Separating the first-order and $\mathcal O(\delta^2)$ terms,
	\[
	m(\delta)
	= \delta\,\frac{\kappa_M}{2M}
	\sum_{x\in X}\|a^{(x)}\|_2
	+ \mathcal O(\delta^2),
	\]
	which is exactly~\eqref{eq:m_delta_linear_asymp}--\eqref{eq:m0_M_def}.
	
	Finally, using the asymptotic behaviour
	\(
	\kappa_M\to\sqrt{2/\pi}
	\)
	from Lemma~\ref{lem:sphere_inner_product}, we may write
	\(
	\kappa_M
	= \sqrt{2/\pi}+\mathcal O(M^{-1})
	\),
	and substituting this into $m_0(M)$ yields~\eqref{eq:m_delta_linear_asymp_simplified}: the leading linear term in $\delta$ has coefficient
	\(
	\frac{1}{2M}\sqrt{2/\pi}\sum_{x\in X}\|a^{(x)}\|_2
	\),
	while the deviation of $\kappa_M$ from $\sqrt{2/\pi}$ contributes an additional $o(\delta)$ term.
\end{proof}

\begin{remark}[Relation to the open-system Porter-Thomas mean assumption]\label{rmk:mean_TV_PT}
	The expression for
	$m(\delta)=\mathbb E d_{\rm TV}(P_{\mathcal L(\theta)},Q)$
	in Theorem~\ref{thm:mean_TV_linear_response} provides a linear-response, analytic incarnation of the ``open-system Porter-Thomas mean assumption''. We can distinguish the following cases:
	\begin{enumerate}
		\item 
		If the scaling of $\sum_x\|a^{(x)}\|_2$ remains extensive as $M\to\infty$, for example
		\[
		\frac{1}{M}\sum_{x\in X}\|a^{(x)}\|_2
		\to c_a>0,
		\]
		then
		\[
		m(\delta)
		\approx \delta\cdot \frac{\sqrt{2/\pi}}{2}\,c_a,
		\]
		i.e., in the large-system limit the mean TV distance maintains an $\mathcal O(\delta)$ constant scaling.
		
		\item If in some concrete models
		$\sum_x\|a^{(x)}\|_2$ grows slower than $M$, e.g.,
		$
		\sum_x\|a^{(x)}\|_2 = o(M)
		$,
		then $m_0(M)\to0$ and the mean TV distance shrinks with system size. This means that in such an ensemble and for the given measurement interface, most
		$P_{\mathcal L(\theta)}$ become TV-close to the reference $Q$; in this regime, the SQ-hardness criterion based on
		$d_{\rm TV}(P_{\mathcal L},Q)$ cannot yield a nontrivial lower bound.
	\end{enumerate}
\end{remark}

\begin{remark}[Parameter dimension $M$ and validity of linear response]
	It is important to emphasize that in the linear-response expansion~\eqref{eq:Delta_p_linear_response},
	$\delta$ is the small parameter, and the growth of the parameter dimension $M$ is explicitly controlled via
	$\|K(\theta)\|_{1\to1}\le C_G$ and the $1/\sqrt M$ scaling in Lemma~\ref{lem:sphere_inner_product}.
	As long as $\delta$ is small enough so that the second-order remainder $\mathcal O(\delta^2)$ is negligible compared to the first-order term, linear response remains valid even for large $M$. High dimension does not break linear response; it merely enters into the definition of $m_0(M)$ through the $1/\sqrt M$ scaling of
	$\mathbb E|\langle\theta, a^{(x)}\rangle|$.
\end{remark}

\section{Linear response mean scaling and numerical simulation in a random local amplitude-damping Lindbladian model}\label{zhang5}

In this section, within the framework of the random local Lindbladian ensemble
(Definition~\ref{def:random_local_L}),
we select a concrete and explicitly computable one-dimensional amplitude-damping model. In this model we compute explicitly the first-order linear-response coefficients
\(\{a_{j,x}\}\) and their averaged norms
\(\frac{1}{M}\sum_x \|a^{(x)}\|_2\), and thereby obtain the scaling constant \(m_0>0\) for the mean TV distance
\(\mathbb E_\theta d_{\mathrm{TV}}(P_{\mathcal L(\theta)},Q)\)
in the small-noise limit. We also perform numerical simulations on the one-dimensional local amplitude-damping model in
Definition~\ref{def:local_amp_damp_model},
in order to test
Theorem~\ref{thm:TV_linear_response_mean}
and provide numerical support for Assumption~\ref{ass:mean_TV}.

\subsection{scaling constant \(m_0>0\) for the mean TV distance}\label{zhang15}

\begin{definition}[Random local amplitude-damping Lindbladian model]
	\label{def:local_amp_damp_model}
	Building on the setting of Definition~\ref{def:random_local_L},
	we make the following concrete choices:
	
	\begin{enumerate}
		\item 
		The system is a one-dimensional chain of $N$ qubits, with Hilbert space
		\[
		\mathcal H = (\mathbb C^2)^{\otimes N},\qquad d = 2^N.
		\]
		
		\item 
		For convenience in the first-order linear-response analysis, in this toy model we set
		\[
		\mathcal L_{\mathrm{ref}} = 0,
		\]
		so that the reference evolution is the identity
		\(\Lambda_t^{(\mathrm{ref})} = e^{t\mathcal L_{\mathrm{ref}}}=\mathrm{id}\).
		The input state is chosen as the all-$1$ computational-basis state
		\[
		\rho_{\mathrm{in}} := \ket{1\dots 1}\!\bra{1\dots 1}.
		\]
		
		\item 
		On each site $j=1,\dots,N$ we introduce a local amplitude-damping jump operator
		\[
		J_j^{(-)} := \sqrt{\gamma}\,\sigma_j^-,
		\]
		where $\sigma_j^-$ is the lowering operator acting on the $j$-th qubit and $\gamma>0$ is the single-qubit decay rate.
		For each $J_j^{(-)}$ we define the corresponding dissipator
		\[
		G_j := \mathcal D_j^{(-)},\qquad
		\mathcal D_j^{(-)}(\rho)
		:= J_j^{(-)}\rho J_j^{(-)\dagger}
		-\frac12\{J_j^{(-)\dagger}J_j^{(-)},\rho\}.
		\]
		The parameter dimension is thus
		\[
		M := N.
		\]
		
		\item 
		The measurement is the global computational-basis POVM
		\[
		\{M_x\}_{x\in\{0,1\}^N},\qquad
		M_x := \ket{x}\!\bra{x},\quad
		X := \{0,1\}^N.
		\]
		We write
		\(\ket{1^N} := \ket{1\dots 1}\),
		and for each $j$ define the bitstring
		\[
		x_j^{(0)} \in \{0,1\}^N
		\]
		to be the string with the $j$-th bit equal to $0$ and all others equal to $1$.
		Evidently,
		\(\{1^N\}\cup\{x_j^{(0)}\}_{j=1}^N\subset X\).
		
		\item 
		The parameter space is the real sphere
		\[
		S^{M-1}
		= \{\theta\in\mathbb R^M:\ \|\theta\|_2=1\},\qquad M=N.
		\]
		For a small parameter \(\delta>0\),
		we define the linear parametrization
		\[
		\mathcal L(\theta)
		:= \mathcal L_{\mathrm{ref}}
		+ \frac{\delta}{\sqrt M}\sum_{j=1}^M\theta_j G_j
		= \frac{\delta}{\sqrt N}\sum_{j=1}^N\theta_j G_j,\qquad
		\theta\in S^{M-1}.
		\]
		We take the rotationally invariant measure $\mu_{\mathrm{sph}}$ on $S^{M-1}$, and push it forward via $\theta\mapsto\mathcal L(\theta)$ to obtain
		\[
		\mu_{\mathcal L}
		:= (\mathcal L)_\#\mu_{\mathrm{sph}},
		\]
		which defines a random local amplitude-damping Lindbladian ensemble
		\(
		\mathcal F := \{\mathcal L(\theta):\ \theta\in S^{M-1}\}.
		\)
		
		\item 
		In this model, it is natural to take the reference distribution $Q$ as the measurement distribution under the reference dynamics (i.e., $\mathcal L_{\mathrm{ref}}=0$):
		\[
		Q(x)
		:= \operatorname{Tr}(M_x\rho_{\mathrm{in}})
		= \delta_{x,1^N},\qquad x\in X.
		\]
	\end{enumerate}
\end{definition}

\begin{remark}
	In more general settings,
	$\mathcal L_{\mathrm{ref}}$ may be any GKSL generator of a QMS, e.g., a local amplitude-damping term plus a local Hamiltonian,
	and the reference distribution $Q$ may be chosen as the ensemble average
	\(
	Q[\varphi]
	= \mathbb E_{\mathcal L\sim\mu_{\mathcal L}}P_{\mathcal L}[\varphi].
	\)
	In the small-noise limit \(\delta\to 0\), different choices of reference only change higher-order corrections and do not affect the first-order linear-response scaling constant.
	In this section we set $\mathcal L_{\mathrm{ref}}=0$ purely to make $a_{j,x}$ explicitly computable.
\end{remark}

\begin{lemma}[Action of the local amplitude-damping dissipator on the input state]
	\label{lem:local_amp_damp_action}
	In the setting of Definition~\ref{def:local_amp_damp_model},
	for each $j=1,\dots,N$,
	\[
	G_j(\rho_{\mathrm{in}})
	= \gamma\Bigl(
	\ket{x_j^{(0)}}\!\bra{x_j^{(0)}}
	- \ket{1^N}\!\bra{1^N}
	\Bigr),
	\]
	where $x_j^{(0)}$ is the bitstring with the $j$-th bit equal to $0$ and the others $1$, and \(\ket{1^N}=\ket{1\dots 1}\).
\end{lemma}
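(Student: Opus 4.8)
This is a straightforward direct computation. The plan is to apply the single-site amplitude-damping dissipator $G_j = \mathcal D_j^{(-)}$ with $J_j^{(-)} = \sqrt\gamma\,\sigma_j^-$ to the product input state $\rho_{\mathrm{in}} = \ket{1^N}\!\bra{1^N}$ and observe that, because the jump operator acts nontrivially only on qubit $j$, everything factorizes and reduces to a one-qubit calculation.

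First I would record the single-qubit action. Writing $\sigma^- = \ket 0\!\bra 1$, so that $\sigma^- \ket 1 = \ket 0$ and $\sigma^{-\dagger}\sigma^- = \ket 1\!\bra 1$, I compute on the one-qubit state $\ket 1\!\bra 1$:
\[
\mathcal D^{(-)}(\ket 1\!\bra 1)
= \gamma\Bigl(\sigma^-\ket 1\!\bra 1\sigma^{-\dagger}
- \tfrac12\{\sigma^{-\dagger}\sigma^-,\ket 1\!\bra 1\}\Bigr)
= \gamma\bigl(\ket 0\!\bra 0 - \ket 1\!\bra 1\bigr),
\]
since $\sigma^{-\dagger}\sigma^-\ket 1\!\bra 1 = \ket 1\!\bra 1\sigma^{-\dagger}\sigma^- = \ket 1\!\bra 1$, so the anticommutator equals $2\ket 1\!\bra 1$.

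Next I would lift this to the $N$-qubit chain. Since $\rho_{\mathrm{in}} = \bigotimes_{i=1}^N \ket 1\!\bra 1$ and $G_j$ acts as the identity on every tensor factor except the $j$-th, we have
\[
G_j(\rho_{\mathrm{in}})
= \Bigl(\bigotimes_{i<j}\ket 1\!\bra 1\Bigr)
\otimes \mathcal D^{(-)}(\ket 1\!\bra 1)
\otimes \Bigl(\bigotimes_{i>j}\ket 1\!\bra 1\Bigr)
= \gamma\Bigl(\ket{x_j^{(0)}}\!\bra{x_j^{(0)}} - \ket{1^N}\!\bra{1^N}\Bigr),
\]
where in the last step I substitute $\mathcal D^{(-)}(\ket 1\!\bra 1) = \gamma(\ket 0\!\bra 0 - \ket 1\!\bra 1)$ and recognize that replacing the $j$-th factor $\ket 1\!\bra 1$ by $\ket 0\!\bra 0$ produces exactly the computational-basis projector onto the bitstring $x_j^{(0)}$ (the all-ones string with a single $0$ in position $j$), while keeping $\ket 1\!\bra 1$ everywhere reproduces $\ket{1^N}\!\bra{1^N}$. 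This is the claimed identity.

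There is essentially no obstacle here; the only thing to be careful about is bookkeeping of the tensor-factor positions and the normalization conventions for $\sigma^-$ and the dissipator (i.e.\ confirming $J_j^{(-)\dagger}J_j^{(-)} = \gamma\,\ket 1\!\bra 1$ on site $j$, so the factor of $\gamma$ appears exactly once rather than being double-counted between the jump term and the anticommutator term). Since $\mathcal L_{\mathrm{ref}} = 0$ the reference semigroup is the identity and no propagator factors intervene, so this lemma feeds directly into the first-order response coefficients $a_{j,x}$ of Eq.~\eqref{eq:ajx_def}, which will then be immediate to evaluate in the subsequent computation of $\frac1M\sum_x\|a^{(x)}\|_2$.
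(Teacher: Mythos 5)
Your proof is correct and follows essentially the same route as the paper: first compute the single-qubit amplitude-damping dissipator acting on $\ket{1}\!\bra{1}$, then observe that because $G_j$ acts trivially on all other tensor factors the many-body result follows by substitution. The only cosmetic difference is that you package the lift via superoperator tensor-factorization in one step, whereas the paper writes out the jump term $J_j^{(-)}\rho_{\mathrm{in}}J_j^{(-)\dagger}$ and the anticommutator term separately in the $N$-qubit setting; both are equivalent and equally rigorous.
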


\begin{proof}
	We first examine the single-qubit amplitude-damping dissipator
	\(
	\mathcal D^{(-)}(\rho)
	:= \gamma(\sigma^-\rho\sigma^+ - \tfrac12\{\sigma^+\sigma^-,\rho\})
	\),
	where
	\(
	\sigma^- = \ket0\!\bra1
	\),
	\(
	\sigma^+ = \ket1\!\bra0
	\).
	For the single-qubit input \(\rho = \ket1\!\bra1\), we have
	\begin{align*}
		\sigma^-\rho\sigma^+
		&= \ket0\!\bra1\,\ket1\!\bra1\,\ket1\!\bra0
		= \ket0\!\bra0,\\
		\sigma^+\sigma^-
		&= \ket1\!\bra1,\\
		\frac12\{\sigma^+\sigma^-,\rho\}
		&= \frac12(\ket1\!\bra1\,\rho + \rho\ket1\!\bra1)
		= \ket1\!\bra1.
	\end{align*}
	Therefore
	\[
	\mathcal D^{(-)}(\ket1\!\bra1)
	= \gamma(\ket0\!\bra0 - \ket1\!\bra1).
	\]
	
	On the many-body system,
	$J_j^{(-)} = \sqrt{\gamma}\,\sigma_j^-$
	acts only on the $j$-th qubit, with identity on all others.
	Writing
	\(\ket{1^N} = \ket{1}\otimes\dots\otimes\ket1\),
	we have
	\[
	J_j^{(-)}\ket{1^N}
	= \sqrt{\gamma}\,\ket1\otimes\dots\otimes\ket0_j\otimes\dots\otimes\ket1
	= \sqrt{\gamma}\,\ket{x_j^{(0)}},
	\]
	and hence
	\[
	J_j^{(-)}\rho_{\mathrm{in}}J_j^{(-)\dagger}
	= \gamma\,\ket{x_j^{(0)}}\!\bra{x_j^{(0)}}.
	\]
	On the other hand,
	\[
	J_j^{(-)\dagger}J_j^{(-)}
	= \gamma\,\bigl(\mathbb I^{\otimes(j-1)}
	\otimes \ket1\!\bra1{}_j
	\otimes \mathbb I^{\otimes(N-j)}\bigr),
	\]
	so for the input state $\rho_{\mathrm{in}}=\ket{1^N}\!\bra{1^N}$,
	\[
	\frac12\{J_j^{(-)\dagger}J_j^{(-)},\rho_{\mathrm{in}}\}
	= \gamma\,\ket{1^N}\!\bra{1^N}.
	\]
	Substituting these into the definition of $G_j=\mathcal D_j^{(-)}$, we obtain
	\[
	G_j(\rho_{\mathrm{in}})
	= \gamma\bigl(\ket{x_j^{(0)}}\!\bra{x_j^{(0)}}
	- \ket{1^N}\!\bra{1^N}\bigr),
	\]
	as claimed.
\end{proof}

\begin{lemma}[Explicit expression for the first-order linear-response coefficients $a_{j,x}$]
	\label{lem:a_jx_explicit}
	In the setting of Definition~\ref{def:local_amp_damp_model},
	let
	\[
	\Lambda_t^{(\theta)}
	:= e^{t\mathcal L(\theta)},\qquad
	P_{\mathcal L(\theta)}(x)
	:= \operatorname{Tr}\bigl(M_x\,\Lambda_t^{(\theta)}(\rho_{\mathrm{in}})\bigr).
	\]
	In the small-noise limit $\delta\to 0$,
	there exist coefficients $a_{j,x}\in\mathbb R$ such that
	\begin{equation}\label{eq:linear_response_P}
		P_{\mathcal L(\theta)}(x)
		= Q(x)
		+ \frac{\delta}{\sqrt M}\sum_{j=1}^M\theta_j a_{j,x}
		+ \mathcal O(\delta^2),
	\end{equation}
	where $Q(x)=\delta_{x,1^N}$,
	and
	\begin{equation}\label{eq:a_jx_explicit}
		a_{j,x}
		= t\gamma\bigl(\delta_{x,x_j^{(0)}} - \delta_{x,1^N}\bigr),
		\qquad j=1,\dots,M,\quad x\in X,
	\end{equation}
	with $M=N$.
\end{lemma}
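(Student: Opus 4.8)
The plan is to combine the general first-order perturbation formula from Lemma~\ref{lem:perturbation_linear_response} with the explicit action of the amplitude-damping dissipator computed in Lemma~\ref{lem:local_amp_damp_action}, using crucially the simplification $\mathcal L_{\mathrm{ref}}=0$. Since $\mathcal L_{\mathrm{ref}}=0$, the reference semigroup is the identity, $\Lambda_s^{(\mathrm{ref})}=\mathrm{id}$ for all $s\ge0$, so the integral kernel in \eqref{eq:ajx_def} collapses: $\int_0^t \Lambda_{t-s}^{(\mathrm{ref})}\circ G_j\circ\Lambda_s^{(\mathrm{ref})}(\rho_{\mathrm{in}})\,\mathrm ds = \int_0^t G_j(\rho_{\mathrm{in}})\,\mathrm ds = t\,G_j(\rho_{\mathrm{in}})$, because $G_j(\rho_{\mathrm{in}})$ does not depend on $s$. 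This reduces the computation of $a_{j,x}$ to a single trace evaluation.

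First I would invoke Lemma~\ref{lem:perturbation_linear_response} to get the expansion \eqref{eq:Delta_p_linear_response}, i.e.\ $P_{\mathcal L(\theta)}(x) = Q(x) + \frac{\delta}{\sqrt M}\sum_j \theta_j a_{j,x} + R_x(\theta,\delta)$ with $\sup_\theta |R_x(\theta,\delta)|\le C_{\mathrm{rem}}\delta^2$; this immediately gives \eqref{eq:linear_response_P} with the $\mathcal O(\delta^2)$ remainder. Second, I would specialize \eqref{eq:ajx_def} to the present model: using $\Lambda_s^{(\mathrm{ref})}=\mathrm{id}$ I get $a_{j,x} = t\,\operatorname{Tr}(M_x\,G_j(\rho_{\mathrm{in}}))$. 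Third, I substitute the formula from Lemma~\ref{lem:local_amp_damp_action}, $G_j(\rho_{\mathrm{in}}) = \gamma(\ket{x_j^{(0)}}\!\bra{x_j^{(0)}} - \ket{1^N}\!\bra{1^N})$, together with $M_x = \ket{x}\!\bra{x}$, so that $\operatorname{Tr}(M_x G_j(\rho_{\mathrm{in}})) = \gamma(\braket{x|x_j^{(0)}}^2 - \braket{x|1^N}^2) = \gamma(\delta_{x,x_j^{(0)}} - \delta_{x,1^N})$ since the computational-basis states are orthonormal. This yields $a_{j,x} = t\gamma(\delta_{x,x_j^{(0)}} - \delta_{x,1^N})$, which is exactly \eqref{eq:a_jx_explicit}. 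Finally I would note $M=N$ by construction and that $Q(x) = \operatorname{Tr}(M_x\rho_{\mathrm{in}}) = \braket{x|1^N}^2 = \delta_{x,1^N}$, closing the statement.

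There is no serious obstacle here: the only things to be careful about are (i) justifying that the $s$-independence of $G_j(\rho_{\mathrm{in}})$ lets one pull $G_j(\rho_{\mathrm{in}})$ out of the time integral, which is immediate once $\Lambda_s^{(\mathrm{ref})}=\mathrm{id}$ is noted, and (ii) keeping track of which bitstrings $x_j^{(0)}$ are distinct from $1^N$ and from each other (they are, for $j=1,\dots,N$). If I wanted to be fully careful I would also remark that for $x\notin\{1^N\}\cup\{x_j^{(0)}\}_{j=1}^N$ one has $a_{j,x}=0$ for every $j$, so the first-order response is supported on the $N+1$ distinguished outcomes, a fact that will be used downstream when summing $\sum_x\|a^{(x)}\|_2$. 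The argument is essentially a direct substitution, so the proof should be short.
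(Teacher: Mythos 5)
Your proof is correct and reaches the same final formula, but you take a slightly different route from the paper. The paper does a fresh first-order Taylor expansion of $e^{t\mathcal L(\theta)}=\mathbb I+t\mathcal L(\theta)+\mathcal O(\delta^2)$ directly (which is cheap since $\mathcal L_{\mathrm{ref}}=0$ makes the reference semigroup trivial), acts on $\rho_{\mathrm{in}}$, traces with $M_x$, and then plugs in Lemma~\ref{lem:local_amp_damp_action}. You instead invoke the general Duhamel-based linear-response result, Lemma~\ref{lem:perturbation_linear_response}, and observe that because $\Lambda_s^{(\mathrm{ref})}=\mathrm{id}$ the kernel in \eqref{eq:ajx_def} collapses to $t\,G_j(\rho_{\mathrm{in}})$, then substitute the same explicit action. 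Both arguments are valid and yield identical $a_{j,x}$. Your approach is arguably cleaner in that it makes the lemma an explicit specialization of the general machinery rather than re-deriving it ad hoc, and it also automatically inherits the uniform second-order remainder bound from Lemma~\ref{lem:perturbation_linear_response}; the paper's direct expansion is more self-contained but slightly redundant. Your closing remark that the response is supported on the $N+1$ outcomes $\{1^N\}\cup\{x_j^{(0)}\}_{j=1}^N$ anticipates Lemma~\ref{lem:a_x_norms} correctly.
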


\begin{proof}
	Since \(\mathcal L(\theta)\) depends linearly on $\delta$, we write
	\[
	\mathcal L(\theta)
	= \frac{\delta}{\sqrt M}\sum_{j=1}^M\theta_j G_j,
	\qquad
	\mathcal L_{\mathrm{ref}}=0.
	\]
	Expanding $e^{t\mathcal L(\theta)}$ around $\delta=0$ to first order gives
	\[
	\Lambda_t^{(\theta)}
	= e^{t\mathcal L(\theta)}
	= \mathbb I + t\mathcal L(\theta) + \mathcal O(\delta^2)
	= \mathbb I
	+ t\frac{\delta}{\sqrt M}\sum_{j=1}^M\theta_j G_j
	+ \mathcal O(\delta^2).
	\]
	Acting on the input state $\rho_{\mathrm{in}}$,
	\[
	\Lambda_t^{(\theta)}(\rho_{\mathrm{in}})
	= \rho_{\mathrm{in}}
	+ t\frac{\delta}{\sqrt M}
	\sum_{j=1}^M\theta_j G_j(\rho_{\mathrm{in}})
	+ \mathcal O(\delta^2).
	\]
	For each measurement outcome $x\in X$,
	\begin{align*}
		P_{\mathcal L(\theta)}(x)
		&= \operatorname{Tr}\bigl(M_x\Lambda_t^{(\theta)}(\rho_{\mathrm{in}})\bigr) \\
		&= \operatorname{Tr}(M_x\rho_{\mathrm{in}})
		+ t\frac{\delta}{\sqrt M}
		\sum_{j=1}^M\theta_j\operatorname{Tr}\bigl(M_x G_j(\rho_{\mathrm{in}})\bigr)
		+ \mathcal O(\delta^2).
	\end{align*}
	Since $\rho_{\mathrm{in}}=\ket{1^N}\!\bra{1^N}$ and $M_x=\ket{x}\!\bra{x}$,
	\[
	\operatorname{Tr}(M_x\rho_{\mathrm{in}})
	= \delta_{x,1^N}
	= Q(x).
	\]
	Using Lemma~\ref{lem:local_amp_damp_action},
	\[
	\operatorname{Tr}\bigl(M_x G_j(\rho_{\mathrm{in}})\bigr)
	= \gamma(\delta_{x,x_j^{(0)}} - \delta_{x,1^N}).
	\]
	It follows that
	\[
	P_{\mathcal L(\theta)}(x)
	= Q(x)
	+ \frac{\delta}{\sqrt M}
	\sum_{j=1}^M\theta_j
	\bigl[t\gamma(\delta_{x,x_j^{(0)}} - \delta_{x,1^N})\bigr]
	+ \mathcal O(\delta^2),
	\]
	so the first-order coefficients are
	\[
	a_{j,x}
	:= t\gamma(\delta_{x,x_j^{(0)}} - \delta_{x,1^N}),
	\]
	which yields~\eqref{eq:linear_response_P} and
	\eqref{eq:a_jx_explicit}.
\end{proof}

\begin{lemma}[Norms and mean scaling of the first-order coefficient vectors $a^{(x)}$]
	\label{lem:a_x_norms}
	In the setting of the previous lemma,
	associate to each outcome $x\in X$ the first-order coefficient vector
	\[
	a^{(x)}
	:= (a_{1,x},\dots,a_{M,x})\in\mathbb R^M,\qquad
	\|a^{(x)}\|_2
	:= \Bigl(\sum_{j=1}^M a_{j,x}^2\Bigr)^{1/2}.
	\]
	Then:
	\begin{enumerate}
		\item For the all-$1$ outcome $x=1^N$,
		\[
		\|a^{(1^N)}\|_2
		= \gamma t\sqrt{M}.
		\]
		\item For each $j=1,\dots,M$,
		the outcome $x_j^{(0)}$ satisfies
		\[
		\|a^{(x_j^{(0)})}\|_2
		= \gamma t.
		\]
		\item For all other outcomes (bitstrings containing at least two zeros),
		we have
		\(
		\|a^{(x)}\|_2 = 0
		\).
		\item Consequently, the averaged norm satisfies
		\begin{equation}\label{eq:a_x_norm_average}
			\frac{1}{M}\sum_{x\in X}\|a^{(x)}\|_2
			= \gamma t\Bigl(1 + \frac{1}{\sqrt{M}}\Bigr)
			\xrightarrow[M\to\infty]{}\gamma t.
		\end{equation}
	\end{enumerate}
\end{lemma}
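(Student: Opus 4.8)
The plan is to simply plug the explicit formula \eqref{eq:a_jx_explicit} for $a_{j,x}$ into the definition of $a^{(x)}$ and compute the Euclidean norms outcome by outcome, then sum. First I would observe that, by Lemma~\ref{lem:a_jx_explicit}, $a_{j,x}=t\gamma(\delta_{x,x_j^{(0)}}-\delta_{x,1^N})$, so the value of $a_{j,x}$ depends only on whether $x$ equals $1^N$, equals one of the single-defect strings $x_k^{(0)}$, or neither. This gives three cases to treat.

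For the first case, $x=1^N$: here $\delta_{x,x_j^{(0)}}=0$ for every $j$ (since $x_j^{(0)}\ne 1^N$) while $\delta_{x,1^N}=1$, so $a_{j,1^N}=-t\gamma$ for all $j=1,\dots,M$. Hence $\|a^{(1^N)}\|_2=\bigl(\sum_{j=1}^M (t\gamma)^2\bigr)^{1/2}=\gamma t\sqrt M$, giving part (1). For the second case, $x=x_k^{(0)}$ for a fixed $k$: now $\delta_{x,1^N}=0$, and $\delta_{x,x_j^{(0)}}=\delta_{jk}$ because the single-defect strings are pairwise distinct; therefore $a_{j,x_k^{(0)}}=t\gamma\,\delta_{jk}$, a single nonzero entry, and $\|a^{(x_k^{(0)})}\|_2=\gamma t$, giving part (2). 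For the third case, $x$ has at least two zero bits: then $x\ne 1^N$ and $x\ne x_j^{(0)}$ for every $j$, so every $a_{j,x}=0$ and $\|a^{(x)}\|_2=0$, giving part (3). These three cases exhaust all $x\in X=\{0,1\}^N$.

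For part (4) I would just collect the contributions: the sum $\sum_{x\in X}\|a^{(x)}\|_2$ receives $\gamma t\sqrt M$ from $x=1^N$, receives $\gamma t$ from each of the $M$ strings $x_k^{(0)}$, $k=1,\dots,M$ (using $M=N$, so there are exactly $N=M$ single-defect strings), and $0$ from all remaining strings. Thus $\sum_{x\in X}\|a^{(x)}\|_2=\gamma t\sqrt M+M\,\gamma t=\gamma t\,(M+\sqrt M)$, and dividing by $M$ yields $\frac1M\sum_{x\in X}\|a^{(x)}\|_2=\gamma t\bigl(1+\frac1{\sqrt M}\bigr)$, which tends to $\gamma t$ as $M\to\infty$, establishing \eqref{eq:a_x_norm_average}.

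There is essentially no obstacle here; the only point requiring a moment's care is the bookkeeping of which outcomes $x$ are ``hit'' by the Kronecker deltas — in particular noting that the strings $\{x_k^{(0)}\}_{k=1}^M$ are all distinct and all distinct from $1^N$, so no double counting occurs — together with the trivial but essential fact $M=N$ in this model, which is what makes the number of single-defect outcomes equal to the parameter dimension and thereby produces the clean $1+1/\sqrt M$ scaling.
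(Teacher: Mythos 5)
Your proof is correct and follows essentially the same route as the paper's: plug the explicit Kronecker-delta formula for $a_{j,x}$ from Lemma~\ref{lem:a_jx_explicit} into the definition of $a^{(x)}$, compute $\|a^{(x)}\|_2$ in the three cases $x=1^N$, $x=x_k^{(0)}$, and all other bitstrings, then sum and divide by $M$. The bookkeeping points you flag (distinctness of the single-defect strings from each other and from $1^N$, and the identification $M=N$) are exactly the points the paper's proof relies on.
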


\begin{proof}
	From~\eqref{eq:a_jx_explicit},
	for fixed $x$ the components are
	\[
	a_{j,x} = t\gamma(\delta_{x,x_j^{(0)}} - \delta_{x,1^N}).
	\]
	
	\medskip
	(1) For $x=1^N$,
	\(\delta_{x,x_j^{(0)}} = 0\) and $\delta_{x,1^N}=1$, so
	\[
	a_{j,1^N} = -t\gamma,\qquad j=1,\dots,M.
	\]
	Hence
	\[
	\|a^{(1^N)}\|_2
	= \Bigl(\sum_{j=1}^M (t\gamma)^2\Bigr)^{1/2}
	= \gamma t\sqrt{M}.
	\]
	
	\medskip
	(2) For $x=x_k^{(0)}$ (the bitstring with the $k$-th bit $0$ and others $1$),
	\[
	\delta_{x_k^{(0)},x_j^{(0)}} = \delta_{j,k},\qquad
	\delta_{x_k^{(0)},1^N} = 0,
	\]
	so
	\[
	a_{j,x_k^{(0)}}
	= t\gamma(\delta_{j,k} - 0)
	= \begin{cases}
		t\gamma, & j=k,\\
		0, & j\neq k.
	\end{cases}
	\]
	Thus
	\[
	\|a^{(x_k^{(0)})}\|_2
	= \bigl((t\gamma)^2 + 0 + \dots + 0\bigr)^{1/2}
	= \gamma t.
	\]
	
	\medskip
	(3) For any other $x$ (with at least two zeros), $x$ is neither $1^N$ nor any $x_j^{(0)}$, so
	\(
	\delta_{x,x_j^{(0)}} = 0 = \delta_{x,1^N}
	\)
	for all $j$, whence $a_{j,x}=0$ and the norm is zero.
	
	\medskip
	(4) Summing over all $x$, only $x=1^N$ and the $M$ single-flip strings $x_j^{(0)}$ contribute, hence
	\begin{align*}
		\sum_{x\in X}\|a^{(x)}\|_2
		&= \|a^{(1^N)}\|_2 + \sum_{j=1}^M \|a^{(x_j^{(0)})}\|_2
		+ \sum_{\text{other }x}\|a^{(x)}\|_2 \\
		&= \gamma t\sqrt{M} + M\gamma t.
	\end{align*}
	Dividing by $M$ yields
	\begin{align*}
		\frac{1}{M}\sum_{x\in X}\|a^{(x)}\|_2
		&= \frac{\gamma t\sqrt{M} + M\gamma t}{M} \\
		&= \gamma t\Bigl(1 + \frac{1}{\sqrt{M}}\Bigr),
	\end{align*}
	which converges to $\gamma t$ as $M\to\infty$.
\end{proof}

\begin{theorem}[Mean linear-response TV distance in the random local amplitude-damping model]\label{thm:TV_linear_response_mean}
	In the setting of Definition~\ref{def:local_amp_damp_model},
	Lemmas~\ref{lem:a_jx_explicit} and
	\ref{lem:a_x_norms},
	define
	\[
	F_{\mathrm{TV}}(\theta)
	:= d_{\mathrm{TV}}(P_{\mathcal L(\theta)},Q)
	= \frac12\sum_{x\in X}
	\bigl|P_{\mathcal L(\theta)}(x)-Q(x)\bigr|,
	\]
	and
	\[
	m(\delta)
	:= \mathbb E_{\theta\sim\mu_{\mathrm{sph}}}
	\bigl[F_{\mathrm{TV}}(\theta)\bigr].
	\]
	Then in the small-noise limit $\delta\to 0$,
	\begin{equation}\label{eq:m_epsilon_general}
		m(\delta)
		= \delta\,
		\frac{\kappa_M}{2M}
		\sum_{x\in X}\|a^{(x)}\|_2
		+ \mathcal O(\delta^2),
	\end{equation}
	where $\kappa_M$ is as in Lemma~\ref{lem:sphere_inner_product} and Remark~\ref{gauss}.
	Using Lemma~\ref{lem:a_x_norms} we obtain in particular
	\begin{equation}\label{eq:m_epsilon_toy_model}
		m(\delta)
		= \delta\,
		\frac{\kappa_M}{2}\,\gamma t
		\Bigl(1+\frac{1}{\sqrt{M}}\Bigr)
		+ \mathcal O(\delta^2),
	\end{equation}
	and in the limit $M\to\infty$,
	\begin{equation}\label{eq:m_epsilon_limit}
		m(\delta)
		\xrightarrow[M\to\infty]{}
		\delta\,
		\gamma t\,
		\sqrt{\frac{1}{2\pi}}
		+ \mathcal O(\delta^2).
	\end{equation}
	In other words, in this concrete random local amplitude-damping model, the first-order linear-response scaling constant of the mean TV distance is
	\[
	m_l := \gamma t\sqrt{\frac{1}{2\pi}}>0,
	\]
	so that
	\[
	m(\delta)
	= m_l\,\delta + \mathcal O(\delta^2)
	\quad(M\to\infty).
	\]
\end{theorem}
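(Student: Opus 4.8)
The statement is essentially an application of the general linear-response scaling result (Theorem~\ref{thm:mean_TV_linear_response}) to the explicit coefficient data computed in Lemmas~\ref{lem:a_jx_explicit} and~\ref{lem:a_x_norms}, so the plan is to assemble these pieces in order. First I would recall that Theorem~\ref{thm:mean_TV_linear_response} applies verbatim once we check its hypotheses: we are in the setting of Definition~\ref{def:random_local_L_recall} (which is the special case $\mathcal{L}_{\mathrm{ref}}=0$, $\rho_{\mathrm{in}}=\ket{1^N}\!\bra{1^N}$, $G_j=\mathcal D_j^{(-)}$ of Definition~\ref{def:local_amp_damp_model}), and the uniform norm bound $\|G_j\|_{1\to1}\le C_G$ needed for Lemma~\ref{lem:perturbation_linear_response} holds because each single-site amplitude-damping dissipator has an $M$-independent $1\to1$ norm. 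This immediately yields Eq.~\eqref{eq:m_delta_linear_asymp}, i.e.\ $m(\delta)=\delta\,m_0(M)+\mathcal O(\delta^2)$ with $m_0(M)=\frac{\kappa_M}{2M}\sum_{x\in X}\|a^{(x)}\|_2$, which is exactly~\eqref{eq:m_epsilon_general}.

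\textbf{Second step: substitute the explicit coefficient norms.} Here I would invoke Lemma~\ref{lem:a_x_norms}(4), which states $\frac1M\sum_{x\in X}\|a^{(x)}\|_2=\gamma t\bigl(1+\tfrac1{\sqrt M}\bigr)$. Plugging this into $m_0(M)$ gives
\[
m_0(M)=\frac{\kappa_M}{2}\,\gamma t\Bigl(1+\frac{1}{\sqrt M}\Bigr),
\]
which, combined with the first step, is precisely~\eqref{eq:m_epsilon_toy_model}. I should make sure the underlying first-order expansion is the same one used in Theorem~\ref{thm:mean_TV_linear_response}: Lemma~\ref{lem:a_jx_explicit} derives the expansion $P_{\mathcal L(\theta)}(x)=Q(x)+\frac{\delta}{\sqrt M}\sum_j\theta_j a_{j,x}+\mathcal O(\delta^2)$ directly (exploiting $\mathcal{L}_{\mathrm{ref}}=0$ so that $e^{t\mathcal L(\theta)}=\mathbb I+t\mathcal L(\theta)+\mathcal O(\delta^2)$), and the resulting $a_{j,x}=t\gamma(\delta_{x,x_j^{(0)}}-\delta_{x,1^N})$ agrees with the general formula~\eqref{eq:ajx_def} specialized to $\Lambda_s^{(\mathrm{ref})}=\mathrm{id}$ and $G_j=\mathcal D_j^{(-)}$ via Lemma~\ref{lem:local_amp_damp_action}. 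So the two expansions coincide and the substitution is legitimate.

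\textbf{Third step: take $M\to\infty$.} By Lemma~\ref{lem:sphere_inner_product} (equivalently Eq.~\eqref{eq:inner_product_expect_kappaM}) we have $\kappa_M\to\sqrt{2/\pi}$, and $1+\tfrac1{\sqrt M}\to 1$, so $m_0(M)\to\frac12\sqrt{2/\pi}\,\gamma t=\gamma t\sqrt{1/(2\pi)}$. This gives~\eqref{eq:m_epsilon_limit} and identifies $m_\ell:=\gamma t\sqrt{1/(2\pi)}>0$, with the positivity being immediate since $\gamma,t>0$. One should note that the $\mathcal O(\delta^2)$ remainder is uniform in $M$ (its constant is controlled by $C_{\mathrm{rem}}(t,C_G)$ and $|X|$ in Lemma~\ref{lem:perturbation_linear_response}), so the limit $M\to\infty$ can be taken at fixed small $\delta$ without disturbing the error term; the only $M$-dependent correction beyond $\mathcal O(\delta^2)$ is the $o(\delta)$ piece coming from $\kappa_M-\sqrt{2/\pi}=\mathcal O(1/M)$, exactly as in~\eqref{eq:m_delta_linear_asymp_simplified}.

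\textbf{Main obstacle.} There is no deep difficulty here — the theorem is a bookkeeping assembly of earlier results — but the one point that needs genuine care is the order of limits and the uniformity of the remainder: the statement $m(\delta)=m_\ell\delta+\mathcal O(\delta^2)$ ``$(M\to\infty)$'' is a slight abuse of notation, and I would phrase it carefully as: for each fixed large $M$, $m(\delta)=\delta\cdot\frac{\kappa_M}{2}\gamma t(1+M^{-1/2})+\mathcal O_M(\delta^2)$ with the implicit constant bounded uniformly in $M$, and hence $\lim_{M\to\infty}\bigl(m(\delta)/\delta\bigr)=\gamma t\sqrt{1/(2\pi)}+\mathcal O(\delta)$. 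A secondary (trivial) check is verifying $\|G_j\|_{1\to1}\le C_G$ with $C_G$ independent of $M$ for the amplitude-damping dissipators, which follows because $\|\mathcal D_j^{(-)}\|_{1\to1}\le 2\|J_j^{(-)}\|_\infty^2=2\gamma$ by the triangle inequality on the two terms in the dissipator, acting as the identity on the remaining $N-1$ factors.
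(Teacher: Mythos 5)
Your proposal is correct and follows essentially the same route as the paper's proof: both rely on the linear-response expansion of $P_{\mathcal L(\theta)}(x)-Q(x)$ from Lemma~\ref{lem:a_jx_explicit}, the spherical inner-product expectation from Lemma~\ref{lem:sphere_inner_product}, the explicit coefficient norms from Lemma~\ref{lem:a_x_norms}, and then the limit $\kappa_M\to\sqrt{2/\pi}$. The only cosmetic difference is that you invoke the general Theorem~\ref{thm:mean_TV_linear_response} as a black box (whose proof is exactly the argument the paper re-derives inline here), and you are somewhat more explicit than the paper about the $M$-uniformity of the $\mathcal O(\delta^2)$ remainder and the resulting order of limits — a minor improvement in precision, but not a different proof.
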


\begin{proof}
	By Lemma~\ref{lem:a_jx_explicit},
	\[
	P_{\mathcal L(\theta)}(x)-Q(x)
	= \frac{\delta}{\sqrt{M}}
	\sum_{j=1}^M \theta_j a_{j,x}
	+ \mathcal O(\delta^2)
	= \frac{\delta}{\sqrt{M}}\,
	\langle\theta,a^{(x)}\rangle
	+ \mathcal O(\delta^2),
	\]
	where $a^{(x)}=(a_{1,x},\dots,a_{M,x})$.
	Thus the TV distance can be written as
	\[
	F_{\mathrm{TV}}(\theta)
	= \frac12\sum_{x\in X}
	\bigl|P_{\mathcal L(\theta)}(x)-Q(x)\bigr|
	= \frac{\delta}{2\sqrt{M}}
	\sum_{x\in X}
	\bigl|\langle\theta,a^{(x)}\rangle\bigr|
	+ \mathcal O(\delta^2).
	\]
	Taking expectation over $\theta\sim\mu_{\mathrm{sph}}$ and using Lemma~\ref{lem:sphere_inner_product},
	\[
	\mathbb E_\theta
	\bigl[|\langle\theta,a^{(x)}\rangle|\bigr]
	= \frac{\kappa_M}{\sqrt{M}}\,
	\|a^{(x)}\|_2.
	\]
	Hence
	\begin{align*}
		m(\delta)
		&= \mathbb E_\theta F_{\mathrm{TV}}(\theta) \\
		&= \frac{\delta}{2\sqrt{M}}
		\sum_{x\in X}
		\mathbb E_\theta
		\bigl[|\langle\theta,a^{(x)}\rangle|\bigr]
		+ \mathcal O(\delta^2) \\
		&= \frac{\delta}{2\sqrt{M}}
		\sum_{x\in X}
		\frac{\kappa_M}{\sqrt{M}}\,
		\|a^{(x)}\|_2
		+ \mathcal O(\delta^2) \\
		&= \delta\,
		\frac{\kappa_M}{2M}
		\sum_{x\in X}\|a^{(x)}\|_2
		+ \mathcal O(\delta^2),
	\end{align*}
	which is~\eqref{eq:m_epsilon_general}.
	
	For this concrete model,
	Lemma~\ref{lem:a_x_norms} and~\eqref{eq:a_x_norm_average} give
	\[
	\frac{1}{M}\sum_{x\in X}\|a^{(x)}\|_2
	= \gamma t\Bigl(1 + \frac{1}{\sqrt{M}}\Bigr),
	\]
	so
	\begin{align*}
		m(\delta)
		&= \delta\,\frac{\kappa_M}{2M}
		\sum_{x\in X}\|a^{(x)}\|_2
		+ \mathcal O(\delta^2) \\
		&= \delta\,\frac{\kappa_M}{2}\,\gamma t
		\Bigl(1 + \frac{1}{\sqrt{M}}\Bigr)
		+ \mathcal O(\delta^2),
	\end{align*}
	which is~\eqref{eq:m_epsilon_toy_model}.
	
	Using
	\(\kappa_M\to\sqrt{2/\pi}\) (Remark~\ref{gauss}) and
	\(
	1+1/\sqrt{M}\to 1
	\),
	we obtain
	\begin{align*}
		m(\delta)
		&\xrightarrow[M\to\infty]{}
		\delta\,
		\frac{1}{2}\sqrt{\frac{2}{\pi}}\gamma t
		+ \mathcal O(\delta^2) \\
		&= \delta\,\gamma t
		\sqrt{\frac{1}{2\pi}}
		+ \mathcal O(\delta^2),
	\end{align*}
	i.e., the first-order linear-response constant is
	\(m_l=\gamma t\sqrt{1/(2\pi)}\), as claimed.
\end{proof}

\begin{remark}[Physical interpretation and scaling summary]
	Theorem~\ref{thm:TV_linear_response_mean} shows that in this random local amplitude-damping Lindbladian model, although the parameter dimension $M=N$ grows with system size, in the small-noise limit the linear-response coefficient
	\(\frac{1}{M}\sum_x\|a^{(x)}\|_2\)
	of the mean TV distance does not decay with $M$, but instead converges to a strictly positive constant
	\(\gamma t\). Multiplying this by the spherical geometric factor $\kappa_M/2$ gives the macroscopic constant
	\(m_0=\gamma t\sqrt{1/(2\pi)}\).
	This indicates that in this class of ``local Lindbladian + random parameter sphere'' models, as long as the noise strength $\delta$ is fixed, the mean TV distance does not shrink with increasing parameter dimension. This supports the modeling expectation in the ``open-system Porter-Thomas mean assumption'' that $m_0>0$ does not depend on $M$ (or at least does not decay with $M$).
\end{remark}

\subsection{Numerical simulations: validating the open-system Porter--Thomas mean assumption}
\label{subsec:numerics}

In this subsection we perform numerical simulations on the one-dimensional local amplitude-damping model in
Definition~\ref{def:local_amp_damp_model},
in order to test
Theorem~\ref{thm:TV_linear_response_mean}
and provide numerical support for Assumption~\ref{ass:mean_TV}.
By Theorem~\ref{thm:TV_linear_response_mean},
in the limit $M \rightarrow \infty$ we have
\[
m(\delta)
:= \mathbb{E}_{\theta\sim\mu_{\mathrm{sph}}}
d_{\mathrm{TV}}\bigl(P_{\mathcal L(\theta)},Q\bigr)
= m_l\,\delta + \mathcal O(\delta^2),\qquad
m_l = \gamma t\sqrt{\tfrac{1}{2\pi}} ,
\]
where $m_l$ is independent of the parameter dimension $M$.
In the numerics we set
$\gamma=t=1$, so that the theoretical slope is
$m_l=\sqrt{1/(2\pi)}\approx 0.399$.

The numerical model is exactly the same as in
Theorem~\ref{thm:TV_linear_response_mean}:
the system is an $N$-qubit chain with Hilbert space dimension $d=2^N$,
the initial state is
$\rho_{\mathrm{in}}=\ket{1^N}\!\bra{1^N}$,
the measurement is the global computational-basis POVM,
and the reference distribution is $Q(x)=\delta_{x,1^N}$.
In the Liouville representation,
for each local dissipator $G_j$ we construct the $d^2\times d^2$ super-operator matrix using the column-major vectorization convention
\(
\mathrm{vec}(A\rho B^\dagger)
=(B^\mathsf{T}\otimes A)\mathrm{vec}(\rho)
\).
Given $\theta\in S^{M-1}$ we form
\[
L(\theta)
=
\frac{\delta}{\sqrt{M}}\sum_{j=1}^M \theta_j G_j,
\qquad
\Lambda_t(\theta)
=
\exp\bigl(t L(\theta)\bigr),
\]
and then obtain $\rho_{t,\theta}$ from
\(
\mathrm{vec}(\rho_{t,\theta})
=
\Lambda_t(\theta)\,\mathrm{vec}(\rho_{\mathrm{in}})
\).
Reading off the diagonal entries of $\rho_{t,\theta}$ yields the output distribution $P_\theta$.
For each pair $(N,\delta)$,
we independently sample
$n_{\mathrm{samp}}$ parameters
$\theta^{(\ell)}$ from the sphere $S^{M-1}$,
and estimate
\[
\widehat m(\delta;N)
=
\frac{1}{n_{\mathrm{samp}}}
\sum_{\ell=1}^{n_{\mathrm{samp}}}
d_{\mathrm{TV}}\bigl(P_{\theta^{(\ell)}},Q\bigr)
\]
via Monte Carlo,
where $M=N$.
The sample standard deviation $\widehat\sigma(\delta;N)$
yields the standard error
$\widehat\sigma(\delta;N)/\sqrt{n_{\mathrm{samp}}}$,
which we use as the error bars.

Fig.~\ref{fig:m_delta_small} shows the numerical results for
$N=2,3,4,5,6$
in the linear-response window
$\delta\in[0.1,0.2]$.
Within this interval we take $10$ equally spaced points,
and at each point we estimate
$\widehat m(\delta;N)$ with
$n_{\mathrm{samp}}=200$ samples.
For each $N$ we perform a linear fit
\(
\widehat m(\delta;N)
\approx m_{mc}^{(N)}\,\delta + b_N .
\)
The fitted slopes
$m_{mc}^{(N)}$
are concentrated in the interval
$0.39\text{--}0.42$;
for example,
$m_{mc}^{(2)}\approx 0.423$,
$m_{mc}^{(4)}\approx 0.403$,
$m_{mc}^{(6)}\approx 0.391$.
These agree with the theoretical value
$m_l\approx0.399$
within a relative error of about $5\%\text{--}10\%$,
while the intercepts $b_N$ are much smaller than the data itself and numerically close to $0$.
As $N$ increases (and in this model $M=N$),
the fitted lines gradually approach the common theoretical line
$m(\delta)=m_l\delta$,
indicating that the finite-size corrections decrease with growing $M$,
in agreement with the $M\to\infty$ limit in
Theorem~\ref{thm:TV_linear_response_mean}.
This provides a direct numerical validation of our analytic derivation in Sec.~\ref{zhang15},
and supports the physical interpretation in Assumption~\ref{ass:mean_TV}
that “$m>0$ is a constant independent of $M$, rather than decaying with $M$”.

\begin{figure}[t]
	\centering
	\includegraphics[width=0.48\textwidth]{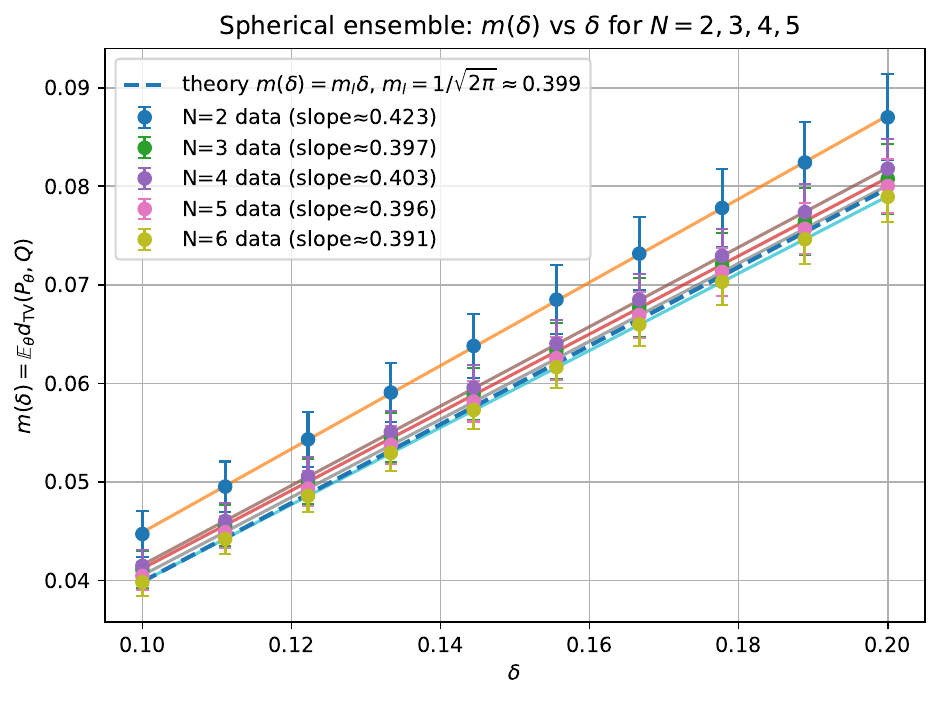}
	\caption{%
		Average TV distance
		$m(\delta;N)
		=\mathbb{E}_\theta d_{\mathrm{TV}}(P_\theta,Q)$
		in the spherical random local Lindbladian model,
		for $N=2,\dots,6$ in the small-noise window
		$\delta\in[0.1,0.2]$.
		Solid dots are Monte Carlo data with error bars indicating the standard error.
		Solid lines are linear fits for each $N$,
		while the dashed line shows the theoretical linear response
		$m(\delta)=m_l\delta$ with
		$m_l=\sqrt{1/(2\pi)}\approx0.399$.}
	\label{fig:m_delta_small}
\end{figure}

\begin{figure}[t]
	\centering
	\includegraphics[width=0.48\textwidth]{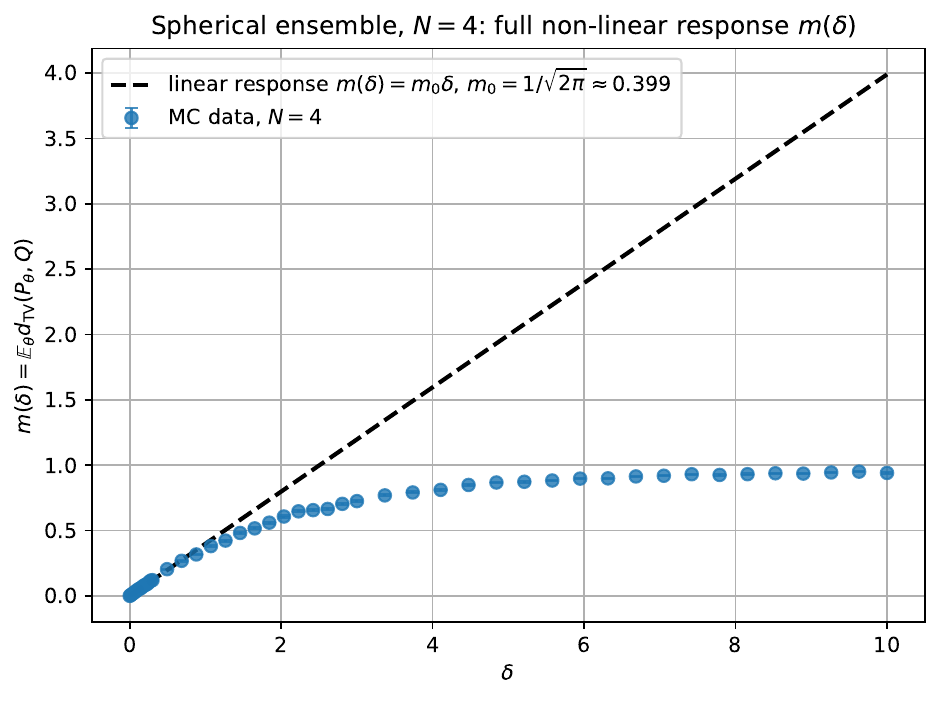}
	\caption{%
		Average TV distance
		$m(\delta;4)$
		in the spherical random local Lindbladian model
		for a fixed system size $N=4$
		over a wide noise range
		$\delta\in[0,10]$.
		Solid dots are Monte Carlo data with error bars,
		and the dashed line shows the linear-response prediction
		$m(\delta)=m_l\delta$.
		The two coincide very well in the small-noise regime,
		while for larger $\delta$
		the curve $m(\delta)$ gradually bends away from the straight line and saturates,
		illustrating the nonlinear response of the Lindbladian dynamics.
		In particular, as long as $\delta>0$,
		$m(\delta;4)$ remains strictly positive and of order $\mathcal O(1)$.}
	\label{fig:m_delta_full}
\end{figure}

To probe the breakdown of linear response and the strong-noise behavior,
we further fix $N=4$
and extend the noise strength to
$\delta\in[0,10]$.
We use non-uniform step sizes in the three intervals
$\delta\in[0.01,0.3]$,
$[0.3,3.0]$, and $[3.0,10.0]$
(with denser sampling at small $\delta$).
At each point we estimate
$\widehat m(\delta;4)$ with
$n_{\mathrm{samp}}=400$ samples.
The results are shown in Fig.~\ref{fig:m_delta_full}.
In the small-noise regime $\delta\lesssim0.2$,
the numerical data almost perfectly follows the straight line $m_l\delta$,
again confirming the expansion
$m(\delta)
= m_l\delta+\mathcal O(\delta^2)$
from Theorem~\ref{thm:TV_linear_response_mean}.
When $\delta$ increases to $O(1)$ and beyond,
$m(\delta)$ continues to grow monotonically,
but clearly lies below the linear extrapolation,
and after $\delta\sim 3$
gradually saturates to an $\mathcal O(1)$ constant.
This can be understood as follows:
once the local dissipation strength is sufficiently large,
the system almost completely relaxes within the fixed evolution time $t$
to a steady state that is essentially independent of $\theta$
(near the all-zero state in this model).
In this regime,
the TV distance between $P_\theta$ and the reference distribution $Q$ is bounded by the structure of the steady state itself and no longer grows linearly with $\delta$,
leading to a characteristic nonlinear response and saturation in the large-noise region.

To more directly characterize the behavior of the finite-size error term
$\epsilon_{\mathrm{mean}}>0$
in Assumption~\ref{ass:mean_TV}
as the system parameters increase,
we further fix two representative noise strengths:
one in the linear-response regime,
$\delta=0.15$,
and one in the nonlinear regime,
$\delta=1.0$.
We then study
\[
m(\delta;N)
:=
\mathbb{E}_{\theta} d_{\mathrm{TV}}\bigl(P_{\mathcal L(\theta)},Q\bigr)
\]
as a function of the number of qubits $N$.
Numerically we take $N=1,2,3,4,5,6$,
and for each $(\delta,N)$ we estimate
$\widehat m(\delta;N)$ using
$n_{\mathrm{samp}}=400$ samples.
For a fixed $\delta$,
we regard the value at the largest system size $N_{\max}=6$,
\(
\widehat m(\delta;N_{\max}),
\)
as an ``approximate asymptotic value''
$\widehat m_\infty(\delta)$,
and plot only the data for $N=1,\dots,5$.
The left and right panels of Fig.~\ref{fig:m_vs_N} show the results for
$\delta=0.15$ (linear-response regime) and
$\delta=1.0$ (nonlinear regime), respectively;
the horizontal dashed lines indicate the numerical reference values
$\widehat m_\infty(\delta)$.

We observe that,
both in the small-noise and in the nonlinear regime,
$\widehat m(\delta;N)$
converges steadily towards the same horizontal dashed line as $N$ increases.
For $\delta=0.15$,
as $N$ grows from $1$ to $5$,
the average TV distance decreases from about $7\times10^{-2}$
and approaches $\widehat m_\infty(0.15)$.
For $\delta=1.0$,
$m(\delta;N)$ in the nonlinear regime exhibits a mild increase followed by a rapid saturation,
and approaches the reference value at $N=6$.
For a given noise strength $\delta$,
the average TV distance shows only a weak dependence on $N$ at finite size,
and quickly converges to a constant that is independent of $N$ (and hence $M$),
in line with the assumption in Assumption~\ref{ass:mean_TV}
that there exists a limiting mean $m_0(\delta)$.

\begin{figure}[t]
	\centering
	\includegraphics[width=0.9\textwidth]{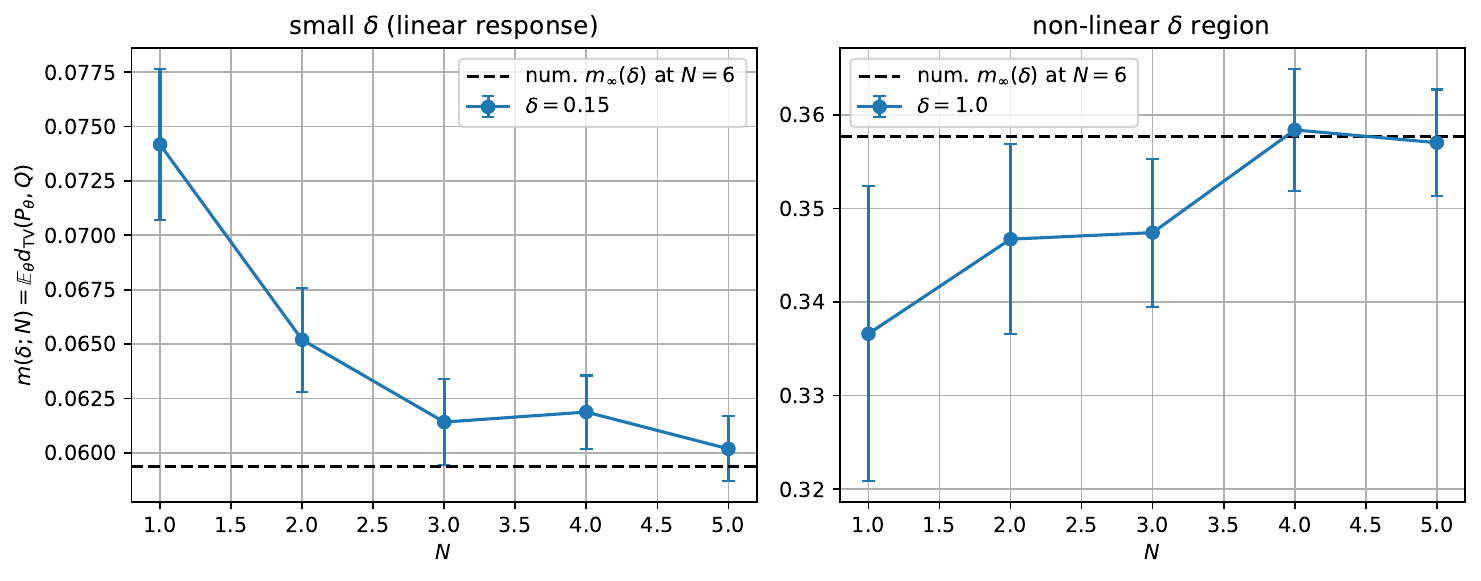}
	\caption{%
		Average TV distance
		$m(\delta;N)
		=\mathbb{E}_\theta d_{\mathrm{TV}}(P_\theta,Q)$
		as a function of the number of qubits $N$
		in the spherical random local Lindbladian model.
		The left panel shows the representative point $\delta=0.15$ in the linear-response regime,
		and the right panel shows the representative point $\delta=1.0$ in the nonlinear regime.
		Solid dots are Monte Carlo data with error bars indicating the standard error,
		and the dashed lines indicate the numerical reference values at $N=6$,
		$\widehat m_\infty(\delta)
		:=\widehat m(\delta;N=6)$.
		As $N$ increases from $1$ to $5$,
		$m(\delta;N)$ quickly converges to a common macroscopic constant at both noise strengths.}
	\label{fig:m_vs_N}
\end{figure}

\begin{figure}[t]
	\centering
	\includegraphics[width=0.6\textwidth]{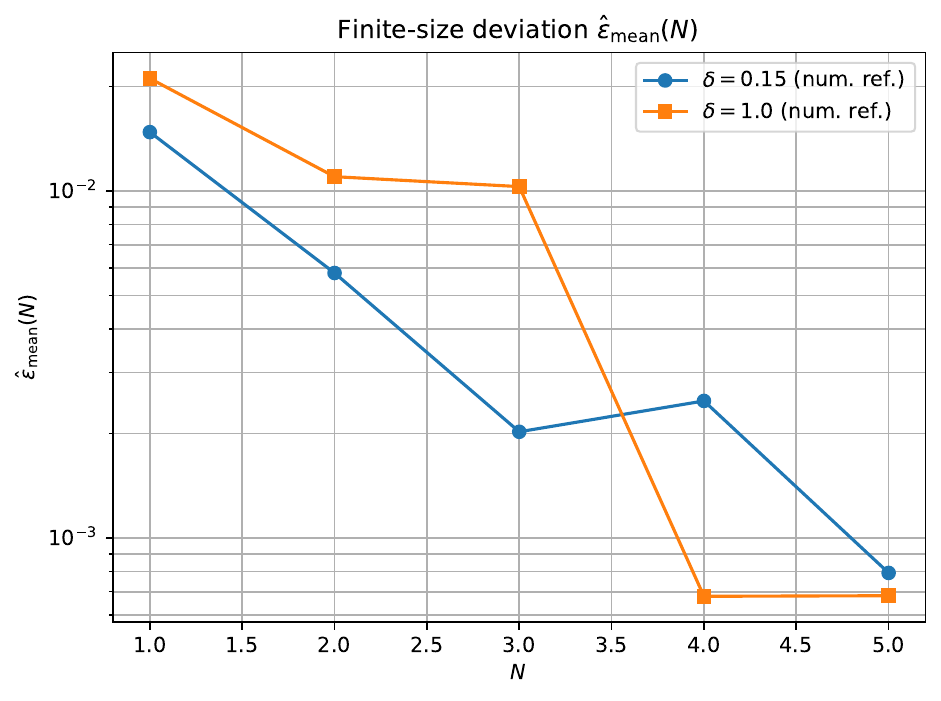}
	\caption{%
		Finite-size deviation
		$\widehat\epsilon_{\mathrm{mean}}(\delta;N)
		=
		|\widehat m(\delta;N)-\widehat m_\infty(\delta)|$
		as a function of the number of qubits $N$,
		where $\widehat m_\infty(\delta)$
		is taken from the numerical value at $N=6$.
		Blue circles and orange squares correspond to
		$\delta=0.15$ in the linear-response regime
		and $\delta=1.0$ in the nonlinear regime, respectively.
		The vertical axis is on a logarithmic scale.
		Both curves exhibit a clear trend towards $0$,
		showing that the finite-size corrections to the average TV distance
		decay rapidly in this model,
		and thus providing direct numerical evidence for
		the condition $\epsilon_{\mathrm{mean}}\to0$ in
		Assumption~\ref{ass:mean_TV}.}
	\label{fig:eps_mean}
\end{figure}

To quantify the finite-size bias,
we define a numerical proxy for the error term
$\epsilon_{\mathrm{mean}}$
in Assumption~\ref{ass:mean_TV} in this model as
\[
\widehat\epsilon_{\mathrm{mean}}(\delta;N)
:=
\bigl|\widehat m(\delta;N)-\widehat m_\infty(\delta)\bigr|,
\qquad
\widehat m_\infty(\delta)
=
\widehat m(\delta;N_{\max}),\ N_{\max}=6,
\]
and plot
$\widehat\epsilon_{\mathrm{mean}}(\delta;N)$
on a semi-logarithmic scale in Fig.~\ref{fig:eps_mean}.
We see that,
for both $\delta=0.15$ and $\delta=1.0$,
$\widehat\epsilon_{\mathrm{mean}}(\delta;N)$
decreases approximately monotonically,
from order $10^{-2}$ at $N=1$
down to $10^{-3}$ or smaller at $N=5$.
This shows that in this local Lindbladian model
the finite-size deviation
$\epsilon_{\mathrm{mean}}$
of the average TV distance indeed decreases rapidly as the system size (number of qubits $N$ and parameter dimension $M$) grows,
thereby providing numerical evidence for the condition
$\epsilon_{\mathrm{mean}}\to 0$
in Assumption~\ref{ass:mean_TV}.

\section{PUFs from random Lindbladians: distribution-level and tomography-based verification}  
\label{sec:crqpuf_applications}

In this section we turn the SQ/QPStat learning-hardness results for random Lindbladian ensembles obtained above into two concrete cryptographic protocols.

Physical unclonable functions (PUFs) were originally proposed as hardware primitives that rely on microscopic fabrication noise, and are used to implement cryptographic tasks such as authentication and fingerprinting~\cite{chang2017retrospective,delavar2017puf,pappu2002physical}. In the classical setting, PUFs are typically realized by specific circuit architectures or optical media, where uncontrollable imperfections introduced during fabrication generate an unclonable challenge--response relation. However, subsequent works have shown that such classical PUFs are vulnerable to side-channel attacks and machine-learning attacks: once sufficiently many challenge--response pairs have been observed, an adversary can often construct an efficient classical model that reproduces the PUF’s behavior at the statistical level~\cite{ruhrmair2010modeling,ganji2016strong,tebelmann2019side,khalafalla2019pufs}. In order to bypass, at the level of principle, such learning-based attacks, the notion of a quantum PUF was introduced, using the no-cloning of quantum states and Haar-random unitary evolutions to achieve information-theoretic authentication~\cite{arapinis2021quantum,phalak2021quantum}. Existing secure constructions, however, typically require approximate Haar-random unitaries as well as quantum communication and quantum memory, which makes their implementation on foreseeable near- and medium-term platforms extremely costly~\cite{arapinis2021quantum}. As a compromise, classical-readout quantum physically unclonable functions (CR-QPUFs) were proposed, aiming to retain part of the security of quantum PUFs while relying only on classical communication and storage~\cite{pirnay2022learning,phalak2021quantum}.

Our goal in this section is to show that the learning-hardness results for random Lindbladian ensembles can be turned into explicit Lindbladian-PUF protocols in which
(i) the verifier’s and honest prover’s costs remain polynomial in the relevant system parameters, while
(ii) any adversary that passes verification with non-negligible probability must, in the SQ or QPStat sense, solve an exponentially hard learning problem.

We will consider two verification paradigms:

\begin{itemize}
	\item[(A)] \textbf{Distribution-level verification via an SQ interface and a huge test-function family.}  
	In the factory (setup) phase, the manufacturer has sample access to the device implementing the Lindbladian channel $E_\theta$ followed by a fixed measurement $\{M_x\}_{x\in X}$, and uses this sample access to learn the entire classical output distribution
	$P_\theta(x) = \operatorname{Tr}(M_x E_\theta(\rho_{\mathrm{in}}))$
	up to small coordinate-wise error, storing the fingerprint
	$\{P_\theta(x)\}_{x\in X}$ as a classical secret.
	After deployment, however, the physical device only exposes an SQ interface $\operatorname{Stat}_\tau(P_\theta)$: external users, including any adversary, can no longer see raw samples, but can only obtain approximate expectation values $P_\theta[\varphi]$ of test functions $\varphi:X\to[-1,1]$.
	
	The public challenge space is taken to be an exponentially large family of Hadamard-type test functions
	$\mathcal F_{\mathrm{Had}}$ on $X$ (Definition~\ref{def:hadamard_tests}), indexed by bit strings $\mathrm{BIT}\in\{0,1\}^L$.
	In each authentication session the verifier draws $N_{\mathrm{chal}}$ test functions uniformly at random from $\mathcal F_{\mathrm{Had}}$, and checks whether the SQ oracle’s answers are consistent with the precomputed probability fingerprint. An honest prover, who has physical access to the genuine device, only needs a polynomial number of SQ calls to pass these tests , so verification is efficient.
	
	On the other hand, because the challenge space has size $|\mathcal F_{\mathrm{Had}}|=2^L$, any table-lookup attack that attempts to precompute and store answers for all possible challenges would require exponentially many SQ queries in $L$. Our design therefore rules out such non-learning attacks by construction. Moreover, any adversary that does not rely on table lookup, but still wishes to answer a random subset of Hadamard tests function correctly with non-negligible probability, is effectively forced to SQ-learn the whole output distribution in total variation distance. By the SQ-hardness theorems in Sec.~\ref{result 2}, this requires an exponential number of SQ queries in the Lindbladian parameter dimension $M$, while the verifier’s cost remains polynomial in $|X|$ and independent of $M$. We therefore regard Scheme~A as a genuinely positive cryptographic construction: verification is easy, whereas successful attacks are SQ-exponentially hard.
	
	\item[(B)] \textbf{Tomography-based verification via an extended QPStat interface.}  
	At the channel level we consider an extended QPStat oracle (Definition~\ref{def:extended_qpstat}) and construct tomographic bases on the operator space and observable space (Lemma~\ref{lem:tomographic_bases}). In the setup phase, the verifier uses QPStat (or experimental) tomography~\cite{bengtsson2017geometry,nielsen2010quantum} to estimate a tomographically complete family of expectation values $\operatorname{Tr}[G_k E_\theta(F_j)]$, and stores the resulting tomographic fingerprint matrix $\mathcal T(E_\theta)$ (Definition~\ref{def:tomographic_fingerprint}) as a classical secret. In the authentication phase, the prover is required to reproduce these entries entrywise. By Lemma~\ref{lem:fingerprint_norm_equivalence}, passing verification implies that the prover has learned the entire channel $E_\theta$ in diamond distance, and is therefore ruled out by the QPStat hardness theorem for random Lindbladians, Theorem~\ref{thm:lindblad_qpstat_lower_bound} in Section~\ref{zhang8}.
	
	However, the tomographic fingerprint has $D^2\sim d^4$ entries when the Hilbert-space dimension is $d$, so even with constant accuracy per entry the verifier’s initialization cost $q_{\mathrm{tom}}$ scales at least like a fixed polynomial in $d$, and can be as large as order $d^4$. In typical local Lindbladian models the parameter dimension $M$ is at most polynomial in $d$, so the verification cost is of essentially the same order as the information content of the channel itself. From a cryptographic point of view this falls into the usual dilemma that verification is ``as hard as'' the learning task~\cite{goldreich2004foundations}, and we therefore view Scheme~B mainly as an information-theoretic benchmark rather than a practically efficient Lindbladian-PUF.
\end{itemize}

We will first present the SQ-based distribution-level Lindbladian-PUF together with its security theorem, and regard it as the main positive cryptographic construction of this section. We then introduce the extended QPStat model and tomographic bases, and on this basis discuss the QPStat security and verification cost of the tomography-based Lindbladian-PUF.

\subsection{Scheme A: Lindbladian-PUF based on output distributions}
\label{subsec:crqpuf_distribution}

In this subsection we present a Lindbladian-PUF protocol based on an SQ interface and a family of Hadamard-type test functions.
In the security analysis, we do not impose any structural restriction on the internal attack strategy of the adversary $A$:
$A$ may be any (possibly randomized) algorithm, as long as during the initialization phase its access to the external device
is strictly mediated through the SQ interface $\operatorname{Stat}_\tau(P_\theta)$, and the total number of queries does not exceed a given budget $q$.
In this black-box SQ attack model, the core design ideas of Scheme A can be summarized as follows:

\begin{itemize}
	\item[1.] In the manufacturing phase, the vendor efficiently learns the full output distribution $P_\theta$ in the laboratory via a sample interface, 
	and stores it as a secret probability fingerprint.
	
	\item[2.] After the device leaves the factory, the physical device only exposes the SQ interface
	$\operatorname{Stat}_\tau(P_\theta)$ to the outside world.
	Any external adversary cannot see single-shot measurement outcomes, and can only query the approximate expectation values
	of arbitrarily chosen test functions $\varphi:X\to[-1,1]$.
	
	\item[3.] During the authentication phase, the verifier randomly samples several Hadamard-type test functions 
	$\varphi^{(r)}$ from an exponentially large test-function family of size $|\mathcal F_{\mathrm{Had}}|=2^L$,
	computes the target values $u_r$ using the internally stored probability fingerprint $\{P_\theta(x)\}$,
	and checks whether the responses of the adversary’s simulator are within a small error from $u_r$ in each round.
\end{itemize}

This design brings the following advantages in terms of SQ query complexity, and thereby guarantees the security of the secret:

\begin{itemize}
	\item[1.] An honest prover only needs to invoke the SQ interface a polynomial number of times to pass authentication:
	in the initialization phase the sample complexity is controlled by Hoeffding’s inequality and scales as $\mathrm{poly}(n)$,
	and in the authentication phase one SQ query to $\operatorname{Stat}_\tau(P_\theta)$ per round suffices
	(see Remark~\ref{rem:Hoeffding_sq} and the completeness analysis below).
	
	\item[2.] If an adversary attempts a table-lookup attack --- i.e., during the attack phase it tries to pre-query and store approximate values
	of $P_\theta[\varphi_{\mathrm{BIT}}]$ for as many challenge bit strings $\mathrm{BIT}$ as possible --- then, because the Hadamard test-function family
	has size $2^L$, covering a large fraction of challenges necessarily requires SQ queries to exponentially many test functions.
	In our SQ security model, the total number $q$ of SQ queries is treated as the main resource, hence such table-lookup strategies
	are infeasible if $q=\mathrm{poly}(M)$.
	
	\item[3.] More generally, irrespective of whether the adversary uses table-lookup, interpolation, or any other learning-type or non-learning-type strategy,
	as long as it passes sufficiently many random Hadamard tests in the authentication phase with a constant success probability,
	Lemma~\ref{lem:hadamard_single_round} and Lemma~\ref{lem:hadamard_reconstruct_dist} together imply that its simulator $w_{\theta,R}$ must approximate
	$P_\theta[\varphi_{\mathrm{BIT}}]$ extremely well on a large fraction of challenges $\mathrm{BIT}$.
	Consequently, in the SQ model one can reconstruct an approximate distribution $D_{\theta,R}$ satisfying
	$d_{\mathrm{TV}}(P_\theta,D_{\theta,R})\le\varepsilon_{\mathrm{SQ}}$.
	In other words, any adversary that can pass authentication is, from an information-theoretic point of view,
	equivalent to an SQ learner that has successfully learned $P_\theta$, and therefore necessarily triggers
	the exponential average-case SQ lower bound for random Lindbladians established in Sec.~\ref{result 2}
	(see Theorem~\ref{thm:crqpuf_sq_hadamard_security}).
\end{itemize}

We now formalize the test-function family and the protocol, and then state and prove the security theorem.

\medskip

We first describe in detail the Hadamard-type test-function family and the SQ interface, using the SQ model notation of Sec.~\ref{result 2}
(see also Definition~\ref{def:SQ_oracle}):
for a distribution $P\in\mathcal D_X$ over $X$ and a tolerance $\tau>0$,
the statistical-query oracle $\operatorname{Stat}_\tau(P)$ returns, for any query function $\varphi:X\to[-1,1]$, a real number $v$ such that
\[
|v-P[\varphi]|\le\tau.
\]

In this subsection, $X$ is a finite classical measurement output space (for example, the finite set of outcomes obtained from some coarse-grained block-projective measurement
$\{M_x\}_{x\in X}$; in principle one could also take $X=\{0,1\}^n$ to be the computational basis measurement outcomes of an $n$-qubit system, but we will explain later why this is not ideal).
To construct an exponentially large challenge space, we introduce the following encoding and test functions.

\begin{definition}[Encoding and Hadamard-type test-function family]
	\label{def:hadamard_tests}
	Let $X$ be a finite set with $|X|=N$.
	Choose $L\ge\lceil\log_2 N\rceil$, and fix a public injective encoding
	\[
	h:X\hookrightarrow\{0,1\}^L.
	\]
	The number of distinct functions $\varphi_{\mathrm{BIT}}(x)=(-1)^{\langle \mathrm{BIT},h(x)\rangle}$ realized on $X$ equals $2^{r}$, where $$r:=\dim_{\mathbb F_2}\!\bigl(\mathrm{span}_{\mathbb F_2}(h(X))\bigr) =\mathrm{rank}_{\mathbb F_2}(H)$$with $H$ the $|X|\times L$ matrix of rows $h(x)$, so the full exponential size $2^L$ is attained iff $r=L$, and otherwise the challenge family collapses to size $2^r$~\cite{roth2006introduction}. We choose $h$ such that the row span of the $|X| \times L$ matrix $H$ has full rank $r=L$ (this is always possible once $|X| \geq L$ ), so that $\left|\mathcal{F}_{\text {Had }}\right|=2^L$ distinct test functions are realized on $X$. For each bit string $\mathrm{BIT}\in\{0,1\}^L$, define the test function
	\begin{equation}
		\varphi_{\mathrm{BIT}}(x)
		:=(-1)^{\langle\mathrm{BIT},h(x)\rangle},
		\qquad x\in X,
	\end{equation}
	where $\langle\mathrm{BIT},h(x)\rangle$ denotes the bitwise inner product modulo $2$, i.e.,
	\[
	\langle\mathrm{BIT},h(x)\rangle
	:=\Bigl(\sum_{j=1}^L \mathrm{BIT}_j\,h(x)_j\Bigr)\bmod 2.
	\]
	Clearly $\varphi_{\mathrm{BIT}}(x)\in\{-1,+1\}\subset[-1,1]$ for all $x$, so $\varphi_{\mathrm{BIT}}$ is a valid SQ query function.
	
	Define the exponentially large test-function family
	\[
	\mathcal F_{\mathrm{Had}}
	:=\{\varphi_{\mathrm{BIT}}:\ \mathrm{BIT}\in\{0,1\}^L\},
	\]
	whose size is $|\mathcal F_{\mathrm{Had}}|=2^L$.
\end{definition}

Intuitively, $\mathcal F_{\mathrm{Had}}$ is the restriction to $h(X)$ of the Walsh--Hadamard basis on the Boolean hypercube $\{0,1\}^L$.
For any distribution $P\in\mathcal D_X$, the test values $P[\varphi_{\mathrm{BIT}}]$ can be viewed as Hadamard-type Fourier coefficients of $P$~\cite{o2014analysis}.

\begin{remark}[Size of the challenge space and feasibility of the construction]
	Once $h$ is fixed, the size of the challenge space
	$\mathcal F_{\mathrm{Had}}$ is exactly $2^L$.
	In the theoretical analysis we typically take $L=\Theta(n)$ or treat it as an independent security parameter.
	The verifier only needs to sample $N_{\mathrm{chal}}$ test functions at random from this exponentially large family to carry out $N_{\mathrm{chal}}$ rounds of authentication,
	so any table-lookup attack that aims to cover the entire challenge space must pay an exponential SQ query cost.
\end{remark}

\medskip

We now explain why we introduce a block decomposition of the Hilbert space and coarse-grained projective measurements.
In this scheme we care about the classical output distribution after random Lindbladian evolution.
To avoid the extreme situation that the probability of each individual bit string is exponentially small,
we first coarse-grain the Hilbert space into blocks at the quantum level and then perform projective measurements on these blocks as the POVM elements.

\begin{definition}[Orthogonal block decomposition and block-projective POVM]
	\label{def:block_povm}
	Consider an $n$-qubit system with Hilbert space $\mathcal H\simeq(\mathbb C^2)^{\otimes n}$ and dimension $d=2^n$.
	Take a finite index set $X$ and choose a family of mutually orthogonal subspaces
	\[
	\mathcal H = \bigoplus_{x\in X}\mathcal H_x,
	\]
	where the subspaces $\mathcal H_x\subset\mathcal H$ are pairwise orthogonal.
	Let $\Pi_x$ be the orthogonal projector onto $\mathcal H_x$, and define the POVM elements
	\[
	M_x := \Pi_x,\qquad x\in X.
	\]
\end{definition}

\begin{lemma}[Operator norm and completeness of block-projective POVM]
	\label{lem:block_povm_norm}
	In the setting of Definition~\ref{def:block_povm}, we have:
	\begin{enumerate}
		\item For each $x\in X$, $\Pi_x$ is an orthogonal projector:
		\(
		\Pi_x^\dagger=\Pi_x,\ \Pi_x^2=\Pi_x.
		\)
		\item For each $x\in X$,
		\[
		0\le \Pi_x\le I,
		\qquad
		\|\Pi_x\|_\infty = 1.
		\]
		\item $\{M_x\}_{x\in X}$ forms a projective measurement (POVM):
		\(
		\sum_{x\in X} M_x = I,
		\)
		and $M_x M_y = 0$ for all $x\neq y$.
	\end{enumerate}
\end{lemma}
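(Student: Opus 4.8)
The statement to prove is Lemma~\ref{lem:block_povm_norm}, which asserts three elementary facts about the block-projective POVM $\{M_x = \Pi_x\}_{x \in X}$ built from an orthogonal decomposition $\mathcal{H} = \bigoplus_{x \in X} \mathcal{H}_x$: (1) each $\Pi_x$ is an orthogonal projector, (2) $0 \le \Pi_x \le I$ with $\|\Pi_x\|_\infty = 1$, and (3) the family resolves the identity with pairwise-orthogonal elements.

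The plan is to proceed directly from the definition of $\Pi_x$ as the orthogonal projector onto $\mathcal{H}_x$, with no heavy machinery needed. First, for part (1), I would simply invoke the standard characterisation of an orthogonal projector: $\Pi_x$ is by construction self-adjoint and idempotent, since projecting twice onto the same subspace is the same as projecting once, and the orthogonal projector onto any subspace is Hermitian. This is definitional. Second, for part (2), the spectral decomposition of $\Pi_x$ has only eigenvalues $0$ and $1$ (with $1$ on $\mathcal{H}_x$ and $0$ on $\mathcal{H}_x^\perp$), so $0 \le \Pi_x \le I$ follows immediately, and $\|\Pi_x\|_\infty = \max_i s_i = 1$ as long as $\mathcal{H}_x \neq \{0\}$ (which is implicit in the block decomposition; I would note this assumption). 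Third, for part (3), $\sum_x \Pi_x = I$ is precisely the statement that the subspaces $\mathcal{H}_x$ span $\mathcal{H}$ and are mutually orthogonal — i.e., the direct-sum decomposition $\mathcal{H} = \bigoplus_x \mathcal{H}_x$ — so the projectors onto complementary orthogonal summands add to the identity. Pairwise orthogonality $\Pi_x \Pi_y = 0$ for $x \neq y$ follows because $\mathcal{H}_x \perp \mathcal{H}_y$, so the range of $\Pi_y$ lies in $\mathcal{H}_y \subseteq \mathcal{H}_x^\perp = \ker \Pi_x$.

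There is no real obstacle here; the lemma is a packaging of textbook linear algebra, and its only purpose is to certify that the construction in Definition~\ref{def:block_povm} is a legitimate POVM with the norm control needed downstream (in particular $\|\Pi_x\|_\infty \le 1$, which feeds into the Hölder-type Lipschitz bounds). If anything requires a word of care, it is the tacit nondegeneracy assumption $\mathcal{H}_x \neq \{0\}$ guaranteeing $\|\Pi_x\|_\infty = 1$ rather than $0$; I would state this explicitly, since otherwise an empty block would make the corresponding $\Pi_x = 0$ and the claimed equality would fail. With that caveat noted, each of the three parts is a one- or two-line verification.

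\begin{proof}
All three claims follow directly from the fact that $\Pi_x$ is, by construction, the orthogonal projector onto the subspace $\mathcal H_x$ in the orthogonal decomposition $\mathcal H=\bigoplus_{x\in X}\mathcal H_x$. We assume, as is implicit in Definition~\ref{def:block_povm}, that each block is nontrivial, $\mathcal H_x\neq\{0\}$.

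\emph{(1)} An orthogonal projector onto a subspace is self-adjoint, $\Pi_x^\dagger=\Pi_x$, since it coincides with its own adjoint in any orthonormal basis adapted to $\mathcal H_x$ and $\mathcal H_x^\perp$. It is idempotent, $\Pi_x^2=\Pi_x$, because projecting a vector onto $\mathcal H_x$ and then projecting again leaves it unchanged. Hence $\Pi_x$ is an orthogonal projector in the usual sense.

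\emph{(2)} Writing $\Pi_x$ in a basis adapted to $\mathcal H=\mathcal H_x\oplus\mathcal H_x^\perp$, its eigenvalues are $1$ (with multiplicity $\dim\mathcal H_x\ge1$) and $0$ (with multiplicity $\dim\mathcal H_x^\perp$). Consequently all eigenvalues lie in $[0,1]$, which gives $0\le\Pi_x\le I$ as operator inequalities, and its singular values are exactly its eigenvalues, so
\[
\|\Pi_x\|_\infty=\max_i s_i=1,
\]
using $\mathcal H_x\neq\{0\}$.

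\emph{(3)} By hypothesis the subspaces $\{\mathcal H_x\}_{x\in X}$ are pairwise orthogonal and their direct sum is all of $\mathcal H$. Decomposing an arbitrary vector $\ket\psi\in\mathcal H$ as $\ket\psi=\sum_{x\in X}\ket{\psi_x}$ with $\ket{\psi_x}\in\mathcal H_x$, we have $\Pi_y\ket\psi=\ket{\psi_y}$, and therefore
\[
\sum_{x\in X}M_x\ket\psi
=\sum_{x\in X}\Pi_x\ket\psi
=\sum_{x\in X}\ket{\psi_x}
=\ket\psi,
\]
so $\sum_{x\in X}M_x=I$. Finally, for $x\neq y$ the range of $\Pi_y$ is $\mathcal H_y\subseteq\mathcal H_x^\perp=\ker\Pi_x$, hence $\Pi_x\Pi_y=0$. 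Thus $\{M_x\}_{x\in X}$ is a projective measurement with mutually orthogonal elements.
\end{proof}
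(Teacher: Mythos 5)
Your proof is correct and follows essentially the same elementary route as the paper's: part (1) is definitional, part (3) is the direct-sum decomposition argument verbatim, and part (2) differs only cosmetically (you read the norm off the $\{0,1\}$ spectrum, while the paper combines the contraction bound $\|\Pi_x\psi\|\le\|\psi\|$ with an explicit unit vector in $\mathcal H_x$ — the two are interchangeable). Your explicit flagging of the nondegeneracy assumption $\mathcal H_x\neq\{0\}$ is a small but genuine improvement in care, since the paper tacitly relies on it when it picks a normalized $\ket\phi\in\mathcal H_x$.
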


\begin{proof}
	From the standard properties of the orthogonal decomposition
	$\mathcal H = \bigoplus_{x\in X}\mathcal H_x$,
	each $\Pi_x$ is the orthogonal projector onto $\mathcal H_x$,
	hence $\Pi_x^\dagger=\Pi_x$ and $\Pi_x^2=\Pi_x$.
	
	For any unit vector $\ket{\psi}$ we have
	\(
	\|\Pi_x\ket{\psi}\|\le \|\ket{\psi}\| = 1
	\),
	so $\|\Pi_x\|_\infty\le 1$.
	On the other hand, if we take any normalized vector $\ket{\phi}\in\mathcal H_x$,
	then $\Pi_x\ket{\phi}=\ket{\phi}$,
	which implies $\|\Pi_x\|_\infty\ge 1$.
	Combining these gives $\|\Pi_x\|_\infty=1$ and $0\le\Pi_x\le I$.
	
	By the orthogonal direct sum decomposition of $\mathcal H$,
	any vector $\ket{\psi}\in\mathcal H$ can be uniquely written as
	\(
	\ket{\psi} = \sum_{x\in X}\ket{\psi_x},
	\)
	with $\ket{\psi_x}\in\mathcal H_x$.
	Hence
	\[
	\Bigl(\sum_{x\in X} \Pi_x\Bigr)\ket{\psi}
	= \sum_{x\in X}\Pi_x\ket{\psi}
	= \sum_{x\in X}\ket{\psi_x}
	= \ket{\psi},
	\]
	i.e. $\sum_{x\in X} \Pi_x = I$,
	and for $x\neq y$ we have $\Pi_x \Pi_y=0$.
	Therefore $\{M_x\}_{x\in X}$ forms a POVM.
\end{proof}

\begin{remark}[Coarse-grained measurements and preservation of Lipschitz bounds]
	Lemma~\ref{lem:block_povm_norm} tells us that
	even if each measurement operator $M_x$ may have very high rank,
	its operator norm still satisfies
	\(
	\|M_x\|_\infty = 1.
	\)
	Therefore in our previous estimates of Lipschitz constants,
	if we take $O=M_x$ we still have
	\[
	\bigl|\operatorname{Tr}\bigl[M_x\,(E_\theta-E_{\theta'})(\rho)\bigr]\bigr|
	\le \|M_x\|_\infty\,\|(E_\theta-E_{\theta'})(\rho)\|_1
	\le \|E_\theta-E_{\theta'}\|_{1\to 1},
	\]
	so the Lipschitz structure is not spoiled by making the measurement more coarse-grained or higher rank.
\end{remark}

In concrete designs, we may further choose $\mathcal H$ to be decomposed into only polynomially many blocks,
so as to avoid the situation that each $P_\theta(x)$ becomes exponentially small under an approximately maximally mixed state
(which is the typical behavior of random Lindbladians for typical parameter choices).
The following simple lemma illustrates this idea.

\begin{lemma}[Probability lower bound under coarse-grained projective measurement (under approximate maximally mixed assumption)]
	\label{lem:block_povm_prob_lower}
	Consider an $n$-qubit system with
	$\mathcal H\simeq(\mathbb C^2)^{\otimes n}$ and dimension $d=2^n$,
	and take an orthogonal decomposition as in Definition~\ref{def:block_povm}:
	\[
	\mathcal H = \bigoplus_{x\in X}\mathcal H_x.
	\]
	Suppose there exists a polynomial $\kappa(n)$ such that
	\[
	|X| \le \kappa(n),
	\qquad
	\dim\mathcal H_x \ge \frac{d}{\kappa(n)}
	\quad\text{for all }x\in X.
	\]
	For each $\theta$, let
	\(
	\rho_\theta := E_\theta(\rho_{\mathrm{in}})
	\),
	and define the output distribution
	\(
	P_\theta(x):=\operatorname{Tr}\bigl(M_x\rho_\theta\bigr).
	\)
	If there exists a constant $\varepsilon_{\mathrm{mix}}>0$ such that on some typical parameter set we have
	\[
	\bigl\|\rho_\theta - \tfrac{I}{d}\bigr\|_1
	\le \varepsilon_{\mathrm{mix}},
	\]
	then for all $\theta$ in this set and all $x\in X$,
	\begin{equation}
		\bigl|P_\theta(x) - \tfrac{\dim\mathcal H_x}{d}\bigr|
		\le \varepsilon_{\mathrm{mix}},
	\end{equation}
	and thus, if
	$\varepsilon_{\mathrm{mix}}\le \tfrac12\cdot \tfrac{\dim\mathcal H_x}{d}$,
	\begin{equation}
		P_\theta(x)
		\;\ge\;
		\frac{\dim\mathcal H_x}{2d}
		\;\ge\;
		\frac{1}{2\kappa(n)},
	\end{equation}
	i.e., the probability of each coarse-grained outcome is lower bounded by $1/\mathrm{poly}(n)$
	and does not become exponentially small as in the single-bit-string case.
\end{lemma}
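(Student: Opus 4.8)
The plan is to reduce the whole statement to a single application of the matrix Hölder inequality, in the same spirit as the Lipschitz estimates of Sec.~\ref{zming}. First I would record that the maximally mixed state assigns block $x$ the weight $\operatorname{Tr}\!\bigl(\Pi_x\,\tfrac{I}{d}\bigr)=\tfrac1d\operatorname{Tr}\Pi_x=\tfrac{\dim\mathcal H_x}{d}$, since $\Pi_x=M_x$ is the orthogonal projector onto $\mathcal H_x$. Then, for any $\theta$ in the typical parameter set, I would write $P_\theta(x)-\tfrac{\dim\mathcal H_x}{d}=\operatorname{Tr}\!\bigl[\Pi_x(\rho_\theta-\tfrac{I}{d})\bigr]$ and bound its absolute value by $\|\Pi_x\|_\infty\,\|\rho_\theta-\tfrac{I}{d}\|_1$ using Hölder's inequality $|\operatorname{Tr}(OX)|\le\|O\|_\infty\|X\|_1$ (Remark~\ref{rem:schatten_relations}); invoking $\|\Pi_x\|_\infty=1$ from Lemma~\ref{lem:block_povm_norm} and the mixing hypothesis $\|\rho_\theta-\tfrac{I}{d}\|_1\le\varepsilon_{\mathrm{mix}}$ then yields the first displayed bound $|P_\theta(x)-\tfrac{\dim\mathcal H_x}{d}|\le\varepsilon_{\mathrm{mix}}$, uniformly over the typical set and over $x\in X$.

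For the probability lower bound I would simply unfold the one-sided consequence $P_\theta(x)\ge\tfrac{\dim\mathcal H_x}{d}-\varepsilon_{\mathrm{mix}}$ and insert the assumption $\varepsilon_{\mathrm{mix}}\le\tfrac12\cdot\tfrac{\dim\mathcal H_x}{d}$, leaving $P_\theta(x)\ge\tfrac12\cdot\tfrac{\dim\mathcal H_x}{d}$; the block-size hypothesis $\dim\mathcal H_x\ge d/\kappa(n)$ then gives $P_\theta(x)\ge\tfrac{1}{2\kappa(n)}$, i.e. the claimed $1/\mathrm{poly}(n)$ floor. I would also note in passing that the orthogonal direct sum forces $\sum_{x}\dim\mathcal H_x=d$, which is consistent with $|X|\le\kappa(n)$ together with $\dim\mathcal H_x\ge d/\kappa(n)$, so the two quantitative requirements are simultaneously realizable (e.g.\ by a near-equipartition of the $d$ basis states into $\kappa(n)$ blocks).

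There is no genuine analytic obstacle here; the only point that needs care is conceptual bookkeeping. The lemma is a \emph{design constraint} on the coarse-grained POVM, stated \emph{conditionally} on the mixing hypothesis $\|\rho_\theta-I/d\|_1\le\varepsilon_{\mathrm{mix}}$, and that hypothesis is not established within the lemma itself. A separate argument would be needed to justify it for typical random Lindbladians --- for instance, combining primitivity and the spectral gap of $\mathcal L_{\mathrm{ref}}$ (Definition~\ref{def:primitive_gap}), so that $\Lambda_t^{(\mathrm{ref})}(\rho_{\mathrm{in}})$ is already close to a near-maximally-mixed stationary state, with the Lipschitz control $\|\Lambda_t^{(\theta)}-\Lambda_t^{(\mathrm{ref})}\|_{1\to1}\le t\delta C_G$ of Lemma~\ref{lem:L_to_Lambda_Lipschitz} to absorb the perturbation --- but such a derivation is orthogonal to the elementary Hölder computation above. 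I would therefore present the lemma essentially as stated, making clear that it supplies the block-structure requirement used in Scheme~A rather than a dynamical fact about the ensemble.
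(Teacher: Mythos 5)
Your proof is correct and follows essentially the same route as the paper's: matrix H\"older with $\|\Pi_x\|_\infty=1$ from Lemma~\ref{lem:block_povm_norm}, the observation $\operatorname{Tr}(\Pi_x\,I/d)=\dim\mathcal H_x/d$, and then the elementary chaining of inequalities to reach $1/(2\kappa(n))$. Your additional remarks on the feasibility of the block-size hypotheses and on the mixing assumption being an external input rather than a conclusion are sound but do not alter the argument.
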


\begin{proof}
	Since $0\le M_x\le I$ and $\|M_x\|_\infty\le 1$,
	we have
	\[
	\bigl|P_\theta(x) - \tfrac{\dim\mathcal H_x}{d}\bigr|
	= \bigl|\operatorname{Tr}\bigl[M_x(\rho_\theta - \tfrac{I}{d})\bigr]\bigr|
	\le \|M_x\|_\infty\,\bigl\|\rho_\theta - \tfrac{I}{d}\bigr\|_1
	\le \varepsilon_{\mathrm{mix}}.
	\]
	On the other hand,
	\(
	\operatorname{Tr}(M_x\,\tfrac{I}{d}) = \tfrac{\dim\mathcal H_x}{d}
	\),
	hence
	\[
	\bigl|P_\theta(x) - \tfrac{\dim\mathcal H_x}{d}\bigr|
	\le \varepsilon_{\mathrm{mix}}.
	\]
	If
	\(
	\varepsilon_{\mathrm{mix}}\le \tfrac12\cdot \tfrac{\dim\mathcal H_x}{d}
	\),
	then
	\[
	P_\theta(x)
	\ge \tfrac{\dim\mathcal H_x}{d} - \varepsilon_{\mathrm{mix}}
	\ge \tfrac12\cdot \tfrac{\dim\mathcal H_x}{d}.
	\]
	Using the assumption $\dim\mathcal H_x \ge d/\kappa(n)$, we obtain
	\[
	P_\theta(x)
	\ge \frac{1}{2}\cdot\frac{d/\kappa(n)}{d}
	= \frac{1}{2\kappa(n)}.
	\]
	This yields the desired polynomial lower bound.
\end{proof}

\begin{remark}[Why use block-projective POVMs rather than single-bit-string measurements?]
	\label{rem:why_block_povm}
	We deliberately choose to decompose the Hilbert space into orthogonal subspaces
	$\mathcal H=\bigoplus_{x\in X}\mathcal H_x$ and use the block projectors $M_x=\Pi_x$ as POVM elements,
	instead of directly measuring in the computational basis $\{\ket{z}\}_{z\in\{0,1\}^n}$ bit string by bit string, for several reasons:
	\begin{enumerate}
		\item From the perspective of statistical learning and sample complexity, bit-string-wise measurements lead to exponentially small probabilities and a blow-up in sample complexity.
		For an $n$-qubit system, a typical state under computational-basis measurement yields a probability of order $2^{-n}$ for each bit string.
		In this case, if one wants to accurately learn the full distribution in total variation distance, one typically needs resolution on each coordinate comparable to $2^{-n}$,
		which corresponds to an exponentially small tolerance $\eta$.
		Hoeffding’s inequality (see Remark~\ref{rem:Hoeffding_sq}) then implies that the sample complexity
		$T=\Theta(\eta^{-2} \log (|X| / \delta_{\mathrm{puf}}))$ becomes exponential.
		In other words, completeness of the protocol would rely on resolving exponentially rare events,
		which is infeasible for any sampling-based practical implementation.
		
		\item Coarse-grained blocking lifts probabilities to a polynomial scale, keeping Hoeffding sample complexity polynomial.
		Via Definition~\ref{def:block_povm} we decompose the Hilbert space into subspaces $\mathcal H_x$,
		and Lemma~\ref{lem:block_povm_prob_lower} shows that
		once the random Lindbladian has mixed $\rho_\theta$ close to the maximally mixed state $I/d$ for typical parameters,
		and each block satisfies
		$\dim\mathcal H_x \ge d/\kappa(n)$ with $|X|\le\kappa(n)$,
		the coarse-grained outcome probabilities
		\(
		P_\theta(x)=\operatorname{Tr}(M_x\rho_\theta)
		\)
		are at least $1/\mathrm{poly}(n)$.
		Under this setup, we can fix the SQ tolerance $\tau$ at a constant scale;
		choosing $\eta=\tau/|X|$, the Hoeffding sample complexity
		$T=\Theta\bigl(\eta^{-2}\log(|X|/\delta_{\mathrm{puf}})\bigr)$
		remains $\mathrm{poly}(|X|)=\mathrm{poly}(n)$.
		Thus, information-theoretically, learning the probability fingerprint in the initialization phase is feasible,
		and we avoid the exponential sample complexity that arises with bit-string-wise measurement.
		
		\item From a complexity-theoretic viewpoint, bit-string-wise measurement corresponds to learning individual output bit-string probabilities,
		whereas coarse-grained block measurements correspond to estimating expectation values of relatively simple observables.
		These lie at different complexity levels.
		In settings such as random circuit sampling, for typical families of universal quantum circuits, additively approximating a single output probability
		$p(z)=\Pr[Z=z]$ is tightly connected to \#P counting problems; under standard assumptions and oracle models,
		approximating such probabilities is \#P-hard, and learning the full output distribution in TV distance is widely believed to be a \#P-type task,
		as reflected in the complexity analyses of random circuit sampling~\cite{aaronson2011computational,bouland2019complexity,arute2019quantum}.
		By contrast, our coarse-grained probabilities
		\[
		P_\theta(x)=\operatorname{Tr}(M_x\rho_\theta)
		\]
		can be viewed as expectation values of relatively simple, efficiently implementable projectors $M_x$.
		Such expectation values can be approximated in BQP using standard quantum algorithms such as the Hadamard test or amplitude estimation~\cite{nielsen2010quantum}.
		Therefore, in an ideal noise model, learning the coarse-grained output distribution naturally falls within BQP,
		instead of being a \#P-type problem of ``approximating individual amplitudes''.
	\end{enumerate}
\end{remark}

In what follows, we assume that $\{M_x\}_{x\in X}$ is induced by a fixed orthogonal block decomposition
$\mathcal H=\bigoplus_x \mathcal H_x$ via block projectors, with $|X|\le\kappa(n)=\mathrm{poly}(n)$.

Let $\{E_\theta\}_{\theta\in\Theta}$ be the family of random Lindbladian channels,
$\rho_{\mathrm{in}}\in\mathsf S(\mathcal H)$ a fixed input state,
and $\{M_x\}_{x\in X}$ the block-projective measurement induced by an orthogonal decomposition
$\mathcal H=\bigoplus_x\mathcal H_x$ as in the previous subsection.
For each $\theta$ define the output distribution
\[
P_\theta(x)
:=\operatorname{Tr}\bigl(M_x E_\theta(\rho_{\mathrm{in}})\bigr),
\qquad x\in X.
\]

\begin{definition}[Lindbladian-PUF protocol with SQ interface and Hadamard test functions]
	\label{def:crqpuf_sq_hadamard}
	Fix an SQ tolerance parameter $\tau>0$ and an authentication round parameter $N_{\mathrm{chal}}\in\mathbb N$.
	The protocol between verifier $V$ and prover $P$ consists of two phases.
	
	\paragraph{Initialization phase (inside the factory, with sample access).}
	\begin{enumerate}
		\item[i.] In the manufacturer’s laboratory, $V$ samples a parameter $\theta$ from $\mu_\Theta$ at random,
		obtaining sample-level access to the physical device.
		At this stage the manufacturer can repeatedly prepare $\rho_{\mathrm{in}}$,
		evolve it via $\mathcal E_\theta := E_\theta$ and measure with $\{M_x\}$,
		obtaining independent samples $X_1,X_2,\dots$.
		
		\item[ii.] Let $\eta>0$ be the target pointwise accuracy of the fingerprint.
		The verifier $V$ runs the device $T$ times to obtain independent samples
		$X_1,\dots,X_T\in X$, and defines the empirical frequencies
		\[
		\hat P_\theta(x)
		:= \frac{1}{T}\sum_{t=1}^T \mathbf 1\{X_t=x\}
		\quad\text{for each }x\in X.
		\]
		By Hoeffding’s inequality (see Remark~\ref{rem:Hoeffding_sq}),
		choosing
		\[
		T = \Theta\bigl(\eta^{-2}\log(|X|/\delta_{\mathrm{puf}})\bigr)
		\]
		guarantees that with probability at least $1-\delta_{\mathrm{puf}}$ we have
		\[
		\bigl|\hat P_\theta(x)-P_\theta(x)\bigr|\le\eta
		\quad\forall x\in X.
		\]
		Conditional on this high-probability event, $V$ stores
		\[
		\mathbf p
		:=\{\hat P_\theta(x)\}_{x\in X}
		\]
		as the probability fingerprint of the device, kept secret.
		In our parameter choices below, we will set
		\begin{equation}
			\label{eq:eta_choice}
			\eta := \frac{\tau}{|X|},
		\end{equation}
		so that the total error in computing expectations of any bounded test function using $\mathbf p$
		is at most $\tau$.
		
		\item[iii.] After initialization, the manufacturer irreversibly disables the internal sample interface,
		leaving only the external SQ interface active.
		In the cryptographic model, we abstract the device as an ideal statistical-query oracle
		$\operatorname{Stat}_\tau(P_\theta)$ (we do not care how it is implemented internally; it may use its own sampling to realize the oracle):
		for any external query function $\varphi:X\to[-1,1]$,
		it returns $v$ such that
		\[
		|v-P_\theta[\varphi]|\le\tau.
		\]
		The device is then sold to the end user.
	\end{enumerate}
	
	\paragraph{Authentication phase (with SQ interface, table-lookup attacks restricted).}
	\begin{enumerate}
		\item[i.] When $P$ claims to hold a genuine device (for example, when the device requires maintenance and the manufacturer is asked to send a technician),
		$V$ initiates authentication.
		
		\item[ii.] $V$ first fixes the public encoding $h:X\to\{0,1\}^L$,
		and hence determines the Hadamard test-function family $\mathcal F_{\mathrm{Had}}$ of Definition~\ref{def:hadamard_tests}.
		
		\item[iii.] For $r=1,2,\dots,N_{\mathrm{chal}}$, the following challenge–response procedure is repeated:
		\begin{enumerate}
			\item[A.] $V$ samples a bit string
			$\mathrm{BIT}^{(r)}$ uniformly from $\{0,1\}^L$,
			and constructs the test function
			$\varphi^{(r)}:=\varphi_{\mathrm{BIT}^{(r)}}$.
			Using the internally stored probability fingerprint $\mathbf p$,
			$V$ computes
			\[
			u_r
			:= \sum_{x\in X}\varphi^{(r)}(x)\,\hat P_\theta(x),
			\]
			which approximates the true expectation
			\[
			P_\theta[\varphi^{(r)}]
			:= \sum_{x\in X}\varphi^{(r)}(x)\,P_\theta(x).
			\]
			Since for all $x\in X$ we have
			$\bigl|\hat P_\theta(x)-P_\theta(x)\bigr|\le\eta$
			and $|\varphi^{(r)}(x)|\le 1$, it follows that
			\begin{equation}
				\label{eq:u_r_error_bound}
				\bigl|u_r - P_\theta[\varphi^{(r)}]\bigr|
				\le \sum_{x\in X}\bigl|\hat P_\theta(x)-P_\theta(x)\bigr|
				\le |X|\,\eta.
			\end{equation}
			Combining this with the parameter choice~\eqref{eq:eta_choice},
			on the high-probability event of the initialization phase we have
			\[
			\bigl|u_r - P_\theta[\varphi^{(r)}]\bigr|
			\le \tau.
			\]
			
			\item[B.] $V$ sends the bit string $\mathrm{BIT}^{(r)}$ to $P$.
			
			\item[C.] The honest prover $P$ queries the device’s SQ interface,
			using $\varphi^{(r)}$ as the query function to $\operatorname{Stat}_\tau(P_\theta)$,
			and receives an output $v_r$ satisfying
			$|v_r-P_\theta[\varphi^{(r)}]|\le\tau$,
			which is then returned to $V$.
		\end{enumerate}
		
		\item[iv.] $V$ accepts the authentication if and only if for all $r=1,\dots,N_{\mathrm{chal}}$,
		\begin{equation}
			\bigl|v_r-u_r\bigr|
			\le 2\tau.
		\end{equation}
		The parameter $2\tau$ arises from the triangle inequality: one part comes from the approximation error of $u_r$ to $P_\theta[\varphi^{(r)}]$
		(see~\eqref{eq:u_r_error_bound} and~\eqref{eq:eta_choice}), and another part comes from the tolerance $\tau$ of the SQ interface itself.
	\end{enumerate}
\end{definition}

\begin{remark}[Hoeffding’s inequality and sample complexity in the initialization phase]
	\label{rem:Hoeffding_sq}
	In the initialization phase, the verifier $V$ repeatedly runs the device $T$ times,
	obtaining independent samples $X_1,\dots,X_T\in X$ and defining
	\[
	\hat P_\theta(x)
	:= \frac{1}{T}\sum_{t=1}^T \mathbf 1\{X_t=x\},
	\]
	which is the empirical average of Bernoulli random variables with parameter $P_\theta(x)$.
	Hoeffding’s inequality states that for any $\eta>0$,
	\[
	\Pr\Bigl(
	\bigl|\hat P_\theta(x)-P_\theta(x)\bigr|>\eta
	\Bigr)
	\le 2\exp(-2T\eta^2).
	\]
	Applying the union bound over all $x\in X$ yields
	\[
	\Pr\Bigl(
	\exists x\in X:\ \bigl|\hat P_\theta(x)-P_\theta(x)\bigr|>\eta
	\Bigr)
	\le 2|X|\exp(-2T\eta^2).
	\]
	Thus, as long as
	\[
	T = \Theta\bigl(\eta^{-2}\log(|X|/\delta_{\mathrm{puf}})\bigr),
	\]
	we have
	\[
	\Pr\Bigl(
	\bigl|\hat P_\theta(x)-P_\theta(x)\bigr|\le\eta\ \forall x\in X
	\Bigr)\ge 1-\delta_{\mathrm{puf}}.
	\]
	In this protocol we subsequently fix
	$\eta=\tau/|X|$ (see~\eqref{eq:eta_choice}),
	so that the total error in computing the expectation of any test function using the fingerprint $\mathbf p$ does not exceed the target SQ tolerance $\tau$.
	Together with Lemma~\ref{lem:block_povm_prob_lower}, we can constrain $|X|$ to be at most
	$\kappa(n)=\mathrm{poly}(n)$,
	and therefore the above $T$ is feasible on a polynomial scale in $n$.
\end{remark}

\begin{remark}[Completeness and efficiency of the honest prover]
	For an honest prover, the device internally has access to $P_\theta$ via the SQ interface,
	and each query requires only $\Theta(\tau^{-2})$ samples to ensure
	$|v_r-P_\theta[\varphi^{(r)}]|\le\tau$.
	On the other hand, in the initialization phase the fingerprint satisfies
	$\bigl|\hat P_\theta(x)-P_\theta(x)\bigr|\le\eta$
	with $\eta=\tau/|X|$, so by~\eqref{eq:u_r_error_bound} we have
	\[
	\bigl|u_r-P_\theta[\varphi^{(r)}]\bigr|
	\le |X|\eta = \tau.
	\]
	Therefore in each challenge round,
	\[
	\bigl|v_r-u_r\bigr|
	\le \bigl|v_r-P_\theta[\varphi^{(r)}]\bigr|
	+\bigl|P_\theta[\varphi^{(r)}]-u_r\bigr|
	\le \tau + \tau
	= 2\tau,
	\]
	so the honest prover passes all $N_{\mathrm{chal}}$ rounds with high probability.
	The verifier’s main cost is:
	in the initialization phase, $T=\Theta\bigl(\eta^{-2}\log(|X|/\delta_{\mathrm{puf}})\bigr)$
	physical runs of the device, which with $\eta=\tau/|X|$ scales as
	$T=\mathrm{poly}(|X|)=\mathrm{poly}(n)$;
	in the authentication phase, each challenge costs only $\mathcal O(|X|)$ classical operations,
	and the total cost is $N_{\mathrm{chal}}\cdot\mathrm{poly}(|X|)$.
	Thus the protocol is feasible on a polynomial scale in the number of qubits $n$.
\end{remark}

\begin{remark}[Adversary capability model from the SQ viewpoint]
	In the above protocol, we only allow the adversary to query the device’s SQ interface
	$\operatorname{Stat}_\tau(P_\theta)$ at most $q$ times during the attack phase.
	After this phase the device is assumed to be retrieved or deployed in a trusted environment,
	and the adversary can only attempt to construct a simulator that answers the verifier’s challenges during authentication.
	
	Hence the interface available to the adversary in the security analysis is exactly the same as that of a learner in the SQ learning problem of Sec.~\ref{result 2}:
	in each query the adversary can only specify $\varphi:X\to[-1,1]$ and obtain an approximate expectation value;
	it cannot see individual samples, nor can it access the physical device during the authentication phase.
\end{remark}

\medskip

To translate the event of ``passing authentication'' into a statement that the output distribution has been learned with high accuracy,
we need a completeness lemma for the Hadamard test-function family.

\begin{lemma}[Parseval identity for Hadamard test functions]
	\label{lem:hadamard_parseval}
	Embed $X$ into $\{0,1\}^L$ via the encoding $h$, and define
	\[
	\chi_{\mathrm{BIT}}(z)
	:=(-1)^{\langle\mathrm{BIT},z\rangle},
	\qquad z\in\{0,1\}^L.
	\]
	For any two distributions $P,Q\in\mathcal D_X$, define their extensions
	to $\{0,1\}^L$ by
	\[
	\widetilde P(z)
	:=
	\begin{cases}
		P(x), & z=h(x)\text{ for some }x\in X,\\[.3em]
		0, & \text{otherwise},
	\end{cases}
	\qquad
	\widetilde Q\ \text{defined analogously}.
	\]
	Then we have the Parseval-type identity
	\begin{equation}
		\frac{1}{2^L}
		\sum_{\mathrm{BIT}\in\{0,1\}^L}
		\Bigl(
		P[\varphi_{\mathrm{BIT}}]
		-Q[\varphi_{\mathrm{BIT}}]
		\Bigr)^2
		=
		\|\widetilde P-\widetilde Q\|_2^2
		=
		\|P-Q\|_2^2.
	\end{equation}
\end{lemma}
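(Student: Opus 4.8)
The plan is to reduce the claim to the standard orthogonality of Walsh--Hadamard characters on the Boolean cube $\{0,1\}^L$. The first step is to rewrite each test value as an inner product over the whole cube: since $h$ is injective and $\varphi_{\mathrm{BIT}}(x)=(-1)^{\langle\mathrm{BIT},h(x)\rangle}=\chi_{\mathrm{BIT}}(h(x))$, one has
\[
P[\varphi_{\mathrm{BIT}}]
=\sum_{x\in X}\varphi_{\mathrm{BIT}}(x)\,P(x)
=\sum_{z\in\{0,1\}^L}\chi_{\mathrm{BIT}}(z)\,\widetilde P(z),
\]
and likewise for $Q$; writing $g:=\widetilde P-\widetilde Q$ this gives $P[\varphi_{\mathrm{BIT}}]-Q[\varphi_{\mathrm{BIT}}]=\sum_{z}g(z)\chi_{\mathrm{BIT}}(z)$.

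Next I would expand the square on the left-hand side and exchange the two finite summations:
\[
\frac{1}{2^L}\sum_{\mathrm{BIT}\in\{0,1\}^L}\Bigl(\sum_z g(z)\chi_{\mathrm{BIT}}(z)\Bigr)^2
=\sum_{z,z'\in\{0,1\}^L}g(z)g(z')\,\Bigl(\frac{1}{2^L}\sum_{\mathrm{BIT}}\chi_{\mathrm{BIT}}(z)\chi_{\mathrm{BIT}}(z')\Bigr).
\]
The inner bracket equals $\frac{1}{2^L}\sum_{\mathrm{BIT}}(-1)^{\langle\mathrm{BIT},\,z+z'\rangle}$, where $z+z'$ is taken in $\mathbb F_2^L$: when $z\neq z'$ the vector $z+z'$ is nonzero, so $\mathrm{BIT}\mapsto\langle\mathrm{BIT},z+z'\rangle$ is a surjective $\mathbb F_2$-linear map and the character sum vanishes, whereas for $z=z'$ it equals $1$. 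Hence the inner bracket is $\delta_{z,z'}$, and the double sum collapses to $\sum_z g(z)^2=\|\widetilde P-\widetilde Q\|_2^2$.

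Finally, since $\widetilde P$ and $\widetilde Q$ are supported on $h(X)$ and $h$ is injective, the $\ell_2$-norm over the cube agrees with the one over $X$,
\[
\|\widetilde P-\widetilde Q\|_2^2=\sum_{z\in h(X)}\bigl(\widetilde P(z)-\widetilde Q(z)\bigr)^2=\sum_{x\in X}\bigl(P(x)-Q(x)\bigr)^2=\|P-Q\|_2^2,
\]
which completes the chain of equalities. I do not anticipate a genuine obstacle here: the only step that deserves a careful sentence is the character orthogonality --- equivalently, the fact that a nontrivial additive character of $(\mathbb Z/2)^L$ sums to zero over the group --- and everything else is bookkeeping with finite sums. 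The statement is, in essence, Parseval's identity for the discrete Fourier transform on $\mathbb F_2^L$ applied to the extension $\widetilde P-\widetilde Q$, which is precisely the form in which it will be used to control $d_{\mathrm{TV}}(P,Q)$ by averaged Hadamard-test discrepancies in the subsequent reconstruction lemmas.
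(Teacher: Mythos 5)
Your proof is correct and follows essentially the same route as the paper: both rewrite $P[\varphi_{\mathrm{BIT}}]-Q[\varphi_{\mathrm{BIT}}]$ as a Fourier-type sum of $g=\widetilde P-\widetilde Q$ against the characters $\chi_{\mathrm{BIT}}$ on the cube, invoke orthogonality of the Walsh--Hadamard basis to collapse the averaged square to $\|g\|_2^2$, and then use that $g$ is supported on $h(X)$ with $h$ injective to identify this with $\|P-Q\|_2^2$. The only difference is presentational: the paper cites Parseval's identity as a known fact, while you derive the orthogonality step explicitly via the vanishing character sum $\frac{1}{2^L}\sum_{\mathrm{BIT}}(-1)^{\langle\mathrm{BIT},z\oplus z'\rangle}=\delta_{z,z'}$.
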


\begin{proof}
	On the Boolean hypercube $\{0,1\}^L$, the family $\{\chi_{\mathrm{BIT}}\}$ forms an orthonormal basis with respect to the inner product
	\(
	\langle f,g\rangle
	:= 2^{-L}\sum_{z} f(z)g(z)
	\)
	(the Walsh--Hadamard basis).
	Therefore, for any function $f:\{0,1\}^L\to\mathbb R$ we have the Parseval identity
	(a basic fact about Fourier expansions; see~\cite{o2014analysis}):
	\[
	\frac{1}{2^L}
	\sum_{\mathrm{BIT}} \Bigl(
	\sum_{z} \chi_{\mathrm{BIT}}(z) f(z)
	\Bigr)^2
	= \sum_z f(z)^2
	= \|f\|_2^2.
	\]
	Take $f:=\widetilde P-\widetilde Q$.
	Then
	\[
	\sum_{z} \chi_{\mathrm{BIT}}(z) f(z)
	= \sum_{x\in X} \chi_{\mathrm{BIT}}(h(x))\bigl(P(x)-Q(x)\bigr)
	= \sum_{x\in X} \varphi_{\mathrm{BIT}}(x)\bigl(P(x)-Q(x)\bigr)
	= P[\varphi_{\mathrm{BIT}}]-Q[\varphi_{\mathrm{BIT}}],
	\]
	and hence
	\[
	\frac{1}{2^L}
	\sum_{\mathrm{BIT}\in\{0,1\}^L}
	\Bigl(
	P[\varphi_{\mathrm{BIT}}]
	-Q[\varphi_{\mathrm{BIT}}]
	\Bigr)^2
	= \sum_{z} f(z)^2
	= \|\widetilde P-\widetilde Q\|_2^2.
	\]
	On the other hand, since $f(z)\neq 0$ only when $z=h(x)$ for some $x\in X$, we have
	\[
	\|\widetilde P-\widetilde Q\|_2^2
	= \sum_{z} f(z)^2
	= \sum_{x\in X}\bigl(\widetilde P(h(x))-\widetilde Q(h(x))\bigr)^2
	= \sum_{x\in X}\bigl(P(x)-Q(x)\bigr)^2
	= \|P-Q\|_2^2.
	\]
	Combining the two identities yields the statement.
\end{proof}

\medskip

\begin{lemma}[From multi-round Hadamard authentication success to single-round prediction accuracy]
	\label{lem:hadamard_single_round}
	Fix a parameter $\theta$ and let $P_\theta\in\mathcal D_X$ be the corresponding target distribution.
	Suppose the adversary’s simulator in the authentication phase is
	\[
	w:\{0,1\}^L\to\mathbb R,
	\qquad
	\mathrm{BIT}\mapsto w(\mathrm{BIT}),
	\]
	and the verifier’s target value is
	\[
	u(\mathrm{BIT})
	:= \sum_{x\in X}\varphi_{\mathrm{BIT}}(x)\,\hat P_\theta(x),
	\]
	where the fingerprint $\hat P_\theta$ satisfies
	\[
	\bigl|\hat P_\theta(x)-P_\theta(x)\bigr|
	\le \eta_{\mathrm{fp}}
	\quad\forall x\in X.
	\]
	Hence for all test functions we have
	\[
	\bigl|u(\mathrm{BIT})-P_\theta[\varphi_{\mathrm{BIT}}]\bigr|
	\le |X|\eta_{\mathrm{fp}}.
	\]
	Suppose the protocol uses a constant $C_0>0$ (in our protocol we take $C_0=2$), and the acceptance condition is:
	in each round $r=1,\dots,N_{\mathrm{chal}}$,
	\[
	\bigl|w(\mathrm{BIT}^{(r)})-u(\mathrm{BIT}^{(r)})\bigr|
	\le C_0\tau.
	\]
	
	Let
	\[
	\Delta_\theta(\mathrm{BIT})
	:= w(\mathrm{BIT})-P_\theta[\varphi_{\mathrm{BIT}}],
	\]
	and assume that the fingerprint parameter satisfies
	\[
	|X|\eta_{\mathrm{fp}}\le\tau,
	\]
	so that the event ``the protocol rejects this round'' is equivalent to
	\[
	\bigl|\Delta_\theta(\mathrm{BIT})\bigr|
	> (C_0+1)\tau
	\quad\Longrightarrow\quad
	\bigl|w(\mathrm{BIT})-u(\mathrm{BIT})\bigr|>C_0\tau.
	\]
	
	If under this acceptance rule the probability of passing all $N_{\mathrm{chal}}$ rounds satisfies
	\[
	\Pr_{\mathrm{chal}}\bigl[\text{all $N_{\mathrm{chal}}$ rounds are accepted}\bigr]\;\ge\;\alpha_0,
	\]
	where $\Pr_{\mathrm{chal}}$ averages only over the uniform and independent draws
	of $\mathrm{BIT}^{(1)},\dots,\mathrm{BIT}^{(N_{\mathrm{chal}})}$ (with $w$ treated as deterministic),
	then for a uniformly random
	$\mathrm{BIT}\sim\mathrm{Unif}(\{0,1\}^L)$ we have
	\begin{equation}\label{0721}
		\Pr\Bigl(
		\bigl|\Delta_\theta(\mathrm{BIT})\bigr|>(C_0+1)\tau
		\Bigr)
		\;\le\;
		\delta_{\mathrm{Had}},
		\qquad
		\delta_{\mathrm{Had}}
		:= 1-\alpha_{0}^{1/N_{\mathrm{chal}}}.
	\end{equation}
	In other words, if the adversary can pass $N_{\mathrm{chal}}$ many random challenges with overall success probability at least $\alpha_0$
	(with $\alpha_0$ close to $1$), then on almost all challenges $\mathrm{BIT}$,
	the adversary’s prediction $w(\mathrm{BIT})$ must approximate $P_\theta[\varphi_{\mathrm{BIT}}]$
	within error $(C_0+1)\tau$.
\end{lemma}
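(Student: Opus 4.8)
The plan is to argue contrapositively at the level of a single challenge round and then exploit the mutual independence of the $N_{\mathrm{chal}}$ draws of $\mathrm{BIT}$. First I would record the elementary geometric fact that a large prediction error forces a rejection: if $|\Delta_\theta(\mathrm{BIT})| > (C_0+1)\tau$, then by the triangle inequality
\[
\bigl|w(\mathrm{BIT}) - u(\mathrm{BIT})\bigr|
\;\ge\;
\bigl|\Delta_\theta(\mathrm{BIT})\bigr|
- \bigl|P_\theta[\varphi_{\mathrm{BIT}}] - u(\mathrm{BIT})\bigr|
\;>\;
(C_0+1)\tau - |X|\eta_{\mathrm{fp}}
\;\ge\;
C_0\tau,
\]
where the last step uses the standing hypothesis $|X|\eta_{\mathrm{fp}}\le\tau$ together with the fingerprint bound $|\hat P_\theta(x)-P_\theta(x)|\le\eta_{\mathrm{fp}}$. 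Hence the single-round acceptance predicate $|w(\mathrm{BIT})-u(\mathrm{BIT})|\le C_0\tau$ can only hold when $\mathrm{BIT}$ lies outside the ``bad set''
\[
B \;:=\; \bigl\{\mathrm{BIT}\in\{0,1\}^L:\ |\Delta_\theta(\mathrm{BIT})|>(C_0+1)\tau\bigr\}.
\]

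Next I would set $p_{\mathrm{bad}}:=|B|/2^L=\Pr_{\mathrm{BIT}\sim\mathrm{Unif}(\{0,1\}^L)}[\mathrm{BIT}\in B]$, which is exactly the quantity that \eqref{0721} is meant to bound. Since $w$ is treated as a fixed deterministic function and each $\mathrm{BIT}^{(r)}$ is drawn uniformly and independently, the event ``round $r$ is accepted'' is a deterministic function of $\mathrm{BIT}^{(r)}$ alone; by the previous paragraph it is contained in the event $\{\mathrm{BIT}^{(r)}\notin B\}$, so it has probability at most $1-p_{\mathrm{bad}}$, and these events are mutually independent across $r$. Multiplying over the rounds gives
\[
\Pr_{\mathrm{chal}}\bigl[\text{all $N_{\mathrm{chal}}$ rounds accepted}\bigr]
\;\le\;(1-p_{\mathrm{bad}})^{N_{\mathrm{chal}}}.
\]
Combining this with the hypothesis $\Pr_{\mathrm{chal}}[\text{all accepted}]\ge\alpha_0$ yields $\alpha_0\le(1-p_{\mathrm{bad}})^{N_{\mathrm{chal}}}$, i.e.\ $1-p_{\mathrm{bad}}\ge\alpha_0^{1/N_{\mathrm{chal}}}$, which rearranges to $p_{\mathrm{bad}}\le 1-\alpha_0^{1/N_{\mathrm{chal}}}=\delta_{\mathrm{Had}}$; this is precisely the assertion \eqref{0721}.

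There is no genuine obstacle here: the argument is a one-line contrapositive plus an independence product. The only points requiring care are (i) keeping $w$ fixed (deterministic) throughout, so that the per-round acceptance events genuinely decouple — the averaging over the adversary's internal coins and over $\theta$ is deferred to the downstream security theorem (Theorem~\ref{thm:crqpuf_sq_hadamard_security}) — and (ii) checking that the protocol constant $C_0$ (here $C_0=2$) and the fingerprint tolerance enter only through the clean inequality $|X|\eta_{\mathrm{fp}}\le\tau$, so that no further assumptions on $\eta_{\mathrm{fp}}$, $\tau$, or $L$ are needed. A small bookkeeping remark worth including is that $\delta_{\mathrm{Had}}=1-\alpha_0^{1/N_{\mathrm{chal}}}$ is decreasing in $N_{\mathrm{chal}}$ and, for $\alpha_0$ bounded away from $0$, behaves like $\tfrac{1}{N_{\mathrm{chal}}}\log(1/\alpha_0)$, which is what makes the conclusion useful when it is fed into Lemma~\ref{lem:hadamard_reconstruct_dist}.
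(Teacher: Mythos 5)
Your proposal is correct and follows essentially the same argument as the paper's own proof: define the bad set $B$, use the triangle inequality together with $|X|\eta_{\mathrm{fp}}\le\tau$ to show that landing in $B$ forces rejection of that round, then exploit independence of the challenge draws to get the product bound $(1-p_{\mathrm{bad}})^{N_{\mathrm{chal}}}\ge\alpha_0$ and rearrange. Your added remarks on determinism of $w$ and the asymptotics of $\delta_{\mathrm{Had}}$ are sensible commentary but do not change the substance.
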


\begin{proof}
	Let the set of ``bad'' challenge bit strings be
	\[
	B_\theta
	:=\Bigl\{\mathrm{BIT}:\ 
	\bigl|\Delta_\theta(\mathrm{BIT})\bigr|>(C_0+1)\tau
	\Bigr\},
	\qquad
	p_\theta
	:=\Pr_{\mathrm{BIT}\sim\mathrm{Unif}}[\mathrm{BIT}\in B_\theta].
	\]
	Using the assumption $|X|\eta_{\mathrm{fp}}\le\tau$ and the triangle inequality, we get for any $\mathrm{BIT}$:
	\[
	\bigl|w(\mathrm{BIT})-u(\mathrm{BIT})\bigr|
	\ge
	\bigl|\Delta_\theta(\mathrm{BIT})\bigr|
	-\bigl|u(\mathrm{BIT})-P_\theta[\varphi_{\mathrm{BIT}}]\bigr|
	> (C_0+1)\tau - \tau
	= C_0\tau.
	\]
	Hence if the challenge bit string of a given round belongs to $B_\theta$, that round necessarily fails.
	Therefore the event ``all $N_{\mathrm{chal}}$ rounds are accepted'' is contained in the event
	``none of the $N_{\mathrm{chal}}$ challenge bit strings lies in $B_\theta$''.
	Since the challenge strings in different rounds are independent and identically distributed, we have
	\[
	\Pr_{\mathrm{chal}}[\text{all $N_{\mathrm{chal}}$ rounds are accepted}]
	\;\le\;
	\Pr\bigl[\mathrm{BIT}^{(1)},\dots,\mathrm{BIT}^{(N_{\mathrm{chal}})}\notin B_\theta\bigr]
	= (1-p_\theta)^{N_{\mathrm{chal}}}.
	\]
	By the assumption $\Pr_{\mathrm{chal}}[\text{all rounds accepted}]\ge\alpha_{0}$, we obtain
	\[
	(1-p_\theta)^{N_{\mathrm{chal}}}\ge\alpha_{0},
	\quad
	p_\theta \le 1-\alpha_{0}^{1/N_{\mathrm{chal}}}
	=:\delta_{\mathrm{Had}}.
	\]
	But $p_\theta$ is exactly
	\(
	\Pr_{\mathrm{BIT}}\bigl(
	|\Delta_\theta(\mathrm{BIT})|>(C_0+1)\tau
	\bigr)
	\),
	so we obtain \eqref{0721}.
\end{proof}

\medskip

\begin{lemma}[Reconstructing an approximate distribution from Hadamard-test predictions]
	\label{lem:hadamard_reconstruct_dist}
	Let $X$ be a finite set, $P\in\mathcal D_X$ a target distribution,
	and $\{\varphi_{\mathrm{BIT}}\}_{\mathrm{BIT}\in\{0,1\}^L}$ the Hadamard test-function family of
	Definition~\ref{def:hadamard_tests}.
	Let $w:\{0,1\}^L\to\mathbb R$ be any function (for example, the output of an adversary’s simulator),
	and suppose that for some $\kappa_{\mathrm{*}}>0$ we have
	\begin{equation}
		\label{eq:hadamard_L2_assumption_new}
		\mathbb E_{\mathrm{BIT}\sim\mathrm{Unif}}\Bigl[
		\bigl(w(\mathrm{BIT})-P[\varphi_{\mathrm{BIT}}]\bigr)^2
		\Bigr]
		\;\le\; \kappa_{\mathrm{*}}^2.
	\end{equation}
	Then there exists a distribution $D\in\mathcal D_X$ such that
	\begin{equation}
		d_{\mathrm{TV}}(P,D)
		\;\le\;
		\sqrt{|X|}\,\kappa_{\mathrm{*}}.
	\end{equation}
\end{lemma}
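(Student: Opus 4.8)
The plan is to invert the Walsh--Hadamard transform on $\{0,1\}^L$ and then project the resulting vector back onto the probability simplex. First I would zero-pad $P$ to a function $\widetilde P$ on $\{0,1\}^L$ exactly as in Lemma~\ref{lem:hadamard_parseval}, so that $\sum_{z}\chi_{\mathrm{BIT}}(z)\widetilde P(z)=P[\varphi_{\mathrm{BIT}}]$ for every $\mathrm{BIT}$. Next, treating the adversary's replies $w(\mathrm{BIT})$ as Walsh--Hadamard coefficients, I would define the real function
\[
g(z):=\frac{1}{2^L}\sum_{\mathrm{BIT}\in\{0,1\}^L}\chi_{\mathrm{BIT}}(z)\,w(\mathrm{BIT}),\qquad z\in\{0,1\}^L,
\]
which by orthogonality of the characters ($\sum_z\chi_a(z)\chi_b(z)=2^L\delta_{ab}$) satisfies $\sum_{z}\chi_{\mathrm{BIT}}(z)g(z)=w(\mathrm{BIT})$ for all $\mathrm{BIT}$.

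Then I would apply the Boolean-cube Parseval identity --- the same one used inside the proof of Lemma~\ref{lem:hadamard_parseval}, now applied to the arbitrary real function $f:=g-\widetilde P$ rather than to a difference of two distributions --- to get
\[
\|g-\widetilde P\|_2^2=\sum_{z\in\{0,1\}^L}\bigl(g(z)-\widetilde P(z)\bigr)^2
=\frac{1}{2^L}\sum_{\mathrm{BIT}}\bigl(w(\mathrm{BIT})-P[\varphi_{\mathrm{BIT}}]\bigr)^2
=\mathbb E_{\mathrm{BIT}\sim\mathrm{Unif}}\bigl[(w(\mathrm{BIT})-P[\varphi_{\mathrm{BIT}}])^2\bigr]\le\kappa_{*}^2 .
\]
Restricting $g$ to the image $h(X)$ and setting $g_X(x):=g(h(x))$, and using that $\widetilde P$ agrees with $P$ on $h(X)$ and vanishes elsewhere, one gets $\|g_X-P\|_2^2=\sum_{x\in X}(g(h(x))-P(x))^2\le\|g-\widetilde P\|_2^2\le\kappa_{*}^2$, so $g_X$ is a (possibly signed) vector on $X$ that is $\ell_2$-close to $P$ but not yet a probability distribution.

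Finally, since $\mathcal D_X$ is a closed convex subset of $\mathbb R^{|X|}$ and $P\in\mathcal D_X$, I would take $D:=\Pi_{\mathcal D_X}(g_X)$ to be the Euclidean projection of $g_X$ onto the simplex; because metric projection onto a closed convex set is $1$-Lipschitz and fixes $P$, this gives $\|D-P\|_2\le\|g_X-P\|_2\le\kappa_{*}$. Converting the $\ell_2$ bound to total variation by Cauchy--Schwarz, $d_{\mathrm{TV}}(P,D)=\tfrac12\|P-D\|_1\le\tfrac12\sqrt{|X|}\,\|P-D\|_2\le\tfrac12\sqrt{|X|}\,\kappa_{*}\le\sqrt{|X|}\,\kappa_{*}$, which is the claimed bound (in fact with a spare factor $1/2$). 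The only points needing care are (i) invoking the Parseval identity for a general real function rather than for a difference of distributions --- but this is precisely the intermediate step already carried out in the proof of Lemma~\ref{lem:hadamard_parseval} --- and (ii) the projection step, which is what upgrades the $\ell_2$-approximant $g_X$ to a genuine element of $\mathcal D_X$ while only contracting its distance to $P$; I do not expect any serious obstacle, since the argument is essentially Fourier inversion followed by a convex projection.
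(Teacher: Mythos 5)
Your proposal is correct and, perhaps without your knowing it, reproduces almost verbatim the ``inverse Walsh--Hadamard transform plus simplex projection'' route that the paper's own proof \emph{sketches} before switching to a different argument. Inside the published proof of Lemma~\ref{lem:hadamard_reconstruct_dist} the authors first observe that the constructive procedure --- invert the transform to get a vector $\tilde p$, apply Lemma~\ref{lem:hadamard_parseval} to bound $\|\tilde p-\widetilde P\|_2$, then Euclidean-project onto $\mathcal D_X$ using $1$-Lipschitzness of the projection --- yields $d_{\mathrm{TV}}(P,D)\le\tfrac12\sqrt{|X|}\,\kappa_*$, and then they set that aside and instead run a nonconstructive argument via the $\ell_2$-best approximant $D^\star:=\arg\min_{D\in\mathcal D_X}\mathbb E_{\mathrm{BIT}}[(w-D[\varphi_{\mathrm{BIT}}])^2]$, using $G(D^\star)\le G(P)\le\kappa_*^2$, the elementary inequality $(a+b)^2\le 2a^2+2b^2$, and Parseval, which costs a factor of $2$ and gives $\sqrt{|X|}\,\kappa_*$. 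Compared to that main argument, your proof buys the sharper constant $\tfrac12\sqrt{|X|}\,\kappa_*$ and makes fully explicit the step that the paper's sketch glosses over --- namely that one must restrict $g$ from $\{0,1\}^L$ down to $h(X)$ before projecting onto $\mathcal D_X\subset\mathbb R^{|X|}$, and that dropping the off-image coordinates can only decrease the $\ell_2$ norm. The paper's minimizer argument, in exchange, never needs to mention the inverse transform or projection machinery at all; it reduces everything to the triangle inequality and Parseval. Both routes are sound, and yours actually proves a slightly stronger inequality than the lemma states.
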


\begin{proof}
	For any $D\in\mathcal D_X$, consider
	\[
	G(D)
	\;:=\;
	\mathbb E_{\mathrm{BIT}\sim\mathrm{Unif}}\Bigl[
	\bigl(w(\mathrm{BIT})-D[\varphi_{\mathrm{BIT}}]\bigr)^2
	\Bigr].
	\]
	The map
	\[
	\mathcal D_X\ni D
	\;\mapsto\;
	\bigl(D[\varphi_{\mathrm{BIT}}]\bigr)_{\mathrm{BIT}\in\{0,1\}^L}
	\in \mathbb R^{2^L}
	\]
	is continuous, and $\mathcal D_X$ is compact in the $\ell_2$ topology.
	Hence $G(D)$ attains its minimum on $\mathcal D_X$:
	there exists $D^\star\in\mathcal D_X$ such that
	\[
	D^\star
	:=
	\arg\min_{D\in\mathcal D_X} G(D).
	\]
	In the proof we only need the existence of $D^\star$.
	If one actually wants to reconstruct a distribution, one need not explicitly solve this minimization problem.
	The most direct algorithm is to apply the inverse Walsh--Hadamard transform to $w$ to obtain a vector, and then project this vector onto the probability simplex (the projection procedure can be found in~\cite{wang2013projection}).
	We briefly mention that the result of such a constructive procedure is equivalent to that of taking a minimizer $D^\star$; see the derivation below for the minimizer case.
	Let $\tilde{p}$ be the vector obtained by the inverse transform; by Lemma~\ref{lem:hadamard_parseval}, we have
	$\mathbb{E}_{\mathrm{BIT}}\left[\left(w_{\theta, R}(\mathrm{BIT})-F_{P_\theta}(\mathrm{BIT})\right)^2\right]=\left\|\tilde{p}-\tilde{P}_\theta\right\|_2^2\leq \kappa_{*}^2$.
	Since the projection onto a convex set does not increase the $\ell_2$ distance
	($\left\|\Pi_C(a)-\Pi_C(b)\right\|_2 \leq\|a-b\|_2$), if we denote by $D_{\mathrm{pro}}$ the projected distribution, then
	$d_{\mathrm{TV}}\left(P_\theta, D_{\mathrm{pro}}\right) \leq \frac{1}{2} \sqrt{|X|}\left\|P_\theta-D_{\mathrm{pro}}\right\|_2 \leq \frac{1}{2} \sqrt{|X|} \kappa_{*}$.
	
	We now return to the case of the minimizer $D^\star$.
	Since $P$ is also a candidate distribution, the minimum is no worse than the value at $P$, i.e.
	\begin{equation}
		\label{eq:best_approx_no_worse_than_P_new}
		\mathbb E_{\mathrm{BIT}}\Bigl[
		\bigl(w(\mathrm{BIT})-D^\star[\varphi_{\mathrm{BIT}}]\bigr)^2
		\Bigr]
		\;\le\;
		\mathbb E_{\mathrm{BIT}}\Bigl[
		\bigl(w(\mathrm{BIT})-P[\varphi_{\mathrm{BIT}}]\bigr)^2
		\Bigr]
		\;\le\; \kappa_{\mathrm{*}}^2,
	\end{equation}
	where the last inequality follows from the assumption~\eqref{eq:hadamard_L2_assumption_new}.
	
	For each $\mathrm{BIT}\in\{0,1\}^L$, define
	\[
	\Delta(\mathrm{BIT})
	\;:=\;
	D^\star[\varphi_{\mathrm{BIT}}]-P[\varphi_{\mathrm{BIT}}].
	\]
	Let
	\[
	a(\mathrm{BIT})
	:= D^\star[\varphi_{\mathrm{BIT}}]-w(\mathrm{BIT}),\qquad
	b(\mathrm{BIT})
	:= w(\mathrm{BIT})-P[\varphi_{\mathrm{BIT}}],
	\]
	then for each $\mathrm{BIT}$ we have
	\[
	\Delta(\mathrm{BIT})
	= a(\mathrm{BIT})+b(\mathrm{BIT}).
	\]
	For any real numbers $a,b$, the inequality
	\(
	(a+b)^2 \le 2a^2+2b^2
	\)
	holds, hence for each $\mathrm{BIT}$,
	\[
	\Delta(\mathrm{BIT})^2
	\le
	2\,a(\mathrm{BIT})^2
	+2\,b(\mathrm{BIT})^2.
	\]
	Taking expectation over uniform random $\mathrm{BIT}$ and using
	\eqref{eq:hadamard_L2_assumption_new} and
	\eqref{eq:best_approx_no_worse_than_P_new}, we obtain
	\begin{equation}
		\label{eq:Delta_L2_bound_new}
		\mathbb E_{\mathrm{BIT}}[\Delta(\mathrm{BIT})^2]
		\;\le\;
		2\,\mathbb E_{\mathrm{BIT}}\bigl[a(\mathrm{BIT})^2\bigr]
		+2\,\mathbb E_{\mathrm{BIT}}\bigl[b(\mathrm{BIT})^2\bigr]
		\;\le\; 4\kappa_{\mathrm{*}}^2.
	\end{equation}
	
	On the other hand, applying Lemma~\ref{lem:hadamard_parseval} to $P$ and $D^\star$ with $Q:=D^\star$, we have
	\[
	\frac{1}{2^L}
	\sum_{\mathrm{BIT}\in\{0,1\}^L}
	\bigl(P[\varphi_{\mathrm{BIT}}]-D^\star[\varphi_{\mathrm{BIT}}]\bigr)^2
	= \|\widetilde P-\widetilde D^\star\|_2^2
	= \|P-D^\star\|_2^2.
	\]
	The left-hand side is exactly
	\(\mathbb E_{\mathrm{BIT}}[\Delta(\mathrm{BIT})^2]\),
	hence
	\[
	\|P-D^\star\|_2^2
	=
	\mathbb E_{\mathrm{BIT}}[\Delta(\mathrm{BIT})^2].
	\]
	Combining this with~\eqref{eq:Delta_L2_bound_new}, we obtain
	\[
	\|P-D^\star\|_2^2
	\le 4\kappa_{\mathrm{*}}^2,
	\qquad
	\text{and hence}\qquad
	\|P-D^\star\|_2
	\le 2\kappa_{\mathrm{*}}.
	\]
	
	Finally, by the Cauchy--Schwarz inequality,
	\[
	d_{\mathrm{TV}}(P,D^\star)
	= \frac12\|P-D^\star\|_1
	\le \frac12\sqrt{|X|}\,\|P-D^\star\|_2
	\le \frac12\sqrt{|X|}\cdot 2\kappa_{\mathrm{*}}
	= \sqrt{|X|}\,\kappa_{\mathrm{*}}.
	\]
	Taking $D:=D^\star$ completes the proof.
\end{proof}

\medskip

\begin{theorem}[SQ-query-complexity security of the Hadamard-type Lindbladian-PUF with SQ interface]
	\label{thm:crqpuf_sq_hadamard_security}
	Consider the protocol of Definition~\ref{def:crqpuf_sq_hadamard},
	using the same random Lindbladian ensemble as in the SQ hardness theorem of Sec.~\ref{result 2}.
	Let $A$ be any (possibly randomized) adversary that, in the attack phase, may query the device’s SQ interface
	$\operatorname{Stat}_\tau(P_\theta)$ at most $q$ times (i.e., makes at most $q$ statistical queries),
	and then attempts to construct a simulator (classical or quantum) that answers all challenges in the authentication phase.
	
	The verifier carries out $N_{\mathrm{chal}}$ rounds of independent random Hadamard tests during authentication.
	We view $N_{\mathrm{chal}}$ as a security parameter.
	By standard repetition and amplification arguments, if the soundness error per round is a constant,
	then to push the overall authentication error probability down to
	$\delta_{\mathrm{auth}}$, it suffices to take
	$N_{\mathrm{chal}} = \Theta(\log(1/\delta_{\mathrm{auth}}))$.
	Below we simply use the notation $\Theta(\log(1/\delta_{\mathrm{auth}}))$ to emphasize this scaling, without fixing the constant.
	
	Suppose there exist constants $\alpha>1/2$ and $\beta>0$ such that
	\[
	\mu_\Theta\Bigl(
	\theta:\ 
	\Pr_{\substack{\text{over adversary}\\\text{and protocol randomness}}}\bigl[
	\text{$A$ is accepted in the authentication phase}
	\bigr]\ge\alpha
	\Bigr)
	\ \ge\ \beta.
	\]
	Then for sufficiently large parameter dimension $M$, there exists a constant $c>0$ such that
	\begin{equation}
		q \ \ge\ \exp(cM),
	\end{equation}
	i.e., in terms of SQ query complexity, the adversary must issue exponentially many SQ queries in order to break this Lindbladian-PUF protocol with non-negligible success probability.
\end{theorem}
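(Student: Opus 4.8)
The plan is to reduce the security statement to the average-case SQ-hardness theorem for learning the output distributions $P_{\mathcal L}$ (Theorem~\ref{thm:intro_avg_learning_zh}, or more precisely its formal version built from Theorems~\ref{thm:frac_bound_open_system} and the far-from-$Q$ estimates of Sec.~\ref{zhang3}). Concretely, I will argue the contrapositive: any adversary $A$ that makes at most $q$ SQ queries during the attack phase and is then accepted by the verifier with probability $\ge\alpha>1/2$ on a $\mu_\Theta$-fraction $\ge\beta$ of parameters can be converted into an SQ learner that, using essentially the same $q$ queries, outputs a hypothesis distribution $\widehat P_\theta$ with $d_{\mathrm{TV}}(\widehat P_\theta,P_\theta)\le\epsilon$ on a comparable fraction of parameters. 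Since the SQ lower bound says any such learner needs $q\ge(\alpha-\tfrac12)\beta\,\exp(cM)$, the same bound propagates back to the adversary.

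The conversion proceeds in three steps, chaining the lemmas already proved in Sec.~\ref{subsec:crqpuf_distribution}. First, from the acceptance hypothesis I extract, via a standard averaging (Markov) argument over the adversary's internal randomness, a \emph{deterministic} simulator $w_{\theta,R}:\{0,1\}^L\to\mathbb R$ (fixing a ``good'' randomness string $R$) that passes all $N_{\mathrm{chal}}$ rounds with probability $\ge\alpha_0$ over the verifier's challenge draws, where $\alpha_0$ is bounded below by a constant depending on $\alpha$. Second, Lemma~\ref{lem:hadamard_single_round} with $C_0=2$ upgrades this to the statement that $\big|w_{\theta,R}(\mathrm{BIT})-P_\theta[\varphi_{\mathrm{BIT}}]\big|\le 3\tau$ except on a fraction $\delta_{\mathrm{Had}}=1-\alpha_0^{1/N_{\mathrm{chal}}}$ of challenges; choosing $N_{\mathrm{chal}}=\Theta(\log(1/\delta_{\mathrm{auth}}))$ makes $\delta_{\mathrm{Had}}$ as small as we like. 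Hence $\mathbb E_{\mathrm{BIT}}\big[(w_{\theta,R}(\mathrm{BIT})-P_\theta[\varphi_{\mathrm{BIT}}])^2\big]\le (3\tau)^2+4\delta_{\mathrm{Had}}=:\kappa_*^2$ (using $|w-P[\varphi]|\le 2$ on the bad set). Third, Lemma~\ref{lem:hadamard_reconstruct_dist} turns this $L^2$ bound on Hadamard coefficients into an explicit distribution $D_{\theta,R}\in\mathcal D_X$ with $d_{\mathrm{TV}}(P_\theta,D_{\theta,R})\le\sqrt{|X|}\,\kappa_*$. Since $|X|\le\kappa(n)=\mathrm{poly}(n)$ is fixed and independent of $M$, and $\tau,\delta_{\mathrm{auth}}$ are constants chosen in advance, the reconstruction error is a constant $\varepsilon_{\mathrm{SQ}}=\sqrt{|X|}\kappa_*$ that can be made smaller than any target $\epsilon$ appearing in the SQ learning theorem by first picking $\tau$ and $N_{\mathrm{chal}}$ appropriately. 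Crucially, producing $D_{\theta,R}$ uses no further queries to the device (it is a post-processing of the transcript $w_{\theta,R}$), so the learner's query count is exactly $A$'s query count $q$.

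Assembling these: on the $\mu_\Theta$-fraction $\ge\beta$ of parameters where $A$ is accepted with probability $\ge\alpha$, the derived learner $\epsilon$-learns $P_\theta$ with internal success probability $\ge\alpha'$ for some $\alpha'>1/2$ (again by an averaging argument over $R$, losing only a constant factor). Feeding $(\epsilon,\tau,\alpha',\beta)$ into Theorem~\ref{thm:intro_avg_learning_zh} (formal version via Theorem~\ref{thm:random_avg_SQ_complexity} combined with Theorem~\ref{thm:frac_bound_open_system} and the far-from-$Q$ volume bound of Theorem~\ref{thm:far_from_Q_open}) gives $q\gtrsim(\alpha'-\tfrac12)\beta\,\exp(c'M)$, and absorbing constants yields $q\ge\exp(cM)$ for sufficiently large $M$, as claimed. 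I also note that any ``table-lookup'' adversary that precomputes SQ answers on a large fraction of $\mathcal F_{\mathrm{Had}}$ trivially falls under this bound, since covering a constant fraction of the $2^L$ challenges already forces exponentially many distinct SQ queries; but the argument above is stronger in that it handles \emph{arbitrary} internal strategies.

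\textbf{Main obstacle.} The delicate point is the bookkeeping of error budgets and the two nested averaging arguments: I must ensure that after passing from a randomized accepted adversary to a deterministic good transcript $w_{\theta,R}$, then from per-round to per-challenge accuracy, then from $L^2$ Hadamard error to $\mathrm{TV}$ error via the $\sqrt{|X|}$ factor, the final reconstruction accuracy $\varepsilon_{\mathrm{SQ}}$ is still below the threshold $\epsilon$ \emph{and} the residual success probability $\alpha'$ is still strictly above $1/2$ and the coverage still a constant multiple of $\beta$ — all with constants that do not secretly depend on $M$. The $\sqrt{|X|}$ blow-up is the reason the block-projective POVM of Definition~\ref{def:block_povm} (keeping $|X|=\mathrm{poly}(n)$) is essential; with a bit-string measurement $|X|=2^n$ would destroy the reduction. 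Making the choice of $\tau$, $N_{\mathrm{chal}}$, and $\eta_{\mathrm{fp}}=\tau/|X|$ explicit as functions of the target $(\epsilon,\alpha,\beta)$ — so that $3\tau\le\epsilon/(3\sqrt{|X|})$ and $\delta_{\mathrm{Had}}$ is correspondingly tiny — is the part that requires care, though it is ultimately routine once the dependency chain is written down cleanly.
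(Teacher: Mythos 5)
Your proposal is correct and follows essentially the same route as the paper's own proof: an averaging argument over the adversary's internal randomness to extract a deterministic simulator $w_{\theta,R}$, Lemma~\ref{lem:hadamard_single_round} for per-challenge accuracy, the $L^2$ bound feeding into Lemma~\ref{lem:hadamard_reconstruct_dist}, and then the average-case SQ learning lower bound. The only minor point you leave implicit is the WLOG clipping of the simulator's output to $[-1,1]$, which is what actually justifies the ``$|w-P[\varphi]|\le 2$ on the bad set'' bound you invoke; the paper makes this step explicit by observing that clipping can only improve acceptance probability.
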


\begin{proof}
	To avoid confusion of notation, we denote
	\[
	\alpha_{\mathrm{auth}}:=\alpha,
	\qquad
	\beta_{\mathrm{auth}}:=\beta
	\]
	for the two constants appearing in the theorem assumption that describe the authentication success probability and the measure of ``good'' parameters.
	They are parameters of the protocol’s security and will later be related to
	$(\alpha_{\mathrm{learn}},\beta_{\mathrm{learn}})$
	in the SQ hardness theorem.
	
	\paragraph*{Step 0: Fixing the ``good event'' for the fingerprint.}
	
	In the initialization phase, the manufacturer uses Hoeffding’s inequality
	(Remark~\ref{rem:Hoeffding_sq})
	to choose a sufficiently large sample size so that the fingerprint $\hat P_\theta$ satisfies
	\[
	\bigl|\hat P_\theta(x)-P_\theta(x)\bigr|
	\le \eta_{\mathrm{fp}}
	= \frac{\tau}{|X|}
	\quad\forall x\in X
	\]
	with probability at least $1-\delta_{\mathrm{puf}}$.
	In the following we always work conditional on this ``good event'' and absorb $\delta_{\mathrm{puf}}$ into the overall authentication error probability
	$\delta_{\mathrm{auth}}$; we do not write it explicitly anymore.
	
	On this good event, for all Hadamard test functions we have
	\[
	\bigl|u(\mathrm{BIT})-P_\theta[\varphi_{\mathrm{BIT}}]\bigr|
	\le \sum_{x\in X}|\varphi_{\mathrm{BIT}}(x)|\,
	\bigl|\hat P_\theta(x)-P_\theta(x)\bigr|
	\le |X|\eta_{\mathrm{fp}}
	= \tau.
	\]
	
	\paragraph*{Step 1: From high overall authentication success to an $L_2$ bound for single-round Hadamard predictions.}
	
	Define the ``good parameter set''
	\[
	\Theta_{\mathrm{good}}
	:= \Bigl\{
	\theta:\ 
	\Pr_{\substack{\text{over adversary}\\\text{and protocol randomness}}}\bigl[
	\text{$A$ is accepted in the authentication phase}
	\bigr]\ge\alpha_{\mathrm{auth}}
	\Bigr\}.
	\]
	By the theorem assumption,
	\[
	\mu_\Theta(\Theta_{\mathrm{good}})\ \ge\ \beta_{\mathrm{auth}}.
	\]
	That is, for at least a $\beta_{\mathrm{auth}}$ fraction of parameters
	$\theta$, the overall success probability of the adversary --- measured over its internal randomness, the SQ oracle randomness, and the authentication challenges --- is at least $\alpha_{\mathrm{auth}}$.
	
	Now fix any
	\(\theta\in\Theta_{\mathrm{good}}\).
	To separate the randomness in the attack phase from the randomness in the authentication phase,
	let
	\[
	R
	= \text{the joint random seed for the adversary’s internal coins and the SQ oracle responses during the attack phase}.
	\]
	For each fixed $R$, we have:
	\begin{itemize}
		\item[1.] After the attack phase, the adversary outputs a deterministic simulator
		\[
		w_{\theta,R}:\{0,1\}^L\to\mathbb R,
		\qquad
		\mathrm{BIT}\mapsto w_{\theta,R}(\mathrm{BIT}).
		\]
		\item[2.] During the authentication phase, the only remaining randomness concerns whether this simulator passes the random challenge bit strings
		$\mathrm{BIT}^{(1)},\dots,\mathrm{BIT}^{(N_{\mathrm{chal}})}$.
	\end{itemize}
	For each fixed $(\theta,R)$, define
	\[
	\mathrm{acc}_\theta(R)
	:= \Pr_{\mathrm{chal}}\Bigl[
	\text{all $N_{\mathrm{chal}}$ rounds are accepted}
	\Bigr],
	\]
	where $\Pr_{\mathrm{chal}}$ averages only over the random challenge bit strings and the verifier’s randomness.
	On the other hand, the overall authentication success probability in the theorem assumption can be written as
	\[
	\Pr_{\substack{\text{over adversary}\\\text{and protocol randomness}}}\bigl[
	\text{$A$ is accepted in the authentication phase}
	\bigr]
	= \mathbb E_R\bigl[\mathrm{acc}_\theta(R)\bigr]
	\ge \alpha_{\mathrm{auth}}
	\qquad
	\forall\,\theta\in\Theta_{\mathrm{good}}.
	\]
	
	To turn the statement ``the mean is at least $\alpha_{\mathrm{auth}}$'' into a statement that ``for most seeds $R$, the acceptance probability is not too small'',
	we introduce a threshold that we are free to choose.
	Here we take
	\[
	\alpha_0
	:= \frac{2\alpha_{\mathrm{auth}}-1}{2},
	\]
	so that $1/2<\alpha_{\mathrm{auth}}\le 1$ implies
	\(\alpha_0\in(0,1/2]\subset(0,1)\).
	Define
	\[
	G_\theta
	:= \{R:\ \mathrm{acc}_\theta(R)\ge\alpha_0\}.
	\]
	Since $0\le \mathrm{acc}_\theta(R)\le 1$, we have
	\[
	\alpha_{\mathrm{auth}}
	\le \mathbb E_R[\mathrm{acc}_\theta(R)]
	\le \Pr[R\in G_\theta]\cdot 1
	+ \Pr[R\notin G_\theta]\cdot \alpha_0.
	\]
	Hence
	\begin{align*}
		\Pr[R\in G_\theta]
		&\ge \frac{\alpha_{\mathrm{auth}}-\alpha_0}{1-\alpha_0} \\
		&= \frac{\alpha_{\mathrm{auth}} - \frac{2\alpha_{\mathrm{auth}}-1}{2}}
		{1 - \frac{2\alpha_{\mathrm{auth}}-1}{2}} \\
		&= \frac{1}{3-2\alpha_{\mathrm{auth}}}
		\;=: p_0.
	\end{align*}
	Since $\alpha_{\mathrm{auth}}>1/2$, we have $3-2\alpha_{\mathrm{auth}}<2$, and thus
	\[
	p_0 = \frac{1}{3-2\alpha_{\mathrm{auth}}} > \frac12.
	\]
	That is, for each fixed good parameter
	$\theta\in\Theta_{\mathrm{good}}$, at least a $p_0>1/2$ fraction of random seeds
	$R$ satisfy $\mathrm{acc}_\theta(R)\ge\alpha_0$.
	
	Now fix such a $\theta$ and some
	\(R\in G_\theta\).
	Conditioned on these, the simulator $w_{\theta,R}$ is deterministic,
	and its overall success probability in $N_{\mathrm{chal}}$ authentication rounds is at least
	$\alpha_0$, i.e.
	\[
	\Pr_{\mathrm{chal}}\bigl[
	\text{all $N_{\mathrm{chal}}$ rounds are accepted}
	\bigr]
	= \mathrm{acc}_\theta(R)
	\ge \alpha_0.
	\]
	
	Since we did not constrain the range of the adversary’s outputs beforehand, we can apply a simple clipping operation:
	if $w_{\theta,R}(\mathrm{BIT})$ lies outside the interval $[-1,1]$,
	the verifier can clip it to $[-1,1]$ before checking the acceptance condition.
	This can only reduce
	$|w_{\theta,R}(\mathrm{BIT})-u(\mathrm{BIT})|$,
	and therefore cannot decrease the adversary’s acceptance probability.
	Thus we may assume without loss of generality that
	\[
	w_{\theta,R}(\mathrm{BIT})\in[-1,1]
	\quad\text{for all }\mathrm{BIT}.
	\]
	
	On the good fingerprint event, we have
	\(|u(\mathrm{BIT})-P_\theta[\varphi_{\mathrm{BIT}}]|\le\tau\) for all test functions.
	Let
	\[
	\Delta_\theta(\mathrm{BIT})
	:= w_{\theta,R}(\mathrm{BIT}) - P_\theta[\varphi_{\mathrm{BIT}}],
	\]
	then whenever
	\(|\Delta_\theta(\mathrm{BIT})|>(C_0+1)\tau\), we have
	\[
	\bigl|w_{\theta,R}(\mathrm{BIT})-u(\mathrm{BIT})\bigr|
	\ge
	\bigl|\Delta_\theta(\mathrm{BIT})\bigr|
	-\bigl|u(\mathrm{BIT})-P_\theta[\varphi_{\mathrm{BIT}}]\bigr|
	> (C_0+1)\tau - \tau
	= C_0\tau,
	\]
	so that round necessarily fails.
	Applying Lemma~\ref{lem:hadamard_single_round},
	we obtain, for uniformly random
	$\mathrm{BIT}\sim\mathrm{Unif}(\{0,1\}^L)$,
	\begin{equation}
		\Pr\Bigl(
		\bigl|\Delta_\theta(\mathrm{BIT})\bigr|>(C_0+1)\tau
		\Bigr)
		\le
		\delta_{\mathrm{Had}},
		\qquad
		\delta_{\mathrm{Had}}
		:= 1-\alpha_0^{1/N_{\mathrm{chal}}}.
	\end{equation}
	
	On the other hand, since
	$w_{\theta,R}(\mathrm{BIT})\in[-1,1]$ and $P_\theta[\varphi_{\mathrm{BIT}}]\in[-1,1]$,
	we have $|\Delta_\theta(\mathrm{BIT})|\le 2$ for all
	$\mathrm{BIT}$.
	Splitting the second moment into ``good'' and ``bad'' events, we obtain
	\begin{align}
		\mathbb E_{\mathrm{BIT}}\bigl[\Delta_\theta(\mathrm{BIT})^2\bigr]
		&= \mathbb E\bigl[\Delta_\theta^2
		\mathbf 1_{|\Delta_\theta|\le (C_0+1)\tau}\bigr]
		+ \mathbb E\bigl[\Delta_\theta^2
		\mathbf 1_{|\Delta_\theta|>(C_0+1)\tau}\bigr] \\
		&\le (C_0+1)^2\tau^2
		+ 4\,\Pr\bigl(|\Delta_\theta|>(C_0+1)\tau\bigr) \nonumber\\
		&\le (C_0+1)^2\tau^2 + 4\,\delta_{\mathrm{Had}} \nonumber\\
		&=: \kappa_{\mathrm{Had}}^2. \label{eq:kappa_had_def}
	\end{align}
	
	\paragraph*{Step 2: Reconstruction via inverse transform and projection; obtaining TV-approximation.}
	
	For each fixed $(\theta,R)$, apply Lemma~\ref{lem:hadamard_reconstruct_dist} with
	\[
	P := P_\theta,
	\qquad
	w := w_{\theta,R},
	\]
	and set
	$\kappa = \kappa_{\mathrm{Had}}$ in \eqref{eq:kappa_had_def}.
	The lemma guarantees the existence of a distribution
	$D_{\theta,R}\in\mathcal D_X$ such that
	\begin{equation}
		\label{eq:SQ_epsilon_def}
		d_{\mathrm{TV}}(P_\theta,D_{\theta,R})
		\le
		\frac{1}{2}\sqrt{|X|}\,\kappa_{\mathrm{Had}}
		\;=:\;\varepsilon_{\mathrm{SQ}}.
	\end{equation}
	The distribution $D_{\theta,R}$ is obtained solely from the simulator $w_{\theta,R}$ via the reconstruction procedure (see
	Lemma~\ref{lem:hadamard_reconstruct_dist}),
	and requires no additional SQ queries.
	
	As discussed above, for each $\theta\in\Theta_{\mathrm{good}}$,
	the probability that $R$ lies in $G_\theta$ is at least $p_0$, where
	\[
	p_0=\frac{1}{3-2\alpha_{\mathrm{auth}}}>\frac12.
	\]
	For these ``good seeds'', equation \eqref{eq:SQ_epsilon_def} holds; for $R\notin G_\theta$ we make no guarantee.
	Therefore, for each $\theta\in\Theta_{\mathrm{good}}$,
	\[
	\Pr_{R}\Bigl(
	d_{\mathrm{TV}}(P_\theta,D_{\theta,R})
	\le \varepsilon_{\mathrm{SQ}}
	\Bigr)
	\ge p_0.
	\]
	
	We can now construct an SQ learner
	$\mathcal L_{\mathrm{SQ}}$.
	Given access to $\operatorname{Stat}_\tau(P_\theta)$,
	$\mathcal L_{\mathrm{SQ}}$ proceeds as follows:
	\begin{enumerate}
		\item It treats the adversary $A$ as a subroutine and fully simulates the attack phase.
		Each SQ query issued by $A$ is forwarded to
		$\operatorname{Stat}_\tau(P_\theta)$,
		and the response is returned to $A$.
		At the end of the attack phase, $\mathcal L_{\mathrm{SQ}}$
		obtains the simulator $w_{\theta,R}$.
		\item Using the reconstruction procedure in Lemma~\ref{lem:hadamard_reconstruct_dist}, $\mathcal L_{\mathrm{SQ}}$ computes
		$D_{\theta,R}$ from $w_{\theta,R}$ and outputs $D_{\theta,R}$.
	\end{enumerate}
	In this process,
	$\mathcal L_{\mathrm{SQ}}$
	makes exactly $q$ calls to $\operatorname{Stat}_\tau(P_\theta)$, the same number of SQ queries as the adversary in the attack phase.
	The reconstruction step is purely classical post-processing and does not involve additional SQ queries.
	
	Thus, for random
	$\theta\sim\mu_\Theta$,
	\begin{itemize}
		\item[1.] For all $\theta\in\Theta_{\mathrm{good}}$,
		the success probability of the learner (with respect to its internal randomness $R$) is at least $p_0>1/2$;
		\item[2.] For $\theta\notin\Theta_{\mathrm{good}}$, no guarantee is made.
	\end{itemize}
	In the notation of Sec.~\ref{result 2} for outer (parameter) and inner (randomness) probabilities, we set
	\[
	\alpha_{\mathrm{learn}} := p_0 = \frac{1}{3-2\alpha_{\mathrm{auth}}} > \frac12,
	\qquad
	\beta_{\mathrm{learn}} := \beta_{\mathrm{auth}},
	\]
	so that
	\[
	\mu_\Theta\Bigl(
	\theta:\ 
	\Pr_{\text{over learner randomness}}\bigl[
	d_{\mathrm{TV}}(P_\theta,D_{\theta,R})
	\le \varepsilon_{\mathrm{SQ}}
	\bigr]
	\ge \alpha_{\mathrm{learn}}
	\Bigr)
	\ \ge\ \beta_{\mathrm{learn}}.
	\]
	This is precisely the $(\alpha_{\mathrm{learn}},\beta_{\mathrm{learn}})$ structure required in the average-case SQ hardness theorem for random Lindbladians,
	and we have ensured that $\alpha_{\mathrm{learn}}>1/2$.
	
	\paragraph*{Step 3: Invoking the SQ hardness lower bound for random Lindbladians.}
	
	The SQ hardness theorem in Sec.~\ref{result 2} states that for the family of distributions
	$\{P_\theta\}$ induced by the random Lindbladian ensemble,
	if there exists an SQ learning algorithm that, with at most $q$ queries and constant accuracy parameter $\varepsilon_{\mathrm{SQ}}$,
	succeeds with parameters
	$(\alpha_{\mathrm{learn}},\beta_{\mathrm{learn}})$
	as above, then for sufficiently large parameter dimension $M$ we must have
	\[
	q \ \ge\ \exp(cM)
	\]
	for some constant $c>0$, depending only on
	$(\varepsilon_{\mathrm{SQ}},
	\alpha_{\mathrm{learn}},
	\beta_{\mathrm{learn}})$
	and independent of $M$.
	
	From Step 2, we have constructed such an SQ learning algorithm
	$\mathcal L_{\mathrm{SQ}}$ whose query complexity is exactly the number
	$q$ of SQ queries used by the adversary in the attack phase.
	Therefore, the same lower bound applies to the adversary:
	if there exists an adversary achieving the parameters
	$(\alpha_{\mathrm{auth}},\beta_{\mathrm{auth}})$
	in the theorem, then necessarily
	\[
	q\ \ge\ \exp(cM).
	\]
	This contradicts the assumption that the adversary uses only
	$q=\mathrm{poly}(M)$ SQ queries to pass authentication with non-negligible success probability.
	Hence, for sufficiently large $M$,
	no adversary that makes only polynomially many SQ queries can break the protocol of Definition~\ref{def:crqpuf_sq_hadamard} with parameters $(\alpha,\beta)$ as in the theorem.
	The theorem is proved.
\end{proof}

\begin{remark}[On the relationship between $\alpha_{\mathrm{auth}}$ and $\alpha_{\mathrm{learn}}$]
	In the above reduction, the authentication parameters
	$(\alpha_{\mathrm{auth}},\beta_{\mathrm{auth}})$
	are defined by
	\[
	\mu_\Theta\Bigl(
	\theta:\ 
	\Pr_{\substack{\text{over adversary}\\\text{and protocol randomness}}}\bigl[
	\text{$A$ is accepted in the authentication phase}
	\bigr]\ge\alpha_{\mathrm{auth}}
	\Bigr)
	\ \ge\ \beta_{\mathrm{auth}},
	\]
	whereas in the SQ hardness theorem the parameters
	$(\alpha_{\mathrm{learn}},\beta_{\mathrm{learn}})$
	refer to the learning success:
	\[
	\mu_\Theta\Bigl(
	\theta:\ 
	\Pr_{\text{over learner randomness}}\bigl[
	d_{\mathrm{TV}}(P_\theta,D_{\theta,R})
	\le \varepsilon_{\mathrm{SQ}}
	\bigr]
	\ge \alpha_{\mathrm{learn}}
	\Bigr)
	\ \ge\ \beta_{\mathrm{learn}}.
	\]
	The connection between them is established via a freely chosen intermediate threshold $\alpha_0$:
	for a fixed good parameter $\theta\in\Theta_{\mathrm{good}}$,
	we know that $\mathbb E_R[\mathrm{acc}_\theta(R)]\ge\alpha_{\mathrm{auth}}$.
	For any choice of
	\(\alpha_0\in(0,2\alpha_{\mathrm{auth}}-1)\),
	we may use
	\[
	\Pr[R\in G_\theta]
	\ge \frac{\alpha_{\mathrm{auth}}-\alpha_0}{1-\alpha_0}
	\]
	to obtain different values of
	$\alpha_{\mathrm{learn}}$.
	In this paper we choose
	\[
	\alpha_0 := \frac{2\alpha_{\mathrm{auth}}-1}{2},
	\]
	which yields
	\[
	\alpha_{\mathrm{learn}}
	:= \Pr_{R}\bigl[\text{learning succeeds}\bigr]
	\ge \Pr[R\in G_\theta]
	= \frac{1}{3-2\alpha_{\mathrm{auth}}}
	> \frac12,
	\]
	strictly satisfying the requirement $\alpha_{\mathrm{learn}}>1/2$ of the SQ hardness theorem.
	We see that this choice uses only the fact that $\alpha_{\mathrm{auth}}>1/2$
	and does not depend on any internal detail of the adversary $A$.
	The freedom in choosing $\alpha_0$ is essentially a slack parameter that allows us to conveniently map the authentication success probability
	into a learning success probability in the SQ framework.
\end{remark}

\begin{remark}[Full generality of the adversary model and exclusion of table-lookup attacks]
	In Theorem~\ref{thm:crqpuf_sq_hadamard_security},
	we do not impose any restriction on the internal structure of the adversary $A$:
	$A$ may be an arbitrarily complex classical or quantum algorithm,
	with arbitrarily large computational and memory resources.
	The only constraint is that in the attack phase the number of queries to the SQ interface
	$\operatorname{Stat}_\tau(P_\theta)$ does not exceed $q$.
	The security conclusion is therefore information-theoretic in terms of SQ query complexity:
	as long as $q=\mathrm{poly}(M)$, no matter how $A$ computes, stores data, or whether it uses quantum memory,
	it cannot forge with the $(\alpha,\beta)$ parameters in the theorem.
\end{remark}

\subsection{Extended QPStat model and tomographic bases}

We now turn to channel-level verification schemes. In order to use the QPStat interface more conveniently within the framework of finite-dimensional linear algebra, we first make a mild formal extension, and then construct tomographic bases on the operator space and the corresponding tomographic fingerprints.

\begin{definition}[Extended QPStat oracle]
	\label{def:extended_qpstat}
	Let $E:\mathcal B(\mathcal H)\to\mathcal B(\mathcal H)$ be a CPTP channel acting on a finite-dimensional Hilbert space
	$\mathcal H\simeq\mathbb C^d$. Its extended QPStat oracle, denoted $\mathrm{QPStat}_E$, takes as input a triple
	\[
	(\sigma,O,\tau),
	\]
	where $\sigma\in\mathcal B(\mathcal H)$ is Hermitian with
	$\|\sigma\|_1\le1$, $O\in\mathcal B(\mathcal H)$ is Hermitian with $\|O\|_\infty\le1$, and $\tau>0$ is a tolerance parameter. The oracle outputs a real number $\alpha$ satisfying
	\begin{equation}
		\bigl|\alpha-\operatorname{Tr}[O\,E(\sigma)]\bigr|\le\tau.
	\end{equation}
	A QPStat learning algorithm may adaptively call this oracle at most $q$ times; we say it uses $q$ QPStat queries with accuracy~$\tau$.
\end{definition}

\begin{remark}
	In Section~\ref{zhang8}, we defined the QPStat oracle on physical states $\rho\in\mathsf S(\mathcal H)$. However, in the proofs of Lemmas~
	\ref{lem:qpstat_concentration},
	\ref{lem:qpstat_many_vs_one}
	and Theorem~\ref{thm:lindblad_qpstat_lower_bound}, the only properties used were $\|\rho\|_1\le1$ and $\|O\|_\infty\le1$, together with trace-norm contractivity of CPTP maps and some standard norm inequalities; we never used positivity $\rho\ge0$. Hence these results remain valid if we replace $\rho$ by an arbitrary Hermitian operator $\sigma$ with $\|\sigma\|_1\le1$. Definition~\ref{def:extended_qpstat} simply formalizes this extension.
\end{remark}

We now construct explicit tomographic bases on the operator space $\mathcal B(\mathcal H)$ and the observable space.

\begin{lemma}[Tomographic bases on the operator space]
	\label{lem:tomographic_bases}
	Let $\mathcal H\simeq\mathbb C^d$ and $D:=d^2$. There exist finite families
	\[
	\{F_j\}_{j=1}^{D}\subset\mathcal B(\mathcal H),\qquad
	\{G_k\}_{k=1}^{D}\subset\mathcal B(\mathcal H),
	\]
	such that:
	\begin{enumerate}
		\item Each $F_j$ and $G_k$ is Hermitian.
		\item $\{F_j\}_{j=1}^{D}$ forms a real linear basis of the Hermitian operator space
		$\mathcal B_{\mathrm{Herm}}(\mathcal H)$; similarly for $\{G_k\}_{k=1}^{D}$.
		\item There exist constants $C_1,C_\infty>0$ (depending only on $d$) such that
		\[
		\|F_j\|_1\le C_1,\qquad \|G_k\|_\infty\le C_\infty
		\quad\text{for all }j,k.
		\]
	\end{enumerate}
\end{lemma}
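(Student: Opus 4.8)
The plan is to exhibit the two families explicitly by rescaling a standard Hermitian operator basis of $\mathcal B(\mathcal H)$, and then verify the three listed properties by direct norm estimates. A convenient choice is the generalized Gell-Mann basis (equivalently, the normalized Pauli strings when $d=2^N$), which already gives a real-orthogonal basis of $\mathcal B_{\mathrm{Herm}}(\mathcal H)$ under the Hilbert--Schmidt inner product; alternatively one may use the elementary Hermitian operators built from the matrix units $\{E_{ab}\}$, namely $\tfrac12(E_{ab}+E_{ba})$, $\tfrac{\mathrm i}{2}(E_{ab}-E_{ba})$ for $a<b$, and $E_{aa}$ for the diagonal part. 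Either way one has a fixed, $d$-dependent-but-explicit list of $D=d^2$ Hermitian matrices spanning the real vector space $\mathcal B_{\mathrm{Herm}}(\mathcal H)$, which is exactly item~(2).

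First I would fix such a basis, call it $\{H_m\}_{m=1}^{D}$, with each $H_m=H_m^\dagger$. For the family $\{G_k\}$ I simply set $G_k:=H_k/\|H_k\|_\infty$ (or leave $G_k:=H_k$ and take $C_\infty:=\max_m\|H_m\|_\infty$, which depends only on $d$); this immediately gives Hermiticity, the spanning property (rescaling by nonzero scalars preserves a basis), and $\|G_k\|_\infty\le 1=:C_\infty$. For the family $\{F_j\}$ I set $F_j:=H_j$ and take $C_1:=\max_m\|H_m\|_1$; since each $H_m$ is a fixed finite matrix, $\|H_m\|_1<\infty$, and the maximum over the finitely many indices is a finite constant depending only on $d$. (If one prefers a normalization, one can instead put $F_j:=H_j/\|H_j\|_1$ and $C_1:=1$.) This establishes item~(3) with $C_1,C_\infty$ depending only on $d$, and item~(1) is immediate from the construction. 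Items~(1)--(3) are then all verified, completing the proof.

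There is essentially no obstacle here: the statement is a routine existence claim about finite-dimensional operator spaces, and the only mild point to be careful about is that the constants $C_1,C_\infty$ are allowed to depend on $d$ (which the statement explicitly permits), so one does not need any dimension-uniform control. I would therefore keep the write-up short: name the basis, state the rescaling, and invoke finiteness of the index set to extract the constants. One remark worth inserting is that the two families $\{F_j\}$ and $\{G_k\}$ are deliberately recorded separately because in the tomographic-fingerprint construction (Definition~\ref{def:tomographic_fingerprint}) the $F_j$ play the role of normalized input operators fed into the channel $E_\theta$ (hence the trace-norm bound, matching the $\|\sigma\|_1\le 1$ constraint of the extended QPStat oracle in Definition~\ref{def:extended_qpstat}), while the $G_k$ play the role of bounded observables read out afterwards (hence the operator-norm bound, matching $\|O\|_\infty\le 1$); one may in fact take $\{F_j\}$ and $\{G_k\}$ to be the same underlying basis up to the two different normalizations.
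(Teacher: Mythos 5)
Your proposal is correct and takes essentially the same route as the paper: pick a standard Hermitian basis (Pauli strings or generalized Gell--Mann matrices), rescale to control the trace norm of the $F_j$ and the operator norm of the $G_k$, and observe that the finitely many basis elements yield finite $d$-dependent constants. The paper chooses a single Hilbert--Schmidt orthonormal basis and globally rescales both families so that $C_1=C_\infty=1$, whereas you allow either the unrescaled version with $C_1=\max_m\|H_m\|_1$ or the normalized version; these are trivially interchangeable.
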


\begin{proof}
	Take $\{H_\mu\}_{\mu=1}^{D}$ to be a Hilbert--Schmidt orthonormal basis of Hermitian operators, for example the normalized Pauli basis when $d=2^N$, or the generalized Gell--Mann basis for general $d$. Since the real dimension of the Hermitian operator space is $D$, such a basis always exists.
	
	Set $F_j:=\gamma_1^{-1} H_j$ and $G_k:=\gamma_\infty^{-1} H_k$, where $\gamma_1,\gamma_\infty>0$ are chosen to satisfy
	\[
	\|H_j\|_1\le\gamma_1,\qquad
	\|H_k\|_\infty\le\gamma_\infty
	\quad\text{for all }j,k.
	\]
	Because there are only finitely many basis elements in finite dimension, the maxima defining $\gamma_1$ and $\gamma_\infty$ exist and are finite. The families $\{F_j\}$ and $\{G_k\}$ are still bases of the Hermitian operator space, and
	\[
	\|F_j\|_1 = \gamma_1^{-1}\|H_j\|_1\le1,\qquad
	\|G_k\|_\infty = \gamma_\infty^{-1}\|H_k\|_\infty\le1.
	\]
	Taking $C_1:=1$ and $C_\infty:=1$ gives the desired bounds.
\end{proof}

For any linear map (super-operator) $T:\mathcal B(\mathcal H)\to\mathcal B(\mathcal H)$, we define its tomographic fingerprint using the above bases.

\begin{definition}[Tomographic fingerprint map]
	\label{def:tomographic_fingerprint}
	Let $\{F_j\}_{j=1}^{D}$ and $\{G_k\}_{k=1}^{D}$ be as in Lemma~\ref{lem:tomographic_bases}. Define the linear map
	\[
	\mathcal T:\mathrm{End}(\mathcal B(\mathcal H))
	\longrightarrow \mathbb R^{D\times D}
	\]
	by
	\begin{equation}
		\bigl[\mathcal T(T)\bigr]_{k,j}
		:= \operatorname{Tr}\bigl[G_k\,T(F_j)\bigr],
		\qquad 1\le k,j\le D.
	\end{equation}
	For a CPTP channel $E$, we call $\mathcal T(E)$ its tomographic fingerprint matrix.
\end{definition}

\begin{lemma}[Injectivity and norm equivalence]
	\label{lem:fingerprint_norm_equivalence}
	The map $\mathcal T$ in Definition~\ref{def:tomographic_fingerprint} is a real linear injection. Consequently,
	\begin{equation}
		\|T\|_{\mathrm{tom}}
		:= \max_{1\le k,j\le D}
		\bigl|\bigl[\mathcal T(T)\bigr]_{k,j}\bigr|
	\end{equation}
	defines a norm (the tomographic norm) on the finite-dimensional real vector space
	$\mathrm{End}(\mathcal B(\mathcal H))$.
	
	Furthermore, since any two norms on a finite-dimensional vector space are equivalent, there exists a constant $C_{\diamond}>0$ (depending only on $d$ and the chosen bases) such that for all linear maps
	$T:\mathcal B(\mathcal H)\to\mathcal B(\mathcal H)$,
	\begin{equation}
		\|T\|_\diamond
		\ \le\ C_{\diamond}\,\|T\|_{\mathrm{tom}}.
	\end{equation}
\end{lemma}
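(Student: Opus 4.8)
The plan is to prove Lemma~\ref{lem:fingerprint_norm_equivalence} in two clearly separated parts: first establish that $\mathcal T$ is a real-linear injection (hence $\|\cdot\|_{\mathrm{tom}}$ is a genuine norm), and then invoke equivalence of norms on a finite-dimensional real vector space to obtain the comparison with $\|\cdot\|_\diamond$.

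\textbf{Step 1: real-linearity.} Linearity of $\mathcal T$ over $\mathbb R$ is immediate: $\bigl[\mathcal T(aT_1 + bT_2)\bigr]_{k,j} = \operatorname{Tr}\bigl[G_k (aT_1 + bT_2)(F_j)\bigr] = a\operatorname{Tr}[G_k T_1(F_j)] + b\operatorname{Tr}[G_k T_2(F_j)]$ for real $a,b$, using linearity of $T\mapsto T(F_j)$, of each $T_i$, and of the trace. I would state this in one line.

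\textbf{Step 2: injectivity.} Suppose $\mathcal T(T) = 0$, i.e. $\operatorname{Tr}[G_k T(F_j)] = 0$ for all $1\le k,j\le D$. Since $\{G_k\}_{k=1}^{D}$ is a real basis of the Hermitian operator space $\mathcal B_{\mathrm{Herm}}(\mathcal H)$ (Lemma~\ref{lem:tomographic_bases}), and since for each fixed $j$ the operator $T(F_j)$ can be written as $T(F_j) = A_j + \mathrm i B_j$ with $A_j, B_j$ Hermitian, I would argue that $\operatorname{Tr}[G_k T(F_j)]=0$ for all Hermitian $G_k$ in a spanning set forces $T(F_j)=0$. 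Concretely, $\operatorname{Tr}[H\, C] = 0$ for every Hermitian $H$ in a real basis of $\mathcal B_{\mathrm{Herm}}$ implies (taking real and imaginary parts, and using that $\operatorname{Tr}[H\,C^\dagger]=\overline{\operatorname{Tr}[H\,C]}$ for Hermitian $H$) that $C=0$; equivalently, the real bilinear form $(H,C)\mapsto \operatorname{Re}\operatorname{Tr}[H C]$ together with $(H,C)\mapsto\operatorname{Im}\operatorname{Tr}[HC]$ is nondegenerate, so pairing to zero against a full basis of Hermitians kills all of $\mathcal B(\mathcal H)$. Hence $T(F_j)=0$ for every $j$. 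But $\{F_j\}_{j=1}^{D}$ is a real basis of $\mathcal B_{\mathrm{Herm}}(\mathcal H)$, which in turn complex-spans all of $\mathcal B(\mathcal H)$; since $T$ is complex-linear (or real-linear) and vanishes on a (real) spanning set of $\mathcal B(\mathcal H)$, $T=0$. This shows $\ker\mathcal T = \{0\}$, so $\mathcal T$ is injective. Because $\mathcal T$ is injective and linear, $\|T\|_{\mathrm{tom}} = \max_{k,j}|[\mathcal T(T)]_{k,j}|$ is the pullback under an injective linear map of the $\ell_\infty$ norm on $\mathbb R^{D\times D}$, hence it is positive-definite, absolutely homogeneous, and subadditive — i.e. a norm on $\mathrm{End}(\mathcal B(\mathcal H))$.

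\textbf{Step 3: norm equivalence.} Here I would note that $\mathrm{End}(\mathcal B(\mathcal H))$ is a finite-dimensional real vector space (dimension $D^2 = d^4$), and both $\|\cdot\|_\diamond$ and $\|\cdot\|_{\mathrm{tom}}$ are norms on it — the diamond norm is a norm on superoperators (Definition~\ref{def:diamond_norm}, with the remark that it is finite in finite dimension). By the standard fact that all norms on a finite-dimensional real vector space are equivalent, there is a constant $C_\diamond>0$, depending only on $d$ and the chosen bases $\{F_j\},\{G_k\}$ (through $\mathcal T$), such that $\|T\|_\diamond \le C_\diamond \|T\|_{\mathrm{tom}}$ for all $T$. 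If one wants an explicit route rather than the abstract equivalence theorem, I would express $T$ in a fixed basis of $\mathrm{End}(\mathcal B(\mathcal H))$, note that the coordinates of $T$ are recovered from $\mathcal T(T)$ by a fixed invertible linear map (since $\mathcal T$ is injective between spaces of equal dimension $D^2$, it is a bijection), and bound the diamond norm by the $\ell_1$ of the coordinates times a basis-dependent constant; but the one-line appeal to finite-dimensional norm equivalence suffices and is cleanest.

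I do not anticipate a serious obstacle: the only mildly delicate point is Step~2, making precise that pairing against a \emph{real} basis of Hermitian operators (rather than a complex basis of all operators) is enough to detect $C=0$ for a general, possibly non-Hermitian, $C=T(F_j)$; the resolution is simply to split into real and imaginary Hermitian parts and use that the trace pairing $\operatorname{Re}\operatorname{Tr}[\cdot\,\cdot]$ restricted to $\mathcal B_{\mathrm{Herm}}$ is a nondegenerate real inner product. Everything else is bookkeeping, and the constant $C_\diamond$ is non-explicit by design, which is all the statement claims.
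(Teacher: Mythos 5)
Your proposal is correct and follows essentially the same route as the paper: injectivity of $\mathcal T$ via nondegeneracy of the Hilbert--Schmidt trace pairing, then the standard equivalence of norms on a finite-dimensional real vector space. Your Step~2 is if anything a bit more careful than the paper's, since you explicitly split $T(F_j)$ into Hermitian real and imaginary parts to justify $T(F_j)=0$; the only small slip is the parenthetical claim that $\{F_j\}$ is a real spanning set of $\mathcal B(\mathcal H)$ --- it is a real basis of $\mathcal B_{\mathrm{Herm}}(\mathcal H)$ and only a \emph{complex} spanning set of $\mathcal B(\mathcal H)$, so the real-linear branch of that sentence should read as vanishing on a basis of the intended domain $\mathcal B_{\mathrm{Herm}}$ (equivalently, one works in the real space of Hermitian-preserving superoperators, which is what makes $\mathcal T$ land in $\mathbb R^{D\times D}$ and have domain dimension $D^2$).
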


\begin{proof}
	We first prove injectivity. If $\mathcal T(T)=0$, then $\operatorname{Tr}[G_k\,T(F_j)]=0$ for all $j,k$. For any Hermitian operator $X\in\mathcal B_{\mathrm{Herm}}(\mathcal H)$, we can write $X=\sum_j a_j F_j$ with $a_j\in\mathbb R$, and thus
	\[
	T(X) = \sum_j a_j T(F_j).
	\]
	For any Hermitian operator $Y\in\mathcal B_{\mathrm{Herm}}(\mathcal H)$, write $Y=\sum_k b_k G_k$. Then
	\[
	\operatorname{Tr}[Y\,T(X)]
	= \sum_{j,k} a_j b_k \operatorname{Tr}[G_k\,T(F_j)]
	= 0.
	\]
	Since Hermitian operators span $\mathcal B(\mathcal H)$ and the Hilbert--Schmidt inner product is non-degenerate, we must have $T(X)=0$ for all $X$, hence $T=0$. Thus $\mathcal T$ is injective, and $\|\cdot\|_{\mathrm{tom}}$ is a norm.
	
	The equivalence of norms is a standard fact: if $V$ is a finite-dimensional real vector space and $\|\cdot\|_a,\|\cdot\|_b$ are two norms, then there is a constant $C>0$ such that $\|v\|_a\le C\|v\|_b$ for all $v\in V$. Here we take $V=\mathrm{End}(\mathcal B(\mathcal H))$, $\|\cdot\|_a=\|\cdot\|_\diamond$ (the diamond norm), and $\|\cdot\|_b=\|\cdot\|_{\mathrm{tom}}$. By norm equivalence there exists a constant $C_\diamond>0$ depending only on the dimension and the chosen bases such that
	\[
	\|T\|_\diamond
	\le C_{\diamond}\,\|T\|_{\mathrm{tom}}
	\]
	for all $T$.
\end{proof}

\begin{remark}
	The proof above uses only the fact that the family of linear functionals
	$T\mapsto\operatorname{Tr}[G_k T(F_j)]$ spans the dual space of
	$\mathrm{End}(\mathcal B(\mathcal H))$; we never used any complete positivity or trace-preserving properties of $T$. Therefore, the inequality holds for arbitrary linear maps.
	
	Moreover, since $\{G_k\}$ spans the Hermitian operator space, for any given real array $\{a_{k,j}\}_{k,j}$ and each fixed $j$, the linear map
	$X\mapsto(\operatorname{Tr}[G_k X])_k$ from $\mathcal B(\mathcal H)$ to $\mathbb R^D$ is an isomorphism. Hence there exists a unique $X_j$ such that
	$\operatorname{Tr}[G_k X_j]=a_{k,j}$ for all $k$.
	Defining $T(F_j):=X_j$ and linearly extending to all of $\mathcal B(\mathcal H)$ yields a linear map $T$ with $\mathcal T(T)=\{a_{k,j}\}$. This shows that $\mathcal T$ is actually a bijection.
\end{remark}

\begin{lemma}[A crude dimension-dependent bound for $C_\diamond$]
	\label{lem:Cdiamond_scaling}
	Let $\mathcal H\simeq\mathbb C^d$, and let $D=d^2$.
	Fix a Hilbert-Schmidt orthonormal basis
	$\{H_\mu\}_{\mu=1}^{D}\subset\mathcal B_{\mathrm{Herm}}(\mathcal H)$, and
	define the canonical tomographic norm
	\[
	\|T\|_{\mathrm{tom,can}}
	:= \max_{1\le \mu,\nu\le D}
	\bigl|\mathrm{Tr}\bigl[H_\mu\,T(H_\nu)\bigr]\bigr|.
	\]
	Then there exists a universal constant $c_0>0$ such that for all linear maps
	$T:\mathcal B(\mathcal H)\to\mathcal B(\mathcal H)$,
	\begin{equation}
		\|T\|_\diamond
		\;\le\;
		C_\diamond(d)\,\|T\|_{\mathrm{tom,can}},
		\qquad
		C_\diamond(d)\;\le\;c_0\,d^{7/2}.
	\end{equation}
	In particular, $C_\diamond(d)$ grows at most polynomially in $d$.
\end{lemma}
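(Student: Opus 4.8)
The plan is to bound $\|T\|_\diamond$ by chaining elementary norm conversions, each contributing a crude power of $d$, and to read off the role of the tomographic coefficients through the Hilbert--Schmidt isometry. First I would use Remark~\ref{rem:diamond_vs_1to1} to pass from the diamond norm to the induced $1\to1$ norm, $\|T\|_\diamond\le d\,\|T\|_{1\to1}$; it then suffices to prove $\|T\|_{1\to1}\le d^{5/2}\,\|T\|_{\mathrm{tom,can}}$. Fixing $X\in\mathcal B(\mathcal H)$, the Schatten sandwich inequality of Remark~\ref{rem:schatten_relations} gives $\|T(X)\|_1\le\sqrt d\,\|T(X)\|_2$ and $\|X\|_2\le\|X\|_1$, so the whole estimate reduces to a Hilbert--Schmidt bound on $T(X)$ in terms of $\|X\|_2$.

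The key step is to exploit the orthonormality of $\{H_\mu\}_{\mu=1}^{D}$. Writing $X=\sum_\nu c_\nu H_\nu$ with $c_\nu:=\operatorname{Tr}[H_\nu X]$, one has $\|X\|_2=\|c\|_2$, $T(X)=\sum_\nu c_\nu T(H_\nu)$, and by Parseval $\|T(X)\|_2^2=\sum_\mu\bigl|\operatorname{Tr}[H_\mu T(X)]\bigr|^2=\|Mc\|_2^2$, where $M$ is the $D\times D$ matrix with entries $M_{\mu\nu}:=\operatorname{Tr}[H_\mu T(H_\nu)]$, so that $|M_{\mu\nu}|\le\|T\|_{\mathrm{tom,can}}$ by the very definition of the canonical tomographic norm. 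A crude spectral bound $\|M\|_{\mathrm{op}}\le\|M\|_{\mathrm{F}}\le D\max_{\mu,\nu}|M_{\mu\nu}|\le D\,\|T\|_{\mathrm{tom,can}}$ then yields $\|T(X)\|_2\le D\,\|T\|_{\mathrm{tom,can}}\,\|X\|_2$. Chaining this with the two Schatten inequalities above and substituting $D=d^2$ gives $\|T(X)\|_1\le\sqrt d\cdot d^2\,\|T\|_{\mathrm{tom,can}}\,\|X\|_1$, hence $\|T\|_{1\to1}\le d^{5/2}\,\|T\|_{\mathrm{tom,can}}$ and finally $\|T\|_\diamond\le d^{7/2}\,\|T\|_{\mathrm{tom,can}}$, so one may take $c_0=1$.

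I do not expect a genuine obstacle here: every step is a textbook norm conversion, and the statement is deliberately loose. If one wants a slightly better exponent, an alternative route works directly with $\|T\|_\diamond=\|(T\otimes\operatorname{id}_d)(X)\|_{1\to1}$, expanding $X$ in the product basis $\{H_\nu\otimes E_{ab}\}$ of $\mathcal B(\mathcal H\otimes\mathcal H)$ (with $E_{ab}$ the matrix units) and applying the same Hilbert--Schmidt estimate blockwise; this trades the factor $d\cdot\sqrt d$ above for a single factor $d$ and lands at the exponent $3$. The only point that needs a little care is bookkeeping the dimension in which each Schatten inequality is invoked --- the conversion factor is $\sqrt d$ for operators on $\mathcal H$ but $d$ for operators on $\mathcal H\otimes\mathcal H$ --- and it is precisely the accumulation of these unavoidable but non-tight powers of $d$ that forces an exponent of $7/2$ (or $3$) rather than something sharp, which is all the lemma claims.
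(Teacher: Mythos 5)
Your proof is correct and lands on the same exponent $d^{7/2}$, but the middle step is genuinely different from the paper's. The paper works with the bilinear form $\mathrm{Tr}[Y\,T(X)]$: it expands \emph{both} $X$ (trace-norm ball) and $Y$ (operator-norm ball) in the Hermitian basis, bounds $\sum_\nu|x_\nu|\le d$ and $\sum_\mu|y_\mu|\le d^{3/2}$ by Cauchy--Schwarz, and multiplies these by $\|T\|_{\mathrm{tom,can}}$ to get $\|T\|_{1\to1}\le d^{5/2}\|T\|_{\mathrm{tom,can}}$. You instead bound the $2\to2$ (Hilbert--Schmidt) operator norm of $T$ by observing that in the orthonormal basis $\{H_\mu\}$ the map $T$ is represented by the matrix $M_{\mu\nu}=\mathrm{Tr}[H_\mu T(H_\nu)]$, whose entries are entrywise bounded by $\|T\|_{\mathrm{tom,can}}$, and then use $\|M\|_{\mathrm{op}}\le\|M\|_{\mathrm F}\le D\,\|T\|_{\mathrm{tom,can}}$; chaining with $\|\cdot\|_1\le\sqrt d\,\|\cdot\|_2$ and $\|\cdot\|_2\le\|\cdot\|_1$ recovers the same $d^{5/2}$ for the $1\to1$ norm, and the final factor $d$ from Remark~\ref{rem:diamond_vs_1to1} is shared. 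Your route buys something concrete: because $\|T\otimes\mathrm{id}_d\|_{2\to2}=\|T\|_{2\to2}$ (the Hilbert--Schmidt induced norm is multiplicative under tensoring with the identity), you can run the Schatten conversion once on $\mathcal H\otimes\mathcal H$ (a single factor $d$ from $\|\cdot\|_1\le d\,\|\cdot\|_2$ in dimension $d^2$) rather than paying for the $1\to1$-to-diamond conversion separately, which immediately tightens the bound from $d^{7/2}$ to $d^3$. The paper's duality-based argument does not see this improvement as easily, since it is tied to the $1\to1$ picture on $\mathcal H$ alone. Both arguments are sound; yours is the more systematic for sharpening the exponent, though neither claims optimality.
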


\begin{proof}
	We first bound the induced $1\to1$ norm of $T$ by $\|T\|_{\mathrm{tom,can}}$.
	By duality,
	\begin{align*}
		\|T\|_{1\to1}
		&= \sup_{\|X\|_1\le1}\|T(X)\|_1 \\
		&= \sup_{\|X\|_1\le1,\ \|Y\|_\infty\le1}
		\bigl|\mathrm{Tr}[Y\,T(X)]\bigr|.
	\end{align*}
	Expand $X$ and $Y$ in the orthonormal basis:
	\[
	X = \sum_{\nu} x_\nu H_\nu,\qquad
	Y = \sum_{\mu} y_\mu H_\mu,
	\qquad x_\nu = \mathrm{Tr}[H_\nu X],\ y_\mu=\mathrm{Tr}[H_\mu Y].
	\]
	Then
	\[
	T(X)=\sum_\nu x_\nu T(H_\nu),\qquad
	\mathrm{Tr}[Y T(X)]
	= \sum_{\mu,\nu} x_\nu y_\mu\,
	\mathrm{Tr}[H_\mu T(H_\nu)].
	\]
	Hence
	\[
	\bigl|\mathrm{Tr}[Y T(X)]\bigr|
	\le \Bigl(\sum_\nu |x_\nu|\Bigr)
	\Bigl(\sum_\mu |y_\mu|\Bigr)
	\|T\|_{\mathrm{tom,can}}.
	\]
	
	We now bound the coefficient sums. Since $\{H_\nu\}$ is orthonormal,
	\[
	\sum_\nu |x_\nu|^2 = \|X\|_2^2\le\|X\|_1^2.
	\]
	Thus, for $\|X\|_1\le1$,
	\begin{align*}
		\sum_\nu |x_\nu|
		&\le \sqrt{D}\,\Bigl(\sum_\nu |x_\nu|^2\Bigr)^{1/2}
		\le \sqrt{D}\,\|X\|_1
		\le \sqrt{D}
		= d.
	\end{align*}
	For $Y$, we similarly have
	\[
	\sum_\mu |y_\mu|^2
	= \|Y\|_2^2
	\le d\,\|Y\|_\infty^2
	\le d,
	\]
	whence, for $\|Y\|_\infty\le1$,
	\begin{align*}
		\sum_\mu |y_\mu|
		&\le \sqrt{D}\,\|Y\|_2
		\le \sqrt{D}\,\sqrt{d}
		= d^{3/2}.
	\end{align*}
	Putting these estimates together,
	\[
	\bigl|\mathrm{Tr}[Y T(X)]\bigr|
	\le d^{5/2}\,\|T\|_{\mathrm{tom,can}},
	\]
	and hence
	\begin{equation}\label{eq:1to1_vs_tom}
		\|T\|_{1\to1}
		\le d^{5/2}\,\|T\|_{\mathrm{tom,can}}.
	\end{equation}
	
	On the other hand, it is a standard fact for maps $T:M_d\to M_d$ that
	\[
	\|T\|_\diamond
	= \sup_{k\ge1}\|T\otimes\mathrm{id}_k\|_{1\to1}
	\le d\,\|T\|_{1\to1}.
	\]
	Combining this with~\eqref{eq:1to1_vs_tom} gives
	\[
	\|T\|_\diamond
	\le d\cdot d^{5/2}\,\|T\|_{\mathrm{tom,can}}
	= d^{7/2}\,\|T\|_{\mathrm{tom,can}}.
	\]
	Absorbing the numerical constants into $c_0$ yields the claimed bound.
\end{proof}

\begin{remark}[Relation to the tomographic norm $\|\cdot\|_{\mathrm{tom}}$ in Lemma~\ref{lem:fingerprint_norm_equivalence}]
	In Lemma~\ref{lem:tomographic_bases} and Definition~\ref{def:tomographic_fingerprint} we allowed general Hermitian bases $\{F_j\},\{G_k\}$ with
	$\|F_j\|_1\le C_1$ and $\|G_k\|_\infty\le C_\infty$, and defined
	\(
	\|T\|_{\mathrm{tom}} := \max_{j,k}|\mathrm{Tr}[G_k T(F_j)]|.
	\)
	Since any two bases are related by an invertible linear change of coordinates and there are only finitely many basis elements, the norms
	$\|\cdot\|_{\mathrm{tom}}$ and $\|\cdot\|_{\mathrm{tom,can}}$ are equivalent, with equivalence constants depending at most polynomially on $d$.
	Therefore the abstract constant $C_\diamond$ in Lemma~\ref{lem:fingerprint_norm_equivalence} can be chosen so that
	\[
	C_\diamond(d) \;\le\; \mathrm{poly}(d),
	\]
	for example $C_\diamond(d)\le c_0 d^{7/2}$ after adjusting constants.
	In particular, $C_\diamond$ never grows faster than a fixed polynomial in the Hilbert-space dimension.
\end{remark}

Using QPStat queries of the form $(F_j,G_k)$ that output
$\operatorname{Tr}[G_k E(F_j)]$, one can approximate all entries of the tomographic fingerprint $\mathcal T(E)$. By Lemma~\ref{lem:fingerprint_norm_equivalence}, this uniformly controls distances between channels in the diamond norm.

\subsection{Scheme B: Lindbladian-PUF based on tomography}
\label{subsec:crqpuf_tomography}

We now define a channel-level Lindbladian-PUF protocol in which the verifier stores the tomographic fingerprint of a random Lindbladian channel as a classical secret, and in the authentication phase requires the prover to reproduce it entrywise. This forces an adversary to learn the channel in diamond distance.

\begin{definition}[Tomography-based Lindbladian-PUF protocol]
	\label{def:tomographic_crqpuf}
	Let $(\Theta,d,\mu_\Theta)$ be one of the random Lindbladian parameter ensembles in Definition~\ref{def:random_local_L}. Each parameter $\theta\in\Theta$ specifies a Lindbladian generator $\mathcal L(\theta)$ and, at fixed time $t>0$, a CPTP channel
	\[
	E_\theta:=e^{t\mathcal L(\theta)}.
	\]
	Let $\{F_j\}$ and $\{G_k\}$ be tomographic bases as in Lemma~\ref{lem:tomographic_bases}, and let $\mathcal T(E_\theta)\in\mathbb R^{D\times D}$ ($D=d^2$) be the corresponding tomographic fingerprint.
	
	Fix a tolerance parameter $\tau>0$ and an integer $N\ge1$. The tomography-based Lindbladian-PUF protocol between a verifier $V$ and a prover $P$ is defined as follows:
	
	\begin{enumerate}
		\item Setup.
		\begin{enumerate}
			\item $V$ samples $\theta$ from $\mu_\Theta$ and obtains one-time physical access to a device implementing $E_\theta$.
			\item For each pair $(j,k)\in[D]\times[D]$, $V$ calls $\mathrm{QPStat}_{E_\theta}$ on input
			$(F_j,G_k,\tau)$ and obtains an approximate value
			\[
			y_{k,j}\approx \operatorname{Tr}[G_k E_\theta(F_j)],
			\]
			satisfying
			$\bigl|y_{k,j}-\operatorname{Tr}[G_k E_\theta(F_j)]\bigr|\le\tau$.
			(Equivalently, in a more experiment-like implementation $V$ could repeatedly prepare $F_j$, measure $G_k$, and take an empirical average, requiring $O(1/\tau^2)$ physical uses per entry.)
			\item $V$ stores the entire table
			$\mathbf y=\{y_{k,j}\}_{k,j}$ as a classical secret, and then hands the physical device to the prover.
		\end{enumerate}
		
		\item Authentication.
		\begin{enumerate}
			\item When $P$ claims to hold the genuine device, $V$ initiates a challenge--response authentication.
			\item For each pair $(k,j)\in[D]\times[D]$ (in some predetermined order), $V$ sends the index $(k,j)$ as a challenge. An honest prover holding the genuine device $E_\theta$ can use QPStat queries or experimental runs to estimate
			$\operatorname{Tr}[G_k E_\theta(F_j)]$ within error $\tau$, and returns a value $y'_{k,j}$.
			\item $V$ accepts if and only if
			\begin{equation}
				\bigl|y'_{k,j} - y_{k,j}\bigr|\le 2\tau
				\quad\text{for all }k,j.
			\end{equation}
		\end{enumerate}
	\end{enumerate}
\end{definition}

\begin{remark}[Completeness]
	If the prover holds the genuine channel $E_\theta$ and answers each challenge using QPStat queries of accuracy $\tau$, then by the triangle inequality,
	\[
	\bigl|y'_{k,j}-y_{k,j}\bigr|
	\le \bigl|y'_{k,j}-\operatorname{Tr}[G_kE_\theta(F_j)]\bigr|
	+\bigl|\operatorname{Tr}[G_kE_\theta(F_j)]-y_{k,j}\bigr|
	\le 2\tau
	\]
	for all $k,j$. An honest prover is therefore always accepted.
\end{remark}

We now state a QPStat query-complexity security result. In principle, any adversary that passes the tomography-based verification without holding the genuine device must have learned the Lindbladian channel in diamond distance, and is therefore ruled out by Theorem~\ref{thm:lindblad_qpstat_lower_bound} in Section~\ref{zhang8}.

\begin{theorem}[QPStat query-complexity security of the tomography-based Lindbladian-PUF]
	\label{thm:crqpuf_tomography_security}
	Consider the tomography-based Lindbladian-PUF protocol in Definition~\ref{def:tomographic_crqpuf}, whose underlying random Lindbladian ensemble is the same as in Theorem~\ref{thm:lindblad_qpstat_lower_bound}. Let $A$ be an arbitrary (possibly randomized) adversary that may interact with $\mathrm{QPStat}_{E_\theta}$ before authentication, using at most $q$ QPStat queries of accuracy $\tau$. In the setup phase, the verifier needs an additional
	$q_{\mathrm{tom}}=D^2$ (or $O(D^2)$) QPStat queries to generate the tomographic fingerprint, and this cost depends only on the output-space dimension $d$ and not on the parameter dimension $M$.
	
	If there exist constants $\alpha>1/2$ and $\beta>0$ such that
	\begin{equation}
		\mu_\Theta\Bigl(
		\theta:\
		\Pr_{\text{internal randomness of the adversary}}\bigl[
		A\ \text{is accepted in the protocol}
		\bigr]
		\ge \alpha
		\Bigr)
		\ \ge\ \beta,
	\end{equation}
	then for sufficiently large parameter dimension $M$ there are constants $c>0$ and $c'>0$ independent of $M$ such that
	\begin{equation}
		q + q_{\mathrm{tom}}\ \ge\ \exp(c(\operatorname{poly}(d))M).
	\end{equation}
Since the verifier's initialization cost scales only polynomially in the output dimension (indeed $q_{\text {tom }}=D^2=d^4$ ), whereas the QPStat lower bound grows like $\exp (c \operatorname{poly}(d) M)$, which asymptotically dominates any fixed polynomial in $d$ once $M$ (and hence $d$ ) is large, the tomographic cost is negligible compared with the exponential hardness term and $q$ must still grow exponentially in $M$. In particular, any adversary using only $q=\mathrm{poly}(M)$ QPStat queries cannot, for sufficiently large $M$, break the tomography-based Lindbladian-PUF with non-negligible success probability.
\end{theorem}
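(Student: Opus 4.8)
The plan is to reduce any successful Scheme~B adversary to a QPStat learner for the channel ensemble and then invoke Theorem~\ref{thm:lindblad_qpstat_lower_bound}. Concretely, I would build a QPStat learning algorithm $\mathcal L_{\mathrm{QP}}$ that, given QPStat access to an unknown $E_\theta$, runs $A$ as a black box: every query $A$ issues during its attack phase is forwarded verbatim to $\mathrm{QPStat}_{E_\theta}$ and the answer relayed back, so that when the attack phase ends $\mathcal L_{\mathrm{QP}}$ holds exactly the (possibly randomized) emulator that $A$ would use in authentication. The learner then evaluates this emulator on all $D^2$ index pairs $(k,j)$ --- a purely internal computation that touches the physical oracle \emph{zero} times --- obtaining a table $\{y'_{k,j}\}$, and outputs the unique linear map $\widehat T:=\mathcal T^{-1}(\{y'_{k,j}\})$, which exists by the bijectivity of $\mathcal T$ recorded after Lemma~\ref{lem:fingerprint_norm_equivalence}, followed by a projection onto the closed convex set of CPTP maps to obtain a genuine hypothesis channel $\widehat E$. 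Crucially, the number of QPStat queries used by $\mathcal L_{\mathrm{QP}}$ is exactly $q$, not $q+D^2$.

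Next I would control $\|\widehat E-E_\theta\|_\diamond$ on the event that $A$ is accepted. Acceptance means $|y'_{k,j}-y_{k,j}|\le 2\tau$ for all $k,j$, while the verifier's fingerprint, produced by QPStat queries of tolerance $\tau$, satisfies $|y_{k,j}-[\mathcal T(E_\theta)]_{k,j}|\le\tau$; the triangle inequality gives $\|\widehat T-E_\theta\|_{\mathrm{tom}}=\max_{k,j}|y'_{k,j}-[\mathcal T(E_\theta)]_{k,j}|\le 3\tau$. By the norm equivalence of Lemma~\ref{lem:fingerprint_norm_equivalence} together with the polynomial bound $C_\diamond(d)\le c_0 d^{7/2}$ of Lemma~\ref{lem:Cdiamond_scaling}, this yields $\|\widehat T-E_\theta\|_\diamond\le 3C_\diamond(d)\tau$, and since $E_\theta$ is itself CPTP the projection onto the CPTP set inflates this by at most a factor $2$, so $\|\widehat E-E_\theta\|_\diamond\le\varepsilon:=6C_\diamond(d)\tau$; note $\tau\le\varepsilon$ holds trivially since $C_\diamond(d)\ge 1$. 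The hypothesis of Thm.~\ref{thm:crqpuf_tomography_security} says that on a $\mu_\Theta$-measure at least $\beta$ set of parameters $A$ is accepted with probability at least $\alpha>1/2$ over its internal randomness; hence, over exactly that randomness, $\mathcal L_{\mathrm{QP}}$ outputs a channel $\varepsilon$-close to $E_\theta$ in diamond distance with probability at least $\alpha$ on the same $\beta$-fraction of the ensemble.

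Applying Theorem~\ref{thm:lindblad_qpstat_lower_bound} to $\mathcal L_{\mathrm{QP}}$ with accuracy $\varepsilon$, coverage $\beta$, and success probability $\alpha$ then forces $q\ge c(2\alpha-1)\beta\exp(cM)$ with a constant $c>0$ independent of $M$ (built only from $c_{\mathrm{par}}$, $\tau$, $t$, $\delta$ and the $M$-independent local bound $C_G$; the exponent is insensitive to $\varepsilon$, so its polynomial-in-$d$ growth is harmless). Since the verifier's setup cost is $q_{\mathrm{tom}}=D^2=d^4$, a fixed polynomial in $d$ that does not grow with $M$, the exponential term dominates for all large $M$, so $q+q_{\mathrm{tom}}\ge q\ge\exp(\Omega(M))$, ruling out every adversary using only $q=\mathrm{poly}(M)$ QPStat queries. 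The one genuinely delicate step is the passage from the reconstructed linear map $\widehat T$ to an admissible hypothesis \emph{channel}: one must check that a bounded (e.g.\ nearest-point) projection onto the CPTP cone is well defined and inflates the diamond distance only by an absolute constant --- everything after that is bookkeeping in the tolerance chain $\tau\mapsto 3\tau\mapsto 3C_\diamond(d)\tau\mapsto 6C_\diamond(d)\tau$ and a direct appeal to the already-established QPStat lower bound.
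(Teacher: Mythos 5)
Your reduction is correct and mirrors the paper's: forward $A$'s attack-phase queries to $\mathrm{QPStat}_{E_\theta}$, extract $A$'s authentication-phase replies $y'_{k,j}$, use the triangle inequality $|y'_{k,j}-\operatorname{Tr}[G_k E_\theta(F_j)]|\le 3\tau$ on the acceptance event, invert $\mathcal T$, apply the norm equivalence with the $\operatorname{poly}(d)$ bound on $C_\diamond$, and invoke Theorem~\ref{thm:lindblad_qpstat_lower_bound}. Your argument that acceptance (defined via the verifier's secret $\mathbf y$) always implies the $3\tau$ bound regardless of the verifier's setup randomness --- so that the probability transfers cleanly to a probability over the adversary's randomness alone --- is exactly what is needed and is handled correctly.

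There are two places where you take a slightly different route from the paper, and both are worth commenting on. First, the paper's learner also simulates the verifier's setup phase, generating $\mathbf y$ with $q_{\mathrm{tom}}=D^2$ extra QPStat calls; consequently the paper proves $q+q_{\mathrm{tom}}\ge\exp(\Omega(M))$. You observe, correctly, that $\mathbf y$ is never used to build the hypothesis $\widehat E$ (only $y'_{k,j}$ is), so the learner needs only $q$ queries; your bound $q\ge\exp(\Omega(M))$ is therefore a modest strengthening of what the paper argues and is strictly sufficient for the stated theorem. Second, you insert a nearest-point projection onto the convex set of CPTP maps to make the hypothesis a genuine channel, paying a factor~$2$ in the tolerance chain ($6C_\diamond\tau$ instead of $3C_\diamond\tau$). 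The paper deliberately sidesteps this: Theorem~\ref{thm:lindblad_qpstat_lower_bound} and Lemma~\ref{lem:qpstat_many_vs_one} never require the learner's output to be CPTP, only that $\|\widehat E-E_\theta\|_\diamond\le\varepsilon$, so $\widehat E$ can be left as the raw linear map $\mathcal T^{-1}(\{y'_{k,j}\})$. Your projection step is thus unnecessary (and the phrase ``CPTP cone'' is a small misnomer --- CPTP maps form a convex set, not a cone, since trace preservation is an affine constraint --- though this does not affect the factor-$2$ bound). Either way, since the exponent in the lower bound of Theorem~\ref{thm:lindblad_qpstat_lower_bound} is insensitive to the value of $\varepsilon$ as long as it is fixed, neither deviation changes the $\exp(\Omega(M))$ conclusion.
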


\begin{proof}
	Assume there exists a successful adversary $A$ as above. We construct from it a QPStat learning algorithm $\mathcal L_{\mathrm{learn}}$ for the Lindbladian channel ensemble, and then infer a lower bound on the required number of queries.
	
	We first use the successful adversary to build an approximate tomographic fingerprint. For a fixed $\theta\in\Theta$, define the good parameter set
	\[
	\Theta_{\mathrm{good}}
	:=\Bigl\{
	\theta:\
	\Pr\bigl[A\ \text{is accepted}\bigr]\ge\alpha
	\Bigr\},
	\]
	for which $\mu_\Theta(\Theta_{\mathrm{good}})\ge\beta$ by assumption. Given access to $\mathrm{QPStat}_{E_\theta}$, $\mathcal L_{\mathrm{learn}}$ simulates the entire Lindbladian-PUF protocol (including the setup phase):
	
	\begin{itemize}
		\item[i.] In the simulated setup phase, $\mathcal L_{\mathrm{learn}}$ calls $\mathrm{QPStat}_{E_\theta}$ and, following the verifier $V$’s strategy in Definition~\ref{def:tomographic_crqpuf}, generates the table $\mathbf y=\{y_{k,j}\}$, keeping it as an internal secret (not revealing it to $A$), and act as an intermediate medium between oracle and $A$, transmitting the results of the calls required by $A$ to $A$, while providing $A$ with the parameters that are public in the protocol (such as the bases $\{F_j\},\{G_k\}$, the tolerance $\tau$, and the set of challenge indices).
		\item[ii.] In the simulated authentication phase, $\mathcal L_{\mathrm{learn}}$ sends the challenge indices $(k,j)$ to $A$ in the predetermined order and records $A$’s replies $y'_{k,j}$.
	\end{itemize}
	
	By construction, the distribution of this simulated interaction is identical to that of the real protocol between $V$ and $A$. Therefore, for any $\theta\in\Theta_{\mathrm{good}}$,
	\begin{equation}
		\Pr\Bigl[
		\bigl|y'_{k,j}-y_{k,j}\bigr|\le 2\tau\ \forall k,j
		\Bigr]
		\ \ge\ \alpha.
	\end{equation}
	On this event, for all $k,j$,
	\begin{equation}
		\bigl|y'_{k,j}
		-\operatorname{Tr}[G_kE_\theta(F_j)]\bigr|
		\le \bigl|y'_{k,j}-y_{k,j}\bigr|
		+\bigl|y_{k,j}
		-\operatorname{Tr}[G_kE_\theta(F_j)]\bigr|
		\le 3\tau.
	\end{equation}
	
	We now construct a linear map $\widehat E$ whose tomographic fingerprint satisfies
	$[\mathcal T(\widehat E)]_{k,j}=y'_{k,j}$. For each fixed $j$, consider the linear isomorphism
	\[
	\Phi:\mathcal B(\mathcal H)\to\mathbb R^D,\quad
	X\mapsto\bigl(\operatorname{Tr}[G_k X]\bigr)_{k=1}^D.
	\]
	Since $\{G_k\}$ forms a Hermitian basis, $\Phi$ is bijective. Thus, for each $j$ there exists a unique operator $X_j$ with
	$\operatorname{Tr}[G_k X_j]=y'_{k,j}$ for all $k$.
	Define $\widehat E$ on the basis $\{F_j\}$ by
	$\widehat E(F_j):=X_j$ and extend linearly to all of $\mathcal B(\mathcal H)$; this yields a linear map $\widehat E$ such that
	$[\mathcal T(\widehat E)]_{k,j}=y'_{k,j}$.
	By injectivity of $\mathcal T$ in Lemma~\ref{lem:fingerprint_norm_equivalence}, $\widehat E$ is uniquely determined. We do not require $\widehat E$ to be CPTP; it is simply a hypothesis channel.
	
	On the above success event,
	\[
	\Bigl|\bigl[\mathcal T(\widehat E-E_\theta)\bigr]_{k,j}\Bigr|
	= \bigl|y'_{k,j}-\operatorname{Tr}[G_kE_\theta(F_j)]\bigr|
	\le 3\tau
	\quad\text{for all }k,j,
	\]
	so
	\begin{equation}
		\|\,\widehat E-E_\theta\,\|_{\mathrm{tom}}
		\le 3\tau.
	\end{equation}
	By Lemma~\ref{lem:fingerprint_norm_equivalence}, we then have
	\begin{equation}
		\|\,\widehat E-E_\theta\,\|_\diamond
		\ \le\ C_\diamond\,\|\,\widehat E-E_\theta\,\|_{\mathrm{tom}}
		\ \le\ 3C_\diamond\,\tau.
	\end{equation}
	Thus, on the event that authentication succeeds, $\widehat E$ is within diamond distance $\varepsilon:=3C_\diamond\tau$ of $E_\theta$.
	
	We now view the above procedure as a QPStat learning algorithm $\mathcal L_{\mathrm{learn}}$ for the channel family $\{E_\theta\}$:
	\begin{itemize}
		\item[i.] $\mathcal L_{\mathrm{learn}}$ has access to $\mathrm{QPStat}_{E_\theta}$ and internally simulates the full protocol between $V$ and $A$.
		\item[ii.] It outputs the above-constructed hypothesis map $\widehat E$.
	\end{itemize}
	For any $\theta\in\Theta_{\mathrm{good}}$, with probability at least $\alpha$ (over the internal randomness of the algorithm) we have
	\[
	d_\diamond(\widehat E,E_\theta)
	:= \|\,\widehat E-E_\theta\,\|_\diamond
	\le 3C_\diamond\,\tau
	=:\varepsilon.
	\]
	Furthermore, the total number of QPStat queries used by $\mathcal L_{\mathrm{learn}}$ is
	\[
	q_{\mathrm{tot}} = q_{\mathrm{tom}} + q,
	\]
	where $q_{\mathrm{tom}}$ is the number of queries required to generate $\mathbf y$ in the simulated setup phase, depending only on $d$ and $\tau$ and not on $M$, and $q$ is the number of queries used by the adversary $A$ itself.
	
	At this point we are exactly in the setting of Theorem~\ref{thm:lindblad_qpstat_lower_bound}. The learner $\mathcal L_{\mathrm{learn}}$ uses at most $q_{\mathrm{tot}}$ QPStat queries and, on a parameter subset of $\mu_\Theta$-measure at least $\beta$, achieves diamond distance $\varepsilon=3C_\diamond\tau$ with success probability at least $\alpha$. By Theorem~\ref{thm:lindblad_qpstat_lower_bound}, there exists a constant $c>0$ independent of $M$ such that, for sufficiently large $M$,
	\begin{equation}
		q_{\mathrm{tot}}\ \ge\ \exp(c(\operatorname{poly}(d))M).
	\end{equation}
\end{proof}

\begin{remark}[The cost of verification]
	Such strong verification comes with an information-theoretic cost. The tomographic fingerprint $\mathcal T(E_\theta)$ has $D^2\sim d^4$ entries, so even if each entry only needs constant accuracy, the verifier’s initialization cost $q_{\mathrm{tom}}$ grows at least polynomially like $d^4$. In most quantum many-body models the parameter dimension $M$ is at most polynomial in $d$, so the verification cost is of the same order as the information content of the channel itself, making it hard to regard this as an efficiently verifiable cryptographic protocol. This stands in sharp contrast to the distribution-level scheme, where $|X|$ can be chosen to be polynomial, while the parameter dimension $M$ can be much larger than $\log|X|$.
\end{remark}

\begin{remark}[Why classical shadows cannot replace tomography-based verification]
	A natural question is whether classical-shadow techniques (random measurements with classical compression) could replace full tomography, thereby retaining channel-level security while keeping verification costs polynomial. From an information-theoretic perspective, this is impossible in general.
	
	The basic idea of classical-shadow protocols~\cite{huang2020predicting} is to apply a fixed family of random measurements to the object under study (state or channel), and compress each measurement outcome into a short classical shadow string via simple classical post-processing. For state learning, these shadows suffice to predict the expectation values of a large set of observables with high probability, but at their core they implement a low-dimensional linear projection:
	\[
	\Phi_{\mathrm{sh}}:\ \mathrm{End}(\mathcal B(\mathcal H))
	\longrightarrow \mathbb R^K,
	\]
	where $K$ is at most proportional to the number of samples and the number of bits per measurement, and is far smaller than the channel-space dimension $d^4$ under polynomial sampling. In other words, the classical-shadow model is a deliberately lossy compressed learning method that is designed to approximate many expectation values, not to distinguish all channels.
	
	From the perspective of linear algebra, any linear map $\Phi_{\mathrm{sh}}$ with target dimension $K\ll d^4$ necessarily has a nontrivial kernel: there exists $T\ne 0$ with $\Phi_{\mathrm{sh}}(T)=0$. This means that there exist distinct channels $E\ne E'$ with
	$\Phi_{\mathrm{sh}}(E)=\Phi_{\mathrm{sh}}(E')$, so the classical-shadow fingerprint cannot distinguish them in the diamond norm. So the tomography-based verification in this subsection deliberately uses a tomographic fingerprint map
	$\mathcal T:\mathrm{End}(\mathcal B(\mathcal H))\to\mathbb R^{D\times D}$
	that is injective, so that passing verification is information-theoretically equivalent to controlling the diamond norm. The price is a fingerprint of length $D^2\sim d^4$ and a verification cost of that order; this is a fundamental cost that classical shadows cannot avoid. The distribution-level Lindbladian-PUF follows a different route: it only cares about the one-dimensional output channel $E_\theta\mapsto P_\theta$, and in this reduced space we can both prove SQ-hardness (exponential query complexity) and keep the verification cost at a scale that is polynomial in $|X|$, thereby obtaining a truly efficiently verifiable cryptographic protocol.
\end{remark}

\bibliographystyle{alphaUrlePrint.bst}
\bibliography{ref}

\end{document}